\numberwithin{equation}{section}
\newtheorem*{proposition*}{Proposition}
\newtheorem*{theorem*}{Theorem}
\newtheorem*{conjecture*}{Conjecture}
\newtheorem*{claim*}{Claim}
\newtheorem*{lemma*}{Lemma}
\newtheorem*{corollary*}{Corollary}
\newtheorem{theorem}{Theorem}[section]
\newtheorem{proposition}[theorem]{Proposition}
\newtheorem{lemma}[theorem]{Lemma}
\newtheorem{corollary}[theorem]{Corollary}
\newtheorem*{definition*}{Definition}
\newtheorem{definition}{Definition}[section]
\newtheorem*{assumption*}{\mathcal{A}ssumption}
\newtheorem*{remark*}{Remark}
\newtheorem{remark}{Remark}[section]
\newcommand{\R}{\mathbb{R}}
\newcommand{\s}{\mathbb{S}}
\newcommand{\N}{\mathbb{N}}
\newcommand{\snabla}{\slashed{\nabla}}
\begin{document}

\title{Price's law and precise late-time asymptotics for subextremal Reissner--Nordstr\"{o}m black holes}
\date{February 23, 2021}
\author[1]{Yannis Angelopoulos \thanks {yannis@caltech.edu}}
\author[2]{Stefanos Aretakis\thanks {aretakis@math.toronto.edu}}
\author[3,4]{Dejan Gajic \thanks {D.Gajic@dpmms.cam.ac.uk, d.gajic@ru.nl}}
	\affil[1]{\small The Division of Physics, Mathematics and Astronomy, Caltech,
1200 E California Blvd, Pasadena CA 91125, USA}
	\affil[2]{\small Department of Mathematics, University of Toronto, 40 St George Street, Toronto, ON, Canada}
	\affil[3]{\small Centre for Mathematical Sciences, University of Cambridge, Wilberforce Road, Cambridge CB3 0WB, UK}
	\affil[4]{\small Department of Mathematics, Radboud University, 6525 AJ Nijmegen, The Netherlands}

\normalsize

\maketitle

\begin{abstract}
In this paper, we prove precise late-time asymptotics for solutions to the wave equation supported on angular frequencies greater or equal to $\ell$ on the domain of outer communications of subextremal Reissner--Nordstr\"{o}m spacetimes up to and including the event horizon. Our asymptotics yield, in particular, sharp upper and lower decay rates which are consistent with Price's law on such backgrounds. We present a theory for inverting the time operator and derive an explicit representation of the leading-order asymptotic coefficient in terms of the Newman--Penrose charges at null infinity associated with the time integrals. Our method is based on purely physical space techniques. For each angular frequency $\ell$ we establish a sharp hierarchy of $r$-weighted radially commuted estimates with length $2\ell+5$. We complement this hierarchy with a novel hierarchy of weighted elliptic estimates of length $\ell+1$.

\end{abstract}

\tableofcontents

\section{Introduction}
\label{intro}

\subsection{Introduction and background}
\label{introandback}
Price \cite{Price1972} predicted in 1972 that if $\psi$ solves the wave equation 
\begin{equation}
\label{eq:waveequation}
\square_g\psi=0,
\end{equation}
on a Schwarzschild spacetime and is supported on a fixed angular frequency $\ell$ (in this case we denote the linear wave by $\psi_{\ell}$) then  
\begin{equation}
\psi_{\ell} \sim \frac{1}{t^{2\ell+3}}
\label{pricelaw}
\end{equation}
asymptotically in time $(t\rightarrow \infty)$ along constant  $\{r=r_0\}$ hypersurfaces. In this paper we provide the first rigorous derivation and proof of the precise late-time asymptotics for solutions to the wave equation  \eqref{eq:waveequation} on the domain of outer communications of subextremal Reissner--Nordstr\"{o}m spacetimes for all angular frequencies $\ell\geq 0$ confirming in particular Price's law \eqref{pricelaw}. We obtain the precise asymptotic behavior for the scalar field up to and including the event horizon and for the radiation field along null infinity. Our results yield in particular optimal upper and lower time-decay bounds. Such bounds are important in a wide range of problems in general relativity such as 1) the study of non-zero spin wave equations (see \cite{dhr-teukolsky-kerr}),  2) the black hole stability problem (see \cite{klainerman17}), 3) the strong cosmic censorship conjecture (see \cite{Luk2015}, \cite{Luk2016a}, \cite{gregjan})
and 4) the propagation of gravitational waves. 

The wave equation has been the object of intense study in the past decade. For a summary of the subject we refer the reader to \cite{part3, molog} and for sharp decay results we refer to \cite{other1, metal, Kronthaler2007, paper1}).  The first attempt to resolve Price's conjecture is due to  Donninger, Schlag and Soffer \cite{dssprice} where  the  $2\ell+3$  decay rate is proved for static initial data supported on the $\ell$ frequency (it is worth pointing out that for such data $2\ell + 4$ is the expected optimal decay rate). Hintz \cite{hintzprice} recently derived the $2\ell+3$ upper bound for general initial data supported on the $\ell$ frequency on Schwarzschild backgrounds. Ma \cite{ma1} also derived almost sharp decay rates for $\ell=1$ angular modes of Maxwell fields.

The first rigorous work that derived precise late-time asymptotics is \cite{paper2, paper-bifurcate, logasymptotics}. This work obtained asymptotics for general solutions $\psi$ (without any assumptions on the angular frequency) to the wave equation on sub-extremal Reissner--Nordstr\"{o}m spacetimes. Precise asymptotics were obtained in \cite{hintzprice} for some general asymptotically flat spacetimes that include as a special case the subextremal Kerr family of black hole spacetimes, and in \cite{ma2} for the Dirac equation on Schwarzschild backgrounds. On the other hand, \cite{paper4} derived precise asymptotics for general solution on extremal Reissner--Nordstr\"{o}m spacetimes. The terms appearing in these asymptotics decay much slower than in the sub-extremal case in view of the horizon instability \cite{aretakis1, aretakis2, aretakis4, aretakis2013} and the presence of conserved charges at the event horizon \cite{aretakisglue}. An application of the precise asymptotics in the extremal case was presented in \cite{extremal-prl} where it was shown that the horizon charges can be computed using only the knowledge of the radiation field at null infinity. On the other hand, proving late-time asymptotics for higher angular frequencies on extremal Reissner--Nordstr\"{o}m remains an open problem.

\subsection{Price's law and precise late-time asymptotics}
\label{sec:AFirstVersionOfTheMainTheorems}

 In this section, we provide a brief summary of our main results. 
We consider appropriate hypersurfaces $\Sigma_{\tau}$ and $\mathcal{S}_{\tau}$ in subextremal Reissner--Nordstr\"{o}m that cross the event horizon and terminate at future null infinity as depicted in the figure below
\begin{figure}[H]
\begin{center}
\includegraphics[width=6cm]{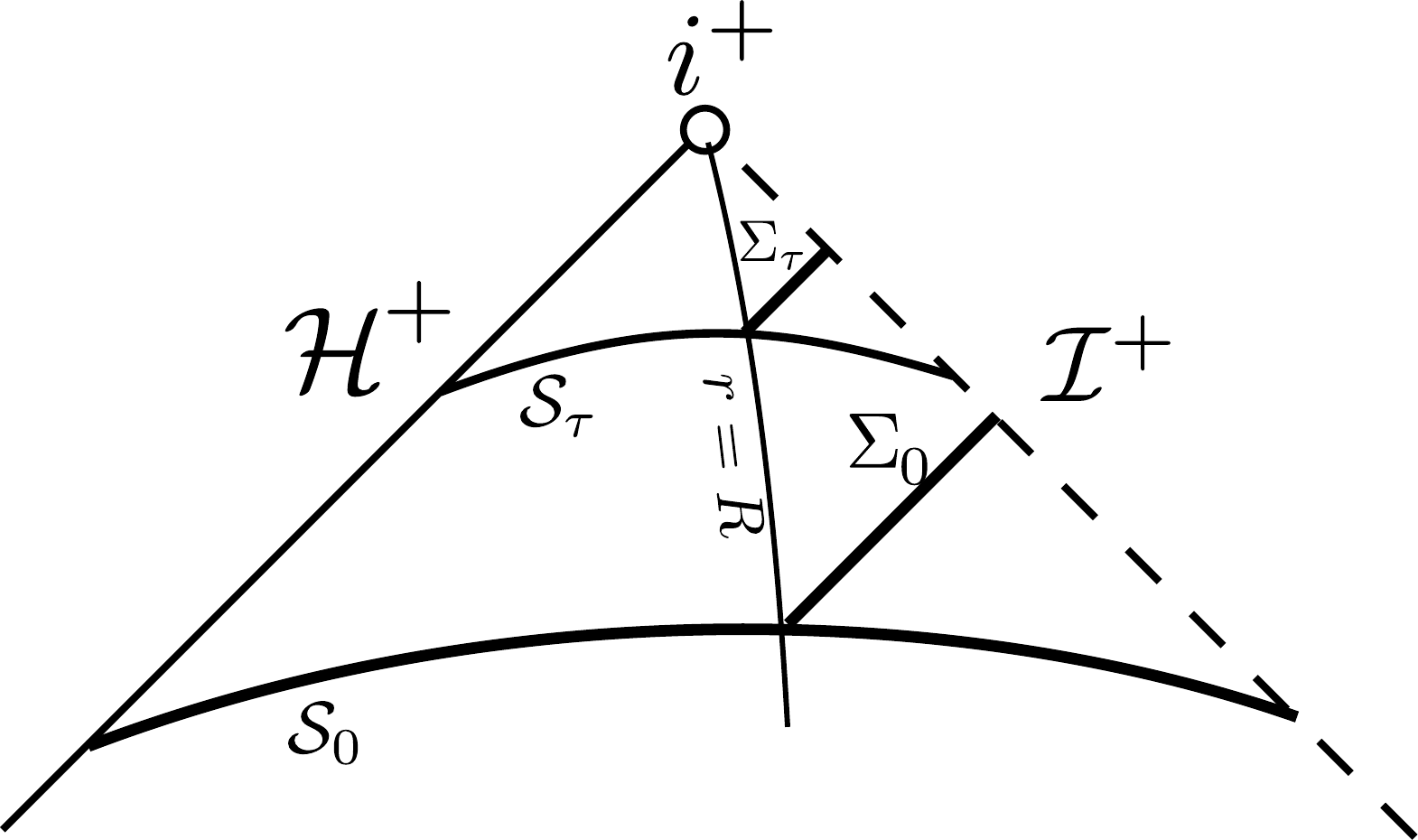}
\vspace{-0.4cm}
\caption{\label{fig:243ka}The hypersurfaces $\Sigma_{\tau}$.}
\end{center}
\end{figure}
\vspace{-0.8cm}
We derive the following precise late-time asymptotics for each angular frequency $\psi_{\ell}$. In fact our asymptotics are summable, in the sense that they hold for solutions $\psi_{\geq \ell}$ to the wave equation which are supported on angular frequencies greater or equal to $\ell$. 

\begin{equation}
 \boxed{\psi_{\geq \ell} (\tau ,r,\theta , \varphi )\sim 
{A_{\ell }^{(1)}\cdot {r^{\ell}} \cdot  I_{\ell}^{(1)} [ \psi ] ( \theta , \varphi )}\cdot \frac{1}{\tau^{2\ell+3}}  }
\label{pricesimplecompact}
\end{equation}
along the hypersurfaces $r=r_0$ for any $r_0\geq r_{+}$. Here $A_{\ell}^{(1)}$ are numerical constants that depend on $\ell$ and  $I_{\ell}^{(1)}[\psi]$ denotes what we call the $\ell^{\text{th}}$ time-inverted Newman--Penrose charge of $\psi$. This charge is equal to the Newman--Penrose charge (see Section \ref{high}) of the time integral (see Section \ref{timeinverse}) of $\psi_{\ell}$. In a region close to infinity, we prove that 
\begin{equation}
 \boxed{\psi_{\geq \ell} (u,v,\theta , \varphi )\sim 
{\bar{A}_{\ell}^{(1)}\cdot {r^{\ell}} \cdot  I_{\ell}^{(1)} [ \psi ] ( \theta , \varphi )}\cdot \frac{1}{u^{\ell+2}\cdot v^{\ell+1}}}  
\label{pricesimpleglobal}
\end{equation}
for a numerical constant $\bar{A}_{\ell}^{(1)}$ depending on $\ell$. In particular the asymptotics for the radiation field $r\psi$ along null infinity are as follows:
\begin{equation}
\boxed{
 r\psi_{\geq \ell} (\tau ,\infty ,\theta , \varphi )\sim 
 \frac{2^{4\ell} I_{\ell}^{(1)} [\psi ] ( \theta , \varphi )}{ (2\ell+1 ) \cdot \dots \cdot ( \ell +2 )} \cdot \frac{1}{\tau^{\ell+2}} . }
\label{pricesimpleradiation}
\end{equation}
We also derive asymptotics for higher-order $T$ and $\partial_\rho$ derivatives of $\psi$, where $T$ is the stationary Killing field and $\partial_{\rho}$ is the radial vector field tangential to $\Sigma_{\tau}$. See Section \ref{maintheorems} for the rigorous statements of the main theorems.

\subsection{Overview of the proof}
\label{sec:OverviewOfTheProof}

In this section we provide a summary of the main ideas of the proof of the asymptotics presented in the previous section. To make our methods clearer, we will present a schematic version of the main estimates in this section, omitting terms that do not play an important role for the structure of the arguments. We also omit difficulties such as capturing the redshift and trapping effects which have been extensively addressed in the literature. We note that all the integrals are taken with respect to the volume form corresponding to the induced metric of the integrating region. 

\subsubsection{Almost-sharp time decay of the energy flux}
\label{sec:AlmostSharpDecayOfTheEnergyFlux}
Recall that the energy-momentum tensor corresponding to a linear wave is given by:
$$ \mathbb{T}_{\mu \nu} [\psi] \doteq \partial_{\mu} \psi \partial_{\nu} \psi - \frac{1}{2} g_{\mu \nu} \partial^{\alpha} \psi \partial_{\alpha} \psi . $$
The energy current $J^V [\psi]$ for a vector field $V$ is given by
$$ J^V_{\mu} [\psi] \doteq \mathbb{T}_{\mu \nu} [\psi ] \cdot V^{\nu} . $$ 

We will first show how to obtain almost sharp decay for the standard energy flux through $\Sigma_{\tau}$:
\begin{equation}
\int_{\Sigma_{\tau}}J^{N}[\psi_{\geq \ell}]  \cdot n_{\tau}  \, d\mu_{\Sigma_{\tau}} .
\label{energyflux}
\end{equation}

Here $N$ is a globally timelike vector field such that $N=T$ away from the event horizon, $d\mu_{\Sigma_{\tau}}$ is the volume form corresponding to the hypersurface $\Sigma_{\tau}$, and $n_{\tau}$ the normal to $\Sigma_{\tau}$ (note that we will also use $n$ to denote the normal to other hypersurfaces in analogous situations without specifying it). First of all, the Dafermos--Rodnianski hierachy \cite{newmethod} schematically reads as:
\begin{equation}
\int_{\tau}\int_{\Sigma_{\tau}}\frac{r^{p-1}}{r^2}\cdot J^{N}\left[\phi_{\geq \ell} \right] \cdot n_{\tau} \, d\mu_{\Sigma_{\tau}} d\tau \lesssim \int_{\Sigma_{\tau_0}}\frac{r^{p}}{r^2}\cdot J^{N}\left[\phi_{\geq \ell} \right] \cdot n_{\tau_0}  \, d\mu_{\Sigma_{\tau_0}}
\label{drhierarchy}
\end{equation}
where $\phi_{\geq \ell}:=r\psi_{\geq \ell}$ and $0< p\leq 2$\footnote{We should note that the $r^p$-hierarchy as stated here is not quite right in the case of $p=2$, as in this case there is no term involving angular derivatives. For the sake of the schematic exposition we will ignore this, one can refer to Section \ref{section:rp} for the precise statements.}, where $A \lesssim B$ (for function $A$ and $B$) denotes $A \leq C B$ where $C$ is a constant, and where $r$ is the radial variable. This hierarchy (combined with an integrated local energy decay estimate), applied with $p=1$ and $p=2$, yields $\tau^{-2}$ decay for the energy flux \eqref{energyflux}. In order to obtain faster decay for the energy flux we need to obtain higher-order versions of \eqref{drhierarchy}. For this reason we introduce the following weighted derivatives
\begin{equation}
\widetilde{\Phi}_{(k)}=(-1)^{k}\cdot \left(r^2\partial_{r}\right)^k\phi_{\geq \ell}.
\label{unmorad}
\end{equation}
with $k\geq 0$. Here $\partial_r$ denotes the outgoing null vector field such that $\partial_rr=1$.   Assuming  that $\psi$ is supported on angular frequencies $\geq \ell$ and its initial data are decaying sufficiently fast (for example, are compactly supported) then we obtain the following schematic hierarchies:
\begin{equation}
\int_{\tau}\int_{\Sigma_{\tau}}\frac{r^{p-1}}{r^2}\cdot J^{N}\left[\widetilde{\Phi}_{(k)} \right] \cdot n_{\tau}  \, d\mu_{\Sigma_{\tau}} d\tau \lesssim \int_{\Sigma_{\tau_0}}\frac{r^{p}}{r^2}\cdot J^{N}\left[\widetilde{\Phi}_{(k)} \right] \cdot n_{\tau_0}  \, d\mu_{\Sigma_{\tau_0}}
\label{schematichier1}
\end{equation}
for 
\[0\leq k\leq \ell, \ \ \ \ \ \ \ \ 0< p\leq 2.\]
The hierarchy \eqref{schematichier1} was previously presented in \cite{paper1} for $k=1$ and $\ell \geq 1$. Here we present the full range of $k$ and $\ell$ and show that the maximum number of commutations with $r^2 \partial_r$ is exactly $\ell$ and which allows us to derive precisely $\ell+1$ hierarchies. 

In order to further extend the top order hierarchy (i.e.~$k=\ell$), we define the \textit{modified} weighted derivatives $\Phi_{(k)}$ in an iterative way as follows:
\begin{equation}
{\Phi}_{(k)}= \widetilde{\Phi}_{(k)} + \sum_{m=1}^k \alpha_{k,m} \Phi_{(k-m)},
\label{morad}
\end{equation}
where $\alpha_{k,m}$ denote numerical constants that depend on $k$ and $m$. These quantities prove useful for extending the $r^p$-weighted hierarchies to their almost sharp range, and are important in the definition of the so-called Newman--Penrose charges (that will be defined later). For the top order hierarchy with $k=\ell$ we show the following improved range for $p$:
\[0\leq p<5. \]
The above hierarchies can be connected to each other via the following Hardy inequality:
\begin{equation}
\int_{\tau} \int_{\Sigma_{\tau}}\frac{r^{p-1}}{r^2}\cdot J^{N}\left[\widetilde{\Phi}_{(k)} \right] \cdot n_{\tau}  \, d\mu_{\Sigma_{\tau}} d\tau \lesssim \int_{\tau} \int_{\Sigma_{\tau}}\frac{r^{p-3}}{r^2}\cdot J^{N}\left[\widetilde{\Phi}_{(k+1)} \right] \cdot n_{\tau_0}  \, d\mu_{\Sigma_{\tau_0}}
\label{schematichier2}
\end{equation}
that holds for all $p\neq -1$ (with appropriate boundary assumptions). We also show that if we replace $\psi_{\geq \ell}$ with $T^m\psi_{\geq \ell}$ then we get additional estimates, which can be thought of as extending the range of $p$ to
$$ 0 < p < 2+2m  $$
in \eqref{schematichier1} for $0 \leq k \leq \ell$, and
$$ 0 < p < 5+2m $$
in \eqref{schematichier2} (for both cases this is not quite correct, the hierarchies can be extended only after commuting $m$ times with $\partial_r$, and then exchanging the $\partial_r$ derivatives with $T$ derivatives).

For all hierarchies with $0\leq k\leq \ell-1$ we have essentially two estimates ($p=1,2$). For the last hierarchy $k=\ell$ we have 5 estimates. Hence, our $r^p$-hierarchy  yields the following decay rates for the energy flux and the conformal flux:
\[ \int_{\Sigma_{\tau}}J^{N}[\psi_{\geq \ell}] \cdot n_{\tau}  \, d\mu_{\Sigma_{\tau}} \lesssim \frac{1}{\tau^{2\ell+5-\epsilon}},\ \ \ \  \ \ \ \ 
 \int_{\Sigma_{\tau}}r^2\cdot J^{N}[\psi_{\geq \ell}] \cdot n_{\tau}  \, d\mu_{\Sigma_{\tau}} \lesssim \frac{1}{\tau^{2\ell+3-\epsilon}} ,\]
for any $\epsilon > 0$.

\subsubsection{Almost sharp decay for the radiation field}
\label{sec:AlmostSharpDecayForTheRadiationField}

We can derive almost sharp decay for the radiation field $r\psi_{\geq \ell}|_{\mathcal{I}^{+}}$ using the fundamental theorem of calculus and a Hardy inequality:
\begin{equation*}
\begin{split}
r\psi_{\geq \ell} \lesssim  \sqrt[4]{\int_{\Sigma_{\tau}}J^{N}[\psi_{\geq \ell}] \cdot n_{\tau}  \, d\mu_{\Sigma_{\tau}} \cdot \int_{\Sigma_{\tau}}r^2 \cdot J^N[\psi_{\geq \ell}] \cdot n_{\tau}  \, d\mu_{\Sigma_{\tau}}} 
\lesssim & \sqrt[4]{\frac{1}{\tau^{2\ell+5}}\cdot \frac{1}{\tau^{2\ell+3}}} =\frac{1}{\tau^{\ell+2-\epsilon}} .
\end{split}
\end{equation*}
This is the optimal rate for $r\psi_{\geq \ell}|_{\mathcal{I}^{+}}$, however it is far from optimal for $\psi_{\geq \ell}|_{r=r_0}$. For $\psi_{\geq \ell}$ itself we make use of 
\begin{align*} 
 -\partial_{\rho}(r^{-2\ell} \psi_{\geq \ell}^2) &=2\ell r^{-2\ell-1}\psi_{\geq \ell}^2+2r^{-2\ell}\psi_{\geq \ell}  \partial_{\rho}\psi_{\geq \ell} \\ & \leq (2\ell+1)r^{-2\ell-1}\psi_{\geq \ell}^2+ r^{-2\ell+1} ( \partial_{\rho}\psi_{\geq \ell})^2 \sim r^{-2\ell-1} r^2J^N[\psi_{\geq \ell}] \cdot n_{\tau}
 \end{align*}
to conclude that
\begin{equation}
r^{-2\ell+\epsilon} \psi_{\geq \ell}^2 \lesssim \int_{\Sigma_{\tau}} r^{-2\ell-1+\epsilon} J^N[\psi_{\geq \ell}] \cdot n_{\tau} \, d\mu_{\Sigma_{\tau}} . 
\label{psipointwiseintro}
\end{equation}
Hence, in order to get the (almost) sharp decay for $r\psi_{\geq \ell}|_{\mathcal{I}^{+}}$ it suffices to obtain the sharp decay for  $r^{-2\ell+\epsilon} \psi_{\geq \ell}^2 $ and hence of the weighted energy flux with decreasing weights $r^p$ with $p<0$. For this we present a new hierarchy of elliptic estimates.

\subsubsection{A novel hierarchy of elliptic estimates}
\label{sec:SchematicFormulationOfTheorem74}
Let us define the following weighted derivative
\[\tilde{\partial}_{\rho}= r\partial_{\rho}, \]
where $\partial_{\rho}$ is the radial tangential vector field on the asymptotically hyperboloidal hypersurfaces $\mathcal{S}_{\tau}$.

The hierarchy of elliptic estimates that we derive schematically take the following form:
\begin{equation}
\int_{\mathcal{S}_{\tau}}\frac{1}{r^{p}}\left[D^2\cdot J^{N}[\tilde{\partial}^{m+1}_{\rho}\psi_{\geq \ell}]+J^{N}[\tilde{\partial}^{m}_{\rho}\psi_{\geq \ell}]\right] \cdot n_{\tau} \, d\mu_{\mathcal{S}_{\tau}} \lesssim  \int_{\mathcal{S}_{\tau}}\frac{1}{r^{p-2}}\sum_{s=0}^{m}J^{N}[T\tilde{\partial}^{s}_{\rho}\psi_{\geq \ell}] \cdot n_{\tau} \, d\mu_{\mathcal{S}_{\tau}}
\label{commutellipticintro}
\end{equation}
with
\[-1<p<2\ell+1, \ \ \ \ \ 0\leq m\leq \ell , \]
with the degenerate factor $D$ given by \eqref{Dterm}, and where $\mathcal{S}_{\tau}$ are asymptotically hyperboloidal hypersurfaces (see Figure \ref{fig:243ka}).
Dropping the degenerate term, the uncommuted version ($m=0$) reads:
\begin{equation}
\int_{\mathcal{S}_{\tau}}\frac{1}{r^{p}}J^{N}[\psi_{\geq \ell}] \cdot n_{\tau} \, d\mu_{\mathcal{S}_{\tau}} \lesssim \int_{\mathcal{S}_{\tau}}\frac{1}{r^{p-2}}J^{N}[T\psi_{\geq \ell}] \cdot n_{\tau} \, d\mu_{\mathcal{S}_{\tau}}
\label{ellipticschematic0}
\end{equation}
with
\[-1<p<2\ell+1.\]

The elliptic estimates show that we can add a $T$ derivative at the expense of increasing the power of the weight $r$ by 2. If $\psi$ is supported on angular frequences $\geq \ell$ then we can commute $\ell$ times with $\tilde{\partial}_{\rho}$ and in this way we can then replace $\psi$ with $T\psi$ in the weighted energy fluxes at the expense of an additional $r^2$ factor. 

The proof relies on the fact that we can integrate by the equation 
$$ \partial_{\rho} ( (Dr^2 ) \partial_{\rho}^2 \psi_{\geq \ell} ) + (Dr^2 )' \partial_{\rho}^2 \psi_{\geq \ell} + 2 \partial_{\rho} \psi_{\geq \ell} = \partial_{\rho} \bar{F}_T , $$
where $F_T$ involves $T$ derivatives of $\psi_{\geq \ell}$. We note that this is possible for the proof of our estimates due to the fact that the right hand side of the last equation defines an elliptic operator. This should be compared and contrasted to the Kerr case where something like that is not possible, see section 7 of \cite{aagkerr}.

\subsubsection{Decay of energy flux with negative $r$ weights}
\label{sec:DecayOfEnergyFluxWithNegativeRWeights}

We will use the above elliptic estimates to derive decay for the weighted flux
\[\int_{\mathcal{S}_{\tau}} \frac{1}{r^{2\ell+1-\epsilon}}J^{N}[\psi_{\geq \ell}] \cdot n_{\tau} \, d\mu_{\mathcal{S}_{\tau}}, \]
where

We apply \eqref{ellipticschematic0} successively for $p=2\ell+1-\epsilon, 2\ell-1-\epsilon,...,3-\epsilon,1-\epsilon$ and we get
\begin{equation*}
\begin{split}
\int_{\mathcal{S}_{\tau}} \frac{1}{r^{2\ell+1-\epsilon}}J^{N}[\psi_{\geq \ell}]  \cdot n_{\tau} \, d\mu_{\mathcal{S}_{\tau}}
& \lesssim  \int_{\mathcal{S}_{\tau}} \frac{1}{r^{2\ell-1-\epsilon}}J^{N}[T\psi_{\geq \ell}] \cdot n_{\tau} \, d\mu_{\mathcal{S}_{\tau}} \\
& \lesssim  \int_{\mathcal{S}_{\tau}} \frac{1}{r^{2\ell-3-\epsilon}}J^{N}[T^2\psi_{\geq \ell}] \cdot n_{\tau} \, d\mu_{\mathcal{S}_{\tau}} \\
& \cdots \\
& \lesssim  \int_{\mathcal{S}_{\tau}} \frac{1}{r^{3-\epsilon}}J^{N}[T^{\ell-1}\psi_{\geq \ell}] \cdot n_{\tau} \, d\mu_{\mathcal{S}_{\tau}} \\
& \lesssim  \int_{\mathcal{S}_{\tau}} \frac{1}{r^{1-\epsilon}}J^{N}[T^{\ell}\psi_{\geq \ell}] \cdot n_{\tau} \, d\mu_{\mathcal{S}_{\tau}} \\
& \lesssim  \int_{\mathcal{S}_{\tau}} r^{1+\epsilon}J^{N}[T^{\ell+1}\psi_{\geq \ell}] \cdot n_{\tau} \, d\mu_{\mathcal{S}_{\tau}}\\
\end{split}
\end{equation*}
The top estimate follows by taking $p=2\ell+1-\epsilon$ and the bottom estimate by taking $p=1-\epsilon$ which is the admissible range for $p$. Since $  \displaystyle\int_{\mathcal{S}} r^{1+\epsilon}J^{N}[\psi_{\geq \ell}] \cdot n_{\tau} \, d\mu_{\mathcal{S}_{\tau}} $ decays like $\displaystyle\frac{1}{\tau^{2\ell+4-\epsilon}}$ we have that $\displaystyle\int_{\mathcal{S}} r^{1+\epsilon}J^{N}[T^{\ell+1}\psi_{\geq \ell}] \cdot n_{\tau} \, d\mu_{\mathcal{S}_{\tau}}$ decays like $\displaystyle\frac{1}{\tau^{2\ell+2+2(\ell+1)-\epsilon}}=\frac{1}{\tau^{2(2\ell+3)-\epsilon}}$ which via \eqref{psipointwiseintro} yields the required result for $\psi_{\geq \ell}$.

\subsubsection{Decay of higher order radial derivatives}
\label{sec:DecayOfPartialRKPsiDerivatives}

Using the commuted elliptic estimates \eqref{commutellipticintro} and the above idea we also obtain:
\begin{equation}
\frac{1}{r^{\ell-k-\epsilon}}{\partial^k_{\rho}\psi_{\geq \ell}} \lesssim \frac{1}{\tau^{2\ell+3-\epsilon}}
\label{decayderiv}
\end{equation}
for all $0\leq k\leq \ell$.  Moreover, we have
\begin{equation}
\widetilde{\Phi}_{(k)} \lesssim \frac{1}{\tau^{(\ell-k)+\epsilon}}
\label{decayrad}
\end{equation}
for all $0\leq k\leq \ell$ where $\widetilde{\Phi}_{(k)}$ is as defined in \eqref{unmorad}. Commuting with the time derivative $T^m$ increases the decay rate by $2m$ with $m\geq 0$.

\subsubsection{Newman--Penrose charges}
\label{sec:NPConstants}

In the remaining sections of the introduction, we will summarize our method that allows us to go beyond almost-sharp time-decay estimates and obtain the precise late-time asymptotics for angular modes $\psi_{\ell}$. For this reason we will assume that $\psi=\psi_{\ell}$ in the remainder of this introduction. 

We first need to introduce the Newman--Penrose charges and constants. The Newman--Penrose constants have been presented previously in the context of the Maxwell and the Einstein equations (see \cite{NP1, np2} for the case of Maxwell and Einstein equations on Minkowski, and also the recent \cite{ma1} for the Maxwell equations on Schwarzschild), but in this paper we derive them rigorously in the context of the wave equation for all angular frequencies on subextremal Reissner--Nordstr\"{o}m.

Recall that the modified derivative $\Phi_{(\ell+1)}$ was defined in \eqref{morad}. Then the limiting function on null infinity
\begin{equation}\label{npintro}
I_{\ell} [ \psi ] ( \theta , \varphi ) \doteq \lim_{r \rightarrow \infty}\Phi_{(\ell+1)} (u, r, \theta , \varphi ) ,
\end{equation}
 is independent of $u$. By further decomposing the function $I_{\ell} [ \psi ] ( \theta , \varphi )$ relative to spherical harmonics we obtain
\begin{equation}\label{np:shintro}
I_{\ell} [\psi ] (\theta, \varphi) \doteq \sum_{m=-\ell}^{\ell} I_{m , \ell} [\psi ] Y_{m , \ell} (\theta , \varphi ) , 
\end{equation}
and define the constants $I_{m , \ell}$ to be the Newman--Penrose constants of $\psi$.

\subsubsection{Near-infinity asymptotics for $I_{\ell}\neq 0$}
\label{sec:AsymptoticsForIEllNeq0}

We first derive asymptotics in the case where the initial data satisfies $I_{\ell}\neq 0$; such data are certainly not compactly supported, nonetheless our methods still provide almost sharp decay in this case. 

Let $(u,v)$ denote the standard null coordinates covering the black hole exterior. 
Working in the near-infinity region $\mathcal{B}_{\gamma_{\alpha}}$ that lies to the right of the curve $\gamma_{\alpha}=\{ v-u = v^{\alpha} \}$  for some $\alpha \in (0,1)$ (where $\alpha=\alpha({\ell})$ is a constant depending on the angular frequency $\ell$) as depicted below.
\begin{figure}[H]
\begin{center}
\includegraphics[width=6cm]{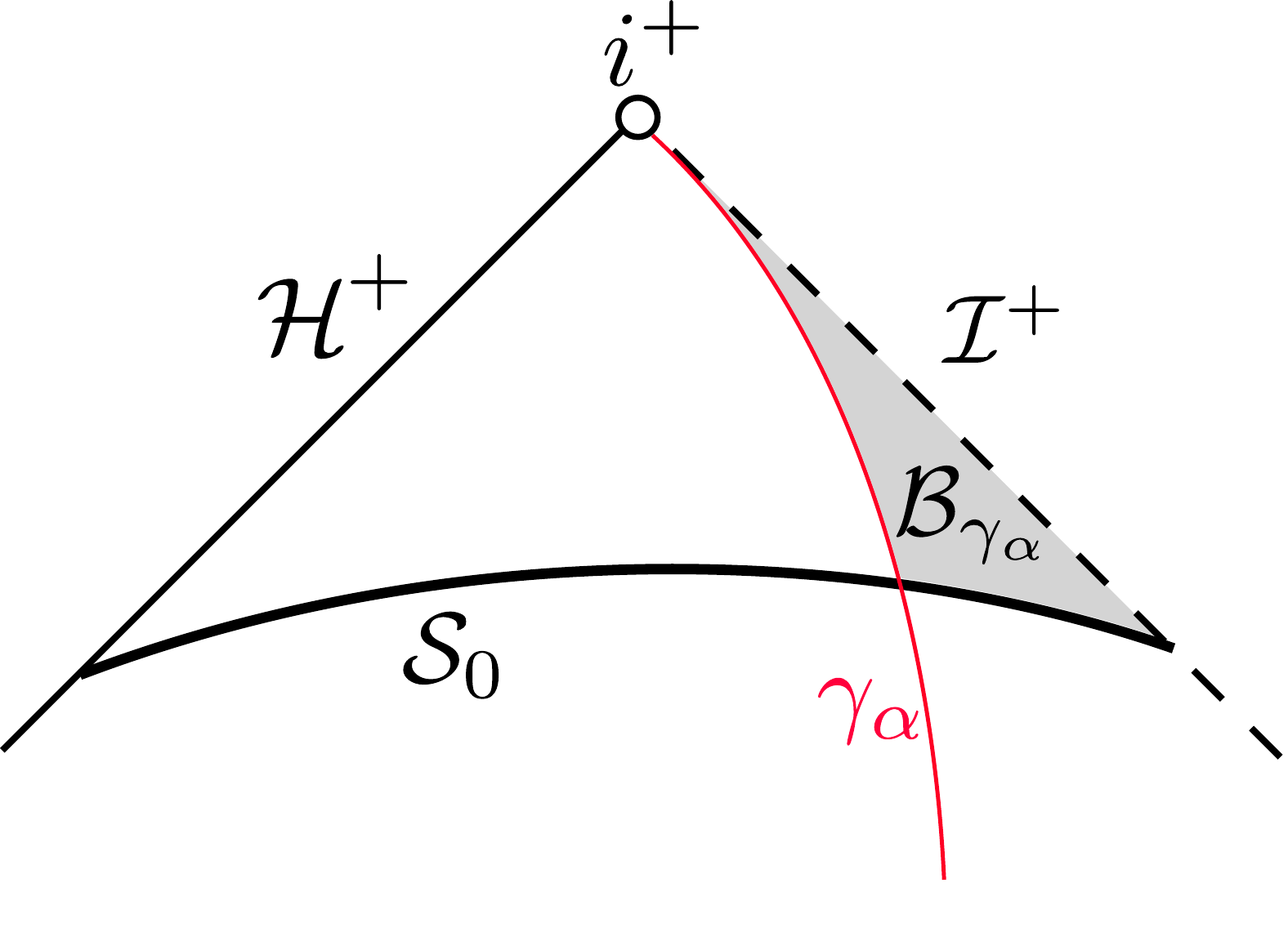}
\vspace{-0.4cm}
\caption{\label{fig:243k2}The curve $\gamma_{\alpha}$ and the region $\mathcal{B}_{\gamma_{\alpha}}$.}
\end{center}
\end{figure}
\vspace{-0.8cm} 
By successively integrating in $v$ we obtain asymptotics for radial derivatives of $\psi$ in $\mathcal{B}_{\gamma_{\alpha}}$ in the following order:
\[v^2\partial_\rho\Phi_{(\ell)} \rightarrow \partial_{\rho}\Phi_{(\ell)}\rightarrow \Phi_{(\ell)} 
\rightarrow \widetilde{\Phi}_{(\ell-1)}\rightarrow \widetilde{\Phi}_{(\ell-2)}\rightarrow \cdots \rightarrow \widetilde{\Phi}_{(1)}\rightarrow \phi \]
This also yields precise asymptotics for the radial derivatives $\frac{\partial_r^k \psi}{r^{\ell - k}}$ for $0 \leq k \leq \ell$ in the same region:
\begin{equation}
\partial_{\rho}^k\psi \sim \widetilde{A}_{\ell , k} r^{\ell-k}\cdot \frac{1}{\tau^{2\ell+3}} \label{radialasympt}
\end{equation}
for $0\leq k\leq \ell$ and numerical constants $\widetilde{A}_{\ell , k}$.

\subsubsection{Improved decay of $\partial_{\rho}^{\ell+1}\psi$ and global asymptotics}
\label{sec:MagicOfPartialEll1Psi}

One way to propagate the asymptotics for a quantity, say $Q$, from $\mathcal{B}_{\gamma_{\alpha}}$ to the rest of the black hole exterior region, all the way up to the event horizon, is to derive faster decay for the radial derivative $\partial_{\rho}Q$ and the use the fundamental theorem of calculus for $Q$. However, as is evident from \eqref{radialasympt}, the (sharp) decay rate for all the radial derivatives up to the $\ell^{\text{th}}$-order is the same. For this reason we turn to the radial derivative $\partial_{\rho}^{\ell+1}\psi$. 

Using the wave equation we have for all $k\geq 0$ that:
\begin{equation}
\partial^{k+1}_{\rho}\psi+\frac{1}{r}2k\cdot \partial^k_{\rho}\psi=\frac{1}{r^2} [ -\Delta_{\mathbb{S}^2} - k ( k+1) ] \partial^{k-1}_{\rho}\psi+... 
\label{eq3}
\end{equation} 
The omitted terms are $\partial_{\rho}$ derivatives of $T\psi$, they decay sufficiently fast and can be thought of as lower-order terms. 

If we consider $\psi = \psi_{\ell}$ we have that coefficient of the term $\partial^{k-1}_{\rho} \psi$ does not vanish for $k\leq l$, and so no result can be obtained for $\partial^{k+1}_{\rho} \psi$ since it is non-trivially coupled with the lower order derivatives. On the other hand when $k=\ell+1$, the coefficient of $\partial^{k-1}_{\rho} \psi = \partial^{\ell}_{\rho} \psi$ \textbf{vanishes}, and this yields
\begin{equation}
\partial^{\ell+2}_{\rho}\psi+\frac{1}{r}2(\ell+1)\cdot \partial^{\ell+1}_{\rho}\psi=... 
\label{eq4}
\end{equation} 
Multiplying the above equation with $r^{2(\ell+1)}$ yields
\begin{equation}
r^{2(\ell+1)}\partial^{\ell+2}_{\rho}\psi+2(\ell+1)\cdot r^{2\ell+1}\partial^{\ell+1}_{\rho}\psi= \partial_{\rho}\left(r^{2(\ell+1)}\cdot\partial^{\ell+1}_{\rho}\psi\right)=...
\label{eq2}
\end{equation} 
Integrating in $r$ the above yields decay for $\partial^{\ell+1}_{\rho} \psi$ \emph{faster} than $\tau^{-2\ell-3}$ which can then be used to propagate the precise asymptotics of $\partial^\ell_{\rho}\psi$ everywhere in the exterior region. We can then inductively propagate the asymptotics of $\partial^k_{\rho}\psi$ everywhere from $\mathcal{B}_{\alpha}$ in the black hole exterior region for all $k= \ell-1,\ell-2, ..., 2,1,0$.

Note that in the previous computation this is the only part of our proof where we use the precise form of the Reissner--Nordstr\"{o}m metric, and more specifically we use that $(Dr^2)''=2$.

\subsubsection{A general construction of the time integral}
\label{sec:AGeneralConstructionOfTheTimeIntegral}

If we assume that for an angular mode $\psi = \psi_{\ell}$ the Newman--Penrose charge $I_{\ell}[\psi]$ vanishes then we can construct the time integral of $\psi$, namely a suitably regular solution $\tilde{\psi}$ to the wave equation such that $T\tilde{\psi}=\psi$. We define the time-inverted Newman--Penrose charge of $\psi$ to be 
\[I_{\ell}^{(1)}[\psi]= I_{\ell}[\tilde{\psi}]. \]
We note that the construction of $\tilde{\psi}$  requires repeated integration in $r$ (the number of integrations depending on $\ell$), however one can still obtain explicit, yet complicated, formulas for $I_{\ell}^{(1)}[\psi]$ in terms of the initial data of $\psi$. 

We also note that in our case we can still construct the time-inverse by solving an ODE, similarly to the spherically symmetric case, and in contrast to the Kerr case where such an approach is not possible and where this step involves the inversion of an operator that is elliptic outside the ergoregion, see section 9 of \cite{aagkerr} for details.

\subsubsection{Asymptotics for $I_{\ell}=0$}
\label{sec:AsymptoticsForIEll0}

We now have all the tools needed to prove the precise asymptotics \eqref{pricesimplecompact}, \eqref{pricesimpleglobal} and \eqref{pricesimpleradiation} in the case where $I_{\ell} [\psi ] =0$ (again for an angular mode $\psi = \psi_{\ell}$). First we construct the time integral $\tilde{\psi}$ of $\psi$ and show that generically its Newman--Penrose charge is non-vanishing. We can then apply the methods of Sections \ref{sec:MagicOfPartialEll1Psi} and
\ref{sec:AsymptoticsForIEllNeq0} to obtain the asymptotics for $\tilde{\psi}$. Finally, commuting the derived estimates with $T$ yields the asymptotics for $\psi$.

Moreover, we note that if the time-inverted $\ell$-th Newman--Penrose charge happens to vanish as well, we can repeat the process, and prove asymptotics that decay one power faster. Using our results, one can also try to recover expressions similar to the ones given for $I_0^{(1)}$ in \cite{paper-bifurcate}, for compactly supproted data. In this case, for time-symmetric data (i.e. $T\psi|_{\Sigma_0} = 0$) that also vanish at the event horizon, the time-inverted $\ell$-th Newman--Penrose charge will vanish as well, hence in that situation we can show the polynomial late-time law of $\tau^{-2\ell-4}$ (this should be contrasted with the suboptimal rates obtained for such data in \cite{other1} and \cite{dssprice}).

As a final remark, let us also note that one can use the results of the present paper and the methods developed in \cite{logasymptotics} to obtain second order asymptotics.

\subsection{Acknowledgements}
\label{sec:Acknowledgements}

The second author (S.A.) acknowledges support through the NSERC grant 502581 and the Ontario Early Researcher Award.

\section{The main theorems}
\label{maintheorems}
In this section we state the main results that we obtain in this paper. 

We state first the precise late-time asymptotics obtained for a linear wave restricted to angular frequencies $\geq \ell$ for some $\ell \geq 1$ (we denote this spectral projection by $P_{\geq \ell}$, and by $P_{\ell}$ we denote the spectral projection to frequency $\ell$) with vanishing $\ell$-th Newman--Penrose charge, but non-vanishing time-inverted $\ell$-th Newman--Penrose charge (such data are considered to be the most naturally physical ones and include ).

\begin{theorem}\label{thm:main2}
Let $\psi$ be a solution of the wave equation \eqref{eq:waveequation} in the domain of outer communications of a subextremal Reissner--Nordstr\"{o}m spacetime up to and including the horizon. We assume that our data are smooth and compactly supported, which implies that its Newman--Penrose charges are vanishing. We additionally assume that for some $\ell \geq 1$ its time-inverted $\ell$-th Newman--Penrose charge is non-vanishing, i.e.
$$ I^{(1)}_{\ell} [\psi ] \neq 0 . $$

Let $m \in \mathbb{N}$. For $\Phi_{(k)}$ the quantities that are given in Proposition \ref{prop:np}, we assume that
\begin{align*}
& E^{(1)} \doteq  \sum_{k \leq 2\ell+1} \Big( \int_{\mathcal{N}_{u_0}} r^{5-\delta} ( L ( P_{\ell} T^{m+k} \Phi_{(\ell)} ) )^2 \,d\omega dv + \int_{\mathcal{N}_{u_0}} r^2 ( L ( P_{\ell+1} T^{m+k} \Phi_{(\ell)} ) )^2 \,d\omega dv  \\ & + \int_{\mathcal{N}_{u_0}} r^2 ( L ( P_{\geq \ell+2} T^{m+k} \Phi_{(\ell)} ) )^2 \,d\omega dv \Big) + \sum_{s=0}^{\ell-1} \sum_{|\alpha | \leq m+\ell+1  , k \leq m+3\ell+1} \int_{\mathcal{N}_{u_0}} r^{2-\delta} | \slashed{\nabla}^{\alpha}_{\mathbb{S}^2} (L ( P_{\geq \ell+2} T^k \Phi_{(s)} ) ) |^2 \,d\omega dv\\ &+\sum_{0 \leq k+l \leq 3\ell + m +3} \int_{\Sigma_{u_0}} J^N [ N^k T^l \psi ] \cdot n_{u_0} \, d\mu_{\Sigma_{u_0}} < \infty . 
\end{align*}

Then we have for $(u,v,\theta , \varphi ) \in \{ r \geq R \}$ for some $R > r_+$ that
\begin{equation}\label{asym:rf0}
\left| \frac{P_{\geq \ell} ( T^m \psi ) (u,v,\theta , \varphi )}{r^{\ell}} - A^{(1)}_{\ell , m} I^{(1)}_{\ell} [ \psi ] ( \theta , \varphi )T^m \left( \frac{1}{u^{\ell+2} v^{\ell+1}} \right) \right| \leq C E^{(1)} u^{-\ell-2-m-\eta'_1}{v^{-\ell-1}} , 
\end{equation}
for some $\eta'_1 > 0$, for $C = C ( D , R  , m , \ell, \eta'_1 ) > 0$,  and for a numerical constant $A^{(1)}_{\ell , m}$ that depends on $\ell$ and $m$.

For $(\tau , r ,\theta , \varphi ) \in \{ r \leq R \}$ we have that
\begin{equation}\label{asym:w0}
\left| \frac{P_{\geq \ell} ( T^m \psi ) (\tau ,r,\theta , \varphi )}{r^{\ell}} - \frac{A^{(1)}_{\ell , m} I^{(1)}_{\ell} [ \psi ] ( \theta , \varphi )}{\tau^{2\ell+3+m}}  \right| \leq C E^{(1)} \tau^{-2\ell -3 -m-\eta'_2} , 
\end{equation}
for some $\eta'_2 > 0$ and for $C = C ( D , R , m , \ell , \eta'_2) > 0$.
\end{theorem}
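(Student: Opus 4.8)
\emph{Proof strategy.} The plan is: reduce to a single angular mode, pass to the time integral $\tilde\psi$ (which converts the hypothesis $I^{(1)}_{\ell}[\psi]\neq 0$ into a non‑vanishing \emph{ordinary} Newman--Penrose charge), run the near‑infinity expansion of Section~\ref{sec:AsymptoticsForIEllNeq0} for $\tilde\psi$, propagate it to the horizon via the $\partial_\rho^{\ell+1}$‑cancellation of Section~\ref{sec:MagicOfPartialEll1Psi}, and finally commute with $T^{m+1}$. To begin, since $P_{\geq\ell}$, $P_{\ell}$ and $T$ all commute with $\square_g$ on Reissner--Nordstr\"om by spherical symmetry, it suffices to analyse $P_{\geq\ell}(T^m\psi)=T^m\psi_{\ell}+T^m\psi_{\geq\ell+1}$. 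The piece $\psi_{\geq\ell+1}$ is an error term: applying the almost‑sharp decay of Sections~\ref{sec:AlmostSharpDecayOfTheEnergyFlux}--\ref{sec:DecayOfEnergyFluxWithNegativeRWeights} at angular frequency $\ell+1$ and using \eqref{psipointwiseintro} shows that $r^{-\ell}T^m\psi_{\geq\ell+1}$ decays at least two powers of $\tau$ faster than the leading term in \eqref{asym:w0} (and likewise near $\mathcal{I}^+$). Hence one may take $\psi=\psi_{\ell}$, with smooth data rapidly decaying in $r$, so the $P_{\ell}$‑component of $E^{(1)}$ is finite and the charge $I_{\ell}[\psi]=0$.

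Because $I_{\ell}[\psi]=0$, the construction of Section~\ref{sec:AGeneralConstructionOfTheTimeIntegral} (made precise via Proposition~\ref{prop:np}) produces a frequency‑$\ell$ solution $\tilde\psi$ of \eqref{eq:waveequation} with $T\tilde\psi=\psi$, obtained by solving a finite chain of radial ODEs, and by definition $I^{(1)}_{\ell}[\psi]=I_{\ell}[\tilde\psi]\neq 0$. The delicate point here — which I expect to be the main obstacle — is \emph{quantitative}: one must bound the finite list of weighted energy norms of $\tilde\psi$ that feed the hierarchies of Sections~\ref{sec:AlmostSharpDecayOfTheEnergyFlux} and \ref{sec:SchematicFormulationOfTheorem74} in terms of $E^{(1)}[\psi]$. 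The repeated $r$‑integrations in the ODE chain converge \emph{precisely because} $I_{\ell}[\psi]=0$, but they are borderline; the weights $r^{5-\delta}$ and $r^{2-\delta}$ and the splitting $P_{\ell}/P_{\ell+1}/P_{\geq\ell+2}$ in the definition of $E^{(1)}$ are tailored so that the relevant norms of $\tilde\psi$ stay finite even though $\tilde\psi$ has $I_{\ell}[\tilde\psi]\neq 0$, hence non‑integrable tails. Propagating finiteness of this list through the construction, keeping track of borderline powers, is the technical heart of the argument.

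Feeding these norms into the $r^p$‑hierarchy \eqref{schematichier1}--\eqref{eq2} and the elliptic hierarchy \eqref{commutellipticintro}, together with their $T$‑ and $r\partial_\rho$‑commuted versions, gives the almost‑sharp rates for $\tilde\psi$, one power of $\tau$ slower than \eqref{radialasympt} and \eqref{decayderiv}: $r^{k-\ell}\partial_\rho^k\tilde\psi\lesssim\tau^{-2\ell-2+\epsilon}$ on $\{r\leq R\}$ for $0\leq k\leq\ell$, with $T^j$‑commutation gaining $2j$ powers. In the region $\mathcal{B}_{\gamma_\alpha}$ to the right of $\gamma_\alpha=\{v-u=v^\alpha\}$ one then starts from the $u$‑independent limit $\Phi_{(\ell+1)}(u,r,\cdot)\to I_{\ell}[\tilde\psi]$ of \eqref{npintro} and integrates the transport equations down the chain $v^2\partial_\rho\Phi_{(\ell)}\to\partial_\rho\Phi_{(\ell)}\to\Phi_{(\ell)}\to\widetilde{\Phi}_{(\ell-1)}\to\cdots\to\widetilde{\Phi}_{(1)}\to r\tilde\psi$, using the decay just obtained to bound the sources; this yields $r^{k-\ell}\partial_\rho^k\tilde\psi=\widetilde A_{\ell,k}\,u^{-\ell-1}v^{-\ell-1}+(\text{strictly faster})$ in $\mathcal{B}_{\gamma_\alpha}$ for $0\leq k\leq\ell$.

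Since every radial derivative of $\tilde\psi$ of order $\leq\ell$ shares the same $\tau$‑rate, the profile cannot be carried inward by naive $\rho$‑integration; instead one invokes the cancellation \eqref{eq2}, valid because $(Dr^2)''=2$ on Reissner--Nordstr\"om: $\partial_\rho(r^{2(\ell+1)}\partial_\rho^{\ell+1}\tilde\psi)$ equals radial derivatives of $T\tilde\psi=\psi$ alone, so integrating in $r$ from $\mathcal{B}_{\gamma_\alpha}$ shows $\partial_\rho^{\ell+1}\tilde\psi$ decays strictly faster than $r^{-1}\tau^{-2\ell-2}$. The fundamental theorem of calculus in $\rho$, combined with the redshift estimate near $\mathcal{H}^+$, then transports the profile of $\partial_\rho^{\ell}\tilde\psi$ to all of $\{r\geq r_{+}\}$, and a downward induction on $k=\ell-1,\dots,0$ (boundary data from the $\mathcal{B}_{\gamma_\alpha}$‑profile of $\partial_\rho^k\tilde\psi$, integrand the already‑propagated $\partial_\rho^{k+1}\tilde\psi$) gives $r^{-\ell}\tilde\psi=\widetilde A_{\ell,0}\,u^{-\ell-1}v^{-\ell-1}+(\text{faster})$ globally and $r^{-\ell}\tilde\psi=A_{\ell}\,\tau^{-2\ell-2}+(\text{faster})$ on $\{r\leq R\}$. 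Finally, applying $T^{m+1}$ and using $I_{\ell}[\tilde\psi]=I^{(1)}_{\ell}[\psi]$, the leading term of $P_{\geq\ell}(T^m\psi)=T^{m+1}\tilde\psi+T^m\psi_{\geq\ell+1}$ is $A^{(1)}_{\ell,m}I^{(1)}_{\ell}[\psi]\,T^m(u^{-\ell-2}v^{-\ell-1})\,r^{\ell}$ near $\mathcal{I}^+$ and $A^{(1)}_{\ell,m}I^{(1)}_{\ell}[\psi]\,\tau^{-2\ell-3-m}\,r^{\ell}$ on $\{r\leq R\}$, while all errors — those of the previous steps after $m+1$ commutations with $T$, which is exactly why $E^{(1)}$ carries $T^{k}$‑ and $T^{m+k}$‑commuted norms up to high order — decay faster by some $\eta'_1,\eta'_2>0$. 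This is \eqref{asym:rf0}--\eqref{asym:w0}.
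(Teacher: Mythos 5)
Your proposal follows essentially the same route as the paper: split off the $P_{\geq\ell+1}$ piece as a faster-decaying error (cf.\ Corollaries \ref{rem:n+2}, \ref{rem:n+2t}), construct the time integral $\tilde\psi$ with $I_{\ell}[\tilde\psi]=I^{(1)}_{\ell}[\psi]\neq 0$ as in Section \ref{timeinverse}, derive the $T$-commuted asymptotics for $\tilde\psi$ in $\mathcal{B}_{\alpha_{\ell}}$ and propagate them globally via the $(Dr^2)''=2$ cancellation for $\partial_{\rho}^{\ell+1}$, and conclude by taking $m+1$ time derivatives (Propositions \ref{prop:mainasnt} and \ref{prop:vannp}). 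You also correctly identify the main technical burdens (quantitative control of the norms of $\tilde\psi$ by $E^{(1)}$, and running the commuted hierarchies rather than merely differentiating the leading-order expansion), so there is nothing substantive to add.
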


We now state the precise asymptotics obtained for a linear wave localized at angular frequencies $\geq \ell$ for some $\ell \geq 1$ with non-vanishing $\ell$-th Newman--Penrose constant.

\begin{theorem}\label{thm:main1}
Let $\psi$ be a solution of the wave equation \eqref{eq:waveequation} in the domain of outer communications of a subextremal Reissner--Nordstr\"{o}m spacetime up to and including the horizon. Assume that its $\ell$-th Newman--Penrose constant is non-vanishing for some fixed $\ell \geq 1$, i.e.
$$ I_{\ell} [\psi ] \neq 0 . $$
Let $m \in \mathbb{N}$. Moreover we assume that our data are smooth and that:
\begin{equation*}
\sup_r \sum_{s\leq 2} \sum_{l \leq m}\left(\int_{\s^2}|\snabla_{\s^2}^{s+1} ( T^l \Phi_{(\ell)} )|^2\,d\omega\right)(u_0,r)<\infty ,
\end{equation*}
and
\begin{equation*}
\sup_r \sum_{s\leq 2} \sum_{l \leq m} \sum_{0\leq k\leq \ell-1} \sum_{0\leq j\leq \ell-k} \left(\int_{\s^2}|\snabla_{\s^2}^{s+1} \slashed{\Delta}_{\s^2}^{j} ( T^l \Phi_{(k)} ) |^2\,d\omega\right)(u_0,r)<\infty,
\end{equation*}
for all $r$, where the $\Phi_{(k)}$'s are given in Proposition \ref{prop:np}, and that
\begin{align*}
& E \doteq   \sum_{k \leq 2\ell+1} \Big( \int_{\mathcal{N}_{u_0}} r^{3-\delta} ( L ( P_{\ell} T^{m+k} \Phi_{(\ell)} ) )^2 \,d\omega dv + \int_{\mathcal{N}_{u_0}} r^2 ( L ( P_{\ell+1} T^{m+k} \Phi_{(\ell)} ) )^2 \,d\omega dv  \\ & + \int_{\mathcal{N}_{u_0}} r^2 ( L ( P_{\geq \ell+2} T^{m+k} \Phi_{(\ell)} ) )^2 \,d\omega dv \Big) + \sum_{s=0}^{\ell-1} \sum_{|\alpha | \leq m+\ell+1  , k \leq m+3\ell+1} \int_{\mathcal{N}_{u_0}} r^{2-\delta} | \slashed{\nabla}_{\mathbb{S}^2}^{\alpha} (L ( P_{\geq \ell+2} T^k \Phi_{(s)} ) ) |^2 \,d\omega dv\\ &+\sum_{0 \leq k+l \leq 3\ell + m +3} \int_{\Sigma_{u_0}} J^N [ N^k T^l \psi ] \cdot n_{u_0} \, d\mu_{\Sigma_{u_0}} < \infty . 
\end{align*}

Then we have for $(u,v,\theta , \varphi ) \in \{ r \geq R \}$ for some $R > r_+$ that
\begin{equation}\label{asym:rfn0}
\left| \frac{P_{\geq \ell} ( T^m \psi ) (u,v,\theta , \varphi )}{r^{\ell}} - A_{\ell,m} I_{\ell} [ \psi ] ( \theta , \varphi ) T^m \left( \frac{1}{u^{\ell+1} v^{\ell+1}} \right) \right| \leq C E u^{-\ell-1-m-\eta_1}{v^{-1-\ell}} , 
\end{equation}
for some $\eta_1 > 0$, for $C = C ( D,R, m,\ell,\eta_1 )> 0$, and for a quantity $A_{\ell,m}$ that depends on $\ell$ and $m$.

For $(\tau , r ,\theta , \varphi ) \in \{ r \leq R \}$ we have that
\begin{equation}\label{asym:wn0}
\left| \frac{P_{\geq \ell} (T^m \psi ) (\tau ,r,\theta , \varphi )}{r^{\ell}} - \frac{A_{\ell , m} I_{\ell} [ \psi ] ( \theta , \varphi )}{\tau^{2\ell+2+m}}  \right| \leq C E \tau^{-2\ell -2-m -\eta_2} , 
\end{equation}
for some $\eta_2 > 0$ and for $C = C(D,R,m,\ell,\eta_2) > 0$.
\end{theorem}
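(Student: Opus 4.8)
The plan is to reduce Theorem \ref{thm:main1} to the combination of three ingredients already advertised in the overview: (i) the almost-sharp $r^p$-hierarchy and weighted elliptic hierarchy, which together control the weighted energy fluxes on $\Sigma_\tau$ and $\mathcal{S}_\tau$ with the decay rates quoted in Sections \ref{sec:AlmostSharpDecayOfTheEnergyFlux}--\ref{sec:DecayOfEnergyFluxWithNegativeRWeights}; (ii) the near-infinity asymptotics in the region $\mathcal{B}_{\gamma_\alpha}$ obtained by successively integrating in $v$ from the limiting quantity $\Phi_{(\ell)}$ (whose leading behaviour is governed by the Newman--Penrose charge $I_\ell[\psi]$), which is exactly the case treated schematically in Section \ref{sec:AsymptoticsForIEllNeq0}; and (iii) the improved decay of $\partial_\rho^{\ell+1}\psi$, which follows from the structural identity \eqref{eq2} valid because $(Dr^2)''=2$, and which propagates the asymptotics from $\mathcal{B}_{\gamma_\alpha}$ down to the horizon. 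So the first step is to assemble these and fix notation for $A_{\ell,m}$.

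First I would establish the $\{r\ge R\}$ asymptotics \eqref{asym:rfn0} with $m=0$. Starting from the definition of $I_\ell[\psi]=\lim_{r\to\infty}\Phi_{(\ell+1)}$, one commutes the wave equation to write an evolution equation for $\Phi_{(\ell)}$ along outgoing null cones whose source has the right $r$-weight so that $L(\Phi_{(\ell)})\sim I_\ell[\psi]\, r^{-2}$ up to faster-decaying terms controlled by $E$; integrating in $v$ gives $\Phi_{(\ell)}\sim -I_\ell[\psi]\,r^{-1}$ plus a $u$-dependent integration "constant" that must be identified using the flux decay from (i) evaluated on the curve $\gamma_\alpha$ — this is where the power $u^{-\ell-1}$ enters, because the weighted flux $\int_{\mathcal{S}_\tau}r^{1+\epsilon}J^N[\psi_{\ge\ell}]\cdot n_\tau$ decays like $\tau^{-2\ell-4+\epsilon}$, and combined with \eqref{psipointwiseintro}-type pointwise control this fixes the leading coefficient. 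Then one integrates successively $\Phi_{(\ell)}\to \widetilde\Phi_{(\ell-1)}\to\cdots\to\widetilde\Phi_{(1)}\to\phi$ exactly as in the chain displayed in Section \ref{sec:AsymptoticsForIEllNeq0}, picking up the numerical constants $\widetilde A_{\ell,k}$; each integration loses no decay in $\tau$ because the relevant inhomogeneities decay faster, and one tracks the error against $E$. Dividing by $r^\ell$ and matching powers produces the constant $A_{\ell,0}$ and the profile $u^{-\ell-1}v^{-\ell-1}$.

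Next I would propagate into $\{r\le R\}$ to get \eqref{asym:wn0}. Here I use (iii): multiplying \eqref{eq4} by $r^{2(\ell+1)}$ and integrating in $r$ from $\gamma_\alpha$ yields decay for $\partial_\rho^{\ell+1}\psi$ strictly faster than $\tau^{-2\ell-3}$, hence by the fundamental theorem of calculus the asymptotics of $\partial_\rho^\ell\psi$ extend from $\mathcal{B}_{\gamma_\alpha}$ to all of $\{r\le R\}$ down to the horizon, and then one descends inductively through $k=\ell-1,\dots,0$ using \eqref{eq3}, at each stage the already-established faster decay of the lower-order radial derivatives together with the redshift estimate near $r_+$ (which I take from the literature, as the paper does). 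On the timelike-slice side this converts the $v$-decay into $\tau$-decay, turning $u^{-\ell-1}v^{-\ell-1}$ into $\tau^{-2\ell-2}$ and producing \eqref{asym:wn0} with the same constant $A_{\ell,0}$ up to the universal redshift normalisation. Finally, for general $m$ I commute the entire argument with $T^m$: since $[T,\square_g]=0$ and $T$ commutes with all the weighted vector fields $r^2\partial_r$, $\tilde\partial_\rho$, the same hierarchies and integrations apply verbatim to $T^m\psi$, each extra $T$ contributing $T^m(u^{-\ell-1}v^{-\ell-1})$ resp.\ $T^m(\tau^{-2\ell-2})=c\,\tau^{-2\ell-2-m}$, and the norm $E$ is exactly the $T^{m}$-augmented version appearing in the hypotheses; the supplementary angular-derivative assumptions on $\Phi_{(k)}$ at $u=u_0$ are what is needed to run the elliptic hierarchy with enough regularity on $\s^2$ to make the pointwise Sobolev embedding on the spheres work.

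The main obstacle I expect is the identification of the leading-order coefficient — i.e.\ showing that the $u$-dependent integration function that appears after integrating the transport equation for $\Phi_{(\ell)}$ in $v$ is genuinely $\sim c\,u^{-\ell-1}$ with the stated constant, rather than merely $O(u^{-\ell-1+\epsilon})$. This requires matching the near-infinity expansion against the sharp (not almost-sharp) decay of the negatively $r$-weighted fluxes on $\mathcal{S}_\tau$, which in turn needs the elliptic hierarchy \eqref{commutellipticintro} at its endpoint $p=2\ell+1$ combined with the sharp $\tau^{-2\ell-4+\epsilon}$ decay of $\int_{\mathcal{S}_\tau}r^{1+\epsilon}J^N[\psi_{\ge\ell}]\cdot n_\tau$ — and crucially one must remove the $\epsilon$ loss, which is precisely what passing to $\partial_\rho^{\ell+1}\psi$ via the algebraic miracle $(Dr^2)''=2$ accomplishes. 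Keeping track of all the numerical constants through the $\ell+1$ nested $v$-integrations so as to exhibit $A_{\ell,m}$ explicitly (and match it with $A^{(1)}_{\ell,m}$ in the $I_\ell=0$ case via the time-integral construction) is the other bookkeeping-heavy part, but it is routine once the transport and elliptic estimates are in place.
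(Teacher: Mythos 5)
Your global architecture matches the paper's (near-infinity analysis in $\mathcal{B}_{\alpha_{\ell}}$, then radial propagation to the horizon via the improved decay of $\partial_{\rho}^{\ell+1}\psi$ coming from $(Dr^2)''=2$, then $T^m$-commutation), but there is a genuine gap at the single most important step: the identification of the leading-order coefficient. You propose to obtain $L\Phi_{(\ell)}\sim I_{\ell}r^{-2}$, integrate in $v$, and then pin down the resulting $u$-dependent integration function by ``matching against the sharp decay of the negatively $r$-weighted fluxes,'' with the $\epsilon$-loss removed by the $\partial_{\rho}^{\ell+1}$ trick. This cannot work: all the energy hierarchies (including the elliptic hierarchy at its endpoint) deliver only \emph{upper bounds}, and upper bounds --- however sharp --- never produce the precise constant multiplying $u^{-\ell-1}v^{-\ell-1}$; moreover the $\partial_{\rho}^{\ell+1}$ device is a finite-$r$ propagation tool and does nothing to sharpen near-infinity flux decay. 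The mechanism the paper actually uses is different: one integrates the transport equation \eqref{eq:invw} for $r^{-2\ell}L\Phi_{(\ell)}$ \emph{in the $u$-direction from the initial hypersurface}, where the normalisation \eqref{eq:np1as} reads off $I_{\ell}[\psi]$ directly from the data (Lemma \ref{lm:np1v}, Proposition \ref{prop:auxasn0}). This yields the full two-variable profile $L\Phi_{(\ell)}\sim 2^{2\ell+1}I_{\ell}(v-u)^{2\ell}v^{-2\ell-2}$, not merely $I_{\ell}r^{-2}$; it is the $(v-u)^{2\ell}/v^{2\ell+2}$ structure that, via the explicit identity \eqref{eq:importantvintegral}, generates one factor of $u^{-1}$ at \emph{each} of the $\ell+1$ subsequent $v$-integrations $\Phi_{(\ell)}\to\widetilde{\Phi}_{(\ell-1)}\to\cdots\to\phi$. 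In particular $\Phi_{(\ell)}$ behaves like $u^{-1}(\tfrac{v-u}{v})^{2\ell+1}$, not like $-I_{\ell}r^{-1}$, and the power $u^{-\ell-1}$ is accumulated through the chain rather than entering at a single matching step.

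Relatedly, you have the roles of the boundary terms reversed: in the paper the $v$-integrations are anchored on the curve $\gamma_{\alpha_{\ell}}$, where $r\sim v^{\alpha_{\ell}}$ with $\alpha_{\ell}<1$, precisely so that the almost-sharp (hence $\epsilon$-lossy) pointwise bounds of Theorem \ref{dec:drk1} render the boundary contributions \emph{error} terms of size $O(u^{-1-k-\eta_k})$; they are not the source of the leading coefficient. Without the $u$-integration of the conservation-law structure from the data, your argument has no way to convert the hypothesis $I_{\ell}[\psi]\neq 0$ into an asymptotic identity rather than an upper bound, so the proof of \eqref{asym:rfn0} --- and hence of \eqref{asym:wn0}, which is downstream of it --- does not close. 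The remaining ingredients (the descent in $k$, the exact cancellations of the $\gamma_{\alpha_{\ell}}$ boundary terms, and the $L^m$-commuted hierarchies needed before exchanging $L$ for $T$) are bookkeeping once this is repaired.
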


Finally we state the precise asymptotics obtained for the \textit{radiation field} of a linear wave localized at frequencies $\geq \ell$ for some $\ell \geq 1$ with vanishing $\ell$-th Newman--Penrose charge, but non-vanishing time-inverted $\ell$-th Newman--Penrose charge, at future null infinity $\mathcal{I}^+$.
\begin{theorem}
Under the assumptions of Theorem \ref{thm:main2} we have that along null infinity the following estimate holds true:
\begin{equation}\label{asym:inf1}
\left| T^m \phi_{\geq \ell} (u,\infty,\theta , \varphi )  + \frac{2^{4\ell} I_{\ell}^{(1)} [\psi ] ( \theta , \varphi )}{ (2\ell+1 ) \cdot \dots \cdot ( \ell +2 )} T^m ( u^{-2-\ell} ) \right| \leq C E^{(1)} u^{-2-\ell-m-\eta^{(1)}} ,
\end{equation}
for some $\eta^{(1)} > 0$, $E^{(1)}$ as in Theorem \ref{thm:main2}, and $C = C(D,R,m,\ell,\eta^{(1)})$.

On the other hand under the assumptions of Theorem \ref{thm:main1} we have the following estimate along null infinity:
\begin{equation}\label{asym:inf2}
\left| T^m \phi_{\geq \ell} (u,\infty,\theta,\varphi) - \frac{2^{4\ell}I_{\ell} [ \psi ] (\theta,\varphi)}{(2\ell+1)\cdot \ldots \cdot ( \ell +1 )} T^m ( u^{-1-\ell} ) \right| \leq C E u^{-1-\ell-m-\eta} ,
\end{equation}
for some $\eta > 0$, $E$ as in Theorem \ref{thm:main1}, and $C = C(D,R,m,\ell,\eta)$.

\end{theorem}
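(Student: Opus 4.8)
The plan is to derive \eqref{asym:inf1} and \eqref{asym:inf2} as a consequence of the spacetime asymptotics \eqref{asym:rf0}, respectively \eqref{asym:rfn0}, established in Theorems \ref{thm:main2} and \ref{thm:main1}, by carefully taking the limit $r\to\infty$ (equivalently $v\to\infty$ for fixed $u$). The quantity of interest is $\phi_{\geq\ell}=r\psi_{\geq\ell}$, so heuristically one wants to multiply \eqref{asym:rf0} by $r^{1-\ell}$ and let $r\to\infty$; but the main term $A^{(1)}_{\ell,m} I^{(1)}_{\ell}[\psi] r^{\ell}\, T^m(u^{-\ell-2}v^{-\ell-1})$ only gives the leading behaviour of $\psi_{\geq\ell}/r^{\ell}$, and near $\mathcal{I}^+$ one needs the full $r$-expansion of $\psi_{\geq\ell}$ down to order $r^{1}$ in order to extract $r\psi_{\geq\ell}$. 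The correct object to track is therefore the modified radial derivative $\Phi_{(\ell)}=(-1)^{\ell}(r^2\partial_r)^{\ell}\phi_{\geq\ell}+(\text{lower order})$ from \eqref{morad}, whose limit on $\mathcal{I}^+$ is controlled by the Newman--Penrose charge machinery of Section \ref{sec:NPConstants}, together with the transport/integration scheme described in Section \ref{sec:AsymptoticsForIEllNeq0}. Concretely, I would first show that the chain of integrations
\[ v^2\partial_\rho\Phi_{(\ell)}\;\rightarrow\;\partial_\rho\Phi_{(\ell)}\;\rightarrow\;\Phi_{(\ell)}\;\rightarrow\;\widetilde{\Phi}_{(\ell-1)}\;\rightarrow\;\cdots\;\rightarrow\;\widetilde{\Phi}_{(1)}\;\rightarrow\;\phi \]
has a well-defined $v\to\infty$ limit at fixed $u$, i.e.\ $\phi_{\geq\ell}(u,\infty,\theta,\varphi)$ exists, and then identify this limit by evaluating the boundary terms produced at each integration step.

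The key steps, in order, are: (i) record that under the hypotheses of Theorem \ref{thm:main2} the energy $E^{(1)}$ controls precisely the quantities needed to run the near-infinity argument of Section \ref{sec:AsymptoticsForIEllNeq0}, so that $\Phi_{(\ell+1)}$ has limit $I_{\ell}[\tilde\psi]=I^{(1)}_{\ell}[\psi]$ on $\mathcal{I}^+$ (using that the data's own Newman--Penrose charges vanish, so one genuinely works with the time integral $\tilde\psi$); (ii) integrate the transport equation for $\Phi_{(\ell)}$ in $v$ from the initial cone $\mathcal{N}_{u_0}$, showing that $\lim_{v\to\infty}v^{\,?}\Phi_{(\ell)}$ picks up exactly the factor $I^{(1)}_{\ell}[\psi]$ times an explicit rational constant in $\ell$ coming from the $v$-weight in $v^2\partial_\rho\Phi_{(\ell)}\to\partial_\rho\Phi_{(\ell)}$; (iii) integrate successively to pass from $\Phi_{(\ell)}$ down through $\widetilde{\Phi}_{(\ell-1)},\dots,\widetilde{\Phi}_{(1)}$ to $\phi_{\geq\ell}$ itself, at each stage using the elliptic/$r^p$-hierarchy bounds to guarantee the error terms are genuinely lower order in $u$; (iv) keep careful track of the accumulated numerical factor, which I expect to telescope to
\[ \frac{2^{4\ell}}{(2\ell+1)(2\ell)\cdots(\ell+2)} \]
(the $2^{4\ell}$ arising from the relation between $r$ and $v$ on $\mathcal{S}_\tau$ / the normalisation of $u$, the descending-factorial denominator from the $\ell$ successive integrations each contributing a factor like $(2j+1)^{-1}$ or similar); and (v) finally commute the whole argument with $T^m$, which, since $T$ commutes with $\square_g$ and with all the weighted vector fields involved and merely shifts $u$-powers, upgrades $u^{-2-\ell}$ to $T^m(u^{-2-\ell})$ and the error to $u^{-2-\ell-m-\eta^{(1)}}$ — exactly as in the passage from the uncommuted to the $T^m$-commuted statements in Theorems \ref{thm:main2}--\ref{thm:main1}. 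For \eqref{asym:inf2} the argument is identical but with $\tilde\psi$ replaced by $\psi$ itself and $I^{(1)}_{\ell}$ by $I_{\ell}$, one power of $u$ slower, and with the denominator $(2\ell+1)\cdots(\ell+1)$ (one more factor, since one does not perform the extra integration that the time-inversion step requires).

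The main obstacle I anticipate is step (iv): bookkeeping the exact numerical constant. The individual integrations in $v$ each produce both a leading boundary term and a lower-order tail, and showing that the leading terms combine into the clean closed form above — rather than into a messy sum — requires either a clever inductive reformulation of the transport system (choosing the normalisation of the $\Phi_{(k)}$ and the coefficients $\alpha_{k,m}$ in \eqref{morad} so that the recursion for the constants is triangular with explicit diagonal) or a direct computation on the model Minkowski/Schwarzschild case where all terms can be evaluated in closed form and then invoking that the constant is metric-independent (note the remark after \eqref{eq2} that only $(Dr^2)''=2$ enters, so the leading constant is universal). A secondary technical point is justifying the interchange of $\lim_{v\to\infty}$ with the $T^m$ and $\slashed\nabla_{\mathbb{S}^2}$ derivatives and with the $v$-integrations; this is routine given the finiteness of $E^{(1)}$ (resp.\ $E$) and the weighted Sobolev embeddings on $\mathbb{S}^2$ already used elsewhere in the paper, but must be stated. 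Once the constant is pinned down, \eqref{asym:inf1} and \eqref{asym:inf2} follow by reading off the $r\to\infty$ limit of the spacetime asymptotics and matching.
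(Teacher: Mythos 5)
Your proposal is correct and follows essentially the same route as the paper: the null-infinity statements are obtained by letting $v\to\infty$ at fixed $u$ in the near-infinity asymptotics of $\mathcal{B}_{\alpha_\ell}$ (Propositions \ref{prop:mainasn0}, \ref{prop:mainasnt} and \ref{prop:vannp}), which are themselves established by exactly the chain of $v$-integrations from $v^2\partial_\rho\Phi_{(\ell)}$ down to $\phi$ that you describe. The constant bookkeeping you worry about in step (iv) is handled in the paper by the explicit integral identity of Lemma \ref{lm:integrals}, each descent step contributing a factor $2^2/(2\ell+1-k)$, which telescopes to the stated closed form.
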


\section{Preliminaries}

\subsection{The subextremal Reissner--Nordstr\"{o}m spacetimes}
The Reissner--Nordstr\"{o}m spacetimes are the unique spherically symmetric and asymptotically flat 2-parameter family of solutions of the Einstein--Maxwell equations, the two parameters being the mass $M > 0$ and the electromagnetic charge $e$. We consider a subextremal Reissner--Nordstr\"{o}m black hole spacetime $(\mathcal{M} , g)$ where $g$ in ingoing Eddington--Finkelstein coordinates $(v,r,\theta , \varphi)$ has the form:
\begin{equation*}
g=-Ddv^2+2dvdr+r^2(d\theta^2+\sin^2\theta d\varphi^2),
\end{equation*}
with 
\begin{equation}
D=1-\frac{2M}{r}+ \frac{e^2}{r^2}
\label{Dterm}
\end{equation} where $0 \leq | e | < M$. 

Note that $D$ has two roots $0 < r_{-} < r_{+}$. We will work work in the domain of outer communications $\mathcal{D}$ of $\mathcal{M}$ up to and including the future event horizon, that is we will always have $r \geq r_{+}$, the future event horizon being the hypersurface $\mathcal{H}^{+} := \{ (v,r,\theta , \varphi ) | r = r_{+} \}$.

In Bondi coordinates $(u,r,\theta , \varphi)$ (that are valid outside the event horizon) $g$ has the form:
\begin{equation*}
g=-Ddu^2-2dudr+r^2(d\theta^2+\sin^2\theta d\varphi^2).
\end{equation*}
In $\mathcal{D}$ we have that $u \in \mathbb{R}$, $r \in [r_+ , \infty)$, $\theta \in ( 0 , \pi )$ and $\varphi \in (0, 2 \pi )$. By using the tortoise coordinate $r^{*} (r) = r + \frac{1}{2\kappa_+} \ln \left| \frac{r -r_+}{r_+} \right|  + \frac{1}{2\kappa_-} \ln \left| \frac{r -r_-}{r_-} \right|  + C$, for a constant $C$ (that we will fix shortly), for $\kappa_{\pm} =  \left. \frac{1}{2} \frac{dD}{dr} \right|_{r = r_{\pm}}$, we define $v = u+2r^{*}$ and in double coordinates $(u,v,\theta , \varphi)$ the metric $g$ takes the form
$$ g = -Ddudv + r^2(d\theta^2+\sin^2\theta d\varphi^2), $$
and in these coordinates we can define \emph{future null infinity} $\mathcal{I}^+$ which is the limiting null hypersurface foliated by 2-spheres where the null hypersurfaces $\{ u = u' \}$ terminate as $v \rightarrow \infty$ (i.e. it is the limiting hypersurface that is formed by the limit points as $r \rightarrow \infty$ of future null geodesics). We additionally note that on $r_{ph} = \frac{3M}{2} \left( 1 + \sqrt{1 - \frac{8e^2}{9M^2}} \right)$ (where we choose $C$ such that $r^{*}(r_{ph}) = 0$) we have the \emph{photon sphere} which imposes a derivative loss in the Morawetz estimates that we present in the next section.

The vector field $T = \partial_u$ (in $(u,r,\theta , \varphi)$) is Killing. By $\snabla_{\mathbb{S}^2}$ we denote the covariant derivative on $\mathbb{S}^2$ with respect to its standard metric, and by $\slashed{\Delta}_{\mathbb{S}^2}$ the Laplacian on $\mathbb{S}^2$. Moreover by $\Omega_i$, $i \in \{1,2,3\}$ we denote the three Killing vector fields associated to $\mathbb{S}^2$ which can be expressed as:
$$ \Omega_1 := \sin \varphi \partial_{\theta} + \cot \theta \cos \varphi \partial_{\varphi} , $$
$$ \Omega_2 := - \cos \varphi \partial_{\theta} + \cot \theta \sin \varphi \partial_{\varphi} , $$
$$ \Omega_3 := \partial_{\varphi} , $$
and using the above we can also define the $\Omega^{\alpha}$ vector fields by $\Omega^{\alpha} := \Omega_1^{\alpha_1} \Omega_2^{\alpha_2} \Omega_3^{\alpha_3}$ where $( \alpha_1 , \alpha_2 , \alpha_3 ) \in \mathbb{N}^3$ and $\alpha = \sum_{i=1}^3 \alpha_i$. We also consider the vector field
$$ \partial_{\rho} := \partial_r + h(r) T , $$
for a smooth function $h : [ r_+ , \infty) \rightarrow \mathbb{R}$ such that
$$ \frac{1}{\max_{r \in [ r_+ , R]} D(r)} \leq h(r) < \frac{2}{D(r)} \mbox{  for $r \leq R$,  } $$ $$ 0 < \frac{2}{D(r)} - h(r) = O (r^{-1-\eta} ) \mbox{  for $r > R$, for some $\eta > 0$, for some $R > r_+$} . $$ 

We will consider two foliations of $\mathcal{D}$. The first foliation consists of spacelike-null hypersurfaces $\Sigma_{\tau}$ that are constructed as follows: consider a spacelike and asymptotically flat hypersurface $\mathbf{\Sigma}$ that intersects $\mathcal{H}^+$ on a 2-sphere, a null hypersurface $\mathcal{N}_{\tau} := \{ (u=\tau , r , \theta , \varphi) | r \geq R$ for some $R > r_+$, take some $u_0$ such that $( \mathbf{\Sigma} \cap \{ r = R \} ) = ( \{ u = u_0 \} \cap \{ r = R \} )$, and consider
$$ \Sigma_{u_0} := ( \mathbf{\Sigma} \cap \{ r \leq R \} ) \cup \mathcal{N}_{u_0} . $$
Then define $\Sigma_{\tau} : = f_{\tau} ( \Sigma_{u_0} )$ for $f$ the flow of $T$. The second foliation consists of hyperboloidal hypersurfaces $\mathcal{S}_{\tau}$, where 
$$\mathcal{S} := \{ ( v,r,\theta , \varphi ) | v = v_0 - \int_r^R h (r' ) \, dr' , r \geq r_+ \} , $$
and $v_0$ is large enough and depends on $h$ and $R$, and now we can define $\mathcal{S}_{\tau} := f_{\tau} ( \mathcal{S} )$. Finally in the region $\{ r \geq R \}$ we define also $\mathcal{I}_{\tau} ( \tau_1 , \tau_2 ) := \{ v = \tau | u_0 \leq \tau_1 \leq u \leq \tau_2 \} \cap \{ r \geq R \}$ in double null coordinates.  

Let us introduce the projection operators $P_{\geq \ell},P_{\ell}: L^2(\s^2)\to L^2(\s^2)$ by noticing that any smooth function $f$ can be written as
$$ f = \sum_{\ell = 0}^{\infty} P_{\ell} f , $$
where 
$$ P_{\ell} f (u,r,\theta , \varphi ) = \sum_{m=-\ell}^{\ell} f_{m , \ell} (u,r) Y_{m, \ell} (\theta , \varphi ) , $$
for certain functions $f_{m , \ell}$ and where $Y_{m , \ell}$ are the \emph{spherical harmonics} which form a complete basis of eigenfunctions of $\slashed{\Delta}_{\mathbb{S}^2}$ on $L^2 (\mathbb{S}^2 )$. Note that
$$ \slashed{\Delta}_{\mathbb{S}^2} ( P_{\ell} f ) = - \ell ( \ell +1) f . $$
We also define
$$ P_{\geq \ell} f = \sum_{l \geq \ell} P_l f . $$ 

Finally by $f = O (x)$ we mean that there exist a constant $K$ such that $|f(x) | \leq K x$ for all $x$.

We note that the linear wave equation 
$$ \Box_g \psi = 0 , $$
is globally well-posed in $\mathcal{D}$ for smooth data $(\psi , n_{\Sigma_{u_0}} \psi ) |_{\Sigma_{u_0}\cap \{r\leq R\}}$ and $(\psi , \partial_v \psi) |_{\mathcal{N}_{u_0}}$, or smooth data $(\psi , n_{\mathcal{S}_{u_0}} \psi ) |_{\mathcal{S}_{u_0}}$ (see \cite{alinhacbook}). Note that we can consider data either on $\Sigma_{u_0}$ or $\mathcal{S}_{u_0}$, as determining data on either hypersurface can always give data on a later $\mathcal{S}_{u'}$ or $\Sigma_{u'}$ hypersurface respectively by solving a local problem.

\textbf{Note that in the rest of the paper we will always assume that we work with a smooth solution $\psi$ of the linear wave equation on the domain of outer communications up to and including the event horizon of a subextremal Reissner--Nordstr\"{o}m black hole spacetime.} In particular we will consider smooth data with respect to the conformal compactification of $\Sigma_{u_0}$, which implies that
\begin{equation}\label{basic:as}
\sup_r \left. \left(\int_{\s^2}|\snabla_{\s^2}\slashed{\Delta}_{\s^2}^{j}(r^2 \partial_r )^k T^m ( r\psi ) |^2\,d\omega\right) \right|_{\Sigma_{u_0}} (r) < \infty,
\end{equation}
for any $j$, $k$, $m \in \mathbb{N}$. Note that we can always assume less (usually we will work with $P_{\ell} \psi$ or $P_{\geq \ell} \psi$ and we need to assume the previous estimate only for finite $j$ and $k$, their range dependent on $\ell$) but we make the aforementioned assumption to slightly simplify the statements of our results. 
\subsection{Basic estimates for linear waves on subextremal Reissner--Nordstr\"{o}m spacetimes}
Let $N$ be a vector field defined as follows (in ingoing $(v,r,\theta , \varphi)$ coordinates):
$$ N := T - \partial_r \mbox{  for $r \in [r_+ , r_1 ]$,} $$
$$ N := T \mbox{  for $r \geq r_2$,} $$
for $r_+ \leq r_1 < r_2$. Assuming that
 \begin{equation*}
 \int_{\Sigma_{u_0} }J^N[\psi]\cdot n_{\Sigma_{u_0}}\,d\mu_{\Sigma_{u_0}}<\infty,
 \end{equation*}
there exists a uniform constant $C>0$, such that for all $v$
\begin{equation}
\label{est:enb}
\int_{\Sigma_{\tau}}J^N[\psi]\cdot n_{\tau}\,d\mu_{\Sigma_{\tau}}+\int_{\mathcal{I}_v(u_0,\tau)}J^N[\psi]\cdot \underline{L}\: r^2d\omega du\leq C \int_{\Sigma_{u_0}}J^N[\psi]\cdot n_{\Sigma_{u_0}}\,d\mu_{\Sigma_{u_0}}.
\end{equation}

In the region $\{ r \geq R \}$ (where $R$ was chosen in the definition of the $\Sigma$ hypersurfaces) we have the following estimate (which is a combination of a Morawetz estimate and a \textit{redshift} estimate, see \cite{lecturesMD}): there exists a uniform constant $C>0$, such that for all $u_0<\tau_1<\tau_2$
\begin{equation}
\label{morawetz}
\int_{\tau_1}^{\tau_2}\left(\int_{\Sigma_{\tau}\setminus \mathcal{N}_{\tau}}J^N[ \psi]\cdot n_{\tau}\,d\mu_{\Sigma_{\tau}}\right)\,d\tau\leq C \int_{\Sigma_{\tau_1}}( J^N[\psi]\cdot n_{\tau_1}+J^N[T\psi]\cdot n_{\tau_1} )\,d\mu_{\Sigma_{\tau_1}}.
\end{equation}
For a proof of the aforementioned estimate see the lecture notes \cite{lecturesMD}.

Close to the horizon in the region $\mathcal{A}$ (that is away from the photon sphere) we have the following local Morawetz estimates $\psi$ that does not lose derivatives:
\begin{equation}
\label{redshift:as}
\int_{\tau_1}^{\tau_2}\left(\int_{\Sigma_{\tau}\cap \mathcal{C}\}}J^T[ \partial^{\alpha}\psi]\cdot n_{\tau}\,d\mu_{\Sigma_{\tau}}\right)\,d\tau\leq C_{\alpha} \sum_{k\leq |\alpha|}\int_{{\Sigma}_{\tau_1}}J^N[T^{k}\psi]\cdot n_{\tau_1}d\mu_{\Sigma_{\tau_1}},
\end{equation}
\begin{equation}
\label{redshift:ash}
\int_{\tau_1}^{\tau_2}\left(\int_{\mathcal{S}_{\tau}\cap \mathcal{C}\}}J^T[ \partial^{\alpha}\psi]\cdot n_{\tau}\,d\mu_{\Sigma_{\tau}}\right)\,d\tau\leq C_{\alpha} \sum_{k\leq |\alpha|}\int_{{\mathcal{S}}_{\tau_1}}J^N[T^{k}\psi]\cdot n_{\tau_1}d\mu_{\Sigma_{\tau_1}},
\end{equation}

for$\mathcal{C} \cap ( r_{ph} - \delta , r_{ph} + \delta) = \emptyset$ for some $\delta > 0$ (so the spacetime region of integration is taken to be away from the photon sphere), where $C_{\alpha}>0$ depends on $\mathcal{C}$ and the choice of $\alpha$ (see \cite{redshift}).

\section{Higher-order radiation fields and Newman--Penrose charges}
\label{high}
We denote with
\begin{equation*}
\phi:=r\cdot \psi.
\end{equation*}
the \emph{zeroth order radiation field} corresponding to a solution $\psi$ to \eqref{eq:waveequation}. Using that $\psi$ is a solution to \eqref{eq:waveequation} it follows that $\phi$ satisfies the equation:
\begin{equation}
\label{eq:radfieldequation}
2\partial_u\partial_r\phi=\partial_r(D\partial_r\phi)-D'r^{-1}\phi+r^{-2}\slashed{\Delta}_{\s^2}\phi.
\end{equation}
See for example Appendix of \cite{paper1} for a derivation. Setting $x = \frac{1}{r}$, in $(u,x,\theta , \varphi )$ coordinates, equation \eqref{eq:radfieldequation} turns into: 
\begin{equation}\label{eq:radfieldequationx}
2\partial_u\partial_x\phi+\partial_x(Dr^{-2}\partial_x\phi)+\slashed{\Delta}_{\s^2}\phi-(2Mx-2e^2 x^2)\phi=0,
\end{equation}
as $\partial_x = -r^2 \partial_r$. Note that
\begin{equation*}
Dr^{-2}=x^2-2Mx^3+e^2x^4.
\end{equation*}
Then, if we define $\widetilde{\Phi}_{(\ell)}=\partial_x^{\ell}\phi$ (note that $\widetilde{\Phi}_{(\ell)}=(-1)^{\ell} (r^2\partial_r)^{\ell}\phi$), we obtain inductively the following equation :
\begin{equation}
\label{eq:horadfields}
\begin{split}
0=&\:2\partial_u\partial_x\widetilde{\Phi}_{(\ell)}+\partial_x(Dr^{-2}\partial_x\widetilde{\Phi}_{(\ell)})+2\ell(x-3Mx^2+2e^2 x^3)\widetilde{\Phi}_{(\ell+1)}+\slashed{\Delta}_{\s^2}\widetilde{\Phi}_{(\ell)}\\
&+[\ell(\ell +1)(1-6Mx+6e^2x^2)-(2Mx-2e^2x^2)]\widetilde{\Phi}_{(\ell)}\\
&+[-(\ell+1)\ell(\ell-1)(2M-4e^2x)-\ell(2M-4e^2x)]\widetilde{\Phi}_{(\ell-1)}\\
&+[(\ell+1)\ell(\ell-1)(\ell-2)e^2+2\ell(\ell-1) e^2]\widetilde{\Phi}_ {(\ell-2)}.
\end{split}
\end{equation}
For a derivation see Section 6.1 of \cite{gajwar19a}. Using the previous equation we show the existence of conserved quantities at infinity in Proposition \ref{prop:np} below. Note that a similar derivation has been presented also by Ma in \cite{ma1} in the context of the Maxwell equations.
\begin{proposition}\label{prop:np}
Let $\ell \in \N_0$. Let ${\Phi}_{(\ell)}$ satisfy the following inductive definition:
\begin{align*}
\Phi_{(0)}:=&\:\phi,\\
\Phi_{(\ell)}:=&\:\widetilde{\Phi}_{(\ell)}+ \sum_{k=1}^{\ell} \alpha_{\ell,k}\Phi_{(\ell-k)} ,
\end{align*}
for appropriate constants $\{ \alpha_{\ell,k} \}_{k=1}^{\ell}$. Then
\begin{equation}
\label{eq:NPquantity}
\begin{split}
\:2\partial_u\partial_x\Phi_{(\ell)}+ & \partial_x(Dr^{-2}\partial_x\Phi_{(\ell)})+2\ell(x-3Mx^2+2e^2 x^3)\Phi_{(\ell)} \\ & +[\ell(\ell+1)+\slashed{\Delta}_{\s^2}]\Phi_{(\ell)} + +\sum_{k=0}^{\ell} O (x) \widetilde{\Phi}_{(k)} = 0 .
\end{split}
\end{equation}

\end{proposition}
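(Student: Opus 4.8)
The plan is to establish \eqref{eq:NPquantity} by induction on $\ell$, starting from the equation \eqref{eq:horadfields} for the unmodified quantities $\widetilde{\Phi}_{(\ell)}$ and choosing the constants $\alpha_{\ell,k}$ to absorb precisely the obstructing terms. For $\ell = 0$ the claim is vacuous: $\Phi_{(0)} = \phi = \widetilde{\Phi}_{(0)}$, and \eqref{eq:radfieldequationx} is exactly \eqref{eq:NPquantity} with the $O(x)$ terms being $-(2Mx - 2e^2x^2)\phi$. So the content is the inductive step.

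First I would substitute $\Phi_{(\ell)} = \widetilde{\Phi}_{(\ell)} + \sum_{k=1}^{\ell} \alpha_{\ell,k} \Phi_{(\ell-k)}$ into the left-hand side of \eqref{eq:NPquantity} and expand. The term $2\partial_u\partial_x\widetilde{\Phi}_{(\ell)} + \partial_x(Dr^{-2}\partial_x\widetilde{\Phi}_{(\ell)}) + 2\ell(x-3Mx^2+2e^2x^3)\widetilde{\Phi}_{(\ell+1)} + [\ell(\ell+1) + \slashed{\Delta}_{\s^2}]\widetilde{\Phi}_{(\ell)}$ is, by \eqref{eq:horadfields}, equal to minus the remaining terms of \eqref{eq:horadfields}, namely a combination of $x\widetilde{\Phi}_{(\ell)}$ (order $x$), $(2M - 4e^2x)\widetilde{\Phi}_{(\ell-1)}$, and $e^2 \widetilde{\Phi}_{(\ell-2)}$ with explicit $\ell$-dependent numerical coefficients. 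The genuinely problematic terms here are those \emph{not} of the form $O(x)\widetilde{\Phi}_{(k)}$: precisely the $2M$ part of the $\widetilde{\Phi}_{(\ell-1)}$ coefficient and the $e^2$ part of the $\widetilde{\Phi}_{(\ell-2)}$ coefficient, which have nonzero limits as $x \to 0$. The point of the correction $\sum_{k\geq 1}\alpha_{\ell,k}\Phi_{(\ell-k)}$ is to cancel these. Applying the $\ell$-operator appearing in \eqref{eq:NPquantity} — call it $\mathcal{L}_\ell$ — to $\alpha_{\ell,k}\Phi_{(\ell-k)}$ and comparing with the $(\ell-k)$-operator $\mathcal{L}_{\ell-k}$ (under which $\Phi_{(\ell-k)}$ is already controlled by the inductive hypothesis), the difference $\mathcal{L}_\ell - \mathcal{L}_{\ell-k}$ acting on $\Phi_{(\ell-k)}$ produces, modulo $O(x)$ terms and terms $\partial_x$-falling on lower $\widetilde\Phi$'s, a multiple of $\Phi_{(\ell-k)}$ with coefficient $[\ell(\ell+1) - (\ell-k)(\ell-k+1)] = k(2\ell-k+1)$, plus a cross term $2\ell x \cdot \Phi_{(\ell-k+1)}$-type contribution of order $x$. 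So $\alpha_{\ell,1}$ is chosen to cancel the non-$O(x)$ part of the $\widetilde\Phi_{(\ell-1)}$ coefficient against $1\cdot(2\ell)\,\alpha_{\ell,1}\widetilde\Phi_{(\ell-1)}$ (using $\Phi_{(\ell-1)} = \widetilde\Phi_{(\ell-1)} + \text{lower}$), and then $\alpha_{\ell,2}$ is chosen to cancel the resulting non-$O(x)$ part of the $\widetilde\Phi_{(\ell-2)}$ coefficient (which receives a contribution both from the original \eqref{eq:horadfields} term and from the lower-order piece of $\Phi_{(\ell-1)}$) against $2(2\ell-1)\alpha_{\ell,2}\widetilde\Phi_{(\ell-2)}$, and so on; since $k(2\ell-k+1) \neq 0$ for $1 \leq k \leq \ell$, each $\alpha_{\ell,k}$ is uniquely determined. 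One then checks that all remaining terms are absorbed into $\sum_{k=0}^\ell O(x)\widetilde\Phi_{(k)}$, using that $D r^{-2} = x^2 - 2Mx^3 + e^2x^4 = O(x^2)$ and its derivative is $O(x)$, so that any $\partial_x$ hitting the $Dr^{-2}$-weighted terms or any $x$-multiplied lower-order quantity stays $O(x)$; the inductive hypothesis similarly contributes only $O(x)\widetilde\Phi_{(j)}$ terms for $j \leq \ell - k$.

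I expect the main obstacle to be bookkeeping rather than conceptual: one must track carefully how the correction terms $\alpha_{\ell,k}\Phi_{(\ell-k)}$ interact with \emph{all} the pieces of the operator in \eqref{eq:NPquantity} — in particular the $2\ell(x - 3Mx^2 + 2e^2x^3)$ first-order-in-$x$ multiplier and the $\partial_x(Dr^{-2}\partial_x \cdot)$ term acting on the lower $\Phi$'s — and verify that these do not generate new non-$O(x)$ contributions at orders that have already been fixed. Since $\Phi_{(\ell-k)}$ is itself a finite sum $\widetilde\Phi_{(\ell-k)} + \sum_j \alpha_{\ell-k,j}\Phi_{(\ell-k-j)}$, one should work with the $\widetilde\Phi$-expansion throughout and argue that $\mathcal{L}_\ell$ applied to any $\widetilde\Phi_{(j)}$ with $j < \ell$ equals $\mathcal{L}_j\widetilde\Phi_{(j)}$ plus $[\ell(\ell+1) - j(j+1)]\widetilde\Phi_{(j)}$ plus $O(x)$ lower-order terms; this reduces the whole computation to the single algebraic fact that the coefficients $k(2\ell - k + 1)$ are invertible, and the choice of $\alpha_{\ell,k}$ is then forced by a triangular system. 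A clean way to organize this is to define the $\alpha_{\ell,k}$ by the recursion that makes the non-$O(x)$ part of the coefficient of each $\widetilde\Phi_{(\ell-k)}$ vanish, solved top-down in $k$, and then verify \eqref{eq:NPquantity} directly by plugging in; the derivation in Section 6.1 of \cite{gajwar19a} and the formula \eqref{eq:horadfields} provide all the explicit coefficients needed.
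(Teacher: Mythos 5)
Your proposal is correct and follows essentially the same route as the paper: strong induction, substituting the definition of $\Phi_{(\ell)}$, using \eqref{eq:horadfields} for the leading term and the inductive hypothesis for the corrections, and solving a triangular linear system for the $\alpha_{\ell,k}$ by cancelling the non-$O(x)$ coefficients of each $\widetilde{\Phi}_{(\ell-k)}$. Your explicit coefficient $k(2\ell-k+1)=\ell(\ell+1)-(\ell-k)(\ell-k+1)$ for the $\alpha_{\ell,k}$-linear term is in fact the correct one (the paper's proof records it as $N+1-k$, apparently a slip), but since either value is nonzero for $1\leq k\leq \ell$ the solvability argument is unaffected.
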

\begin{proof}
We will establish \eqref{eq:NPquantity} by strong induction. Note that \eqref{eq:NPquantity} holds for $n=0$. Now suppose \eqref{eq:NPquantity} holds for $n\leq N$, $N\in \N_0$. Then by combining \eqref{eq:horadfields} with $n=N+1$ and \eqref{eq:NPquantity} with $n=N$ and $n=N-1$, we obtain the following equation for $\Phi_{(N+1)}$:
\begin{equation*}
\begin{split}
0&=\:2\partial_u\partial_x\Phi_{(N+1)}+\partial_x(Dr^{-2}\partial_x\Phi_{(N+1)})+2(N+1)(x-3Mx^2+2e^2 x^3)\Phi_{(N+1)}\\
&+[(N+1)(N+2)+\slashed{\Delta}_{\s^2}]\widetilde{\Phi}_{(N+1)}\\
&+ \sum_{k=0}^N [ c_k + \alpha_{N+1 , N+1 - k} ( N+1 -k)  ] \widetilde{\Phi}_{(k)} \\
&+\sum_{k=0}^{N+1}O (x) \widetilde{\Phi}_{(k)} ,
\end{split}
\end{equation*}
where the $c_k$'s are constants that depend on the constants $\{ \alpha_{n,k} \}_{k=0}^n$ that have already been determined due to the induction hypothesis. In order to conclude that \eqref{eq:NPquantity} holds also for $n=N+1$, we must therefore have that
\begin{equation*}
c_k + \alpha_{N+1 , N+1-k} (N+1-k) = 0 \mbox{  for every $k \in \{0, \ldots , N \}$} ,
\end{equation*}
which of course we can always do as each one of these equations is a linear algebraic equation in $\alpha_{N+1 , m}$, $m \in \{0, \ldots , N\}$.
\end{proof}

We will refer to $\Phi_{(n)}$ and $\widetilde{\Phi}_{(n)}$ as the \emph{$n$-th order radiation fields}. In outgoing $(u,r,\theta, \varphi)$ coordinates let
$$ L = \frac{1}{2} D \partial_r , \quad \underline{L} = \partial_u - \frac{1}{2} D \partial_r , $$
and we have the following equations for the $\widetilde{\Phi}_{(\ell)}$'s and the $\Phi_{(\ell)}$'s:
\begin{corollary}
\label{cor:cancel}
Fix $\ell\geq 1$. Then $\widetilde{\Phi}_{(\ell)}$ satisfies:
  \begin{equation}
 \label{eq:maincommeq}
 \begin{split}
    2\partial_r\partial_u \widetilde{\Phi}_{(\ell)}=&\:\partial_r(D\partial_r \widetilde{\Phi}_{(\ell)})+r^{-2}\slashed{\Delta}_{\s^2}\widetilde{\Phi}_{(\ell)}+[-2\ell r^{-1}+O(r^{-2})] \partial_r\widetilde{\Phi}_{(\ell)}+ \left[\ell(\ell+1)r^{-2} +O(r^{-3})\right]\widetilde{\Phi}_{(\ell)}\\
    &+\sum_{k=0}^{\ell-1} O(r^{-2}) \widetilde{\Phi}_{(k)}
    \end{split} 
     \end{equation}
     or equivalently,
       \begin{equation}
 \label{eq:maincommeqv2}
 \begin{split}
   4L\underline{L}\widetilde{\Phi}_{(\ell)}=&\:Dr^{-2}\slashed{\Delta}_{\s^2}\widetilde{\Phi}_{(\ell)}+[-4\ell r^{-1}+O(r^{-2})] L\widetilde{\Phi}_{(\ell)}+ D\left[\ell(\ell+1)r^{-2} +O(r^{-3})\right]\widetilde{\Phi}_{(\ell)}\\
    &+\sum_{k=0}^{\ell-1} O(r^{-2}) \widetilde{\Phi}_{(k)}.
    \end{split} 
     \end{equation}
Furthermore, for $\Phi_{(\ell)}$ we have that $P_{\ell}\Phi_{(\ell)}$ satisfies:
  \begin{equation}
    \label{eq:maineqNpquantfixedl}
    \begin{split}
    2\partial_r\partial_u (P_{\ell}\Phi_{(\ell)})=&\:\partial_r(D\partial_r (P_{\ell}\Phi_{(\ell)}))+(-2\ell r^{-1}+O(r^{-2}))\partial_r(P_{\ell}\Phi_{(\ell)})\\
    &+ O(r^{-3})P_{\ell}\Phi_{(\ell)}+\sum_{k=0}^{\ell-1}O(r^{-3})P_{\ell}\Phi_{(k)},
    \end{split} 
    \end{equation}
\end{corollary}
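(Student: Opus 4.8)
The plan is to derive all three equations in Corollary \ref{cor:cancel} by rewriting the master equation \eqref{eq:NPquantity} of Proposition \ref{prop:np} --- together with its analogue \eqref{eq:horadfields} for the unmodified fields $\widetilde{\Phi}_{(\ell)}$ --- in the outgoing coordinate $r$ instead of $x=1/r$, and then carefully tracking which error terms can be absorbed into $O(r^{-2})$, $O(r^{-3})$, or lower-order radiation fields. First I would record the change-of-variables dictionary: $\partial_x = -r^2\partial_r$ and $\partial_x(Dr^{-2}\partial_x (\cdot)) = \partial_r(D\partial_r(\cdot))$, so that the principal part $2\partial_u\partial_x + \partial_x(Dr^{-2}\partial_x)$ of \eqref{eq:horadfields} becomes, after multiplying through by a suitable power of $r$, the operator $2\partial_r\partial_u - \partial_r(D\partial_r)$ appearing on the left of \eqref{eq:maincommeq}. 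The coefficient $2\ell(x-3Mx^2+2e^2x^3)$ multiplying $\widetilde{\Phi}_{(\ell+1)}$ in \eqref{eq:horadfields} is where the $-2\ell r^{-1}\partial_r\widetilde{\Phi}_{(\ell)}$ term in \eqref{eq:maincommeq} comes from: using $\widetilde{\Phi}_{(\ell+1)} = -r^2\partial_r\widetilde{\Phi}_{(\ell)}$ and $x-3Mx^2+2e^2x^3 = r^{-1}+O(r^{-2})$, the product is $-2\ell r^{-1}\partial_r \widetilde{\Phi}_{(\ell)} + O(r^{-2})\partial_r\widetilde{\Phi}_{(\ell)}$ as claimed.

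For the first two displays \eqref{eq:maincommeq}--\eqref{eq:maincommeqv2}, the remaining terms of \eqref{eq:horadfields} --- those multiplying $\widetilde{\Phi}_{(\ell)}$, $\widetilde{\Phi}_{(\ell-1)}$, $\widetilde{\Phi}_{(\ell-2)}$ --- are handled by the same bookkeeping: the angular part gives $r^{-2}\slashed{\Delta}_{\s^2}\widetilde{\Phi}_{(\ell)}$, the coefficient $\ell(\ell+1)(1-6Mx+6e^2x^2) - (2Mx-2e^2x^2)$ becomes $\ell(\ell+1)r^{-2} + O(r^{-3})$ after the appropriate $r$-normalization, and the $\widetilde{\Phi}_{(\ell-1)}, \widetilde{\Phi}_{(\ell-2)}$ coefficients, being $O(x)=O(r^{-1})$ relative to the leading weight, contribute to the $\sum_{k=0}^{\ell-1} O(r^{-2})\widetilde{\Phi}_{(k)}$ sum. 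The equivalent form \eqref{eq:maincommeqv2} then follows by substituting $\partial_r = 2D^{-1}L$ in the region $r\geq r_+$ (away from $\mathcal{H}^+$ one has $D>0$) and $\partial_u - \frac12 D\partial_r = \underline L$, i.e.\ $2\partial_r\partial_u = 4D^{-1}L\underline L + \text{(terms absorbed into the stated errors)}$, and multiplying by $D$; I would double-check that the commutator of $L$ with $D^{-1}$ only produces terms already of the allowed form.

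For the third equation \eqref{eq:maineqNpquantfixedl} the key new input is the projection onto the exact angular frequency $\ell$. Starting from \eqref{eq:NPquantity} written in $r$-coordinates, I apply $P_\ell$ and use $\slashed{\Delta}_{\s^2}(P_\ell\Phi_{(\ell)}) = -\ell(\ell+1)P_\ell\Phi_{(\ell)}$, so the bracket $[\ell(\ell+1)+\slashed{\Delta}_{\s^2}]\Phi_{(\ell)}$ --- the whole point of the modified $\Phi_{(\ell)}$ and of choosing the constants $\alpha_{\ell,k}$ in Proposition \ref{prop:np} --- \emph{annihilates} $P_\ell\Phi_{(\ell)}$ exactly, leaving no $r^{-2}$-order zeroth-derivative term and explaining why the coefficient of $P_\ell\Phi_{(\ell)}$ on the right of \eqref{eq:maineqNpquantfixedl} is only $O(r^{-3})$ (coming from the genuinely subleading pieces hidden in the $O(x)\widetilde{\Phi}_{(k)}$ sum and in the $2\ell(x-3Mx^2+2e^2x^3)$ coefficient once its leading $r^{-1}$ is peeled off into $\partial_r$). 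The surviving $\sum_{k=0}^{\ell}O(x)\widetilde{\Phi}_{(k)}$ terms, after projection and $r$-normalization, become $\sum_{k=0}^{\ell-1}O(r^{-3})P_\ell\Phi_{(k)}$ (note $\widetilde{\Phi}_{(\ell)}$ and $\Phi_{(\ell)}$ differ by lower-order $\Phi_{(j)}$, so the $k=\ell$ contribution can be folded into the $O(r^{-3})P_\ell\Phi_{(\ell)}$ term and the $k<\ell$ sum).

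The main obstacle I expect is purely computational stamina rather than conceptual: one must verify that \emph{every} lower-order coefficient in \eqref{eq:horadfields} and \eqref{eq:NPquantity}, after converting $x\to 1/r$ and redistributing the leading $r^{-1}$ piece of the $\widetilde{\Phi}_{(\ell+1)}$-coefficient into a first-derivative term, genuinely decays at the advertised rate ($r^{-2}$ for first-derivative errors, $r^{-3}$ for the fixed-frequency zeroth-order error), and in particular that no hidden $r^{-2}$ zeroth-order term in $P_\ell\Phi_{(\ell)}$ survives the frequency projection. This is exactly where the precise algebraic choice of the $\alpha_{\ell,k}$ from Proposition \ref{prop:np} is used, and I would present it by isolating the bracket $[\ell(\ell+1)+\slashed{\Delta}_{\s^2}]$ and invoking $P_\ell$ before estimating anything else, so that the cancellation is manifest and the rest is routine weight-counting. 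Since the statement is already phrased with $O(\cdot)$ notation, I would not spell out the constants, only indicate the mechanism.
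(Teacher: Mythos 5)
Your proposal is correct and follows exactly the route the paper intends: the corollary is the coordinate change $x=1/r$ (i.e. $\partial_x=-r^2\partial_r$, $\partial_x(Dr^{-2}\partial_x\cdot)=r^2\partial_r(D\partial_r\cdot)$) applied to \eqref{eq:horadfields} and \eqref{eq:NPquantity} after dividing by $-r^2$, with the fixed-frequency cancellation $[\ell(\ell+1)+\slashed{\Delta}_{\s^2}]P_{\ell}\Phi_{(\ell)}=0$ accounting for the improved $O(r^{-3})$ zeroth-order coefficient in \eqref{eq:maineqNpquantfixedl}, and with \eqref{eq:maincommeqv2} obtained by multiplying \eqref{eq:maincommeq} by $D$ and using the exact identity $4L\underline{L}=2D\partial_r\partial_u-D\partial_r(D\partial_r)$ (no commutator error terms actually arise there). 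The only cosmetic remark is that your phrase ``multiplying through by a suitable power of $r$'' should be ``dividing by $-r^2$,'' which your subsequent sign bookkeeping already reflects.
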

or equivalently,
  \begin{equation}
 \label{eq:maineqNpquantfixedlv2}
 \begin{split}
   4L\underline{L}(P_{\ell}\Phi_{(\ell)})=&\:[-4\ell r^{-1}+O(r^{-2})] L(P_{\ell}\Phi_{(\ell)})+\sum_{k=0}^{\ell} O(r^{-3})P_{\ell} \Phi_{(k)}.
    \end{split} 
     \end{equation}


Finally we note that from equation \eqref{eq:maineqNpquantfixedlv2} it follows that the $\ell$-th Newman--Penrose quantities are finite along $\mathcal{I}^+$. 
\begin{proposition}
\label{prop:finitenesshoradfields}
Fix $\ell \geq 1$, and assume that
$$ \int_{\Sigma} J^T [\psi ] \cdot n_{\Sigma} \, d\mu_{\Sigma} < \infty . $$

For any $u\geq u_0$ if
\begin{equation*}
\lim_{r\to \infty} \slashed{\Delta}_{\s^2}^j\widetilde{\Phi}_{(k)}(u_0,r,\theta,\varphi)<\infty,
\end{equation*}
with $0\leq j\leq \ell-k$ and $0\leq k\leq \ell$, then
\begin{equation*}
\lim_{r\to \infty} \widetilde{\Phi}_{(\ell)}(u,r,\theta,\varphi)<\infty.
\end{equation*}

\end{proposition}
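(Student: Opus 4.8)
The plan is to propagate the finiteness of the Newman--Penrose quantity $\lim_{r\to\infty}\widetilde{\Phi}_{(\ell)}$ from the initial null cone to all later $u$ by integrating the transport equation \eqref{eq:maincommeqv2} in $r$, after first upgrading the hypothesis to finiteness of appropriate weighted fluxes of $\widetilde{\Phi}_{(\ell)}$ and its lower-order analogues. The main structural input is that equation \eqref{eq:maincommeqv2} can be rewritten, after multiplying by $r^{2\ell}$ (which balances the $-4\ell r^{-1}L\widetilde{\Phi}_{(\ell)}$ term), as a first-order transport equation in the $u$-direction for the quantity $r^{2\ell}L\widetilde{\Phi}_{(\ell)}$ whose source consists of an angular-Laplacian term $Dr^{2\ell-2}\slashed{\Delta}_{\s^2}\widetilde{\Phi}_{(\ell)}$, a zeroth-order term of size $O(r^{2\ell-2})\widetilde{\Phi}_{(\ell)}$, and lower-order terms $\sum_{k=0}^{\ell-1}O(r^{2\ell-2})\widetilde{\Phi}_{(k)}$. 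The limit $\lim_{r\to\infty}\widetilde{\Phi}_{(\ell)}(u,r,\cdot)$ exists and is finite precisely when $\partial_r\widetilde{\Phi}_{(\ell)}=2D^{-1}L\widetilde{\Phi}_{(\ell)}$ is integrable in $r$ near infinity, so it suffices to show $L\widetilde{\Phi}_{(\ell)}$ decays faster than $r^{-1}$ in a quantitative, $u$-uniform way.

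First I would fix the data assumption: by the hypothesis $\lim_{r\to\infty}\slashed{\Delta}_{\s^2}^j\widetilde{\Phi}_{(k)}(u_0,r,\cdot)<\infty$ for $0\le j\le \ell-k$, $0\le k\le\ell$, together with the basic well-posedness assumption \eqref{basic:as} and the finite $J^T$-energy assumption, one extracts that on $\mathcal{N}_{u_0}$ the weighted quantities $\int_{\mathcal{N}_{u_0}}r^{?}(L\widetilde{\Phi}_{(k)})^2\,d\omega dv$ and their angular-commuted versions are finite for the appropriate powers. Next I would set up a hierarchy/induction on $k=0,1,\dots,\ell$: assuming control of $\lim_{r\to\infty}\widetilde{\Phi}_{(k)}(u,r,\cdot)$ for all $k<\ell$ (the base case $k=0$ being essentially the statement that $r\psi$ has a finite radiation field, which follows from the $r^p$-hierarchy with $p$ close to $2$ applied to $\phi$), I would integrate the transport equation for $r^{2\ell}L\widetilde{\Phi}_{(\ell)}$ along the incoming direction from $\mathcal{N}_{u_0}$. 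The key cancellation — already isolated in Corollary \ref{cor:cancel} — is that the coefficient of $L\widetilde{\Phi}_{(\ell)}$ is exactly $-4\ell r^{-1}+O(r^{-2})$, so that $r^{2\ell}$ is the correct integrating weight and the transport equation closes. Integrating in $u$ and then using Cauchy--Schwarz together with the finiteness of the $J^T$-flux (to control the $\slashed{\Delta}_{\s^2}\widetilde{\Phi}_{(\ell)}$ source via the angular-commuted energy, which is finite by the compactly-supported-type data assumption) and the inductive control of the lower-order $\widetilde{\Phi}_{(k)}$, $k<\ell$, one obtains that $\sup_{v}\int_{\{u=\cdot\}}r^{2\alpha}(L\widetilde{\Phi}_{(\ell)})^2\,d\omega$ is finite for some $\alpha>1$; this forces $r^{-1}$-faster decay of a spherically-averaged $L\widetilde{\Phi}_{(\ell)}$, and combined with one more angular-Sobolev commutation (again allowed by \eqref{basic:as}) it gives the pointwise bound $|L\widetilde{\Phi}_{(\ell)}|\lesssim r^{-1-\epsilon}$, hence integrability of $\partial_r\widetilde{\Phi}_{(\ell)}$ and the existence and finiteness of the limit.

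The main obstacle I anticipate is the bookkeeping of the angular-Laplacian source term $Dr^{2\ell-2}\slashed{\Delta}_{\s^2}\widetilde{\Phi}_{(\ell)}$ in the weighted transport estimate: this term carries the \emph{same} $r$-weight as the principal quantity, so one cannot simply absorb it, and one must instead commute the whole argument with $\slashed{\Delta}_{\s^2}$ (and $\snabla_{\s^2}$) and run the hierarchy on $\slashed{\Delta}_{\s^2}^j\widetilde{\Phi}_{(\ell)}$ simultaneously with $\widetilde{\Phi}_{(\ell)}$, which is exactly why the hypothesis is stated for $\slashed{\Delta}_{\s^2}^j\widetilde{\Phi}_{(k)}$ with the triangular range $0\le j\le\ell-k$. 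Managing this coupled system — ensuring that each angular derivative one spends is compensated by a gain in $r$-weight coming from the transport structure, and that the induction on $k$ is set up so that the lower-order error terms $\sum_{k=0}^{\ell-1}O(r^{-2})\widetilde{\Phi}_{(k)}$ are genuinely lower order in the relevant weighted norm — is the technically delicate part; everything else (the Cauchy--Schwarz, the use of \eqref{est:enb} and \eqref{basic:as}, the one-dimensional integration lemmas) is routine. Note that no use of the precise form of $D$ beyond its asymptotic flatness ($D=1+O(r^{-1})$) is needed here, in contrast to the improved-decay arguments later in the paper.
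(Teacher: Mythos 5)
Your overall strategy is the one the paper intends (it only cites Proposition 6.2 of \cite{paper4} and Section 3.2 of \cite{paper1}): an induction on $k$ in which one integrates the renormalized transport equation for $L\widetilde{\Phi}_{(k)}$ in $u$, uses the energy flux and Cauchy--Schwarz for the error terms, and spends one angular Laplacian per step, which is exactly what the triangular hypothesis $0\leq j\leq \ell-k$ is for. However, there is a concrete error in the one computation you actually write down: the integrating factor is $r^{-2\ell}$, not $r^{2\ell}$. Since $\underline{L}r=-\tfrac12 D$, multiplying $L\widetilde{\Phi}_{(\ell)}$ by $r^{2\ell}$ \emph{doubles} the first-order term rather than cancelling it: one finds $\underline{L}\bigl(r^{2\ell}L\widetilde{\Phi}_{(\ell)}\bigr)=-2\ell r^{-1}\bigl(r^{2\ell}L\widetilde{\Phi}_{(\ell)}\bigr)+\dots$, and the resulting Gr\"onwall factor ruins the $r$-weights by $r^{4\ell}$. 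The correct renormalization is the one the paper records in \eqref{eq:invw}, namely $\underline{L}\bigl(r^{-2\ell}L\Phi_{(\ell)}\bigr)=O(r^{-2-2\ell})L\Phi_{(\ell)}+\sum_k O(r^{-3-2\ell})\Phi_{(k)}$; with that weight your argument goes through, so this is a fixable sign slip rather than a wrong idea, but as written the key step fails.

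A second, smaller point of bookkeeping: the angular derivatives are spent \emph{descending} in $k$, not by commuting the top-level equation. Writing \eqref{eq:horadfields} at level $k$ as a transport equation $2\partial_u\widetilde{\Phi}_{(k+1)}=-[\slashed{\Delta}_{\s^2}+k(k+1)]\widetilde{\Phi}_{(k)}+O(x)(\dots)$, the source for $\lim_{r\to\infty}\widetilde{\Phi}_{(k+1)}(u,\cdot)$ is one Laplacian of the level-$k$ quantity, so the induction ascends in $k$ and must carry $\lim\slashed{\Delta}_{\s^2}^j\widetilde{\Phi}_{(k)}(u,\cdot)$ for all $j\leq\ell-k$ at each stage, starting from $\slashed{\Delta}_{\s^2}^{\ell}\phi$. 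Your phrasing of "commuting the whole argument with $\slashed{\Delta}_{\s^2}$ and running the hierarchy on $\slashed{\Delta}_{\s^2}^j\widetilde{\Phi}_{(\ell)}$" cannot work at the top level, since the hypothesis grants $j\leq 0$ Laplacians of $\widetilde{\Phi}_{(\ell)}$ at $u_0$; you must instead prove the commuted statements at the lower levels and feed them upward. Finally, "the limit exists precisely when $\partial_r\widetilde{\Phi}_{(\ell)}$ is integrable" overstates matters (integrability is sufficient, not necessary), but since \eqref{basic:as} is assumed throughout the paper this does not affect the argument.
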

The proof of the above Proposition can be done inductively in the same as the proof of Proposition 6.2 of \cite{paper4} (see also section 3.2 of \cite{paper1}). 

Using the aforementioned Proposition \ref{prop:finitenesshoradfields} and Proposition \ref{prop:np} we get the following important result.

\begin{theorem}\label{thm:np}
Fix $\ell \geq 1$ and assume that
\begin{equation*}
\lim_{r\to \infty} \slashed{\Delta}_{\s^2}^j\widetilde{\Phi}_{(k)}(u_0,r,\theta,\varphi)<\infty,
\end{equation*}
for all $0\leq j\leq \ell-k$ and $0\leq k\leq \ell$. Then $\partial_x ( P_{\ell} \Phi_{\ell} )$ is conserved along $\mathcal{I}^+$.
\end{theorem}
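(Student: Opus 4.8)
The plan is to extract from the equation \eqref{eq:maineqNpquantfixedlv2} for $P_\ell\Phi_{(\ell)}$ a transport equation along $\mathcal{I}^+$ for the quantity $\partial_x(P_\ell\Phi_{(\ell)})$, and to show that the right-hand side of this transport equation vanishes in the limit $r\to\infty$. First I would pass to the $(u,x)$ coordinates, in which $\partial_x = -r^2\partial_r = -2D^{-1}r^2 L$, so that $L\underline{L}$ acting on $P_\ell\Phi_{(\ell)}$ translates into an expression involving $\partial_u\partial_x(P_\ell\Phi_{(\ell)})$ up to lower-order terms; equivalently one can read off the needed identity directly from \eqref{eq:NPquantity} with $n=\ell$, namely
\begin{equation*}
2\partial_u\partial_x\Phi_{(\ell)} + \partial_x\big(Dr^{-2}\partial_x\Phi_{(\ell)}\big) + 2\ell(x-3Mx^2+2e^2x^3)\Phi_{(\ell)} + [\ell(\ell+1)+\slashed{\Delta}_{\s^2}]\Phi_{(\ell)} + \sum_{k=0}^\ell O(x)\widetilde{\Phi}_{(k)} = 0.
\end{equation*}
Projecting onto the $\ell$-th spherical harmonic so that $[\ell(\ell+1)+\slashed{\Delta}_{\s^2}]P_\ell\Phi_{(\ell)} = 0$, and noting that $Dr^{-2} = x^2 - 2Mx^3 + e^2x^4 = O(x^2)$ together with $(Dr^{-2})' = O(x)$, the middle term $\partial_x(Dr^{-2}\partial_x\Phi_{(\ell)})$ and the $2\ell x\,\Phi_{(\ell)}$ term are all $O(x)$ multiplied by $\Phi_{(k)}$'s and their $x$-derivatives.

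Second, I would evaluate the resulting relation in the limit $x\to 0$ (i.e.\ $r\to\infty$). By Proposition \ref{prop:finitenesshoradfields} together with the standing finiteness hypothesis on $\lim_{r\to\infty}\slashed{\Delta}_{\s^2}^j\widetilde{\Phi}_{(k)}(u_0,r,\cdot)$, all the radiation fields $\widetilde{\Phi}_{(k)}$ for $0\le k\le\ell$ — and hence $\Phi_{(\ell)}$ — have finite limits along $\mathcal{I}^+$ for every $u\ge u_0$. To control $\partial_x\Phi_{(\ell)}$ near $\mathcal{I}^+$ one uses the next-order radiation field $\widetilde{\Phi}_{(\ell+1)} = \partial_x\widetilde{\Phi}_{(\ell)}$ and the analogous finiteness statement applied with $\ell$ replaced by $\ell+1$ (which follows from the smoothness-at-$\mathcal{I}^+$ assumption \eqref{basic:as} on the data), so that $\partial_x\Phi_{(\ell)}$ is bounded up to and on $\mathcal{I}^+$. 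Consequently every term in the projected relation that carries an explicit factor of $x$ vanishes in the limit $x\to 0$, and what survives is precisely $2\partial_u\big(\partial_x P_\ell\Phi_{(\ell)}\big)\big|_{\mathcal{I}^+} = 0$. Integrating this ODE in $u$ shows that $\partial_x(P_\ell\Phi_{(\ell)})|_{\mathcal{I}^+}$ is independent of $u$, which is the assertion.

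One technical point that deserves care, and which I regard as the main obstacle, is justifying that the limit $r\to\infty$ commutes with the $u$-derivative, i.e.\ that $\lim_{r\to\infty}\partial_u\partial_x(P_\ell\Phi_{(\ell)})(u,r,\cdot) = \partial_u\lim_{r\to\infty}\partial_x(P_\ell\Phi_{(\ell)})(u,r,\cdot)$, and that the $O(x)$ error terms genuinely tend to zero rather than merely staying bounded. This requires not just finiteness of the limits of the $\widetilde{\Phi}_{(k)}$ along $\mathcal{I}^+$ but some uniformity: a quantitative rate of approach, or equivalently control of one more $r$-weighted derivative, so that one can dominate $x\cdot\partial_x^2\Phi_{(\ell)}$, $x\cdot\partial_x\Phi_{(\ell)}$ and the $\partial_x$ of the $O(x)$ coefficients. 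The cleanest way to obtain this is to run the same induction as in Proposition \ref{prop:finitenesshoradfields} one level higher — controlling $\widetilde{\Phi}_{(\ell+1)}$ and $\widetilde{\Phi}_{(\ell+2)}$ along $\mathcal{I}^+$ using the data assumption \eqref{basic:as} — and then to integrate \eqref{eq:maineqNpquantfixedlv2} (or its $(u,x)$ form) from a fixed large-$r$ sphere, using the energy flux bound $\int_\Sigma J^T[\psi]\cdot n_\Sigma\,d\mu_\Sigma < \infty$ to absorb the remaining terms, exactly as in the cited Proposition 6.2 of \cite{paper4} and section 3.2 of \cite{paper1}. Given those ingredients the conservation statement is immediate.
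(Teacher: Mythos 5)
Your proposal is correct and follows essentially the same route the paper intends: the theorem is stated there as a direct consequence of Proposition \ref{prop:np} (whose equation \eqref{eq:NPquantity}, projected onto $P_\ell$, kills the potential term $[\ell(\ell+1)+\slashed{\Delta}_{\s^2}]$ exactly and leaves $2\partial_u\partial_x(P_\ell\Phi_{(\ell)})$ plus terms carrying explicit factors of $x$) together with Proposition \ref{prop:finitenesshoradfields}, run one order higher so that those $O(x)$ terms genuinely vanish in the limit $x\to 0$. The technical point you flag about uniformity of the limits (and the need to control $\widetilde{\Phi}_{(\ell+1)}$, $\widetilde{\Phi}_{(\ell+2)}$ via \eqref{basic:as} and the inductive argument behind Proposition \ref{prop:finitenesshoradfields}) is legitimate and is exactly what the paper implicitly relies on, so no gap remains.
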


\begin{definition}\label{def:np}
Let
\begin{equation}\label{np}
I_{\ell} [ \psi ] ( \theta , \varphi ) \doteq \lim_{r \rightarrow \infty} (-1)^{\ell+1} \partial_x ( P_{\ell} \Phi_{(\ell) } ) (u, r, \theta , \varphi ) ,
\end{equation}
which is independent of $u$ as shown in Proposition \ref{prop:np}. We will refer to the quantities $I_{\ell} [\psi]$ (usually omitting the dependence on $\theta$ and $\varphi$) as the \textit{$\ell$-th Newman--Penrose charges} of a linear wave $\psi$. Note that we can also have a spherical harmonics decomposition of each Newman--Penrose charge as
\begin{equation}\label{np:sh}
I_{\ell} [\psi ] (\theta, \varphi) \doteq \sum_{m=-\ell}^{\ell} I_{m , \ell} [\psi ] Y_{m , \ell} (\theta , \varphi ) , 
\end{equation}
and the Newman--Penrose constants $I_{m , \ell}$ determine the value of the Newman--Penrose charge. 

We will also refer to the quantities $\partial_x \Phi_{(\ell )}$ as the \textit{$\ell$-th Newman--Penrose quantities} (note that there is no angular frequency localization in the latest quantities).
\end{definition}

\section{Hierarchies of $r^p$-weighted estimates}\label{section:rp}

\subsection{$r^p$-weighted estimates for $\Phi_{(\ell)}$}
\begin{proposition}
\label{prop:generalrpest}
Fix $\ell\in \N$ and consider a smooth solution $\psi$ to \eqref{eq:waveequation} satisfying \eqref{basic:as}. Then there exists $R>0$ sufficiently large, such that for $p\in (-4\ell,2]$ and for all $u_0 \leq u_1\leq u_2$:
\begin{equation*}
\begin{split}
\int_{\mathcal{N}_{u_2}}& r^p(L(P_{\geq \ell}\widetilde{\Phi}_{(\ell)}))^2\,d\omega dv+ \frac{1}{2}\int_{u_1}^{u_2} \int_{\mathcal{N}_u}(p+4\ell)r^{p-1}(L(P_{\geq \ell}\widetilde{\Phi}_{(\ell)}))^2\,d\omega dv du\\
&+\frac{1}{8}\int_{u_1}^{u_2} \int_{\mathcal{N}_u}(2-p)r^{p-3}D\left(|\snabla_{\s^2}(P_{\geq \ell}\widetilde{\Phi}_{(\ell)})|^2-\ell(\ell+1)(P_{\geq \ell}\widetilde{\Phi}_{(\ell)})^2\right)\,d\omega dv du\\
\leq&\: C\int_{\mathcal{N}_{u_1}}r^p(L(P_{\geq \ell}\widetilde{\Phi}_{(\ell)}))^2\,d\omega dv+ C\sum_{k\leq \ell}\int_{\Sigma_{u_1}} J^T[T^kP_{\geq \ell}\psi]\cdot n_{u_1}\,d\mu_{\Sigma_{u_1}},
\end{split}
\end{equation*}
where $C=C(D,R,\ell)>0$ and we can take $R=(p+4\ell)^{-1}R_0(\ell,D)>0$. 
\end{proposition}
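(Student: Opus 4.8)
The plan is to establish the $r^p$-weighted energy hierarchy for $\widetilde{\Phi}_{(\ell)}$ directly from the wave-type equation \eqref{eq:maincommeqv2} satisfied by $\widetilde{\Phi}_{(\ell)}$, using the standard Dafermos--Rodnianski vector field multiplier $r^p L$ adapted to the commuted equation. First I would project \eqref{eq:maincommeqv2} onto angular frequencies $\geq \ell$, which preserves the structure of the equation (the operators $L,\underline L,\slashed\Delta_{\s^2}$ all commute with $P_{\geq\ell}$), and record that on $P_{\geq\ell}$-supported functions one has the coercivity $|\snabla_{\s^2} f|^2 - \ell(\ell+1)f^2 \geq 0$ pointwise after integration over $\s^2$ (this is why the third term on the left-hand side has a good sign for $p<2$, and why the footnote's caveat about $p=2$ appears — there the angular term drops out). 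Then I would multiply \eqref{eq:maincommeqv2} by $r^p L(P_{\geq\ell}\widetilde\Phi_{(\ell)})$, integrate over the region bounded by $\mathcal N_{u_1}$, $\mathcal N_{u_2}$ and the portion of null infinity in between, and integrate by parts in $r$ (equivalently $v$) and in $u$.

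The key computation is tracking the weight exponents. The term $4L\underline L\widetilde\Phi_{(\ell)}$ paired with $r^pL\widetilde\Phi_{(\ell)}$ produces, after integration by parts in $u$, the flux $\int_{\mathcal N_{u_2}} r^p (L\widetilde\Phi_{(\ell)})^2$ minus the same on $\mathcal N_{u_1}$, plus a bulk term; the first-order term $-4\ell r^{-1}L\widetilde\Phi_{(\ell)}$ paired against $r^pL\widetilde\Phi_{(\ell)}$ gives $-4\ell\int r^{p-1}(L\widetilde\Phi_{(\ell)})^2$, and crucially integrating by parts the $r^p L\underline L$ term in $r$ contributes $-p\int r^{p-1}(L\widetilde\Phi_{(\ell)})^2$ with the opposite sign, so the combined spacetime coefficient is $+(p+4\ell)r^{p-1}$ — positive precisely when $p>-4\ell$, which is the stated range. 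The angular term $Dr^{-2}\slashed\Delta_{\s^2}\widetilde\Phi_{(\ell)}$ against $r^pL\widetilde\Phi_{(\ell)} = \tfrac12 Dr^p\partial_r\widetilde\Phi_{(\ell)}$, after integrating by parts in $r$ on the sphere-integrated quantity, yields the $(2-p)r^{p-3}D|\snabla_{\s^2}\widetilde\Phi_{(\ell)}|^2$ bulk term (the $2-p$ arising from $\partial_r(r^{p-2}D)\sim (p-2)r^{p-3}D + O(r^{p-4})$); similarly the zeroth-order $D\ell(\ell+1)r^{-2}\widetilde\Phi_{(\ell)}$ term produces the $-(2-p)\ell(\ell+1)r^{p-3}D\widetilde\Phi_{(\ell)}^2$ companion, so that together they form the nonnegative (on $P_{\geq\ell}$) combination displayed. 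The lower-order terms $\sum_{k<\ell}O(r^{-2})\widetilde\Phi_{(k)}$ and all the $O(r^{-3}), O(r^{-4})$ error terms (including those from $\partial_r(r^pD)$, $\partial_rD = O(r^{-2})$, and from the $O(r^{-2})$, $O(r^{-3})$ corrections in \eqref{eq:maincommeqv2}) are absorbed using Cauchy--Schwarz into a small multiple of the good bulk terms plus a controlled remainder; the remainder involving $\widetilde\Phi_{(k)}$ for $k<\ell$ is handled by a crude energy estimate which, via \eqref{eq:horadfields} and \eqref{est:enb}--\eqref{morawetz}, is bounded by $\sum_{k\leq\ell}\int_{\Sigma_{u_1}}J^T[T^kP_{\geq\ell}\psi]\cdot n_{u_1}$ — this is the source of that term on the right-hand side.

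The main obstacle, and the part requiring care rather than being routine, is twofold. First, the absorption of the error terms is only possible if $R$ is taken large enough \emph{and} large depending on how close $p$ is to the endpoint $-4\ell$: when $p+4\ell$ is small the good coefficient $(p+4\ell)r^{p-1}$ is weak, so one needs $r\geq R$ with $R\gtrsim (p+4\ell)^{-1}R_0(\ell,D)$ to ensure the $O(r^{p-2})$ errors (which are $(p+4\ell)^{-1}r^{-1}$ times smaller than the good term only for $r$ that large) are dominated — this is exactly the claim $R=(p+4\ell)^{-1}R_0(\ell,D)$ in the statement, and getting the dependence right means bookkeeping the constants in the Cauchy--Schwarz splittings. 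Second, at $p=2$ the angular bulk term vanishes and one loses coercivity from that channel; one must check the argument still closes using only the flux terms and the $(p+4\ell)r^{p-1}$ bulk term (it does, since at $p=2$ the problematic $(2-p)$-weighted errors also vanish), which is the content of the footnote. I would also need to justify that all boundary terms at null infinity have a favorable sign or vanish, using the finiteness of the Newman--Penrose-type limits from Proposition \ref{prop:finitenesshoradfields} and the smoothness assumption \eqref{basic:as}; for $p\leq 2$ the $\mathcal I^+$ flux term appears with a good sign and can simply be dropped.
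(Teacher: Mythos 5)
Your proposal is correct and follows essentially the same route as the paper, which proves this proposition by the standard Dafermos--Rodnianski multiplier argument applied to the commuted equation \eqref{eq:maincommeqv2} (the paper only cites Proposition 6.5 of \cite{paper4}, but the detailed version it gives for Proposition \ref{prop:fixedlrpest} confirms your scheme: the $(p+4\ell)r^{p-1}$ bulk coefficient, the $(2-p)r^{p-3}D$ angular/Poincar\'e combination, absorption of the $O(r^{p-2})$ errors for $r\geq R\sim (p+4\ell)^{-1}R_0$, and Hardy plus the Morawetz estimate \eqref{morawetz} for the lower-order $\widetilde{\Phi}_{(k)}$ and cutoff terms). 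The only cosmetic difference is that in the paper the $\sum_{k\leq\ell}J^T[T^k\psi]$ terms on the right-hand side arise chiefly from the cutoff region via \eqref{morawetz} rather than from the $\widetilde{\Phi}_{(k)}$ remainders themselves, which are absorbed by successive Hardy inequalities; this does not affect the validity of your argument.
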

The proof of the aforementioned Proposition follows along the same lines as the proof of Proposition 6.5 of \cite{paper4} by using equation \eqref{eq:maincommwaveq}.

Now we present an extended hierarchy for a linear wave localized at frequency $\ell$ with non-zero $\ell$-th Newman--Penrose charge.
\begin{proposition}
\label{prop:fixedlrpest}
Fix $\ell\in \N$ and consider a smooth solution $\psi$ to \eqref{eq:waveequation} satisfying \eqref{basic:as}. Then there exists $R>0$ sufficiently large, such that for $p\in (-4\ell,4)$ and for all $u_0\leq u_1\leq u_2$:
\begin{equation*}
\begin{split}
\int_{\mathcal{N}_{u_2}}& r^p(L(P_{\ell}\Phi_{(\ell)}))^2\,d\omega dv+ \frac{1}{2}\int_{u_1}^{u_2} \int_{\mathcal{N}_u}(p+4\ell)r^{p-1}(L(P_{\ell}\Phi_{(\ell)}))^2\,d\omega dv du\\
\leq&\: C\int_{\mathcal{N}_{u_1}}r^p(L(P_{\ell}\Phi_{(\ell)}))^2\,d\omega dv+ C\sum_{k\leq \ell}\int_{\Sigma_{u_1}} J^T[T^kP_{\ell}\psi]\cdot n_{u_1}\,d\mu_{\Sigma_{u_1}},
\end{split}
\end{equation*}
where $C=C(D,R,\ell)>0$ and we can take $R=\max\{(p+4\ell)^{-1},(p-4)^{-2}\}R_0(\ell,D)>0$.
\end{proposition}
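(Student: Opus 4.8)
The plan is to mimic the proof of the preceding Proposition \ref{prop:generalrpest}, which establishes the $r^p$-weighted hierarchy for $P_{\geq\ell}\widetilde\Phi_{(\ell)}$ in the range $p\in(-4\ell,2]$, but now exploiting the crucial cancellation recorded in equation \eqref{eq:maineqNpquantfixedlv2}: for $P_\ell\Phi_{(\ell)}$ the angular-derivative term is absent, which is precisely why the $p=2$ barrier disappears and we can push $p$ up to (but not including) $4$. So the first step is to take equation \eqref{eq:maineqNpquantfixedlv2}, namely $4L\underline L(P_\ell\Phi_{(\ell)})=[-4\ell r^{-1}+O(r^{-2})]L(P_\ell\Phi_{(\ell)})+\sum_{k=0}^{\ell}O(r^{-3})P_\ell\Phi_{(k)}$, multiply through by the weight $r^p L(P_\ell\Phi_{(\ell)})$, and integrate over the spacetime slab bounded by $\mathcal{N}_{u_1}$, $\mathcal{N}_{u_2}$ and the relevant pieces of $\Sigma_{u_1}$, $\Sigma_{u_2}$ (together with the null-infinity boundary), keeping careful track of which total-derivative terms become boundary fluxes.

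The key computation is that $4\int r^p L(P_\ell\Phi_{(\ell)})\,L\underline L(P_\ell\Phi_{(\ell)})$ produces, after integration by parts in $v$ along the $\mathcal{N}_u$ cones, the $\mathcal{N}_{u_2}$ flux $\int_{\mathcal{N}_{u_2}}r^p(L(P_\ell\Phi_{(\ell)}))^2$ with the correct sign, the corresponding $\mathcal{N}_{u_1}$ boundary term with the opposite sign, and a bulk term $\int\int \tfrac12 p\, r^{p-1}(L(P_\ell\Phi_{(\ell)}))^2$; combined with the $-4\ell r^{-1}$ term from the right-hand side, which contributes $-\tfrac12\cdot(-4\ell)\int\int r^{p-1}(L(P_\ell\Phi_{(\ell)}))^2$ (sign and factor to be verified against the conventions of Proposition \ref{prop:generalrpest}), one obtains the coefficient $\tfrac12(p+4\ell)$ in front of the bulk term, which is positive exactly when $p>-4\ell$. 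The $O(r^{-2})L(P_\ell\Phi_{(\ell)})$ error and the $\sum_k O(r^{-3})P_\ell\Phi_{(k)}$ inhomogeneity are absorbed: the former by a Cauchy--Schwarz against a fraction of the positive bulk term, which forces $R$ to be large compared to $(p+4\ell)^{-1}$, and the latter by first applying a Hardy inequality on each $\mathcal{N}_u$ cone to trade $\int r^{p-4}(P_\ell\Phi_{(k)})^2$ for $\int r^{p-2}(L(P_\ell\Phi_{(k)}))^2$ plus a boundary term, and then invoking the already-established lower-order estimates of Proposition \ref{prop:generalrpest} (valid since $p-2<2$, i.e. the lower-order quantities live in the admissible subcritical range) to bound these by data terms of the form $\sum_{k\le\ell}\int_{\Sigma_{u_1}}J^T[T^kP_\ell\psi]\cdot n_{u_1}$. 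The photon-sphere and redshift issues do not arise here because everything takes place in $\{r\ge R\}$; the Morawetz estimate \eqref{morawetz} and the energy boundedness \eqref{est:enb}, combined with \eqref{basic:as}, control the finite-$r$ contributions that appear when one glues the null cones to the spacelike part of $\Sigma_{u_1}$.

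The Hardy inequality used in the inhomogeneity step needs the weight exponent on $r^{-1}(P_\ell\Phi_{(k)})^2$-type terms to avoid the exceptional value, which is why the admissible window is $p<4$ rather than $p\le 4$ — pushing $p$ to the endpoint $4$ would make one of the boundary weights critical. I expect the \textbf{main obstacle} to be bookkeeping the lower-order terms $\Phi_{(k)}$ for $k<\ell$ and verifying that, after the Hardy trade, their weighted fluxes genuinely fall in the range where Proposition \ref{prop:generalrpest} applies with a closed estimate — in particular one must check that the constants $\alpha_{\ell,k}$ from the definition of $\Phi_{(\ell)}$ do not reintroduce an uncontrolled angular term, but this is exactly guaranteed by Corollary \ref{cor:cancel}. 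The determination of the precise threshold $R=\max\{(p+4\ell)^{-1},(p-4)^{-2}\}R_0(\ell,D)$ is then a matter of collecting the smallness requirements from the two error-absorption steps: the $(p+4\ell)^{-1}$ from dominating the $O(r^{-2})$ first-order error by the $\tfrac12(p+4\ell)r^{p-1}$ bulk term, and the $(p-4)^{-2}$ from the Hardy constant degenerating as $p\uparrow 4$.
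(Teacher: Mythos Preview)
Your proposal is essentially the paper's argument. The paper presents it as a modification of the proof of Proposition~\ref{prop:generalrpest}: it works with \eqref{eq:maineqNpquantfixedl} and then invokes the Poincar\'e \emph{equality}
\[
\int_{\s^2}|\snabla_{\s^2}\Phi_{(\ell)}|^2\,d\omega=\ell(\ell+1)\int_{\s^2}\Phi_{(\ell)}^2\,d\omega
\]
(valid for the single angular mode $P_\ell$) to cancel the angular bulk term --- this is equivalent to your choice of starting directly from \eqref{eq:maineqNpquantfixedlv2}, where that cancellation is already built in.

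Two small bookkeeping differences are worth noting. First, the paper's weighted Young inequality splits the cross term $O(r^{p-3})\Phi_{(k)}\cdot L\Phi_{(\ell)}$ into $\epsilon(p+4\ell)r^{p-1}(L\Phi_{(\ell)})^2+C_\epsilon(p+4\ell)^{-1}r^{p-5}\Phi_{(k)}^2$ (exponent $p-5$, not $p-4$); it is Hardy's inequality \eqref{hardy1} applied to this $r^{p-5}$ weight that produces precisely the $(p-4)^{-2}$ degeneration you correctly anticipate in your final paragraph. With your $r^{p-4}$ split the Hardy constant would instead blow up at $p=3$, and the subsequent call to Proposition~\ref{prop:generalrpest} would only close for $p\leq 3$. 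Second, rather than invoking Proposition~\ref{prop:generalrpest} as a black box for the lower-order $\Phi_{(k)}$ (which as stated is only formulated for the top-order quantity), the paper applies Hardy \emph{successively}: each application trades $\Phi_{(k)}$ for $r^{-2}\Phi_{(k+1)}$, and after $\ell-k$ steps one reaches a term in $L\Phi_{(\ell)}$ with a strictly subcritical weight that is absorbed directly into the left-hand side bulk.
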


\begin{proof}
Let us use $\psi$ to denote $P_{\ell}\psi$, for convenience. We then proceed as in the proof of Proposition \ref{prop:generalrpest}, using instead \eqref{eq:maineqNpquantfixedl}, and then we use that for $P_{\ell}\psi$ the Poincar\'e \emph{inequality} becomes an \emph{equality}:
\begin{equation*}
\int_{\s^2}|\snabla_{\s^2}\Phi_{(\ell)}|^2\,d\omega=\ell(\ell+1)\int_{\s^2}\Phi_{(\ell)}^2\,d\omega,
\end{equation*}
which allows us to write
\begin{equation}
\label{eq:maineqrpestfixedl}
\begin{split}
\int_{\mathcal{N}_{u_2}}& r^p(L(\chi\Phi_{(\ell)}))^2\,d\omega dv+ \frac{1}{2}\int_{u_1}^{u_2} \int_{\mathcal{N}_u}[ (p+4\ell)r^{p-1} + O (r^{p-2}) ](L(\chi\Phi_{(\ell)}))^2\,d\omega dv du\\
=&\:\int_{\mathcal{N}_{u_1}} r^p(L(\chi\Phi_{(\ell)}))^2\,d\omega dv+ J_1+ \sum_{|\alpha|\leq 1} \int_{u_1}^{u_2} \int_{\mathcal{N}_u}r^{p-2}\underline{L} (\chi\Phi_{(\ell)}) \cdot \mathcal{R}_{\chi}[\partial^{\alpha}\Phi_{(\ell)}]\,d\omega dvdu,
\end{split}
\end{equation}
with
\begin{align*}
J_1:=&\:\sum_{k=0}^{\ell}\int_{u_1}^{u_2} \int_{\mathcal{N}_u} O(r^{p-3}) \chi \Phi_{(k)}\cdot L(\chi \Phi_{(\ell)})\,d\omega du dv,
\end{align*}
We apply a weighted Young's inequality to estimate
\begin{equation*}
|J_1|\leq \int_{u_1}^{u_2} \int_{\mathcal{N}_u} \epsilon (p+4\ell) r^{p-1}(L(\chi \Phi_{(\ell)}))^2\,d\omega du dv+ C_{\epsilon}(p+4\ell)^{-1}\sum_{k=0}^{\ell}\int_{u_1}^{u_2}\int_{\mathcal{N}_u}r^{p-5}\chi^2\Phi_{(k)}^2\,d\omega du dv.
\end{equation*}
As in the proof of Proposition \ref{prop:generalrpest}, the right-hand side of the above equation can be absorbed into the left-hand side of \eqref{eq:maineqrpestfixedl} if $p<3$ by applying Hardy's inequality \eqref{hardy1} successively to the second integral, where the $R_{\chi}$ terms that appear can be estimated using \eqref{morawetz}.
\end{proof}

Next we show how the above hierarchy for a wave localized at frequency $\ell$ can be extended even further by assuming that its $\ell$-th Newman--Penrose constant vanishes. 

\begin{proposition}
\label{prop:fixedlrpest2}
Fix $\ell \in \N$ and consider a smooth solution $\psi$ to \eqref{eq:waveequation} satisfying \eqref{basic:as}. Then there exists $R>0$ sufficiently large, such that for $p \in (-4\ell, 4)$ and for all $u_0\leq u_1\leq u_2$ we have that:
\begin{equation*}
\begin{split}
\int_{\mathcal{N}_{u_2}}& r^p(L(P_{\ell}\Phi_{(\ell)}))^2\,d\omega dv+ \frac{1}{2}\int_{u_1}^{u_2} \int_{\mathcal{N}_u}(p+4\ell)r^{p-1}(L(P_{\ell}\Phi_{(\ell)}))^2\,d\omega dv du\\
\leq&\: C\int_{\mathcal{N}_{u_1}}r^p(L(P_{\ell}\Phi_{(\ell)}))^2\,d\omega dv+ C\sum_{k\leq \ell}\int_{\Sigma_{u_1}} J^T[T^kP_{\ell}\psi]\cdot n_{u_1}\,d\mu_{\Sigma_{u_1}},
\end{split}
\end{equation*}
where $C=C(D,R,\ell)>0$ and we can take $R=\max\{(p+4\ell)^{-1},(p-4)^{-2}\}R_0(\ell,D)>0$.

Furthermore there exists $R>0$ sufficiently large, such that for $p\in (-4\ell,5)$ and for all $u_0\leq u_1\leq u_2$ we have that:
\begin{equation*}
\begin{split}
\int_{\mathcal{N}_{u_2}}& r^p(L( P_{\ell} \Phi_{(\ell)} ))^2\,d\omega dv+ \frac{1}{2}\int_{u_1}^{u_2} \int_{\mathcal{N}_u}(p+4\ell)r^{p-1}(L( P_{\ell} \Phi_{(\ell)} ))^2\,d\omega dv du\\
\leq&\: C\int_{\mathcal{N}_{u_1}}r^p(L( P_{\ell} \Phi_{(\ell)} ))^2\,d\omega dv+ C\sum_{k\leq \ell}\int_{\Sigma_{u_1}} J^T[T^k P_{\ell} \psi]\cdot n_{u_1}\,d\mu_{\Sigma_{u_1}}+C(p-5)^{-1}E_{\textnormal{aux},\ell},
\end{split}
\end{equation*}
where $C=C(D,R,\ell)>0$ and we can take $R=(p+4\ell)^{-2}R_0(\ell,D)>0$ and
\begin{equation*}
E_{\textnormal{aux},\ell}=\int_{\mathcal{N}_0} r^{4-\epsilon}(L(P_{\ell}\Phi_{(k)}))^2\,d\omega dv+\sum_{j=0}^{\ell}E_1[T^jP_{\ell}\psi] ,
\end{equation*}
where
\begin{equation*}
\begin{split}
E_1[\psi]:=&\:\sum_{m\leq 4}\int_{\Sigma_0} J^N[T^m\psi]\cdot n_0\,d\mu_{\Sigma_0}+\sum_{m\leq 2} \int_{\mathcal{N}_0}r^2(LT^m\phi)^2+r(LT^{m+1}\phi)^2\,d\omega dv\\
&+ \int_{\mathcal{N}_0}r^{2-\epsilon}(L\widetilde{\Phi}_{(1)})^2+r^{1-\epsilon}(LT\widetilde{\Phi}_{(1)})^2\,d\omega dv.
\end{split}
\end{equation*}
\end{proposition}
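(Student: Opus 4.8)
The plan is to adapt the proof of Proposition~\ref{prop:fixedlrpest} to the setting where the $\ell$-th Newman--Penrose constant of $P_\ell\psi$ vanishes. Recall that in Proposition~\ref{prop:fixedlrpest} the obstruction to pushing $p$ beyond $4$ comes from the boundary term generated upon integrating by parts the term $\partial_x(Dr^{-2}\partial_x\Phi_{(\ell)})$ in \eqref{eq:maineqNpquantfixedl}: the flux $\int r^{p-2}(\partial_x\Phi_{(\ell)})^2$ through null infinity blows up unless $\partial_x(P_\ell\Phi_{(\ell)})$ decays faster than the generic $r^{(p-5)/2}$, which is precisely guaranteed when the Newman--Penrose charge $I_\ell[\psi]=\lim_{r\to\infty}(-1)^{\ell+1}\partial_x(P_\ell\Phi_{(\ell)})$ is zero. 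So the first step is: under the vanishing assumption $I_\ell[\psi]=0$, re-examine the boundary/flux terms in the energy identity used for Proposition~\ref{prop:fixedlrpest} and observe that the critical term $r^{p-2}(\partial_x\Phi_{(\ell)})^2|_{\mathcal{I}^+}$ is now integrable for $p<5$ rather than $p<4$.

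The first displayed estimate in the statement is simply a restatement of Proposition~\ref{prop:fixedlrpest} (same range $p\in(-4\ell,4)$), so nothing new is needed there; I would just note it carries over verbatim. For the second displayed estimate (range $p\in(-4\ell,5)$), I would run the multiplier argument from the proof of Proposition~\ref{prop:fixedlrpest}: multiply \eqref{eq:maineqNpquantfixedl} (equivalently the $P_\ell\Phi_{(\ell)}$ equation) by $r^p L(\chi P_\ell\Phi_{(\ell)})$, integrate over the region bounded by $\mathcal{N}_{u_1}$, $\mathcal{N}_{u_2}$, and the null infinity piece, and use that for $P_\ell$ the Poincar\'e inequality is an equality so the dangerous angular term drops out. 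The new feature is that the boundary integral at $\mathcal{I}^+$ coming from the $\partial_x(Dr^{-2}\partial_x\Phi_{(\ell)})$ term, which for generic data would contribute a term scaling like $r^{p-4}$ (divergent for $p\geq 4$), now contributes a term that is controlled because $\partial_x(P_\ell\Phi_{(\ell)})$ has a vanishing limit. The quantitative version of ``vanishing limit'' is where the auxiliary energy $E_{\textnormal{aux},\ell}$ enters: the flux $\int_{\mathcal{N}_0} r^{4-\epsilon}(L(P_\ell\Phi_{(k)}))^2$ together with the lower-order fluxes $E_1[T^j P_\ell\psi]$ control, via Hardy inequalities \eqref{hardy1} and the Morawetz/redshift estimates \eqref{morawetz}, \eqref{redshift:as}, the rate at which $\partial_x(P_\ell\Phi_{(\ell)})$ approaches its (zero) limit, so that the $r^{p-4}$-type boundary term is bounded by $C(p-5)^{-1}E_{\textnormal{aux},\ell}$.

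Concretely the steps are: (1) set up the multiplier identity for $P_\ell\Phi_{(\ell)}$ exactly as in Proposition~\ref{prop:fixedlrpest}, keeping track of the null-infinity boundary term explicitly rather than discarding it; (2) use the vanishing of $I_\ell[\psi]$ together with the transport structure of \eqref{eq:maineqNpquantfixedlv2} to write $\partial_x(P_\ell\Phi_{(\ell)})(u,0,\theta,\varphi)=\int_{u_0}^u(\text{RHS})\,du'$ and bound this boundary contribution; (3) absorb the resulting error into $C(p-5)^{-1}E_{\textnormal{aux},\ell}$, using Hardy inequalities to trade $r^{p-5}\Phi_{(k)}^2$ terms for $r^{p-3}(L\Phi_{(k)})^2$ terms and then down to the zeroth-order fluxes, and using \eqref{morawetz} to handle the inhomogeneities $\mathcal{R}_\chi[\partial^\alpha\Phi_{(\ell)}]$ exactly as in Proposition~\ref{prop:fixedlrpest}; (4) the lower-order terms $\sum_{k=0}^{\ell-1}O(r^{-3})P_\ell\Phi_{(k)}$ from \eqref{eq:maineqNpquantfixedl} are handled by the same weighted Young's inequality plus Hardy argument as in the previous proofs, now using the $r^{2-\epsilon}(L\widetilde\Phi_{(1)})^2$ and $r^2(LT^m\phi)^2$ pieces of $E_1[\psi]$ to close.

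The main obstacle I expect is step~(2)--(3): carefully controlling the boundary term at null infinity in terms of $E_{\textnormal{aux},\ell}$. The subtlety is that one is trying to squeeze out one extra power of $r$ ($p<5$ instead of $p<4$), so the argument is genuinely at the borderline and the $\epsilon$-losses in $E_{\textnormal{aux},\ell}$ (the $r^{4-\epsilon}$ and $r^{2-\epsilon}$ weights) are essential — the estimate would fail at $p=5$ exactly. One must also be careful that the auxiliary energy only involves finitely many $T$-commuted and $\partial_x$-commuted quantities at the initial slice $\mathcal{N}_0$, and check that the hierarchy of Hardy inequalities used to pass from $r^{p-5}\Phi_{(k)}^2$ down to controllable quantities does not require weights outside the admissible ranges; tracking which $\Phi_{(k)}$'s and how many $T$-derivatives appear, and confirming they are all contained in the stated $E_{\textnormal{aux},\ell}$, is the bookkeeping-heavy heart of the proof.
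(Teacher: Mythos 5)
There is a genuine gap at the heart of the argument: you have misidentified the obstruction in the range $p\in[4,5)$ and the mechanism that overcomes it. The vanishing of $I_\ell[\psi]$ does, as you say, make the $r^p(L(P_\ell\Phi_{(\ell)}))^2$ fluxes finite for $p\geq 3$ (and kills the $r^{p-4}(\partial_x\Phi_{(\ell)})^2$ boundary contribution at $\mathcal{I}^+$), but that is not where the factor $(p-5)^{-1}$ and the auxiliary energy $E_{\textnormal{aux},\ell}$ come from. The genuine obstruction is the lower-order coupling term $J_1=\sum_{k\leq\ell}\iint O(r^{p-3})\,\chi\Phi_{(k)}\cdot L(\chi\Phi_{(\ell)})$, which you propose to handle ``by the same weighted Young's inequality plus Hardy argument as in the previous proofs.'' That route provably fails here: after Young's inequality you are left with $\iint r^{p-5}\Phi_{(k)}^2$, and Hardy's inequality \eqref{hardy1} with $q=p-5$ requires $\lim_{r\to\infty}r^{p-4}\Phi_{(k)}^2=0$, which is false for $p\geq 4$ since $\Phi_{(\ell)}$ generically has a finite nonzero limit at $\mathcal{I}^+$ (vanishing of $I_\ell$ is a statement about $\partial_x\Phi_{(\ell)}$, not about $\Phi_{(\ell)}$ itself); and even formally the resulting $r^{p-3}(L\Phi_{(k)})^2$ spacetime integrals for $k<\ell$ carry weights $p-3\in[1,2)$ that the lower-order hierarchies do not supply.

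The paper's actual proof replaces this absorption entirely: it rewrites the identity with a supremum over $u$ of the flux term on the left-hand side, applies a Cauchy--Schwarz inequality weighted in \emph{both} $r$ and $u$, namely $r^{p-3}\chi\Phi_{(k)}\cdot L(\chi\Phi_{(\ell)})\leq\epsilon(u+1)^{-1-\eta}r^p(L(\chi\Phi_{(\ell)}))^2+C_\epsilon(u+1)^{1+\eta}r^{p-6}\Phi_{(k)}^2$, absorbs the first piece into that supremum, and controls the second piece using the a priori decay $\int_{\s^2}(P_\ell\Phi_{(k)})^2\lesssim u^{-3+\epsilon}E_{\textnormal{aux},\ell}$ of Lemma \ref{lm:auxdecay} (itself obtained from the already-established $p<4$ hierarchy by the dyadic/mean-value pipeline). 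The surviving $r$-integral $\int_R^\infty r^{p-6}\,dr$ converges precisely for $p<5$ and yields the $(p-5)^{-1}$ factor. So the decay-in-$u$ input is not, as in your sketch, a quantitative rate for $\partial_x\Phi_{(\ell)}$ approaching its zero limit on a fixed slice; it is a $u$-decay statement for the spherical averages of all the $\Phi_{(k)}$'s, and without it (or an equivalent substitute) your step (4) cannot close.
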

\begin{proof}
For the proof of the first estimate in the range $(-4\ell , 4)$ we follow the exact same proof as in Proposition \ref{prop:fixedlrpest}. Note that the difference in the two cases comes from the assumption on $I_{\ell} [\psi]$. In the current case the right-hand side would be infinite for $p \in [3,4)$ if $I_{\ell} [\psi]$ was non-vanishing.

In the remaining range $p \in [4,5)$ we proceed again as in the proof of Proposition \ref{prop:fixedlrpest} and we rewrite \eqref{eq:maineqrpestfixedl} by taking a supremum on the right-hand side:
\begin{equation}
\label{eq:maineqrpestfixedl2}
\begin{split}
\sup_{u_1\leq u \leq u_2}\int_{\mathcal{N}_{u}}& r^p(L(\chi\Phi_{(\ell)}))^2\,d\omega dv+ \frac{1}{2}\int_{u_1}^{u_2} \int_{\mathcal{N}_u}(p+4\ell)r^{p-1}(L(\chi\Phi_{(\ell)}))^2\,d\omega dv du\\
=&\: \int_{\mathcal{N}_{u_1}} r^p(L(\chi\Phi_{(\ell)}))^2\,d\omega dv + J_1+ \sum_{|\alpha|\leq 1} \int_{u_1}^{u_2} \int_{\mathcal{N}_u}r^{p-2}\underline{L} (\chi\Phi_{(\ell)}) \cdot \mathcal{R}_{\chi}[\partial^{\alpha}\Phi_{(\ell)}]\,d\omega dudv,
\end{split}
\end{equation}
Now we will estimate $|J_1|$ by absorbing it into the (first) \emph{flux} integral on the left-hand side of \eqref{eq:maineqrpestfixedl2} rather than the (second) spacetime integral. We apply a Cauchy--Schwarz inequality with weights in $r$ \emph{and} $u$ to estimate:
\begin{equation*}
r^{p-3}\chi\Phi_{(k)}\cdot L(\chi \Phi_{(\ell)})\leq \epsilon (u+1)^{-1-\eta}r^{p}L(\chi \Phi_{(\ell)})+C_{\epsilon}(u+1)^{1+\eta} r^{p-6}\Phi_{(k)}^2.
\end{equation*}
We can estimate
\begin{equation*}
\epsilon \int_{u_1}^{u_2} \int_{\mathcal{N}_u}(u+1)^{-1-\eta}r^{p}L(\chi \Phi_{(\ell)})\,d\omega dv\leq \epsilon (u_1+1)^{-\eta}\cdot \sup_{u_1\leq u \leq u_2}\int_{\mathcal{N}_{u}}r^p(L(\chi\Phi_{(\ell)}))^2\,d\omega dv,
\end{equation*}
and we can absorb the right-hand side into the left-hand side of \eqref{eq:maineqrpestfixedl2}. 

We note that 
\begin{equation}\label{dec:aux}
\int_{\s^2} (P_{\ell}\Phi_{(\ell)})^2(u,r,\theta,\varphi)\,d\omega \leq Cu^{-3+\epsilon}E_{\textnormal{aux},\ell},
\end{equation}
in $\{ r \geq R\}$ as we will show in the next section in Lemma \ref{lm:auxdecay}, and using the above estimate we have that
\begin{equation*}
 \int_{u_1}^{u_2} \int_{\mathcal{N}_u}(u+1)^{1+\eta} r^{p-6}\Phi_{(\ell)}^2\,d\omega du dv\leq CE_{\textnormal{aux},\ell} \int_{u_1}^{u_2}\int_{R}^{\infty}(u+1)^{-2+\epsilon+\eta}r^{p-6}\,dr du
 \end{equation*}
The integral on the right-hand side is bounded for $p<5$ and gives a factor $(p-5)^{-1}$.
\end{proof}

It remains to show estimate \eqref{dec:aux} which we do in Lemma \ref{lm:auxdecay} in section \ref{energy}.

\subsubsection{Aside: sharpness of the hierarchy for $P_{\ell} \Phi_{(\ell)}$}
In this Section we will show that Proposition \ref{prop:fixedlrpest} cannot hold true for $p \geq 3$, and that the final estimate of Proposition \ref{prop:fixedlrpest2} cannot hold true for $p \geq 5$.

\begin{proposition}\label{prop:blowup}
Fix some $\ell \geq 1$ and consider a solution $\psi$ of \eqref{eq:waveequation} such that
$$ I_{\ell} [\psi ] \neq 0 . $$
Then the range of $p$ in Proposition \ref{prop:fixedlrpest} is \textit{sharp}. Specifically for any $u_0 \leq u' < \infty$ we have that:
$$ \int_{u'}^{\infty} \int_{\mathcal{N}_u} r^2 ( L ( P_{\ell} \Phi_{(\ell)} ) )^2 \, d\omega dv du = \infty . $$ 

\end{proposition}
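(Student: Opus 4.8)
The plan is to show that if the conformal flux $\int_{u'}^{\infty}\int_{\mathcal{N}_u}r^2(L(P_\ell\Phi_{(\ell)}))^2\,d\omega\,dv\,du$ were finite, then $I_\ell[\psi]$ would have to vanish, contradicting the hypothesis. The starting point is the conservation law from Theorem \ref{thm:np} (equivalently Proposition \ref{prop:np}): the quantity $\partial_x(P_\ell\Phi_{(\ell)})$ has a finite, $u$-independent limit along $\mathcal{I}^+$, namely $(-1)^{\ell+1}I_\ell[\psi]$. In the outgoing coordinates, $\partial_x = -r^2\partial_r$, so $L(P_\ell\Phi_{(\ell)}) = \tfrac12 D\partial_r(P_\ell\Phi_{(\ell)}) = -\tfrac12 D r^{-2}\partial_x(P_\ell\Phi_{(\ell)})$. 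Since $D\to 1$ as $r\to\infty$, the integrand $r^2(L(P_\ell\Phi_{(\ell)}))^2$ behaves near null infinity like $\tfrac14 r^{-2}(\partial_x(P_\ell\Phi_{(\ell)}))^2$, and $(\partial_x(P_\ell\Phi_{(\ell)}))^2 \to |I_\ell[\psi]|^2 > 0$ (after integrating over $\mathbb{S}^2$, using that $I_\ell[\psi]$ is a nonzero $P_\ell$-mode so its $L^2(\mathbb{S}^2)$ norm is strictly positive).

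First I would make this quantitative. Using equation \eqref{eq:maineqNpquantfixedlv2} (or \eqref{eq:maineqNpquantfixedl}), one controls $\partial_u$ of $\partial_x(P_\ell\Phi_{(\ell)})$ along $\mathcal{N}_u$ and, combined with the assumed finiteness of the full energy hierarchy \eqref{basic:as} and the lower-order $r^p$-estimates of Proposition \ref{prop:generalrpest} and Proposition \ref{prop:fixedlrpest} (valid for $p<3$, in particular $p=2-\epsilon$), one shows that $\partial_x(P_\ell\Phi_{(\ell)})(u,r,\theta,\varphi)$ converges to its limit $(-1)^{\ell+1}I_\ell[\psi](\theta,\varphi)$ uniformly as $r\to\infty$, with a rate, and that this limit is attained uniformly in $u$ on compact $u$-intervals. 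Concretely, there exist $R_1>R$ and $u_1$ such that for all $u\geq u_1$ and $r\geq R_1$ we have $\int_{\mathbb{S}^2}(\partial_x(P_\ell\Phi_{(\ell)}))^2(u,r,\cdot)\,d\omega \geq \tfrac12\int_{\mathbb{S}^2}|I_\ell[\psi]|^2\,d\omega =: c > 0$.

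Then I would simply estimate from below:
\begin{equation*}
\int_{u'}^{\infty}\int_{\mathcal{N}_u} r^2(L(P_\ell\Phi_{(\ell)}))^2\,d\omega\,dv\,du \geq \int_{\max\{u',u_1\}}^{\infty}\int_{R_1}^{\infty}\frac{D^2}{4}r^{-2}\int_{\mathbb{S}^2}(\partial_x(P_\ell\Phi_{(\ell)}))^2\,d\omega\,dr\,du \geq \frac{c'}{4}\int_{\max\{u',u_1\}}^{\infty}\left(\int_{R_1}^{\infty}r^{-2}\,dr\right)du = +\infty,
\end{equation*}
where $c' = c\cdot\inf_{r\geq R_1}D^2 > 0$; the inner $r$-integral is a positive constant $R_1^{-1}$, and the $u$-integral of a positive constant over a half-line diverges. (One must be slightly careful that $dv$ along $\mathcal{N}_u$ and $dr$ differ by the Jacobian $\tfrac{dv}{dr} = \tfrac{2}{D}$, which is bounded above and below for $r\geq R_1$, so this only changes $c'$.) This contradicts the assumed finiteness, proving the claim.

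The main obstacle I anticipate is Step two: establishing that $\partial_x(P_\ell\Phi_{(\ell)})$ actually attains its limiting value $I_\ell[\psi]$ pointwise (or in $L^2(\mathbb{S}^2)$) \emph{uniformly in $u$ for large $u$}, rather than merely having a well-defined $u$-independent limit as $r\to\infty$ for each fixed $u$. This requires propagating the convergence along $\mathcal{I}^+$ using the evolution equation \eqref{eq:maineqNpquantfixedlv2} together with the decay of the lower-order radiation fields $\widetilde{\Phi}_{(k)}$, $k\leq\ell-1$, and the finiteness of the weighted fluxes; essentially one integrates the equation for $\partial_u\partial_x(P_\ell\Phi_{(\ell)})$ in $u$ and shows the correction terms are integrable and small for large $u$. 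A cleaner alternative, which I would pursue if the direct argument is delicate, is to run the contradiction on a fixed outgoing cone: fix one $u=u^*\geq u'$ with $u^*$ large; finiteness of the total integral forces $\int_{\mathcal{N}_{u^*}}r^2(L(P_\ell\Phi_{(\ell)}))^2\,d\omega\,dv<\infty$, but the conservation of $\partial_x(P_\ell\Phi_{(\ell)})$ along $\mathcal{I}^+$ already forces $\partial_x(P_\ell\Phi_{(\ell)})(u^*,r,\cdot)\to I_\ell[\psi]\neq 0$ as $r\to\infty$ on that single cone, and then $\int_{\mathcal{N}_{u^*}}r^{-2}(\partial_x(P_\ell\Phi_{(\ell)}))^2\,d\omega\,dv$ already diverges like $\int^\infty r^{-2}\,dr$ times a positive constant — wait, that converges, so in fact one needs the $u$-integration genuinely, confirming that the uniform-in-$u$ attainment of the nonzero limit is the crux and must be dealt with as described.
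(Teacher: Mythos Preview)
Your approach has a genuine gap: the central claim that there exist \emph{fixed} $R_1, u_1$ with $\int_{\mathbb{S}^2}(\partial_x(P_\ell\Phi_{(\ell)}))^2(u,r,\cdot)\,d\omega \geq c > 0$ for all $u \geq u_1$ and $r \geq R_1$ is false. The convergence $\partial_x(P_\ell\Phi_{(\ell)})(u,r) \to (-1)^{\ell+1}I_\ell[\psi]$ as $r\to\infty$ holds for each fixed $u$ but is \emph{not} uniform in $u$: for any fixed $r = R_1$, the decay estimates of Lemma~\ref{lm:auxdecay} force $\Phi_{(\ell)}$ and its $r$-derivatives to decay to zero as $u\to\infty$, so $\partial_x\Phi_{(\ell)}(u,R_1)\to 0$. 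Equivalently, the asymptotics of Proposition~\ref{prop:auxasn0} give $\partial_x\Phi_{(\ell)} \approx -2^{2\ell}I_\ell\bigl(\tfrac{v-u}{v}\bigr)^{2\ell+2}$, which tends to zero whenever $r\ll u$. You correctly sense in your last paragraph that this step is the crux, but integrating the evolution equation cannot produce the uniformity you want, because it simply does not hold. The argument could be repaired by letting $R_1$ depend on $u$, say $R_1(u)=\epsilon u$; then $\int_{R_1(u)}^\infty r^{-2}\,dr \sim (\epsilon u)^{-1}$ and the $u$-integral still diverges logarithmically. But establishing the pointwise lower bound in $\{r\geq\epsilon u\}$ essentially requires the precise asymptotics of Section~\ref{asymptotics}, which is far heavier than the proposition should need.

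The paper avoids this entirely by running the $r^p$-weighted energy identity at the borderline exponent $p=3$. Rearranging \eqref{eq:maineqrpestfixedl} with $p=3$ (and cutting off at $v\leq v_\ell$), the spacetime term $\tfrac{3+4\ell}{2}\int_{u'}^{\infty}\int_{\mathcal{N}_u}r^{2}(L\Phi_{(\ell)})^2$ equals the flux $\int_{\mathcal{N}_{u'}}r^{3}(L\Phi_{(\ell)})^2$ on the \emph{single} initial cone $u=u'$, modulo error terms $J_1$ and $\mathcal{R}_\chi$ which are bounded (precisely because the proof of Proposition~\ref{prop:fixedlrpest} controls them for $p<4$). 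On that one cone the definition of $I_\ell$ gives $r^4(\partial_r\Phi_{(\ell)})^2 \geq c|I_\ell|^2>0$ for $r$ large, hence $r^3(L\Phi_{(\ell)})^2 \gtrsim r^{-1}$ and $\int_{\mathcal{N}_{u'}\cap\{v\leq v_\ell\}}r^3(L\Phi_{(\ell)})^2\,dv\to\infty$ as $v_\ell\to\infty$. The identity then forces the spacetime integral to diverge too. The point is that the multiplier identity trades the uniform-in-$u$ lower bound you were seeking for a divergence on a single cone, where only the definition of the Newman--Penrose charge is needed.
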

\begin{proof}
By the definition of the $\ell$-th Newman--Penrose charge, for some $\bar{R} = \bar{R}(u') > R$ and for some constant $c$ we have that:
\begin{equation}\label{as:sharp}
r^4 ( \partial_r ( P_{\ell} \Phi_{(\ell)} ) )^2 \geq C I_{\ell} [\psi ] \mbox{  for all $ r \geq \bar{R}$.} 
\end{equation}
We use the same computation as in Proposition \ref{prop:fixedlrpest} for $p=3$ but in the region $\{ u \geq u' , r \geq \bar{R} , v \leq v_{\ell} \}$ for some for some fixed $v_{\ell} < \infty$ and we have that:
\begin{align*}
\frac{1}{2}\int_{u'}^{\infty} & \int_{\mathcal{N}_u \cap \{v \leq v_{\ell} \}}[ (p+4\ell)r^{p-1}+ O (r^{p-2}) ](L(\chi\Phi_{(\ell)}))^2\,d\omega dv du - J_1 \\ &+ \sum_{|\alpha|\leq 1} \int_{u'}^{\infty} \int_{\mathcal{N}_u \cap \{ v \leq v_{\ell} \}}r^{p-2}\underline{L} (\chi\Phi_{(\ell)}) \cdot \mathcal{R}_{\chi}[\partial^{\alpha}\Phi_{(\ell)}]\,d\omega dvdu
= \int_{\mathcal{N}_{u'} \cap \{v \leq v_{\ell} \}} r^3 (L(\chi\Phi_{(\ell)}))^2\,d\omega dv  ,
\end{align*} 
where we recall that
\begin{align*}
J_1:=&\:\sum_{k=0}^{\ell}\int_{u_1}^{u_2} \int_{\mathcal{N}_u} O(r^{p-3}) \chi \Phi_{(k)}\cdot L(\chi \Phi_{(\ell)})\,d\omega du dv.
\end{align*}
By the computations in Proposition \ref{prop:fixedlrpest} we note that
$$ | J_1 | + \sum_{|\alpha|\leq 1} \int_{u'}^{\infty} \int_{\mathcal{N}_u \cap \{ v \leq v_{\ell} \}}r^{p-2}\underline{L} (\chi\Phi_{(\ell)}) \cdot \mathcal{R}_{\chi}[\partial^{\alpha}\Phi_{(\ell)}]\,d\omega dvdu \leq C (\psi , u' ) < \infty , $$
for finite $u'$. Hence we get the desired result noticing that due to \eqref{as:sharp} we have that:
$$ \lim_{v_{\ell} \rightarrow \infty} \int_{\mathcal{N}_{u'} \cap \{v \leq v_{\ell} \}} r^3 (L(\chi\Phi_{(\ell)}))^2\,d\omega dv = \infty . $$
\end{proof}
In the case $I_{\ell} = 0$ we can use the time-inversion construction of Section \ref{timeinverse} and show the following:
\begin{proposition}
Fix some $\ell \geq 1$ and consider a solution $\psi$ of \eqref{eq:waveequation} such that
$$ I_{\ell} [\psi ] = 0 . $$
Then the range of $p$ in the final estimate of Proposition \ref{prop:fixedlrpest} is \textit{sharp}. Specifically for any $u_0 \leq u' < \infty$ we have that:
$$ \int_{u'}^{\infty} \int_{\mathcal{N}_u} r^4 ( L ( P_{\ell} \Phi_{(\ell)} ) )^2 \, d\omega dv du = \infty . $$ 

\end{proposition}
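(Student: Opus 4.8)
The plan is to mirror the proof of Proposition~\ref{prop:blowup}, the new ingredient being a refined expansion of $P_\ell\Phi_{(\ell)}[\psi]$ near $\mathcal{I}^+$ that replaces the input ``$\partial_x(P_\ell\Phi_{(\ell)})\to I_\ell[\psi]$'' used there. We work under the additional hypothesis $I^{(1)}_\ell[\psi]\neq 0$ --- as in Proposition~\ref{prop:blowup}, and as is in any case necessary, since if $I^{(1)}_\ell[\psi]$ also vanished the late-time decay would improve by one power and the relevant $r^p$-hierarchy would extend past $p=5$. Since $I_\ell[\psi]=0$, the construction of Section~\ref{timeinverse} provides the time integral $\tilde\psi$ with $T\tilde\psi=\psi$ and $I_\ell[\tilde\psi]=I^{(1)}_\ell[\psi]$. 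Because forming the $\Phi_{(k)}$ of Proposition~\ref{prop:np} commutes with $T=\partial_u$ and with $P_\ell$, we have $P_\ell\Phi_{(\ell)}[\psi]=\partial_u\big(P_\ell\Phi_{(\ell)}[\tilde\psi]\big)$, hence $\partial_x\big(P_\ell\Phi_{(\ell)}[\psi]\big)=\partial_u\partial_x\big(P_\ell\Phi_{(\ell)}[\tilde\psi]\big)$.

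I would then project equation~\eqref{eq:NPquantity}, written for $\tilde\psi$, onto $P_\ell$ --- the term $[\ell(\ell+1)+\slashed{\Delta}_{\s^2}]\Phi_{(\ell)}$ drops --- solve for $2\partial_u\partial_x\big(P_\ell\Phi_{(\ell)}[\tilde\psi]\big)=2\partial_x\big(P_\ell\Phi_{(\ell)}[\psi]\big)$ and expand the right-hand side in powers of $x=r^{-1}$ as $x\to 0$ along a fixed $\mathcal{N}_u$. Using $Dr^{-2}=x^2(1+O(x))$, the limit $\partial_x\big(P_\ell\Phi_{(\ell)}[\tilde\psi]\big)\to(-1)^{\ell+1}I^{(1)}_\ell[\psi]$ (Theorem~\ref{thm:np} and Definition~\ref{def:np} for $\tilde\psi$), and the finiteness of the $\mathcal{I}^+$-limits of $P_\ell\Phi_{(\ell)}[\tilde\psi]$, of the $P_\ell\widetilde\Phi_{(k)}[\tilde\psi]$ with $k\le\ell$, and of $\partial_x^2\big(P_\ell\Phi_{(\ell)}[\tilde\psi]\big)$ (Proposition~\ref{prop:finitenesshoradfields}, applied at orders up to $\ell+2$), this gives
\begin{equation*}
\partial_x\big(P_\ell\Phi_{(\ell)}[\psi]\big)(u,r,\theta,\varphi)=c(u,\theta,\varphi)\,r^{-1}+o(r^{-1})\qquad(r\to\infty),
\end{equation*}
with $c(u,\cdot)=-(-1)^{\ell+1}I^{(1)}_\ell[\psi]+c_1(u,\cdot)$, where $c_1(u,\cdot)$ is a fixed bounded linear combination of the $\mathcal{I}^+$-limits of $P_\ell\Phi_{(\ell)}[\tilde\psi]$ and the $P_\ell\widetilde\Phi_{(k)}[\tilde\psi]$; by the almost-sharp decay already proved, equivalently by Theorem~\ref{thm:main1} applied to $\tilde\psi$, one has $\|c_1(u,\cdot)\|_{C^0(\s^2)}\to 0$ as $u\to\infty$, so $c(u,\cdot)\to-(-1)^{\ell+1}I^{(1)}_\ell[\psi]$, a function with $\|I^{(1)}_\ell[\psi]\|_{L^2(\s^2)}>0$.

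Since $r^2\partial_r=-\partial_x$, the expansion gives $r^4\big(L(P_\ell\Phi_{(\ell)}[\psi])\big)^2=\tfrac14 D^2\big(\partial_x(P_\ell\Phi_{(\ell)}[\psi])\big)^2\sim\tfrac14 c(u,\theta,\varphi)^2\,r^{-2}$ as $r\to\infty$. Integrating over $\mathcal{N}_u$ (with $dv=\tfrac2D\,dr$), and using that $c(u,\cdot)$ converges, as $u\to\infty$, to a fixed function of positive $L^2(\s^2)$-norm together with a lower bound for the tail $\int_{\bar R}^\infty\big(\partial_x(P_\ell\Phi_{(\ell)}[\psi])\big)^2\,dr$ supplied by the same expansion, one obtains a constant $\delta'>0$ and a $u''<\infty$ with
\begin{equation*}
\int_{\mathcal{N}_u}r^4\big(L(P_\ell\Phi_{(\ell)})\big)^2\,d\omega\,dv\ \ge\ \delta'\qquad\text{for all }u\ge u'' .
\end{equation*}
Integrating in $u$ then yields, for every $u'\ge u_0$,
\begin{equation*}
\int_{u'}^\infty\int_{\mathcal{N}_u}r^4\big(L(P_\ell\Phi_{(\ell)})\big)^2\,d\omega\,dv\,du\ \ge\ \int_{\max(u',u'')}^\infty\delta'\,du\ =\ \infty ,
\end{equation*}
which is the claim. (Alternatively, and closer to Proposition~\ref{prop:blowup}, one may feed the logarithmic divergence $\int_{\mathcal{N}_u}r^5\big(L(P_\ell\Phi_{(\ell)})\big)^2\,d\omega\,dv=\infty$ --- which follows from the same expansion whenever $c(u,\cdot)\not\equiv 0$ --- into the weighted energy identity behind the final estimate of Proposition~\ref{prop:fixedlrpest2} at the endpoint $p=5$, integrated over a truncated region $\{v\le v_\ell\}$, and let $v_\ell\to\infty$, bounding the $J_1$- and $\mathcal{R}_\chi$-error terms via \eqref{est:enb}, \eqref{morawetz}, the finite-$p$ estimates, and the decay of the $\Phi_{(k)}$.)

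The main obstacle is the refined expansion of the second paragraph. In Proposition~\ref{prop:blowup} the mere definition of the Newman--Penrose charge already gives a $u$-independent lower bound for the $p=3$ boundary flux; here the leading coefficient $c(u,\cdot)$ is only \emph{asymptotically} $u$-independent, and one must control both the $o(r^{-1})$ remainder and the correction $c_1(u,\cdot)$ --- using the time-inversion construction, the regularity of the data for $\tilde\psi$, and the decay estimates already established --- well enough to deduce the uniform-in-$u$ lower bound above. That the constant multiplying $r^{-1}$ equals, to leading order in $u$, the $u$-independent and non-trivial quantity $-(-1)^{\ell+1}I^{(1)}_\ell[\psi]$ is precisely the feature of the time-inverted charge that plays, at the $p=5$ level, the role of $I_\ell[\psi]$ at the $p=3$ level; granted this, the remaining bookkeeping is routine.
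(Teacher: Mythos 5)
Your proposal follows the route the paper itself indicates for this proposition (it is stated there only as ``use the time-inversion construction of Section \ref{timeinverse}'' and ``similar to the proof of Proposition \ref{prop:blowup}''), and your identification of the mechanism is the right one: since $P_{\ell}\Phi_{(\ell)}[\psi]=\partial_u\big(P_{\ell}\Phi_{(\ell)}[\tilde{\psi}]\big)$, the projected equation \eqref{eq:NPquantity} for $\tilde{\psi}$ converts the conserved charge $I_{\ell}[\tilde{\psi}]=I^{(1)}_{\ell}[\psi]$ into the coefficient of the $x$-term in the expansion of $\partial_x\big(P_{\ell}\Phi_{(\ell)}[\psi]\big)$, which is exactly what makes the $p=5$ flux (and hence the $p=4$ spacetime integral) diverge, in the same way the $O(1)$ term made the $p=3$ flux diverge in Proposition \ref{prop:blowup}. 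Your primary route is a mild variant of the paper's: instead of transferring a divergent boundary flux on a single cone through the $p=5$ energy identity over a truncated region (which you correctly record as the alternative), you extract a uniform-in-$u$ positive lower bound for the finite $r^4$-flux on each $\mathcal{N}_u$ and integrate in $u$; this is more elementary but shifts the burden onto uniform control of the $o(r^{-1})$ remainder and of $c_1(u,\cdot)$ over all large $u$, rather than on a single initial cone --- a point you flag correctly and which is handled by the asymptotics already established for $\tilde{\psi}$. You are also right that the statement implicitly requires $I^{(1)}_{\ell}[\psi]\neq 0$ (the paper's statement omits this genericity hypothesis, but its intended proof needs it just as yours does).
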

The proof is similar to the one of Proposition \ref{prop:blowup} and will hence be omitted.

\subsection{$r^p$-weighted estimates for $T^m\Phi_{(\ell)}$}
In this subsection we present an extended hierarchy for the higher order radiation fields after we commute the equation with $T$.
\begin{proposition}
\label{prop:fixedlrpdrestdrg}
Fix $\ell\in \N$ and consider a smooth solution $\psi$ to \eqref{eq:waveequation} satisfying \eqref{basic:as}. Then there exists $R>0$ sufficiently large, such that for $p\in (-4\ell + 2m,2m+2)$ and for all $u_0\leq u_1\leq u_2$:
\begin{equation}\label{est:drt}
\begin{split}
\int_{\mathcal{N}_{u_2}}& r^p(L^{m+1} (P_{\geq \ell} \widetilde{\Phi}_{(\ell)}))^2\,d\omega dv+ \int_{u_1}^{u_2} \int_{\mathcal{N}_u} r^{p-1}(L^{m+1} (P_{\geq \ell}\widetilde{\Phi}_{(\ell)}))^2\,d\omega dv du \\ & + \int_{u_1}^{u_2} \int_{\mathcal{N}_u} r^{p-1} | \slashed{\nabla}_{\mathbb{S}^2} ( L^m ( P_{\geq \ell} \widetilde{\Phi}_{(\ell)} ) ) |^2 \, d\omega dv du \\
\leq&\: C\sum_{l \leq m} \int_{\mathcal{N}_{u_1}}r^{p-2l} (L^{m+1-l}(P_{\geq \ell}\widetilde{\Phi}_{(\ell)}))^2\,d\omega dv+ C\sum_{k\leq m}\int_{\Sigma_{u_1}} J^T[T^k P_{\geq \ell}\psi]\cdot n_{u_1}\,d\mu_{\Sigma_{u_1}},
\end{split}
\end{equation}
where $C=C(D,R,m,\ell)>0$.
\end{proposition}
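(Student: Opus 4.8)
The plan is to prove the estimate by induction on $m$, running the Dafermos--Rodnianski $r^p$-multiplier argument on the equation satisfied by $L^m(P_{\geq\ell}\widetilde{\Phi}_{(\ell)})$ and feeding in the lower-order estimates, the base case $m=0$ reducing to Proposition \ref{prop:generalrpest}. First I would derive the commuted equation. In $(u,r,\theta,\varphi)$ coordinates the null vector fields $L=\tfrac12 D\partial_r$, $\underline{L}=\partial_u-\tfrac12 D\partial_r$ and $T=\partial_u$ commute pairwise, and $T$ commutes with the operators $r^2\partial_r$ and $\slashed{\Delta}_{\mathbb{S}^2}$, so $T^j\widetilde{\Phi}_{(\ell)}$ again satisfies \eqref{eq:maincommeqv2}. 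Commuting \eqref{eq:maincommeqv2} with $L^m$ and using the identity $L(r^{-a}\,\cdot\,)=r^{-a}L(\,\cdot\,)-\tfrac a2 D r^{-a-1}(\,\cdot\,)$, one obtains an equation for $L^m\widetilde{\Phi}_{(\ell)}$ of the same schematic shape, with unchanged principal coefficients $-4\ell$ and $\ell(\ell+1)$ (which govern the thresholds), namely $4L\underline{L}(L^m\widetilde{\Phi}_{(\ell)})=Dr^{-2}\slashed{\Delta}_{\mathbb{S}^2}(L^m\widetilde{\Phi}_{(\ell)})+[-4\ell r^{-1}+O(r^{-2})]L(L^m\widetilde{\Phi}_{(\ell)})+D[\ell(\ell+1)r^{-2}+O(r^{-3})](L^m\widetilde{\Phi}_{(\ell)})+\mathcal{E}$, where the error $\mathcal{E}$ collects terms $O(r^{-1-j})$ times $L^{m-j}\widetilde{\Phi}_{(\ell)}$ and its angular and radial derivatives for $1\leq j\leq m$, together with $\sum_{k\leq\ell-1}\sum_{0\leq j\leq m}O(r^{-2-j})L^{m-j}\widetilde{\Phi}_{(k)}$; the essential feature is that each lowered power of $L$ comes with an extra power of $r^{-1}$.

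Next I would apply the multiplier $2r^p\underline{L}(L^m\widetilde{\Phi}_{(\ell)})$ to this equation --- and, after first commuting with the angular Killing fields $\Omega^{\alpha}$, the analogous multipliers, so as to produce the $|\slashed{\nabla}_{\mathbb{S}^2}(L^m\widetilde{\Phi}_{(\ell)})|^2$ term on the left --- integrate over the region bounded by $\mathcal{N}_{u_1}$, $\mathcal{N}_{u_2}$ and $\mathcal{I}^+$, and integrate by parts exactly as in the proof of Proposition \ref{prop:generalrpest} (and Proposition 6.5 of \cite{paper4}). This yields on the left the flux through $\mathcal{N}_{u_2}$ and the coercive bulk terms $(p+4\ell)r^{p-1}(L^{m+1}\widetilde{\Phi}_{(\ell)})^2$ and $r^{p-1}|\slashed{\nabla}_{\mathbb{S}^2}(L^m\widetilde{\Phi}_{(\ell)})|^2$, the latter using the Poincar\'e inequality on $\mathbb{S}^2$ for functions supported on angular frequencies $\geq\ell$; on the right it yields the flux $\int_{\mathcal{N}_{u_1}}r^p(L^{m+1}\widetilde{\Phi}_{(\ell)})^2$, null-infinity boundary terms that vanish or are controlled in the stated range of $p$, and the contribution of $\mathcal{E}$.

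The error terms are then absorbed by weighted Cauchy--Schwarz/Young inequalities: an $\epsilon$-fraction of each $O(r^{-1-j})$ term goes into the coercive bulk term, leaving spacetime integrals of $r^{p-2j}$ (respectively $r^{p-2-2j}$) times squares of $L^{m+1-j}\widetilde{\Phi}_{(\ell)}$ and its angular derivatives, which are controlled by the inductive hypothesis at level $m-j$ with weight exponent $p-2j$. It is precisely the demand that this exponent lie in the admissible interval at level $m-j$ --- in particular $p-2m\leq 2$ for the lowest ($j=m$) term and $p-2m>-4\ell$ for its coercivity --- that produces the range $p\in(-4\ell+2m,\,2m+2)$ and the telescoping fluxes $\sum_{l\leq m}r^{p-2l}(L^{m+1-l}\widetilde{\Phi}_{(\ell)})^2$ on $\mathcal{N}_{u_1}$. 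The genuinely lower-order terms $\widetilde{\Phi}_{(k)}$ with $k\leq\ell-1$ are treated as in Proposition \ref{prop:generalrpest}, by a Hardy inequality in $r$ followed by the Morawetz estimate \eqref{morawetz} and, in the region away from infinity, the energy estimate \eqref{est:enb}, which is what contributes $\sum_{k\leq m}\int_{\Sigma_{u_1}}J^T[T^kP_{\geq\ell}\psi]\cdot n_{u_1}\,d\mu_{\Sigma_{u_1}}$ on the right-hand side.

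\textbf{Main obstacle.} The delicate point is the spacetime term $(2-p)r^{p-3}D\big(|\slashed{\nabla}_{\mathbb{S}^2}(L^m\widetilde{\Phi}_{(\ell)})|^2-\ell(\ell+1)(L^m\widetilde{\Phi}_{(\ell)})^2\big)$ coming from the angular part of the equation: for $p>2$ it no longer has a favourable sign for general data on frequencies $\geq\ell$, so it cannot simply be absorbed. Overcoming this requires using the extra $T$- and $L$-commutations, through the commuted equation, to trade this angular energy against better-weighted radial energies that fall under the inductive hypothesis, while keeping very careful track of the $r$-weights so that every error lands inside the admissible interval $(-4\ell+2(m-j),\,2(m-j)+2)$ at the relevant lower level. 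Getting this weight bookkeeping right is exactly what dictates the precise form of the right-hand side.
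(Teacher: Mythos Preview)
Your induction-on-$m$ scheme is exactly the paper's approach: the paper simply defers to Proposition 7.6 of \cite{paper4}, and the identical structure (commute \eqref{eq:maincommeqv2} with $L^m$, run the $r^p$ multiplier, absorb the lower-order errors via weighted Young and the inductive hypothesis at weight $p-2j$) is carried out in full detail for the fixed-$\ell$ analogue in the proof of Proposition \ref{prop:fixedlrpdrestdr} here. One slip to correct: the multiplier must be $r^p L^{m+1}\widetilde{\Phi}_{(\ell)}=r^pL(L^m\widetilde{\Phi}_{(\ell)})$ rather than $r^p\underline{L}(L^m\widetilde{\Phi}_{(\ell)})$, since only the former produces the identity $\underline{L}\bigl(r^p(L^{m+1}\widetilde{\Phi}_{(\ell)})^2\bigr)=-\tfrac{p}{2}Dr^{p-1}(L^{m+1}\widetilde{\Phi}_{(\ell)})^2+2r^p(L^{m+1}\widetilde{\Phi}_{(\ell)})\cdot 2\underline{L}L^{m+1}\widetilde{\Phi}_{(\ell)}$ whose integration yields the desired fluxes through $\mathcal{N}_u$ and the coercive bulk $(p+4\ell)r^{p-1}(L^{m+1}\widetilde{\Phi}_{(\ell)})^2$.
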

The proof follows along the same lines as the proof of Proposition 7.6 of \cite{paper4} and will not be repeated here.

\begin{proposition}
\label{prop:fixedlrpdrestdr}
Fix $\ell \in \N$ and consider a smooth solution $\psi$ to \eqref{eq:waveequation} satisfying \eqref{basic:as}. Then there exists $R>0$ sufficiently large, such that for $p\in (-4\ell + 2m,2m+3)$ and for all $u_0\leq u_1\leq u_2$:
\begin{equation*}
\begin{split}
\int_{\mathcal{N}_{u_2}}& r^p(L^{m+1} (P_{\ell}\Phi_{(\ell)}))^2\,d\omega dv+ \int_{u_1}^{u_2} \int_{\mathcal{N}_u} r^{p-1}(L^{m+1} (P_{\ell}\Phi_{(\ell)}))^2\,d\omega dv du\\
\leq&\: C\sum_{l \leq m} \int_{\mathcal{N}_{u_1}}r^{p-2l} (L^{m+1-l}(P_{\ell}\Phi_{(\ell)}))^2\,d\omega dv+ C\sum_{k\leq m}\int_{\Sigma_{u_1}} J^T[T^k P_{\ell}\psi]\cdot n_{u_1}\,d\mu_{\Sigma_{u_1}},
\end{split}
\end{equation*}
where $C=C(D,R,m,\ell)>0$.
\end{proposition}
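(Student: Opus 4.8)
The plan is to combine the two preceding hierarchies: the commuted hierarchy for $P_{\ell}\Phi_{(\ell)}$ without extra $T$-commutations (Proposition \ref{prop:fixedlrpest}, which gives the range $p\in(-4\ell,4)$ for the uncommuted flux) and the $T$-commuted general hierarchy of Proposition \ref{prop:fixedlrpdrestdrg}. The key structural input is equation \eqref{eq:maineqNpquantfixedlv2}, namely
$$ 4L\underline{L}(P_{\ell}\Phi_{(\ell)}) = [-4\ell r^{-1}+O(r^{-2})] L(P_{\ell}\Phi_{(\ell)}) + \sum_{k=0}^{\ell} O(r^{-3}) P_{\ell}\Phi_{(k)}, $$
whose crucial feature is that, after projecting onto $P_{\ell}$, the angular Laplacian term disappears (the Poincaré inequality becomes an equality and is absorbed into the coefficient of $\underline\Phi_{(\ell)}$), so the dangerous $r^{p-3}|\slashed\nabla \Phi_{(\ell)}|^2$ term that limits the general hierarchy is absent. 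I would first commute this equation $m$ times with $T$; since $T=\partial_u$ is Killing and commutes with $L$, $\underline L$ and with all $r$-weights, we get the same equation for $T^m(P_\ell\Phi_{(\ell)})$, with right-hand side $\sum_{k=0}^\ell O(r^{-3})P_\ell T^m\Phi_{(k)}$. Then commute with $L$ a further $m$ times, exactly as in the proof of Proposition \ref{prop:fixedlrpdrestdrg}; each commutation of $L$ with the equation produces, from the $-4\ell r^{-1}L$ term, a gain that raises the admissible power of $p$ by $2$, giving the target window $p\in(-4\ell+2m,\,2m+3)$ (the upper endpoint is $2m+3$, two better than the $2m+1$ one would get from the uncommuted $k=\ell$ estimate, because here $I_\ell$ is unrestricted but the absence of the angular term already buys us the range up to $4$ in the $m=0$ case of Proposition \ref{prop:fixedlrpest}).

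Concretely the steps are: (i) derive the equation for $L^{m+1}T^m$-type quantities by induction on the number of $L$-commutations, tracking that the only borderline terms are $r^{p-1}(L^{m+1}(P_\ell\Phi_{(\ell)}))^2$, controlled by the good sign of $p+4\ell>0$, and the lower-order terms $r^{p-3}P_\ell\Phi_{(k)}\cdot L^{m+1}(P_\ell\Phi_{(\ell)})$ with $k\le\ell$; (ii) multiply by $r^p L^{m+1}(P_\ell\Phi_{(\ell)})$ and integrate over the region bounded by $\mathcal N_{u_1}$, $\mathcal N_{u_2}$ and $\mathcal I_\tau$, performing the integration by parts in $r$ as in Proposition \ref{prop:generalrpest}; (iii) absorb the borderline bulk term using a weighted Young's inequality, and handle the lower-order interaction terms by a Young's inequality followed by iterated application of the Hardy inequality \eqref{hardy1}, reducing $\int r^{p-5}(P_\ell\Phi_{(k)})^2$ to $\int r^{p-3}(L(P_\ell\Phi_{(k)}))^2$ and then, using the already-established hierarchies for $\Phi_{(k)}$ with $k<\ell$ (via Proposition \ref{prop:generalrpest} and Proposition \ref{prop:fixedlrpest}), to fluxes of $T$-derivatives on $\Sigma_{u_1}$; (iv) estimate the error terms $\mathcal R_\chi$ arising from the cutoff $\chi$ localizing to $\{r\ge R\}$ by the Morawetz/redshift estimate \eqref{morawetz}. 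The boundary terms on $\mathcal I_\tau$ have the right sign or are controlled by \eqref{est:enb}, and the terms with fewer than $m+1$ copies of $L$ appear on the right-hand side with the stated mixed weights $r^{p-2l}$, which is exactly the shape of the initial-data term in the statement.

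The main obstacle I expect is bookkeeping the lower-order terms $\sum_{k\le \ell}O(r^{-3})P_\ell\Phi_{(k)}$ after $m+1$ commutations with $L$: each $L$-commutation acting on a term $r^{-3}\Phi_{(k)}$ both differentiates $\Phi_{(k)}$ and lowers the $r$-weight, so one must verify that for every $k\le\ell$ the resulting contributions $\int r^{p-2l-2}\cdots (L^{j}\Phi_{(k)})(L^{m+1}\Phi_{(\ell)})$ with $j\le m$ can be absorbed — for the top weight this needs $p<2m+3$ and uses that the corresponding $r^p$-flux of $L^{j}\Phi_{(k)}$ for $k<\ell$ is already controlled in a wider $p$-range by the lower-order hierarchies, while for $k=\ell$ it is exactly the flux we are estimating and must be absorbed with an $\epsilon$-Young inequality into the left-hand side. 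A secondary subtlety is that near $p=2m+3$ one is at the edge of the range of Proposition \ref{prop:fixedlrpdrestdrg} for the lower-order quantities, so one must check that strict inequality is preserved at each induction step; this is the analogue of the "$p<3$" restriction appearing in the proof of Proposition \ref{prop:fixedlrpest} and is handled identically. Since the paper states that the proof "follows along the same lines as the proof of Proposition 7.6 of \cite{paper4}", I would ultimately just indicate the modifications needed to accommodate the $P_\ell$-projection (loss of the angular term, extra range $2$ in $p$) rather than reproduce the full computation.
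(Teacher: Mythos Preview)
Your proposal contains a conceptual confusion in the first step. You propose to commute first $m$ times with $T$ and then $m$ further times with $L$, producing what you call ``$L^{m+1}T^m$-type quantities''. But the proposition concerns $L^{m+1}(P_\ell\Phi_{(\ell)})$ with \emph{no} $T$-derivatives; commuting with $T$ (which, as you correctly note, commutes with everything) merely reproduces the same equation for $T^m\Phi_{(\ell)}$ and does nothing to shift the $p$-range or to reach the quantity in the statement. The paper instead commutes \eqref{eq:maineqNpquantfixedlv2} directly with $L^m$, obtaining
\[
2\underline L\, L^{m+1}(P_\ell\Phi_{(\ell)}) = [-2\ell r^{-1}+O(r^{-2})]\,L^{m+1}(P_\ell\Phi_{(\ell)}) + \!\!\sum_{\substack{m_1+m_2=m\\ m_2<m}}\!\! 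O(r^{-1-m_1})\, L^{m_2+1}(P_\ell\Phi_{(\ell)}) + \sum_{k=0}^{\ell-1}\sum_{m_1+m_2=m} O(r^{-3-m_1})\, L^{m_2}(P_\ell\Phi_{(k)}),
\]
and runs the multiplier argument on this equation. The shift in the admissible $p$-range arises because each $L$-commutation introduces extra $r^{-1}$ factors on the lower-order commutator terms (the middle sum), not from any $T$-commutation.

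Once this is corrected, your steps (ii)--(iv) are essentially the paper's argument: multiply by $r^p L^{m+1}(\chi P_\ell\Phi_{(\ell)})$, integrate, and argue by induction on $m$ with base case Proposition~\ref{prop:fixedlrpest}. Two refinements: the induction hypothesis is used to control the cross terms $J_1$ coming from the middle sum above (lower $L$-order, same $\Phi_{(\ell)}$), while the terms $J_2$ coming from the last sum (lower $\Phi_{(k)}$, $k\le\ell-1$) are handled not by Hardy as you suggest but by a direct appeal to Proposition~\ref{prop:fixedlrpdrestdrg}, whose range $p<2m+4$ comfortably covers the needed $p<2m+3$. Your observation that the upper endpoint $2m+3$ is the analogue of the $p<3$ restriction in Proposition~\ref{prop:fixedlrpest} is exactly right.
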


\begin{proof}
We commute equation \eqref{eq:maineqNpquantfixedlv2} with $L^m$ and we have that 
\begin{equation}\label{eq:maineqNpquantfixedlv3}
\begin{split}
2 \underline{L} L^{m+1} ( P_{\ell} \Phi_{(\ell)} ) = & [ -2\ell r^{-1} + O (r^{-2} ) ] L^{m+1} ( P_{\ell} \Phi_{(\ell)} ) \\ & + \sum_{m_1 + m_2 = m , m_2 < m} O (r^{-1-m_1} ) L^{m_2 + 1} ( P_{\ell} \Phi_{(\ell)} )  \\ & + \sum_{k=0}^{\ell-1} \sum_{m_1 + m_2 = m} O (r^{-3-m_1} ) L^{m_2} ( P_{\ell} \Phi_{(k)} ) .
\end{split}
\end{equation}
This implies that
\begin{equation}\label{eq:maineqNpquantfixedlv3rp}
\begin{split}
\underline{L} ( r^p L^{m+1} ( \chi P_{\ell} \Phi_{(\ell)} )^2 ) = & - \frac{p}{2} D r^{p-1} ( L^{m+1} ( \chi P_{\ell} \Phi_{(\ell)} )^2 ) + r^p ( L^{m+1} ( P_{\ell} \Phi_{(\ell)} ) ) 2 \underline{L} ( L^{m+1}  \cdot  \\ & + [ -2\ell r^{p-1} + O (r^{p-2} ) ] ( L^{m+1} ( \chi P_{\ell} \Phi_{(\ell)} ) )^2 \\ & + \sum_{m_1 + m_2 = m , m_2 < m} O (r^{p-1-m_1} ) ( L^{m_2 + 1} ( P_{\ell} \Phi_{(\ell)} ) ) \cdot ( L^{m+1} ( \chi P_{\ell} \Phi_{(\ell)} ) )   \\ & + \sum_{k=0}^{\ell-1} \sum_{m_1 + m_2 = m} O (r^{p-3-m_1} ) ( L^{m_2} ( \chi P_{\ell} \Phi_{(k)} ) ) \cdot ( L^{m+1} ( \chi P_{\ell} \Phi_{(\ell)} ) ) \\ & + \sum_{| \alpha | \leq m+1} ( L^{m+1} ( \chi P_{\ell} \Phi_{(\ell)} ) ) \cdot \mathcal{R}_{\chi} [ \partial^{\alpha} ( P_{\ell} \Phi_{(\ell)} ) ] ,
\end{split}
\end{equation}
where $\chi$ and $\mathcal{R}_{\chi}$ are defined as before ($\chi (r) = 0$ for $r \leq R$ and $\chi(r) = 1$ for $r \geq R$, and the terms of $\mathcal{R}_{\chi}$ are supported in $\{ R \leq r \leq R+1\}$). Using the identity above we obtain the following:
\begin{equation}\label{est:auxdr}
\begin{split}
\int_{\mathcal{N}_{u_2}}  & r^p ( L^{m+1} ( \chi P_{\ell} \Phi_{(\ell)} ) )^2 \, d\omega dv +   \frac{1}{2} \int_{u_1}^{u_2} \int_{\mathcal{N}_u}  [ ( p +4\ell ) r^{p-1} + O ( r^{p-2} ) ] ( L^{m+1} ( \chi P_{\ell} \Phi_{(\ell)} ) )^2 \, d\omega dvdu \\ = &   \int_{\mathcal{N}_{u_2}} r^p ( L^{m+1} ( \chi P_{\ell} \Phi_{(\ell)} ) )^2 \, d\omega dv \\ & + \sum_{m_1 + m_2 = m , m_2 < m}\int_{u_1}^{u_2} \int_{\mathcal{N}_u}  O (r^{p-1-m_1} ) ( L^{m_2 + 1} ( P_{\ell} \Phi_{(\ell)} ) ) \cdot ( L^{m+1} ( \chi P_{\ell} \Phi_{(\ell)} ) )  \, d\omega dvdu \\ & + \sum_{k=0}^{\ell-1} \sum_{m_1 + m_2 = m} \int_{u_1}^{u_2} \int_{\mathcal{N}_u}  O (r^{p-3-m_1} ) ( L^{m_2} ( \chi P_{\ell} \Phi_{(k)} ) ) \cdot ( L^{m+1} ( \chi P_{\ell} \Phi_{(\ell)} ) ) \, d\omega dvdu \\ & + \sum_{| \alpha | \leq m+1}\int_{u_1}^{u_2} \int_{\mathcal{N}_u}  ( L^{m+1} ( \chi P_{\ell} \Phi_{(\ell)} ) ) \cdot \mathcal{R}_{\chi} [ \partial^{\alpha} ( P_{\ell} \Phi_{(\ell)} ) ] \, d\omega dvdu \\ := & \int_{\mathcal{N}_{u_2}} r^p ( L^{m+1} ( \chi P_{\ell} \Phi_{(\ell)} ) )^2 \, d\omega dv + J_1 + J_2 \\ & + \sum_{| \alpha | \leq m+1}\int_{u_1}^{u_2} \int_{\mathcal{N}_u}  ( L^{m+1} ( \chi P_{\ell} \Phi_{(\ell)} ) ) \cdot \mathcal{R}_{\chi} [ \partial^{\alpha} ( P_{\ell} \Phi_{(\ell)} ) ] \, d\omega dvdu ,
\end{split}
\end{equation}
where
$$ J_1 := \sum_{m_1 + m_2 = m , m_2 < m}\int_{u_1}^{u_2} \int_{\mathcal{N}_u}  O (r^{p-1-m_1} ) ( L^{m_2 + 1} (\chi P_{\ell} \Phi_{(\ell)} ) ) \cdot ( L^{m+1} ( \chi P_{\ell} \Phi_{(\ell)} ) )  \, d\omega dvdu , $$
$$ J_2 := \sum_{k=0}^{\ell-1} \sum_{m_1 + m_2 = m} \int_{u_1}^{u_2} \int_{\mathcal{N}_u}  O (r^{p-3-m_1} ) ( L^{m_2} ( \chi P_{\ell} \Phi_{(k)} ) ) \cdot ( L^{m+1} ( \chi P_{\ell} \Phi_{(\ell)} ) ) \, d\omega dvdu . $$
For the last term of the last equality we have that
$$ \sum_{| \alpha | \leq m+1}\int_{u_1}^{u_2} \int_{\mathcal{N}_u}  ( L^{m+1} ( \chi P_{\ell} \Phi_{(\ell)} ) ) \cdot \mathcal{R}_{\chi} [ \partial^{\alpha} ( P_{\ell} \Phi_{(\ell)} ) ] \, d\omega dvdu \leq C \sum_{l \leq m} \int_{\Sigma_{u_1}} J^T [ T^l P_{\ell} \psi ] \cdot n_{u_1}  \, d\mu_{\Sigma_{u_1}} , $$
by using Young's inequality and the Morawetz estimate \eqref{morawetz}. 

From here on we argue by induction on $m$. We note that the desired estimate is true in the base case of $m=0$ (due to the result of Proposition \ref{prop:fixedlrpest}). For $J_1$ we have that:
\begin{align*}
 J_1 \leq & \epsilon \int_{u_1}^{u_2} \int_{\mathcal{N}_u} r^{p-1}  ( L^{m+1} ( \chi P_{\ell} \Phi_{(\ell)} ) )^2 \, d\omega dv du \\ & +  \frac{C}{\epsilon} \sum_{m_1 + m_2 = m , m_2 < m} \int_{u_1}^{u_2} \int_{\mathcal{N}_u} r^{p-1-2m_1} ( L^{m_2 + 1} (\chi P_{\ell} \Phi_{(\ell)} ) )^2 \, d\omega dvdu . 
\end{align*}
The first term of the inequality given above can be absorbed by the left-hand side of \eqref{est:auxdr}, while for the second term we use the induction hypothesis and it can be bounded by:
\begin{align*}
\frac{C}{\epsilon} \sum_{m_1 + m_2 = m , m_2 < m} \int_{u_1}^{u_2} & \int_{\mathcal{N}_u} r^{p-1-2m_1} ( L^{m_2 + 1} (\chi P_{\ell} \Phi_{(\ell)} ) )^2 \, d\omega dvdu \\ \leq & C_{\epsilon} \sum_{s=0}^{m-1} \left( \int_{\mathcal{N}_{u_1}} r^{p-2s} ( L^{m+1-s} ( P_{\ell} \Phi_{(\ell)} ) )^2 \, d\omega dv + \int_{\Sigma_{u_1}} J^T [ T^s P_{\ell} \psi ] \cdot n_{u_1} \, d\mu_{\Sigma_{u_1}} \right).
\end{align*}
 For $J_2$ we have that:
\begin{align*}
J_2 \leq & \epsilon \int_{u_1}^{u_2} \int_{\mathcal{N}_u} r^{p-1}  ( L^{m+1} ( \chi P_{\ell} \Phi_{(\ell)} ) )^2 \, d\omega dv du \\ & + \frac{1}{\epsilon} \sum_{k=0}^{\ell-1} \sum_{m_1 + m_2 = m}\int_{u_1}^{u_2} \int_{\mathcal{N}_u} r^{p-5-2m_1} ( L^{m_2} ( \chi P_{\ell} \Phi_{(k)} ) )^2 \, d\omega dvdu .
\end{align*}
The first term can be absorbed by the left-hand side of \eqref{est:auxdr} while the second term can be bounded by:
$$ C_{\epsilon} \sum_{k=0}^{\ell-1} \sum_{s=0}^{m-1} \left( \int_{\mathcal{N}_{u_1}} r^{p-2s} ( L^{m+1-s} ( P_{\ell} \Phi_{(k)} ) )^2 \, d\omega dv + \int_{\Sigma_{u_1}} J^T [ T^s P_{\ell} \psi ] \cdot n_{u_1} \, d\mu_{\Sigma_{u_1}} \right) , $$
by using Proposition \ref{prop:fixedlrpdrestdrg}. Note that here we are assuming that $p < 2m+3$ while the aforementioned bound also holds for $p < 2m+4$.

\end{proof}

Assuming that $\ell$-th Newman--Penrose constant is vanishing the range of $p$ in the previous Proposition can be extended to $p \in (-4\ell+2m, 2m+5)$.

\begin{proposition}
\label{prop:fixedlrpdrestdr0}
Fix $\ell\in \N$ and consider a smooth solution $\psi$ to \eqref{eq:waveequation} satisfying \eqref{basic:as}. Then there exists $R>0$ sufficiently large, such that for $p\in (-4\ell+2m ,2m+4)$ and for all $u_0\leq u_1\leq u_2$:
\begin{equation*}
\begin{split}
\int_{\mathcal{N}_{u_2}}& r^p(L^{m+1} (P_{\ell}\Phi_{(\ell)}))^2\,d\omega dv+ \int_{u_1}^{u_2} \int_{\mathcal{N}_u} r^{p-1}(L^{m+1} (P_{\ell}\Phi_{(\ell)}))^2\,d\omega dv du\\
\leq&\: C\sum_{l \leq m} \int_{\mathcal{N}_{u_1}}r^{p-2l} (L^{m+1-l}(P_{\ell}\Phi_{(\ell)}))^2\,d\omega dv+ C\sum_{k\leq m}\int_{\Sigma_{u_1}} J^T[T^k P_{\ell}\psi]\cdot n_{u_1}\,d\mu_{\Sigma_{u_1}} ,
\end{split}
\end{equation*}
and for $p \in [2m+4, 2m+5)$ and any $\delta > 0$ small enough:
\begin{equation*}
\begin{split}
\int_{\mathcal{N}_{u_2}}& r^p(L^{m+1} (P_{\ell}\Phi_{(\ell)}))^2\,d\omega dv+ \int_{u_1}^{u_2} \int_{\mathcal{N}_u} r^{p-1}(L^{m+1} (P_{\ell}\Phi_{(\ell)}))^2\,d\omega dv du\\
\leq&\: C\sum_{l \leq m} \int_{\mathcal{N}_{u_1}}r^{p-2l} (L^{m+1-l}(P_{\ell}\Phi_{(\ell)}))^2\,d\omega dv+ C\sum_{k\leq m}\int_{\Sigma_{u_1}} J^T[T^k P_{\ell}\psi]\cdot n_{u_1}\,d\mu_{\Sigma_{u_1}} + C \frac{E_{aux, \ell ,k ,s}}{u_1^{1-\delta}},
\end{split}
\end{equation*}
where $C = C(D,R,m,\ell)$.
\end{proposition}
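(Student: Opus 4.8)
The plan is to argue by induction on $m$, exactly mirroring the structure of the proof of Proposition \ref{prop:fixedlrpdrestdr}, with the only difference being the input hierarchy used to absorb the coupling terms. In the base case $m=0$ the claimed ranges $p\in(-4\ell,4)$ and $p\in[4,5)$ are precisely the two estimates of Proposition \ref{prop:fixedlrpest2}, so there is nothing to prove. For the inductive step I would again commute equation \eqref{eq:maineqNpquantfixedlv2} with $L^m$, obtaining \eqref{eq:maineqNpquantfixedlv3}, and multiply by $r^pL^{m+1}(\chi P_\ell\Phi_{(\ell)})$ to arrive at the identity \eqref{est:auxdr} with the bulk terms $J_1$ (the lower-order $L^{m_2+1}\Phi_{(\ell)}$ couplings with $m_2<m$), $J_2$ (the $\Phi_{(k)}$ couplings with $k\leq\ell-1$), and the $\mathcal{R}_\chi$-boundary term. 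The boundary term is handled verbatim as before via Young's inequality and the Morawetz estimate \eqref{morawetz}.

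For the first of the two claimed estimates, valid in the enlarged range $p\in(-4\ell+2m,2m+4)$, the key point is that the top-order term carries the good sign factor $(p+4\ell)r^{p-1}>0$ and the negative power $r^{p-2}$ corrections are absorbable for $R$ large; then $J_1$ is split by a weighted Young's inequality into a piece absorbed by the left-hand side and a remainder with weights $r^{p-1-2m_1}$ that is controlled by the induction hypothesis at level $m_2<m$ (here the crucial bookkeeping is that $p-2m_1<2m_2+4$, which is exactly the admissible range at step $m_2$ since $m_1=m-m_2$), while $J_2$ is split similarly and its remainder with weights $r^{p-5-2m_1}$ is controlled by Proposition \ref{prop:fixedlrpdrestdrg} applied to $\Phi_{(k)}$, $k\leq\ell-1$ (whose range $p'<2m'+2$ comfortably accommodates the four-fold drop in the $r$-weight). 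Note that it is precisely the vanishing of $I_\ell[\psi]$ that extends the admissible window from $2m+3$ (as in Proposition \ref{prop:fixedlrpdrestdr}) to $2m+4$: the relevant boundary flux $\int_{\mathcal{N}_{u_1}}r^{p}(L^{m+1-l}(P_\ell\Phi_{(\ell)}))^2$ would be infinite for $p$ in this range if $I_\ell[\psi]\neq0$, by the same mechanism exhibited in Proposition \ref{prop:blowup}.

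For the second, borderline range $p\in[2m+4,2m+5)$, the weighted Young's inequality must instead be taken with a combination of $r$-weights and $u$-weights, as in the proof of the final estimate of Proposition \ref{prop:fixedlrpest2}: one writes $r^{p-3}\chi\Phi_{(k)}\cdot L^{m+1}(\chi P_\ell\Phi_{(\ell)})\leq \epsilon(u+1)^{-1-\eta}r^{p}(L^{m+1}(\chi P_\ell\Phi_{(\ell)}))^2+C_\epsilon(u+1)^{1+\eta}r^{p-6}\Phi_{(k)}^2$, absorbs the first term into the flux term $\sup_u\int_{\mathcal{N}_u}r^p(\cdots)$ on the left-hand side (after passing to a supremum in $u$ as in \eqref{eq:maineqrpestfixedl2}), and for the second term invokes the pointwise decay $\int_{\s^2}(P_\ell\Phi_{(\ell)})^2(u,r)\,d\omega\lesssim u^{-3+\delta}E_{\mathrm{aux},\ell}$ of Lemma \ref{lm:auxdecay}, which renders $\int_{u_1}^{u_2}\int_R^\infty (u+1)^{-2+\delta+\eta}r^{p-6}\,drdu$ finite for $p<2m+5$ at the cost of a factor $u_1^{-(1-\delta)}$; the analogous $T^m$-commuted pointwise decay for the coupled $\Phi_{(k)}$ terms in $J_1$ and $J_2$ is obtained by commuting Lemma \ref{lm:auxdecay} with $L^{m_2}$. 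The main obstacle I anticipate is the bookkeeping needed to verify that at every split all intermediate weights land in the admissible window of the estimate (either the induction hypothesis at a strictly lower $m$, or Propositions \ref{prop:fixedlrpdrestdrg} and \ref{prop:fixedlrpest2}), in particular tracking how the four-power drop $r^{p-5-2m_1}$ in $J_2$ interacts with the constraint $p<2m+5$; once this accounting is done, the scheme closes exactly as in Proposition \ref{prop:fixedlrpdrestdr}.
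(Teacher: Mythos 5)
Your proposal is correct and follows essentially the same route as the paper: the proof proceeds by the induction scheme of Proposition \ref{prop:fixedlrpdrestdr}, with the only new ingredient being the $u$-weighted Young's inequality on the coupling term $J_2$ in the borderline range $p\in[2m+4,2m+5)$, absorbed into the flux after passing to a supremum and closed via the $L^{m}$-commuted decay estimate \eqref{est:auxdr5p} of Lemma \ref{lm:auxdecay1} (which is precisely the commuted version of Lemma \ref{lm:auxdecay} you describe). The only cosmetic difference is that the paper keeps $J_1$ under the induction hypothesis even in the borderline range rather than also routing it through the $u$-weighted trick, but both treatments close.
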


\begin{proof}
The difference with the proof of Proposition \ref{prop:fixedlrpdrestdr} is the treatment of $J_2$. We apply a $u$-weighted Young's inequality and for any $\delta$ small enough (to be determined later) we have that:
\begin{equation}\label{est:auxauxdr}
\begin{split}
J_2 \leq & \int_{u_1}^{u_2} \int_{\mathcal{N}_u} \frac{1}{u^{1+\delta}} r^{p}  ( L^{m+1} ( \chi P_{\ell} \Phi_{(\ell)} ) )^2 \, d\omega dv du \\ & + \sum_{k=0}^{\ell-1} \sum_{m_1 + m_2 = m}\int_{u_1}^{u_2} \int_{\mathcal{N}_u} u^{1+\delta} r^{p-6-2m_1} ( L^{m_2} ( \chi P_{\ell} \Phi_{(k)} ) )^2 \, d\omega dvdu \\ = & \int_{u_1}^{u_2} \int_{\mathcal{N}_u} \frac{1}{u^{1+\delta}} r^{p}  ( L^{m+1} ( \chi P_{\ell} \Phi_{(\ell)} ) )^2 \, d\omega dv du \\ & + \sum_{k=0}^{\ell-1} \sum_{s=0}^m \int_{u_1}^{u_2} \int_{\mathcal{N}_u} u^{1+\delta} r^{p-6-2m+2s} ( L^{s} ( \chi P_{\ell} \Phi_{(k)} ) )^2 \, d\omega dvdu .
\end{split}
\end{equation}
Now we use \eqref{est:auxdr5p} for $m \geq 0$ from Lemma \ref{lm:auxdecay1} of section \ref{energy}, and in the end we get that:
$$ \sum_{k=0}^{\ell-1} \sum_{s=0}^m \int_{u_1}^{u_2} \int_{\mathcal{N}_u} u^{1+\delta} r^{p-6-2m+2s} ( L^{s} ( \chi P_{\ell} \Phi_{(k)} ) )^2 \, d\omega dvdu \leq C \int_{u_1}^{u_2} \frac{E_{aux , \ell , m , s}}{u^{2-\delta_2}} \, du , $$
where $E_{aux, \ell , m , s}$ is given in Lemma \ref{lm:auxdecay1}.

\end{proof}
Finally for this section we show how to use the extended range of the previous two Propositions in order to obtain additional hierarchies for the solution commuted with $T$.
\begin{proposition}
\label{prop:fixedlrpdrestT}
Fix $\ell\in \N$ and consider a smooth solution $\psi$ to \eqref{eq:waveequation} satisfying \eqref{basic:as}. Then there exists $R>0$ sufficiently large, such that for $p\in (-4\ell + 2m,2m+3)$ and for all $u_0\leq u_1\leq u_2$:
\begin{equation*}
\begin{split}
 \int_{u_1}^{u_2} & \int_{\mathcal{N}_u} r^{p-1}(L^{m} (P_{\ell} ( T\Phi_{(\ell)} ) ))^2\,d\omega dv du
\\ \leq & C  \int_{\mathcal{N}_{u_1}}r^{p} (L^{m+1} (P_{\ell}\Phi_{(\ell)}))^2\,d\omega dv
 \\
\leq&\: C\sum_{l\leq m}\int_{\mathcal{N}_{u_1}}r^{p-2l} (L^{m+1-l}  (P_{\ell}\Phi_{(\ell)}))^2\,d\omega dv+ C\sum_{k\leq m}\int_{\Sigma_{u_1}} J^T[T^k P_{\ell}\psi]\cdot n_{u_1}\,d\mu_{\Sigma_{u_1}},
\end{split}
\end{equation*}
where $C=C(D,R,m,\ell)>0$.

Under the assumption that 
$$ I_{\ell} [ \psi ] = 0 , $$
the above estimate holds in the extended range $p \in (-4\ell+2m , 2m+5)$.

Moreover we have that
\begin{equation*}
\begin{split}
 \int_{u_1}^{u_2} & \int_{\mathcal{N}_u} r^{p-1}(L^{m} (P_{\geq \ell+1} ( T\Phi_{(\ell)} ) ))^2\,d\omega dv du
\\ \leq & C  \int_{\mathcal{N}_{u_1}}r^{p} (L^{m+1} (P_{\geq \ell+1}\Phi_{(\ell)}))^2\,d\omega dv
 \\
\leq&\: \bar{C}\sum_{l\leq m , |\alpha | \leq 1}\int_{\mathcal{N}_{u_1}}r^{p-2l} (L^{m+1-l} \Omega^{\alpha} (P_{\geq \ell + 1}\Phi_{(\ell)}))^2\,d\omega dv+ C\sum_{k\leq m}\int_{\Sigma_{u_1}} J^T[T^k P_{\geq \ell+1}\psi]\cdot n_{u_1}\,d\mu_{\Sigma_{u_1}},
\end{split}
\end{equation*}
for some $\bar{C}=\bar{C}(D,R,m,\ell)>0$, for $p \in (-4\ell+2m , 2m+2)$.
\end{proposition}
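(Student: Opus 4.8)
The plan is to deduce the estimates for $T\Phi_{(\ell)}$ from the $r^p$-hierarchies already established for $\Phi_{(\ell)}$ (Propositions \ref{prop:fixedlrpdrestdrg}, \ref{prop:fixedlrpdrestdr}, \ref{prop:fixedlrpdrestdr0}) by means of the algebraic identity $T = L + \underline{L}$, which holds in the double null coordinates $(u,v)$ in which $L = \partial_v$ and $\underline{L} = \partial_u$, together with the fact that $L$ and $\underline{L}$ commute. Since $T$ and $L$ also commute with the angular projections, this gives
\[ L^m ( P_{\ell} ( T \Phi_{(\ell)} ) ) = L^{m+1} ( P_{\ell} \Phi_{(\ell)} ) + \underline{L} L^m ( P_{\ell} \Phi_{(\ell)} ) . \]
The first term on the right is the leading one: by Proposition \ref{prop:fixedlrpdrestdr} the spacetime integral of $r^{p-1}$ times its square is bounded, for $p \in (-4\ell+2m , 2m+3)$, by the flux $\int_{\mathcal{N}_{u_1}} r^p ( L^{m+1} ( P_{\ell} \Phi_{(\ell)} ) )^2 \, d\omega dv$ — the middle expression of the chain — plus the lower-order fluxes and the $T$-commuted $\Sigma_{u_1}$-energies appearing on the right-hand side.

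For the second term, when $m \geq 1$ I would use equation \eqref{eq:maineqNpquantfixedlv3} with $m$ replaced by $m-1$ to express $\underline{L} L^m ( P_{\ell} \Phi_{(\ell)} )$ pointwise as $[-\ell r^{-1} + O(r^{-2})] L^m ( P_{\ell} \Phi_{(\ell)} )$ plus terms of the form $O(r^{-1-m_1}) L^{m_2 + 1} ( P_{\ell} \Phi_{(\ell)} )$ with $m_2 < m-1$ and $O(r^{-3-m_1}) L^{m_2} ( P_{\ell} \Phi_{(k)} )$ with $k < \ell$; squaring and multiplying by $r^{p-1}$, every resulting spacetime integral carries strictly more $r$-decay than the borderline one, so that e.g. $\int_{u_1}^{u_2} \int_{\mathcal{N}_u} r^{p-3} ( L^m ( P_{\ell} \Phi_{(\ell)} ) )^2$ is controlled by Proposition \ref{prop:fixedlrpdrestdr} with $m$ replaced by $m-1$ and $p$ by $p-2$, whose admissibility condition is exactly $p \in (-4\ell+2m, 2m+3)$, while the terms with $k < \ell$ are handled by Proposition \ref{prop:fixedlrpdrestdrg}. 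Every flux produced this way is of the form $r^{p-2l} ( L^{m+1-l} ( P_{\ell} \Phi_{(\ell)} ) )^2$ with $1 \leq l \leq m$, together with $T$-commuted $\Sigma_{u_1}$-energies, and the lowest-order contributions, in which $P_{\ell} \Phi_{(\ell)}$ appears undifferentiated, are reduced to this form by a Hardy inequality. The base case $m = 0$ is treated instead by integrating the transport equation \eqref{eq:maineqNpquantfixedlv2} for $\underline{L} ( P_{\ell} \Phi_{(\ell)} )$ from $\{ r = R \}$, where the boundary term is absorbed using \eqref{morawetz}. Under the extra hypothesis $I_{\ell} [\psi ] = 0$ the same argument goes through with Proposition \ref{prop:fixedlrpdrestdr} replaced throughout by Proposition \ref{prop:fixedlrpdrestdr0}, which pushes the admissible range to $p \in (-4\ell+2m, 2m+5)$ (and for $p \in [2m+4, 2m+5)$ contributes the harmless decaying term $C E_{aux,\ell,m,s} / u_1^{1-\delta}$).

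For the $P_{\geq \ell+1}$ estimate the same decomposition is used, but the relevant commutation equation is now \eqref{eq:maincommeqv2}, which carries the angular term $D r^{-2} \slashed{\Delta}_{\s^2} \widetilde{\Phi}_{(\ell)}$. Estimating $\underline{L} L^m ( P_{\geq \ell+1} \widetilde{\Phi}_{(\ell)} )$ pointwise via \eqref{eq:maincommeqv2} would cost two angular derivatives; instead I would carry out the $r^p$-weighted $L^m$-energy estimate directly for $T \widetilde{\Phi}_{(\ell)}$, commuting \eqref{eq:maincommeqv2} with $T$ and $L^m$. Integrating the angular term by parts on $\s^2$ produces the contribution
\[ \sim \int_{u_1}^{u_2} \int_{\mathcal{N}_u} r^{p-1} D \left( | \slashed{\nabla}_{\s^2} ( L^m ( P_{\geq \ell+1} ( T\widetilde{\Phi}_{(\ell)} ) ) ) |^2 - \ell(\ell+1) ( L^m ( P_{\geq \ell+1} ( T\widetilde{\Phi}_{(\ell)} ) ) )^2 \right) , \]
which on $P_{\geq \ell+1}$ is coercive precisely when $p < 2m+2$, giving the stated range; to close the estimate one commutes once with the angular Killing fields $\Omega$ and applies Proposition \ref{prop:fixedlrpdrestdrg} to $\Omega^{\alpha} \widetilde{\Phi}_{(\ell)}$ with $| \alpha | \leq 1$, which is exactly why the $\Omega^{\alpha}$ appear on the right-hand side. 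The coupling to the lower-order radiation fields $\widetilde{\Phi}_{(k)}$, $k < \ell$, and the $\mathcal{R}_{\chi}$ boundary terms near $r = R$ are handled as in the proof of Proposition \ref{prop:fixedlrpdrestdr}, using Proposition \ref{prop:fixedlrpdrestdrg} and the Morawetz estimate \eqref{morawetz} respectively.

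The main obstacle I anticipate is the $P_{\geq \ell+1}$ case: one must keep track of the sign of the integrated-by-parts angular term to pin down the sharp upper endpoint $p = 2m+2$, and at the same time verify that a single commutation with $\Omega$ really suffices to absorb all the angular-derivative error terms coming from the coupling to the lower-order fields $\widetilde{\Phi}_{(k)}$. Everything else is bookkeeping: one has to check that each error weight generated by substituting the wave equation lands inside the admissible $p$-range of the hierarchy one wishes to invoke, which holds thanks to the precise correspondence between the shifts in $m$ and the shifts in $p$.
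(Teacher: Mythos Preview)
Your proposal is correct and follows essentially the same approach as the paper. The paper's own proof is a two-sentence sketch: it says to use the identity $T = L + \underline{L}$, to commute equation \eqref{eq:maincommeq} (and, for the fixed-$\ell$ projection, equation \eqref{eq:maineqNpquantfixedlv3}) with $L^m$ in order to express $\underline{L}L^m\Phi_{(\ell)}$ in terms of quantities already controlled by Propositions \ref{prop:fixedlrpdrestdrg} and \ref{prop:fixedlrpdrestdr}, and to observe that the angular Laplacian arising in the $P_{\geq \ell+1}$ case is what forces the $\Omega^{\alpha}$ commutation on the right-hand side --- exactly the mechanism you describe.
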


\begin{proof}
The two estimates can be proven in the same way. The proof follows by using equation \eqref{eq:maincommeq} after commuting it with $L^m$ (something that results in the appearance of the angular derivatives in the right-hand side -- for fixed $\ell$ see equation \eqref{eq:maineqNpquantfixedlv3}) and then making use of Propositions \ref{prop:fixedlrpdrestdrg} and \ref{prop:fixedlrpdrestdr} and that $T = L + \underline{L}$. The proof follows the same lines as the proof of Proposition 4.7 of \cite{paper1}.

\end{proof}

\section{Energy decay estimates}\label{energy}
In this section we will use the $r^p$-weighted estimates of the previous Section to show decay for certain energy-type quantities. 

\subsection{Auxiliary energy decay estimates}
First we state and prove two auxiliary results that were needed in the proofs of Propositions \ref{prop:fixedlrpest2} and \ref{prop:fixedlrpdrestdr0}. 

\begin{lemma}
\label{lm:auxdecay}
Fix $\ell\in \N$ and consider a solution $\psi$ to \eqref{eq:waveequation}. Assume that there exists a constant $C_0>0$ such that
\begin{equation*}
\left( \int_{\s^2}(P_{\ell}\Phi_{(k)})^2\,d\omega\right)(u_0,r)\leq C_0,
\end{equation*}
for $0\leq k\leq \ell$.

Then there exists $C=C(D,R,\ell ,\epsilon)$ such that for all $0\leq k\leq \ell$
\begin{align}
\label{eq:auxnullenergy1}
\int_{\mathcal{N}_u} (L(\chi P_{\ell}\Phi_{(k)}))^2\,d\omega dv\leq Cu^{-4-2 (\ell - k ) + \epsilon}E_{\textnormal{aux},\ell},\\
\label{eq:auxnullenergy2}
\int_{\mathcal{N}_u} r^2(L(\chi P_{\ell}\Phi_{(k)}))^2\,d\omega dv\leq Cu^{-2-2 ( \ell - k ) + \epsilon}E_{\textnormal{aux},\ell},
\end{align}
with
\begin{equation*}
E_{\textnormal{aux},\ell}=\sum_{k=0}^{\ell} \int_{\mathcal{N}_0} r^{4-\epsilon}(L(P_{\ell}\Phi_{(k)}))^2\,d\omega dv+\sum_{j=0}^{\ell}E_1[T^jP_{\ell}\psi],
\end{equation*}
and $E_1 [ \psi ]$ as in Proposition \ref{prop:fixedlrpest2}.

In the region $\{r\geq R\}$ we can moreover estimate
\begin{equation*}
\int_{\s^2}\chi^2(P_{\ell}\Phi_{(k)})^2(u,r,\theta,\varphi)\,d\omega \leq Cu^{-3-2( \ell - k ) +\epsilon}E_{\textnormal{aux},\ell},
\end{equation*}
for all $0\leq k\leq \ell$, with $C=C(D, R , \ell ,\epsilon)>0$.
\end{lemma}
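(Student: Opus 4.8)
The plan is to run a standard Dafermos--Rodnianski-type argument using the $r^p$-hierarchy of Proposition \ref{prop:fixedlrpdrestdr} (applied with frequency localized to $\ell$ and $m=0$, possibly $m=1$), combined with the Morawetz estimate \eqref{morawetz} and the degenerate energy boundedness \eqref{est:enb}, to obtain a sequence of almost-sharp decay rates for the fluxes $\int_{\mathcal{N}_u} r^p (L(\chi P_\ell \Phi_{(k)}))^2\,d\omega dv$. The key point is that for $\psi$ supported on frequency $\ell$, the equations \eqref{eq:maineqNpquantfixedlv2} for $P_\ell \Phi_{(\ell)}$ and the iterated equations \eqref{eq:horadfields} (specialized to frequency $\ell$) for $P_\ell \Phi_{(k)}$ with $k<\ell$ have right-hand sides that couple only to \emph{lower-order} radiation fields $\Phi_{(k')}$ with $k'<k$. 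This triangular structure lets me argue by downward induction on $k$, starting from $k=\ell$ where the hierarchy is cleanest and has the extended range $p\in(-4\ell,4)$ (or $(-4\ell,5)$ if $I_\ell=0$, but here we only need $(-4\ell,4)$), and then descending to $k=\ell-1,\dots,0$, where at each step the source terms involving $\Phi_{(k+1)},\dots,\Phi_{(\ell)}$ have already been shown to decay sufficiently fast by the inductive hypothesis and hence contribute error terms that are harmless.

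First I would set up the mean-value / pigeonhole step: given the $r^p$-weighted spacetime estimate with $p$ and the corresponding one with $p-1$ or $p-2$, one extracts a dyadic sequence of times $u_n$ along which the flux at weight $p-1$ decays by one power of $u$ relative to the flux at weight $p$; iterating down the hierarchy from $p=4-\epsilon$ (for $k=\ell$) through $p=3-\epsilon, 2-\epsilon, \dots$ converts the finiteness of the initial $r^{4-\epsilon}$-weighted flux into $u^{-4+\epsilon}$ decay of the unweighted flux $\int_{\mathcal{N}_u}(L(\chi P_\ell \Phi_{(\ell)}))^2\,d\omega dv$ and $u^{-2+\epsilon}$ decay of the $r^2$-weighted flux. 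This is precisely where the input energy $E_{\mathrm{aux},\ell}$, with its $r^{4-\epsilon}$-weighted flux term on $\mathcal{N}_0$ together with the $E_1[T^j P_\ell\psi]$ terms (which control the $\Sigma_{u_1}$ boundary terms appearing in Proposition \ref{prop:fixedlrpdrestdr}) and the redshift/Morawetz machinery near the horizon, enters. The $T$-commuted pieces $E_1[T^j P_\ell\psi]$ are needed because the hierarchy produces $\sum_{k\le m}\int_{\Sigma_{u_1}}J^T[T^k P_\ell\psi]\cdot n_{u_1}$ on the right, and to close the iteration down to the bottom of the hierarchy one commutes with $T$ to gain the extra decay (each commutation with $T$ improving the rate by $2$, as in the proof sketch of Section \ref{sec:DecayOfEnergyFluxWithNegativeRWeights}).

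For the descent $k\mapsto k-1$: the equation for $P_\ell\Phi_{(k-1)}$ reads schematically $4L\underline{L}(P_\ell\Phi_{(k-1)}) = [-4(k-1)r^{-1}+O(r^{-2})]L(P_\ell\Phi_{(k-1)}) + \sum_{j<k-1}O(r^{-3})P_\ell\Phi_{(j)} + (\text{coupling to }\Phi_{(k)},\dots,\Phi_{(\ell)})$, where — crucially — the $\Phi_{(k')}$-coupling with $k'\ge k$ comes with favorable $r$-weights (from \eqref{eq:horadfields}, the $\widetilde\Phi_{(\ell+1)}$ term carries an extra power of $x=1/r$). Running the $r^p$-hierarchy for $P_\ell\Phi_{(k-1)}$ with $p$ ranging over $(-4(k-1),4)$ (which contains $[0,2]$), the source terms are absorbed as follows: the $\Phi_{(j)}$, $j<k-1$, terms go on the left after Hardy's inequality \eqref{hardy1} exactly as in Proposition \ref{prop:fixedlrpest}; the higher-order $\Phi_{(k')}$ terms are treated as \emph{forcing} and estimated using the already-established decay $\int_{\mathcal{N}_u}r^2(L(\chi P_\ell\Phi_{(k')}))^2 \lesssim u^{-2-2(\ell-k')+\epsilon}E_{\mathrm{aux},\ell}$ via a $u$-weighted Cauchy--Schwarz (splitting $r^{p-3}\Phi_{(k')}\cdot L\Phi_{(k-1)} \le \epsilon u^{-1-\eta}r^p(L\Phi_{(k-1)})^2 + C_\epsilon u^{1+\eta}r^{p-6}\Phi_{(k')}^2$ and bounding the last term by the known flux decay together with a pointwise-in-$r$ bound obtained from the fundamental theorem of calculus in $r$, as in \eqref{psipointwiseintro}). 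This forcing is consistent: each step down in $k$ loses exactly the $u^{-2}$ one expects, which matches the claimed rates $u^{-4-2(\ell-k)+\epsilon}$ and $u^{-2-2(\ell-k)+\epsilon}$.

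Finally, the pointwise estimate $\int_{\s^2}\chi^2(P_\ell\Phi_{(k)})^2(u,r,\cdot)\,d\omega \lesssim u^{-3-2(\ell-k)+\epsilon}E_{\mathrm{aux},\ell}$ in $\{r\ge R\}$ follows from the two flux bounds by interpolation: writing $(P_\ell\Phi_{(k)})^2(u,r) = -\int_r^\infty \partial_{r'}(P_\ell\Phi_{(k)})^2\,dr' = -2\int_r^\infty (P_\ell\Phi_{(k)})(L(P_\ell\Phi_{(k)}))\,\tfrac{2}{D}dr'$ on $\mathcal{N}_u$ (using that $P_\ell\Phi_{(k)}\to$ a finite limit along $\mathcal{I}^+$, indeed here the relevant limit vanishes as the NP charge input is finite), and applying Cauchy--Schwarz with weights $r^{\pm 1}$ gives $\le (\int_{\mathcal{N}_u}r^{-1}(P_\ell\Phi_{(k)})^2)^{1/2}(\int_{\mathcal{N}_u}r(L(P_\ell\Phi_{(k)}))^2)^{1/2}$; a further Hardy inequality bounds $\int r^{-1}(P_\ell\Phi_{(k)})^2 \lesssim \int r(L(P_\ell\Phi_{(k)}))^2$, and then interpolating $\int_{\mathcal{N}_u} r(L(\chi P_\ell\Phi_{(k)}))^2$ between the $p=0$ and $p=2$ fluxes via Cauchy--Schwarz yields the rate $\tfrac12[(4+2(\ell-k))+(2+2(\ell-k))]-\epsilon = 3+2(\ell-k)-\epsilon$, as claimed. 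I expect the main obstacle to be bookkeeping the error terms in the downward induction on $k$ — specifically, verifying that at every stage the forcing from the higher $\Phi_{(k')}$ decays strictly faster than the ``diagonal'' rate one is trying to prove for $\Phi_{(k-1)}$, so that those terms are genuinely subleading and can be absorbed with an $\epsilon$-loss rather than obstructing the iteration; handling the $p$-range near the top of the hierarchy (where one must stay below $p=4$, resp. $p=5$, to keep the initial flux finite) and correctly propagating the horizon-localized redshift contributions through the mean-value argument are the other delicate points, but both are by now standard and parallel the arguments in \cite{paper1, paper4}.
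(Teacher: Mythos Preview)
Your argument for $k=\ell$ is correct and matches the paper: apply Proposition~\ref{prop:fixedlrpest} with $p$ descending from $4-\epsilon$, pigeonhole along a dyadic sequence, interpolate via Lemma~\ref{ineq:interpol}, and obtain \eqref{eq:auxnullenergy1}--\eqref{eq:auxnullenergy2} for $k=\ell$. The final pointwise bound is also right in spirit; the paper does Cauchy--Schwarz with weights $r^{\pm1}$, then Hardy to turn $\int r^{-2}\Phi_{(k)}^2$ into $\int(L\Phi_{(k)})^2$, and takes the geometric mean of the $p=0$ and $p=2$ rates.

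The gap is in the descent $k\mapsto k-1$. You write the equation for $P_\ell\Phi_{(k-1)}$ with a term ``coupling to $\Phi_{(k)},\dots,\Phi_{(\ell)}$'' to be treated as forcing. There is no such term: by \eqref{eq:maincommeqv2} (with index $k-1$ in place of $\ell$) the only zero-order sources are $\widetilde\Phi_{(j)}$ with $j\le k-2$. The $\widetilde\Phi_{(\ell+1)}$-term you point to in \eqref{eq:horadfields} is $\partial_x\widetilde\Phi_{(\ell)}$, i.e.\ precisely the transport term $[-4(k-1)r^{-1}]L\widetilde\Phi_{(k-1)}$ already in your schematic --- not an additional source. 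So an $r^p$-hierarchy for $\Phi_{(k-1)}$ run in isolation never sees $\Phi_{(\ell)}$ and cannot produce the extra $u^{-2(\ell-k+1)}$. Relatedly, the range $p\in(-4(k-1),4)$ you claim is not available: for $k-1<\ell$ and $p>2$ the bulk term
\[
(2-p)\,r^{p-3}\big[\,|\snabla_{\s^2}\Phi_{(k-1)}|^2-(k-1)k\,\Phi_{(k-1)}^2\,\big]=(2-p)\big[\ell(\ell+1)-(k-1)k\big]\,r^{p-3}\Phi_{(k-1)}^2
\]
has the wrong sign.

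The paper's descent is a single Hardy step that uses the \emph{definition} $\widetilde\Phi_{(k+1)}=-r^2\partial_r\widetilde\Phi_{(k)}$, not the wave equation: since $L(\chi\Phi_{(k)})=-\tfrac{D}{2r^2}\chi\,\widetilde\Phi_{(k+1)}+(L\chi)\Phi_{(k)}$ up to lower-order $\Phi_{(j)}$, one gets
\[
\int_{\mathcal N_u}(L(\chi\Phi_{(k)}))^2\,d\omega\,dv\ \lesssim\ \int_{\mathcal N_u}r^{-2}(L(\chi\Phi_{(k+1)}))^2\,d\omega\,dv\ +\ \int_{\mathcal N_u}(\chi')^2\Phi_{(k)}^2\,d\omega\,dv,
\]
and likewise with an extra $r^2$ on both sides. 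Iterating up to $\Phi_{(\ell)}$ and invoking the $r^p$-hierarchy for $\Phi_{(\ell)}$ at $p=-2(\ell-k)\in(-4\ell,4)$ yields the additional $u^{-2(\ell-k)}$ factor. This is the mechanism that repairs your downward induction.
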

\begin{proof}
From \cite{paper1} it follows that there exists a constant $C=C(D,\epsilon,R)>0$ such that for any linear wave $\psi_{\geq 1}$ that is supported on angular frequencies $\geq 1$, we have the following:
\begin{equation}
\label{eq:edecayn1}
\int_{\Sigma_u} J^N[\psi_{\geq 1}]\cdot n_u\,d\mu_{\Sigma_u}\leq C u^{-4+\epsilon} E_1[\psi_{\geq 1}],
\end{equation}
where $E_1 [ \psi_{\geq 1}]$ is as in Proposition \ref{prop:fixedlrpest2}.

Let $\{u_j\}$ be a dyadic sequence. Then we can apply the mean-value theorem to the estimate of Proposition \ref{prop:fixedlrpest} with $p=4-\epsilon$ to obtain
\begin{equation*}
\int_{\mathcal{N}_{u_j'}} r^{3-\epsilon}(L\Phi_{(\ell)})^2\,d\omega dv\leq C u_{j-1}^{-1} \int_{\mathcal{N}_{u_0}} r^{4-\epsilon}(L\Phi_{(\ell)})^2\,d\omega dv+u_{j-1}^{-4+\epsilon}\sum_{j=0}^{\ell} E_1[T^j\psi],
\end{equation*}
for some $u_{j-1}\leq u_{j}' \leq u_{j-1}$, where we additionally applied \eqref{eq:edecayn1}. By the dyadicity of $\{u_j\}$, $\{u_{2j+1}'\}$ is a dyadic subsequence of $\{u_j'\}$. We can therefore apply the mean-value theorem successively together with the estimate of Proposition \ref{prop:fixedlrpest} with $p=3-\epsilon$,\ldots, $p=1-\epsilon$ and use the already established decay in the previous step to obtain:
\begin{align*}
\int_{\mathcal{N}_{\tilde{u}_j}} (L(\chi \Phi_{(\ell)}))^2\,d\omega dv\leq C{\tilde{u}_j}^{-4}E_{\textnormal{aux},\ell},\\
\int_{\mathcal{N}_{\tilde{u}_j}} r^{1-\epsilon}(L(\chi \Phi_{(\ell)}))^2\,d\omega dv\leq C{\tilde{u}_j}^{-3}E_{\textnormal{aux},\ell},\\
\int_{\mathcal{N}_{\tilde{u}_j}} r^{2-\epsilon}(L(\chi \Phi_{(\ell)}))^2\,d\omega dv\leq C{\tilde{u}_j}^{-2}E_{\textnormal{aux},\ell},\\
\int_{\mathcal{N}_{\tilde{u}_j}} r^{3-\epsilon}(L(\chi \Phi_{(\ell)}))^2\,d\omega dv\leq C{\tilde{u}_j}^{-1}E_{\textnormal{aux},\ell},
\end{align*}
for a potentially different dyadic sequence $\{ \tilde{u}_j \}$. We can use the interpolation estimates from Lemma \ref{ineq:interpol} to obtain
\begin{align*}
\int_{\mathcal{N}_{\tilde{u}_j}} (L(\chi \Phi_{(\ell)}))^2\,d\omega dv\leq C{\tilde{u}_j}^{-4+\epsilon}E_{\textnormal{aux},\ell},\\
\int_{\mathcal{N}_{\tilde{u}_j}} r^{2}(L(\chi \Phi_{(\ell)}))^2\,d\omega dv\leq C{\tilde{u}_j}^{-2+\epsilon}E_{\textnormal{aux},\ell}.
\end{align*}
By an application of the estimates of Proposition \ref{prop:fixedlrpest} with $p=0$ and $p=2$ respectively, we obtain
\begin{align*}
\int_{\mathcal{N}_u} (L(\chi \Phi_{(\ell)}))^2\,d\omega dv\leq Cu^{-4+\epsilon}E_{\textnormal{aux},\ell},\\
\int_{\mathcal{N}_u} r^2(L(\chi \Phi_{(\ell)}))^2\,d\omega dv\leq Cu^{-2+\epsilon}E_{\textnormal{aux},\ell},
\end{align*}
for all $u\geq u_0$, which gives us, so \eqref{eq:auxnullenergy1} and \eqref{eq:auxnullenergy2} hold for $k=\ell$.

We can moreover apply Hardy's inequality to estimate
\begin{align*}
\int_{\mathcal{N}_u} (L(\chi \Phi_{(k)}))^2\,d\omega dv\leq&\: C\int_{\mathcal{N}_u} r^{-2}(L(\chi \Phi_{(k+1)}))^2\,d\omega dv+ C\int_{\mathcal{N}_u}(\chi')^2 \Phi_{(k)}^2\,d\omega dv,\\
\int_{\mathcal{N}_u} r^2(L(\chi \Phi_{(k)}))^2\,d\omega dv\leq&\: C\int_{\mathcal{N}_u} (L(\chi \Phi_{(k+1)}))^2\,d\omega dv +C\int_{\mathcal{N}_u}(\chi')^2 \Phi_{(k)}^2\,d\omega dv.
\end{align*}
Hence, we can inductively show that \eqref{eq:auxnullenergy1} and \eqref{eq:auxnullenergy2} hold also for $0\leq k<\ell$.

We apply the fundamental theorem of calculus, Cauchy--Schwarz and Hardy's inequality \eqref{hardy1} to estimate
\begin{equation*}
\begin{split}
\int_{\s^2}\chi\Phi_{(k)}^2(u,r,\theta,\varphi)\,d\omega=&\:2\int_{\mathcal{N}_u} \chi\Phi_{(k)} \cdot L(\chi \Phi_{(k)})\,d\omega dv\\
\leq &\:2\sqrt{\int_{\mathcal{N}_u}r^{-2}\chi^2\Phi_{(k)}^2\,d\omega dv}\cdot \sqrt{\int_{\mathcal{N}_u} r^2(L(\chi \Phi_{(k)}))^2\,d\omega dv}\\
\leq &\:C\sqrt{\int_{\mathcal{N}_u}(L(\chi \Phi_{(k)}))^2\,d\omega dv}\cdot \sqrt{\int_{\mathcal{N}_u} r^2(L(\chi \Phi_{(k)}))^2\,d\omega dv}\\
\leq &\: u^{-3+(\ell - k ) +\epsilon}E _{\textnormal{aux},\ell}.
\end{split}
\end{equation*}
\end{proof}
In a similar way using the results of Proposition \ref{prop:fixedlrpdrestdr0} we can  show the following. 

\begin{lemma}
\label{lm:auxdecay1}
Fix $\ell , m \in \N$ and consider a solution $\psi$ to \eqref{eq:waveequation}. Assume that there exists a constant $C_0>0$ such that
\begin{equation*}
\sum_{l \leq m ,  k \leq \ell} \left( \int_{\s^2}(L^l (  P_{\ell}\Phi_{(k)}) )^2\,d\omega\right)(u_0,r)\leq C_0 .
\end{equation*}

Then we have the following energy estimates:
\begin{equation}\label{est:auxdr51}
\int_{\mathcal{N}_u}  ( L^{m+1} ( P_{\ell} \Phi_{(k)} ) )^2 \, d\omega dv \leq C \frac{E_{aux , \ell , k , l}}{u^{4+2(\ell-k) -\delta_1}} , 
\end{equation}
and
\begin{equation}\label{est:auxdr52}
\int_{\mathcal{N}_u}  r^2 ( L^{m+1} ( P_{\ell} \Phi_{(k)} ) )^2 \, d\omega dv \leq C \frac{E_{aux , \ell , k , l}}{u^{2+2(\ell-k) -\delta_1}} , 
\end{equation}
for any $\delta_1 > 0$ small enough and $k \in \{0 , \dots , \ell\}$, and the following pointwise estimates:
\begin{equation}\label{est:auxdr5p}
\int_{\mathbb{S}^2} ( L^m ( P_{\ell} \Phi_{(k)} )^2 \, d\omega \leq C \frac{E_{aux , \ell , k , 0}}{u^{3+ 2( \ell-k ) - \delta_1}} ,   
\end{equation}
for any $\delta_1 > 0$ small enough and $k \in \{0, \dots , \ell-1\}$ in the region $r \geq R$, where $C = C (D,R,\ell,\delta_1)$ and where
\begin{align*}
E_{aux , \ell , k , l} \doteq & \sum_{j_1 + j_2 \leq m+\ell} \int_{\mathcal{N}_u} r^{2m+3-2j-\delta} ( L^{m+1-j_1} T^{j_2} ( P_{\ell} \Phi_{(\ell)} ) )^2 \, d\omega dv \\ &+  \sum_{j_1 + j_2 \leq m+\ell}\sum_{s=k}^{\ell-1} \int_{\mathcal{N}_u} r^{2m+2-2j} ( L^{m+1-j_1} T^{j_2} ( P_{\ell} \Phi_{(s)} ) )^2 \, d\omega dv + \sum_{l \leq2 \ell+m} \int_{\Sigma_{u_0}} J^N [ T^l P_{\ell} \psi ] \cdot n_{u_0} \, d\mu_{\Sigma_{u_0}} .
\end{align*}

\end{lemma}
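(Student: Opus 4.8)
The plan is to repeat, essentially verbatim, the dyadic pigeonholing argument in the proof of Lemma \ref{lm:auxdecay}, with the uncommuted $r^p$-hierarchies of Propositions \ref{prop:fixedlrpest} and \ref{prop:fixedlrpest2} replaced by their $L^m$-commuted counterparts, Propositions \ref{prop:fixedlrpdrestdrg}, \ref{prop:fixedlrpdrestdr} and \ref{prop:fixedlrpdrestdr0}, and the energy-flux decay from \cite{paper1} replaced by its higher-order version (which follows from \eqref{morawetz}, the redshift estimates \eqref{redshift:as}--\eqref{redshift:ash} and commutation with $T$). Concretely, I would first treat the top-order field $k=\ell$: starting from the flux $\int_{\mathcal{N}_{u_0}} r^{2m+3-\delta}(L^{m+1}(P_\ell\Phi_{(\ell)}))^2\,d\omega dv$, which is finite by the hypothesis on the data and is bookkept in $E_{aux,\ell,\ell,m}$, I apply the mean-value theorem to the estimate of Proposition \ref{prop:fixedlrpdrestdr} with $p=2m+3-\delta$ along a dyadic sequence $\{u_j\}$, extracting a point $u_j'$ where the flux weighted by $r^{2m+2-\delta}$ has gained a factor $u_{j-1}^{-1}$; iterating with $p=2m+2-\delta, 2m+1-\delta,\ldots$ down to $p=2m-1-\delta$ (at each step absorbing the $\Sigma_{u_1}$-terms on the right-hand side using the energy-flux decay, and the lower-order $\Phi_{(s)}$-terms with $s<\ell$ using Proposition \ref{prop:fixedlrpdrestdrg} together with the inductive hypothesis) yields, along a possibly thinned dyadic sequence $\{\tilde{u}_j\}$, the decay rates $\int_{\mathcal{N}_{\tilde{u}_j}} r^{2m+s-\delta}(L^{m+1}(\chi P_\ell\Phi_{(\ell)}))^2\,d\omega dv \lesssim \tilde{u}_j^{s-4}E_{aux,\ell,\ell,m}$ for $s=0,\ldots,3$.

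I would then remove the $\delta$-loss and pass from the dyadic sequence to all $u$ exactly as in Lemma \ref{lm:auxdecay}: interpolating via Lemma \ref{ineq:interpol} gives $\int_{\mathcal{N}_{\tilde{u}_j}}(L^{m+1}(\chi P_\ell\Phi_{(\ell)}))^2\,d\omega dv \lesssim \tilde{u}_j^{-4+\delta}E_{aux,\ell,\ell,m}$ and $\int_{\mathcal{N}_{\tilde{u}_j}} r^2(L^{m+1}(\chi P_\ell\Phi_{(\ell)}))^2\,d\omega dv \lesssim \tilde{u}_j^{-2+\delta}E_{aux,\ell,\ell,m}$, and then a final application of Proposition \ref{prop:fixedlrpdrestdr} with $p=0$ and $p=2$ (both admissible since $-4\ell+2m<0$ for $\ell\ge 1$ in the relevant range of $m$) propagates these to all $u_1\le u\le u_2$, which proves \eqref{est:auxdr51} and \eqref{est:auxdr52} for $k=\ell$. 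To descend to $k<\ell$ I would apply the Hardy inequality \eqref{hardy1} on $\mathcal{N}_u$, which bounds $\int_{\mathcal{N}_u}(L^{m+1}(\chi P_\ell\Phi_{(k)}))^2$ by $\int_{\mathcal{N}_u} r^{-2}(L^{m+1}(\chi P_\ell\Phi_{(k+1)}))^2$ plus a cutoff term supported in $\{R\le r\le R+1\}$ controlled by the Morawetz and redshift estimates; the extra $r^{-2}$ is precisely the source of the additional $u^{2(\ell-k)}$ gain, and downward induction on $k$ from $\ell$ closes this part. Finally, \eqref{est:auxdr5p} follows by writing $\int_{\s^2}(L^m(P_\ell\Phi_{(k)}))^2\,d\omega = 2\int_{\mathcal{N}_u}(L^m(P_\ell\Phi_{(k)}))(L^{m+1}(P_\ell\Phi_{(k)}))\,d\omega dv$, applying Cauchy--Schwarz and the Hardy inequality \eqref{hardy1}, and taking the square root of the product of the already-established $r^0$- and $r^2$-weighted fluxes; the restriction to $k\le\ell-1$ reflects that only for those indices does $E_{aux,\ell,k,m}$ carry the borderline $r^{2m+2-2j}$ weight needed to run this step without a logarithmic loss.

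The main obstacle will be the bookkeeping of the commutator terms. Commuting equations \eqref{eq:maincommeq} and \eqref{eq:maineqNpquantfixedlv2} with $L^m$ — as in \eqref{eq:maineqNpquantfixedlv3} — produces a cascade of terms $L^{m_2+1}(P_\ell\Phi_{(\ell)})$ with $m_2<m$, $L^{m_2}(P_\ell\Phi_{(k)})$ with $k<\ell$, and angular derivatives arising from the exchange $T=L+\underline{L}$; one must verify at each stage of the dyadic iteration that a weighted Young's inequality splits each of these into a piece absorbed by the left-hand side and a piece that feeds back into $E_{aux,\ell,k,m}$ with precisely the weights $r^{2m+3-2j-\delta}$ (for the $\Phi_{(\ell)}$-contributions) and $r^{2m+2-2j}$ (for the $\Phi_{(s)}$-contributions with $s<\ell$), which forces a simultaneous induction on $m$ and a downward induction on $k$. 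Keeping the admissible $p$-ranges of Propositions \ref{prop:fixedlrpdrestdrg}--\ref{prop:fixedlrpdrestdr} mutually consistent with the target weights (in particular that $p=2m$ lies in $(-4\ell+2m,2m+3)$) is the remaining point requiring care, but it introduces no genuinely new difficulty beyond what is already handled in the proof of Lemma \ref{lm:auxdecay}.
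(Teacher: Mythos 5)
Your proposal is correct and is essentially the proof the paper intends: the paper gives no written proof of Lemma \ref{lm:auxdecay1}, stating only that it follows ``in a similar way'' to Lemma \ref{lm:auxdecay} using the $L^m$-commuted $r^p$-hierarchies, which is precisely the dyadic pigeonholing, interpolation, Hardy-descent in $k$, and fundamental-theorem-of-calculus argument you lay out. Your choice to start the iteration from Proposition \ref{prop:fixedlrpdrestdr} at $p=2m+3-\delta$ (matching the $r^{2m+3-2j-\delta}$ weights in $E_{aux,\ell,k,l}$) rather than from Proposition \ref{prop:fixedlrpdrestdr0} also neatly avoids the apparent circularity in the paper's cross-reference, since the extended range of the latter proposition itself relies on estimate \eqref{est:auxdr5p}.
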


\subsection{Almost-sharp energy decay estimates}
In the remaining part of the section we now show some decay estimates for the $r^p$-weighted energies.
\begin{lemma}\label{dec:rp1}
Fix $\ell\in \N$ and consider a smooth solution $\psi$ to \eqref{eq:waveequation} satisfying \eqref{basic:as}. Then we have that for every $u \geq u_0$ that:
\begin{equation}
\int_{\mathcal{N}_u} r^p ( L ( P_{\ell} \Phi_{(\ell)} ) )^2 \, d\omega dv \leq C \frac{E_{aux , I_{\ell} \neq 0 , \ell}}{u^{3-p-\delta}} , 
\end{equation}
for $p \in (0,3)$, any $\delta \in (0,1)$ small enough, $C \doteq C ( D, R,p,\delta,\ell)$, and
\begin{align*}
E_{aux, I_{\ell} \neq 0 , m} \doteq & \int_{\mathcal{N}_u} r^{3-\delta} ( L ( P_{\ell} \Phi_{(m)} ) )^2 \, d\omega dv \\ &+ \sum_{s=0}^{m-1} \int_{\mathcal{N}_u} r^2 ( L ( P_{\ell} \Phi_{(s)} ) )^2 \, d\omega dv + \sum_{l \leq m} \int_{\Sigma_{u_0}} J^T [ T^l P_{\ell} \psi ] \cdot n_{u_0} \, d\mu_{\Sigma_{u_0}} . 
\end{align*}

If  we additionally assume that
$$ I_{\ell} [ \psi ] = 0 , $$
we have for every $u \geq u_0$ that:
\begin{equation}
\int_{\mathcal{N}_u} r^p ( L ( P_{\ell} \Phi_{(\ell)} ) )^2 \, d\omega dv \leq C \frac{E_{aux , I_{\ell} = 0  , n}}{u^{5-p-\delta}} , 
\end{equation}
for $p \in (0,5)$, any $\delta \in (0,1)$ small enough, $C \doteq C(D,R,p,\delta,\ell)$, and
\begin{align*}
E_{aux, I_{\ell} = 0 , m} \doteq & \int_{\mathcal{N}_u} r^{5-\delta} ( L ( P_{\ell} \Phi_{(m)} ) )^2 \, d\omega dv \\ & + \sum_{s=0}^{m-1} \int_{\mathcal{N}_u} r^2 ( L ( P_{\ell} \Phi_{(s)} ) )^2 \, d\omega dv+ \sum_{l \leq m} \int_{\Sigma_{u_0}} J^T [ T^l P_{\ell} \psi ] \cdot n_{u_0} \, d\mu_{\Sigma_{u_0}} . 
\end{align*}

\end{lemma}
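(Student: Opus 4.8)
The plan is to run the standard Dafermos--Rodnianski mean-value-theorem argument on the $r^p$-weighted hierarchy already established for $P_{\ell}\Phi_{(\ell)}$, combined with the $u^{-4+\epsilon}$ energy decay for waves supported on angular frequencies $\geq 1$ quoted as \eqref{eq:edecayn1}. Concretely, in the case $I_{\ell}[\psi]\neq 0$ the relevant hierarchy is the one from Proposition \ref{prop:fixedlrpest}, valid for $p\in(-4\ell,4)$; we will use it at the integer steps $p=3-\delta, 2, 1, 0$. Start at the top of the hierarchy: apply Proposition \ref{prop:fixedlrpest} with $p=3-\delta$ together with \eqref{eq:edecayn1} to control the flux at $\mathcal{N}_{u_0}$; the spacetime term on the left-hand side, which carries a weight $r^{2-\delta}$, then furnishes via the mean-value theorem a dyadic sequence $\{u_j'\}$ along which $\int_{\mathcal{N}_{u_j'}} r^{2-\delta}(L(P_\ell\Phi_{(\ell)}))^2 \lesssim u_{j}^{-1} E_{\mathrm{aux},I_\ell\neq0,\ell}$.

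Next iterate downward: feed this decay into Proposition \ref{prop:fixedlrpest} with $p=2$ to extract a further factor of $u^{-1}$ on a subsequence for the $r^{1}$-weighted flux, then with $p=1$ for the $r^{0}$-weighted flux, picking up the total decay $u^{-(3-p-\delta)}$ at each weight $r^p$ along a (possibly thinned) dyadic sequence. To upgrade from a dyadic sequence to all $u\geq u_0$, apply the hierarchy once more between consecutive dyadic times and use that the $r^p$-flux is essentially monotone up to lower-order terms controlled by the already-established decay, exactly as in the passage from the $\tilde u_j$ estimates to all $u$ in the proof of Lemma \ref{lm:auxdecay}. If a clean interpolation is needed to remove an $\epsilon$-loss at intermediate weights one invokes Lemma \ref{ineq:interpol}, but here the statement already allows a $\delta$-loss so this is optional. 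Keeping track of the lower-order $\Phi_{(s)}$, $s<\ell$, terms: these enter through the $J_1$, $J_2$-type terms in Proposition \ref{prop:fixedlrpest}'s proof and are absorbed using the bounds $\int_{\mathcal{N}_u} r^2(L(P_\ell\Phi_{(s)}))^2$ that appear in the definition of $E_{\mathrm{aux},I_\ell\neq0,\ell}$; their contribution is strictly lower order in $r$ and hence does not degrade the decay rate.

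For the second statement, under the assumption $I_{\ell}[\psi]=0$, the only change is that Proposition \ref{prop:fixedlrpest2} extends the admissible range to $p\in(-4\ell,5)$, so one starts the iteration from $p=5-\delta$ instead of $p=3-\delta$; the very first step now also generates the auxiliary term $C(p-5)^{-1}E_{\mathrm{aux},\ell}$ from Proposition \ref{prop:fixedlrpest2}, which is finite since the initial-data norm $E_{\mathrm{aux},I_\ell=0,\ell}$ contains the $r^{5-\delta}$-weighted flux on $\mathcal{N}_{u_0}$; then one iterates downward through $p=4,3,2,1,0$ exactly as before, gaining the decay $u^{-(5-p-\delta)}$. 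The dependence of the constant $R$ on $p$ in the hierarchies is harmless since only finitely many values of $p$ are used.

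\textbf{Main obstacle.} The delicate point is not the iteration itself but bookkeeping the coupling to the lower-order radiation fields $\Phi_{(s)}$, $s<\ell$: at each step the error terms $J_2$ in Propositions \ref{prop:fixedlrpest}--\ref{prop:fixedlrpest2} involve $\Phi_{(s)}$ with an $r$-weight shifted by $-2$ or worse, and one must check that after applying the downward iteration to those quantities as well the resulting bound still closes without circularity — i.e.\ that the induction is genuinely on the index $\ell-s$ and terminates at $s=0$ where $\Phi_{(0)}=\phi$ is controlled directly by the energy flux via \eqref{eq:edecayn1}. This is precisely the structure already exploited in Lemma \ref{lm:auxdecay}, so the same inductive scheme carries over; the only genuinely new input is starting the hierarchy one (respectively three) notches higher thanks to the non-vanishing, respectively vanishing, Newman--Penrose charge.
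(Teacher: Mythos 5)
Your proposal is correct and follows essentially the same route the paper takes: the paper's proof of Lemma \ref{dec:rp1} is precisely the dyadic mean-value-theorem iteration of Lemma \ref{lm:auxdecay} applied to the hierarchies of Propositions \ref{prop:fixedlrpest} and \ref{prop:fixedlrpest2}, starting at $p=3-\delta$ (respectively $p=5-\delta$ when $I_{\ell}[\psi]=0$, where the finite $C(p-5)^{-1}E_{\textnormal{aux},\ell}$ term is absorbed into the initial norm), with \eqref{eq:edecayn1} controlling the energy-flux terms and the lower-order $\Phi_{(s)}$ couplings handled inductively in $\ell-s$ exactly as you describe.
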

The proof of the above Lemma is quite standard and can be done in the same way as the proof of Lemma \ref{lm:auxdecay} using the results of Propositions \ref{prop:fixedlrpest} and \ref{prop:fixedlrpest2}.

\begin{lemma}\label{dec:rp2}
Fix $\ell\in \N$ and consider a smooth solution $\psi$ to \eqref{eq:waveequation} satisfying \eqref{basic:as}. Then we have that for $k \in \{ 0 , \dots , \ell-1\}$ and for every $u \geq u_0$ that:
\begin{equation}
\int_{\mathcal{N}_u} r^p ( L ( P_{\ell} \Phi_{(k)} ) )^2 \, d\omega dv \leq C \frac{E_{aux , I_{\ell} \neq 0 , n}}{u^{3+2(\ell-k) -p-\delta}} , 
\end{equation}
for $p \in (0,2)$, any $\delta \in (0,1)$ small enough, and $C \doteq C ( D, R,p,\delta,\ell)$.

If additionally we assume that
$$ I_{\ell} [ \psi ] = 0 , $$
we have for $k \in \{ 0 , \dots , \ell-1\}$ and for every $u \geq u_0$ that:
\begin{equation}
\int_{\mathcal{N}_u} r^p ( L ( P_{\ell} \Phi_{(k)} ) )^2 \, d\omega dv \leq C \frac{E_{aux , I_{\ell} =0  , \ell}}{u^{5+ 2 (\ell-k) -p-\delta}} , 
\end{equation}
for $p \in (0,2)$, any $\delta \in (0,1)$ small enough, and $C \doteq C(D,R,p,\delta,\ell)$.

\end{lemma}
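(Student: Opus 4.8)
The plan is to deduce the lower-order estimates ($k\le\ell-1$) from the top-order estimate ($k=\ell$), which is exactly Lemma~\ref{dec:rp1}, by the weighted Hardy descent used in the proof of Lemma~\ref{lm:auxdecay}; the whole argument runs parallel to that of Lemma~\ref{dec:rp1}, now carrying a general weight $r^p$ and applied to $\widetilde\Phi_{(k)}$ rather than $\widetilde\Phi_{(\ell)}$.

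First I would record the identity $L\widetilde\Phi_{(k)}=-\tfrac12 Dr^{-2}\widetilde\Phi_{(k+1)}$, immediate from $\widetilde\Phi_{(k+1)}=-r^2\partial_r\widetilde\Phi_{(k)}$ and $L=\tfrac12 D\partial_r$. Combined with the Hardy inequality \eqref{hardy1} on $\mathcal{N}_u$, applied to $\chi\widetilde\Phi_{(k+1)}$ (the boundary term at $\mathcal{I}^+$ vanishes because $p<3$, the one at $r=R$ because $\chi(R)=0$), this gives, for every $p<3$,
\[ \int_{\mathcal{N}_u} r^{p}(L(\chi P_\ell \widetilde\Phi_{(k)}))^2\,d\omega dv \;\lesssim\; \int_{\mathcal{N}_u} r^{p-2}(L(\chi P_\ell\widetilde\Phi_{(k+1)}))^2\,d\omega dv \;+\; \int_{\mathcal{N}_u}(\chi')^2 (P_\ell\widetilde\Phi_{(k)})^2\,d\omega dv . \]
Iterating this $\ell-k$ times brings the first term down to $\int_{\mathcal{N}_u} r^{p-2(\ell-k)}(L(\chi P_\ell\widetilde\Phi_{(\ell)}))^2\,d\omega dv$; all intermediate exponents lie below $p<2<3$, so every Hardy step is legitimate. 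This quantity I would control by Lemma~\ref{dec:rp1}. Since $p-2(\ell-k)<0$ for $p\in(0,2)$ and $\ell-k\ge1$, one first needs the extension of Lemma~\ref{dec:rp1} to negative exponents, which is harmless: the underlying $r^p$-hierarchies of Propositions~\ref{prop:fixedlrpest} and~\ref{prop:fixedlrpest2} hold for all $p\in(-4\ell,4)$, resp.\ $(-4\ell,5)$, and the dyadic mean-value argument behind Lemma~\ref{dec:rp1} is insensitive to the sign of $p$. This produces the decay $u^{-(3-(p-2(\ell-k))-\delta)}=u^{-(3+2(\ell-k)-p-\delta)}$ in general, and the same with $3$ replaced by $5$ when $I_\ell[\psi]=0$. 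Passing from $\widetilde\Phi_{(k)}$ back to $\Phi_{(k)}$ (and, at the top, from $\widetilde\Phi_{(\ell)}$ to $\Phi_{(\ell)}$) only brings in lower-order, strictly faster-decaying terms, since $P_\ell\Phi_{(k)}=P_\ell\widetilde\Phi_{(k)}+\sum_m\alpha_{k,m}P_\ell\Phi_{(k-m)}$ and the two families of radiation fields are related by a fixed invertible lower-triangular transformation with constant coefficients; one absorbs these by running the descent on all the $\Phi_{(0)},\dots,\Phi_{(\ell)}$ jointly, using summability of the weighted fluxes at geometrically decreasing weights.

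It remains to handle the error terms $\int_{\mathcal{N}_u}(\chi')^2 (P_\ell\widetilde\Phi_{(j)})^2\,d\omega dv$, $k\le j\le\ell-1$, produced at each Hardy step. These are supported in the fixed compact region $\{R\le r\le R+1\}$, where each $\widetilde\Phi_{(j)}$ is a combination of radial derivatives of $r\psi$ of order $\le j$; using the wave equation to trade second radial derivatives for $\partial_u$- and $\slashed{\nabla}_{\mathbb{S}^2}$-derivatives, together with trace inequalities and the local Morawetz/redshift estimates \eqref{morawetz},~\eqref{redshift:as}, each such term is bounded by a sum of $\Sigma_u$-energy fluxes $\int_{\Sigma_u} J^N[N^aT^b\Omega^c\psi]\cdot n_u\,d\mu_{\Sigma_u}$ with $a+b+c\lesssim\ell$. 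Because $\psi$ is supported on angular frequency $\ell$, these fluxes decay at the almost-sharp rate $u^{-2\ell-5+\epsilon}$ (or at whatever rate is consistent with the chosen data norm $E_{aux,\dots}$), which is strictly faster than the target rate $u^{-(3+2(\ell-k)-p-\delta)}$ for every $p\in(0,2)$ and $0\le k\le\ell-1$, so these terms are harmless.

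The main obstacle I anticipate is precisely this last point: one must verify that the compactly-supported error terms near $r=R$ — whose decay comes from the energy flux rather than from the $r^p$-hierarchy — are never slower than the (fairly fast) target rate, which forces one to track carefully how many $T$- and angular commutations are needed to dominate $\widetilde\Phi_{(j)}$ near $r=R$ by decaying fluxes; this is the reason the almost-sharp energy-flux decay for frequency-$\ell$ waves is needed as an input. A secondary, purely bookkeeping, point is organizing the induction so that the two-unit drop in the weight at each Hardy step stays inside the range in which Lemma~\ref{dec:rp1} has been established and so that the modified-versus-unmodified radiation-field corrections close.
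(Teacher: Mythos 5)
Your proposal is correct and follows essentially the same route as the paper, whose proof of this lemma is just the remark that it ``uses the previous one as well as its proof, along with Hardy's inequality \eqref{hardy1}'' (deferring details to Section 7 of \cite{paper1}): namely, iterated application of the Hardy inequality to descend from $\Phi_{(\ell)}$ to $\Phi_{(k)}$ at the cost of two powers of $r$ per step, combined with the dyadic mean-value argument of Lemma \ref{dec:rp1} extended into the negative-exponent range permitted by Propositions \ref{prop:fixedlrpest} and \ref{prop:fixedlrpest2}. The two delicate points you single out --- that the cutoff error terms near $r=R$ must decay faster than the target rate (which requires the $T$-commuted almost-sharp energy decay as input) and that the modified/unmodified radiation-field corrections must close --- are exactly the points the paper implicitly relies on via the cited reference, so your account is, if anything, more explicit than the paper's.
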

The proof of the above Lemma uses the previous one as well as its proof, along with Hardy's inequality \eqref{hardy1}. For details see Section 7 of \cite{paper1}. 

By making use of the results of Lemmas \ref{dec:rp1} and \ref{dec:rp2} (for $P_{\ell} \Phi_{(\ell)}$ and $P_{\ell+1} \Phi_{(\ell)}$ respectively, for the latter assuming that $I_{\ell+1} [ \psi ] \neq 0$) and by making use of the $r^p$-weighted estimates of Proposition \ref{prop:fixedlrpdrestdrg} (along with the same tools as in the proof of Lemma \ref{lm:auxdecay}) we have the following result:

\begin{lemma}
Fix $\ell\in \N$ and consider a smooth solution $\psi$ to \eqref{eq:waveequation} satisfying \eqref{basic:as}. Then for every $u \geq u_0$ we have that:
$$ \int_{\mathcal{N}_u} r^p ( L ( P_{\geq \ell+2} \Phi_{(\ell)} ) )^2 \, d\omega dv \leq C \frac{E_{aux, \ell+2}}{u^{6-p-\delta}} , $$
for $p \in (0,2)$, for some constant $C$, for any $\delta > 0$ and for an initial norm
\begin{align*}
E_{aux , \ell+2} \doteq & \int_{\mathcal{N}_{u_0}} r^2 ( L ( P_{\geq \ell+2} \Phi_{(\ell)} ) )^2 \, d\omega dv  + \int_{\mathcal{N}_{u_0}} r^2 ( L ( P_{\geq \ell+2} \Phi_{(\ell+1)} ) )^2 \, d\omega dv \\ & + \int_{\mathcal{N}_{u_0}} r^2 ( L ( P_{\geq \ell+2} \Phi_{(\ell+2)} ) )^2 \, d\omega dv , 
\end{align*}

\end{lemma}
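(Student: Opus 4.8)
The plan is to deduce the estimate for the generation-$\ell$ radiation field $P_{\geq\ell+2}\Phi_{(\ell)}$ from the behaviour of the two higher generations $P_{\geq\ell+2}\Phi_{(\ell+1)}$ and $P_{\geq\ell+2}\Phi_{(\ell+2)}$, exploiting that the angular projection $P_{\geq\ell+2}$ makes two further radial commutations admissible. Concretely I would run two steps: first establish decay for the top-generation flux by a dyadic $r^p$-cascade, and then descend to generation $\ell$ through a chain of Hardy inequalities that trades a loss of $r$-weight for a gain in the decay rate.

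\emph{Step 1 (decay of the top-generation flux).} Apply the $r^p$-weighted estimate of Proposition~\ref{prop:fixedlrpdrestdrg} with $m=0$ and with its index taken to be $\ell+2$, to $P_{\geq\ell+2}\widetilde{\Phi}_{(\ell+2)}$; thanks to the Poincaré gain on $P_{\geq\ell+2}$ this hierarchy is available for all $p\in(-4(\ell+2),2)$, with right-hand side involving only the flux of $P_{\geq\ell+2}\widetilde{\Phi}_{(\ell+2)}$ on $\mathcal{N}_{u_0}$ and a $J^T$-energy that the smoothness assumptions together with the lower-generation data in $E_{aux,\ell+2}$ control. Feeding this into the dyadic mean-value cascade used in the proof of Lemma~\ref{lm:auxdecay}, starting from the $r^2$-weighted flux on $\mathcal{N}_{u_0}$, one obtains
\[
\int_{\mathcal{N}_u} r^q\big(L(P_{\geq\ell+2}\widetilde{\Phi}_{(\ell+2)})\big)^2\, d\omega dv \;\leq\; C\, u^{-(2-q)+\delta}\, E_{aux,\ell+2}\qquad\text{for all } q\in(-4(\ell+2),2).
\]

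\emph{Step 2 (Hardy cascade down two generations).} The exact relation $L\widetilde{\Phi}_{(k)}=-\tfrac12 D r^{-2}\widetilde{\Phi}_{(k+1)}$, which follows from $\widetilde{\Phi}_{(k+1)}=-r^2\partial_r\widetilde{\Phi}_{(k)}$ and $L=\tfrac12 D\partial_r$, used together with Hardy's inequality \eqref{hardy1}, gives after two rounds
\[
\int_{\mathcal{N}_u} r^p\big(L(P_{\geq\ell+2}\widetilde{\Phi}_{(\ell)})\big)^2\, d\omega dv \;\lesssim\; \int_{\mathcal{N}_u} r^{p-4}\big(L(P_{\geq\ell+2}\widetilde{\Phi}_{(\ell+2)})\big)^2\, d\omega dv \;+\; (\text{boundary terms}),
\]
each application being admissible because for $p\in(0,2)$ the relevant $r$-exponents avoid the excluded values of \eqref{hardy1}; the boundary terms at $\mathcal{I}^+$ vanish for $p<3$ by the boundedness of the higher-order radiation fields along null infinity (Proposition~\ref{prop:finitenesshoradfields}), while those at $r=R$ are controlled by \eqref{morawetz} and \eqref{redshift:as}. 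Since $q:=p-4\in(-4,-2)\subset(-4(\ell+2),2)$ for $p\in(0,2)$, Step~1 bounds the right-hand side by $C\, u^{-(6-p-\delta)}E_{aux,\ell+2}$, which proves the claim for $\widetilde{\Phi}_{(\ell)}$; finally, $\Phi_{(\ell)}=\widetilde{\Phi}_{(\ell)}+\sum_{k=1}^\ell\alpha_{\ell,k}\Phi_{(\ell-k)}$, and the lower-generation pieces $P_{\geq\ell+2}\Phi_{(\ell-k)}$ are treated by the same cascade (and decay at least as fast), so the estimate passes to $\Phi_{(\ell)}$.

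\emph{Main obstacle.} The delicate point is the bookkeeping: verifying that after the two Hardy rounds and the passage between $\widetilde{\Phi}$ and $\Phi$ every error and boundary contribution is dominated by the \emph{stated} norm $E_{aux,\ell+2}$, which involves only the three generations $\ell,\ell+1,\ell+2$, only the weight $r^2$, and only the single null cone $\mathcal{N}_{u_0}$ — in particular one must check that the $\Sigma_{u_0}$-energy appearing on the right-hand side of the $r^p$-hierarchies and the $r^3$–$r^5$ weighted quantities that occur transiently are either absorbed into $E_{aux,\ell+2}$ or never genuinely needed, and confirm the admissibility of every Hardy application and of the negative-$p$ range of the top-generation hierarchy.
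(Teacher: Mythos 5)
Your proposal is correct and follows essentially the same route the paper intends: the paper gives no detailed proof of this lemma, only pointing to Proposition \ref{prop:fixedlrpdrestdrg} and the tools of Lemma \ref{lm:auxdecay}, which is precisely your combination of the generation-$(\ell+2)$ hierarchy (available over its extended negative-$p$ range thanks to the $P_{\geq \ell+2}$ projection), the dyadic mean-value cascade, and the two Hardy descents back down to generation $\ell$. Your closing caveat about the bookkeeping is well taken — the stated $E_{aux , \ell+2}$ omits the $\Sigma_{u_0}$-energies appearing on the right-hand sides of the hierarchies — but the paper itself reinstates exactly these terms in the norm of Corollary \ref{rem:n+2}, so this is a defect of the lemma's statement rather than of your argument.
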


Using the last two results we get the following:
\begin{corollary}\label{rem:n+2}
Fix $\ell\in \N$ and consider a smooth solution $\psi$ to \eqref{eq:waveequation} satisfying \eqref{basic:as}. Then we have that
\begin{equation}\label{dec:engeqn0}
\int_{\mathcal{N}_u} r^p ( L ( P_{\geq \ell} \Phi_{(\ell)} ) )^2 \, d\omega dv \leq C \frac{E_{aux-decomp-n0}}{u^{3-p-\delta}} ,
\end{equation}
for $p \in (0,2)$, $\delta > 0$, $C = C ( D,R,p,\delta,\ell)$, and 
$$ E_{aux-decomp-n0} \doteq E_{aux , I_n \neq 0 , n} + E_{aux , \ell+2} + \sum_{l \leq n+2} \int_{\Sigma_{u_0}} J^N [ T^l P_{\geq \ell} \psi ] \cdot n_{u_0} \, d\mu_{\Sigma_{u_0}} . $$

If additionally we assume that
$$ I_{\ell} [ \psi ] = 0 , $$
and we have that
\begin{equation}\label{dec:engeq0}
\int_{\mathcal{N}_u} r^p ( L ( P_{\geq \ell} \Phi_{(\ell)} ) )^2 \, d\omega dv \leq C \frac{E_{aux-decomp-n0}}{u^{5-p-\delta}} ,
\end{equation}
for $p \in (0,2)$, $\delta > 0$, $C = C(D,R,p,\delta,\ell)$, and 
$$ E_{aux-decomp-0} \doteq E_{aux , I_{\ell} = 0 , \ell} + E_{aux , \ell+2} + \sum_{l \leq \ell+2} \int_{\Sigma_{u_0}} J^N [ T^l P_{\geq \ell} \psi ] \cdot n_{u_0} \, d\mu_{\Sigma_{u_0}} .$$

\end{corollary}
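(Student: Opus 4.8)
The plan is to reduce the estimate to the three decay lemmas just proved by projecting $\Phi_{(\ell)}$ onto spherical harmonics. First I would decompose
\[
P_{\geq \ell}\Phi_{(\ell)} = P_{\ell}\Phi_{(\ell)} + P_{\ell+1}\Phi_{(\ell)} + P_{\geq \ell+2}\Phi_{(\ell)},
\]
using that the defining combination of radial derivatives and lower-order radiation fields producing $\Phi_{(\ell)}$ from $\phi$ commutes with every spherical projection $P_m$, so that each $P_m\Phi_{(\ell)}$ is the $\ell$-th order radiation field of $P_m\psi$. Since $L = \frac{1}{2} D\partial_r$ is purely radial it commutes with each $P_m$, and spherical harmonics of distinct degrees are $L^2(\mathbb{S}^2)$-orthogonal; hence for each fixed $(u,r)$
\[
\int_{\mathbb{S}^2}\big(L(P_{\geq \ell}\Phi_{(\ell)})\big)^2\,d\omega = \int_{\mathbb{S}^2}\big(L(P_{\ell}\Phi_{(\ell)})\big)^2\,d\omega + \int_{\mathbb{S}^2}\big(L(P_{\ell+1}\Phi_{(\ell)})\big)^2\,d\omega + \int_{\mathbb{S}^2}\big(L(P_{\geq \ell+2}\Phi_{(\ell)})\big)^2\,d\omega ,
\]
with no cross terms. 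Multiplying by $r^p$ and integrating over $\mathcal{N}_u$ therefore reduces \eqref{dec:engeqn0} (resp. \eqref{dec:engeq0}) to bounding the three resulting $r^p$-weighted null energies separately, and the range $p\in(0,2)$ sits inside the admissible range of each of the lemmas invoked below.

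For the degree-$\ell$ piece I would invoke Lemma \ref{dec:rp1}, which gives decay $u^{-(3-p-\delta)}$ with norm $E_{aux , I_\ell \neq 0 , \ell}$ unconditionally, and the improved rate $u^{-(5-p-\delta)}$ with norm $E_{aux , I_\ell = 0 , \ell}$ once $I_\ell[\psi]=0$ is assumed. For the degree-$(\ell+1)$ piece I would apply Lemma \ref{dec:rp2} with $\ell$ replaced by $\ell+1$ and $k=\ell$; since $(\ell+1)-k=1$ this produces decay $u^{-(3+2-p-\delta)}=u^{-(5-p-\delta)}$, which is strictly faster than the degree-$\ell$ piece in the generic case and exactly matches it when $I_\ell[\psi]=0$. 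For the piece supported at degrees $\geq \ell+2$, the (unlabelled) lemma immediately preceding the Corollary already gives $u^{-(6-p-\delta)}$ with norm $E_{aux , \ell+2}$, which is faster than either of the first two. Summing the three contributions, the slowest rate — namely $u^{-(3-p-\delta)}$ in general and $u^{-(5-p-\delta)}$ under $I_\ell[\psi]=0$ — governs the bound, which is precisely what \eqref{dec:engeqn0} and \eqref{dec:engeq0} assert.

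The remaining work is purely bookkeeping rather than analysis: one has to check that all the auxiliary norms produced above — $E_{aux , I_\ell \neq 0 , \ell}$, the analogue of the norm of Lemma \ref{dec:rp2} with $\ell$ replaced by $\ell+1$ used for the middle piece, and $E_{aux , \ell+2}$, together with the finitely many initial energy fluxes $\int_{\Sigma_{u_0}} J^N[T^l P_{\geq \ell}\psi]\cdot n_{u_0}\,d\mu_{\Sigma_{u_0}}$ that control the $\mathcal{R}_\chi$-supported error terms appearing in those lemmas — are finite and dominated by the single norm $E_{aux-decomp-n0}$ (resp. $E_{aux-decomp-0}$) on the right-hand side. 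This finiteness is guaranteed by the standing smoothness assumption \eqref{basic:as}, which supplies the required $r$-weighted regularity of the data on $\mathcal{N}_{u_0}$ uniformly over the three spherical-harmonic ranges involved; so the only genuine obstacle is carrying the norms through consistently, and no new analytic difficulty arises.
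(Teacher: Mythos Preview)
Your proposal is correct and matches the paper's approach essentially verbatim. The paper itself offers no more than the sentence ``Using the last two results we get the following'' before stating the Corollary, so your three-way spherical-harmonic decomposition $P_{\geq \ell}=P_\ell+P_{\ell+1}+P_{\geq \ell+2}$, the invocation of Lemma~\ref{dec:rp1} for the first piece, Lemma~\ref{dec:rp2} (shifted to level $\ell+1$, with $k=\ell$) for the second, and the unlabelled lemma for the third, together with the $L^2(\mathbb{S}^2)$-orthogonality to kill cross terms, is precisely what is intended---and indeed more explicit than what the paper writes down.
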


Using the previous two results and the red-shift estimate \eqref{redshift:as}, we can obtain the following energy estimates (which can also be viewed as $r^p$-weighted estimates for $p=0$):
\begin{proposition}\label{prop:endecay}
Fix $\ell\in \N$ and consider a smooth solution $\psi$ to \eqref{eq:waveequation} satisfying \eqref{basic:as}. Then we have that for every $\tau \geq \tau_0$ that:
\begin{equation}
\int_{\Sigma_{\tau}} J^N [ P_{\ell} \psi ] \cdot n_{\tau} \, d\mu_{\Sigma_{\tau}} \leq C \frac{E_{aux , I_{\ell} \neq 0 , \ell}}{\tau^{2\ell+3-\delta}} , 
\end{equation}
for any $\delta \in (0,1)$ small enough, $C \doteq C ( D, R,\delta,\ell)$, and $ E_{aux, I_{\ell} \neq 0 , \ell}$ as defined in Lemma \ref{dec:rp2}.

If additionally we assume that
$$ I_{\ell} [ \psi ] = 0 , $$
we have for $k \in \{ 0 , \dots , \ell-1\}$ and for every $u \geq u_0$ that:
\begin{equation}
\int_{\Sigma_u} J^N [ P_{\ell} \psi ] \cdot n_u \, d\mu_{\Sigma_u} \leq C \frac{E_{aux , I_{\ell} =0  , \ell}}{u^{2\ell+5-\delta}} , 
\end{equation}
for any $\delta \in (0,1)$ small enough, $C \doteq C(D,R,\delta,\ell)$, and $E_{aux, I_{\ell} = 0 , \ell}$ as defined in Lemma \ref{dec:rp2}.
\end{proposition}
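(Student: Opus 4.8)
The plan is to split the flux $\int_{\Sigma_\tau}J^N[P_\ell\psi]\cdot n_\tau\,d\mu_{\Sigma_\tau}$ into the contribution of the null piece $\mathcal{N}_\tau\subset\{r\geq R\}$ and that of the spacelike piece $\Sigma_\tau\cap\{r\leq R\}$, and to treat the two by different tools: the first directly from the $r^p$-weighted null-energy decay already proved, the second by feeding that decay into the integrated local energy decay and redshift estimates through the standard dyadic argument. Since $\psi=P_\ell\psi$ is supported on a single spherical harmonic we may identify each $P_\ell\Phi_{(k)}$ with $\Phi_{(k)}$, in particular $\Phi_{(0)}=\phi=rP_\ell\psi$, so the entire frequency-$\ell$ hierarchy of the previous sections applies without change.

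For the null piece, I would first compute $\mathbb{T}(T,\partial_v)$ on $\mathcal{N}_\tau=\{u=\tau\}$ and use that $\partial_v|_u=\tfrac12 D\partial_r=L$ and the Poincar\'e \emph{equality} $\int_{\s^2}|\snabla_{\s^2}(P_\ell\phi)|^2\,d\omega=\ell(\ell+1)\int_{\s^2}(P_\ell\phi)^2\,d\omega$ to bound, after a weighted Young's inequality and with the usual cutoff $\chi$ ($\chi\equiv0$ on $\{r\leq R\}$, $\chi\equiv1$ on $\{r\geq R+1\}$),
\begin{equation*}
\int_{\mathcal{N}_\tau}J^N[P_\ell\psi]\cdot L\,r^2\,d\omega dv\leq C\int_{\mathcal{N}_\tau}(L(\chi\phi))^2\,d\omega dv+C\int_{\mathcal{N}_\tau}r^{-2}(\chi\phi)^2\,d\omega dv+C\int_{\Sigma_\tau\cap\{R\leq r\leq R+1\}}J^N[P_\ell\psi]\cdot n_\tau\,d\mu_{\Sigma_\tau}.
\end{equation*}
Hardy's inequality \eqref{hardy1} absorbs the middle integral into the first, the last term is a near-$R$ boundary term that will be swept into the spacelike part, and Lemma \ref{dec:rp2} applied with $k=0$ (legitimate as $\ell\geq 1$), with $p$ taken arbitrarily close to $0$ and $\delta$ relabelled, shows that the first integral is $\lesssim\tau^{-(2\ell+3-\delta)}E_{aux,I_{\ell}\neq0,\ell}$; under the extra hypothesis $I_\ell[\psi]=0$ the second part of Lemma \ref{dec:rp2} gives instead $\lesssim\tau^{-(2\ell+5-\delta)}E_{aux,I_{\ell}=0,\ell}$. (One could equally invoke Lemma \ref{dec:rp1}, the lemma for $P_{\geq\ell+2}\Phi_{(\ell)}$ preceding Corollary \ref{rem:n+2}, and Corollary \ref{rem:n+2} to handle the $P_{\geq\ell}$ version simultaneously.)

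For the spacelike piece $\Sigma_\tau\cap\{r\leq R\}$, which contains the photon sphere and the horizon, I would combine the Morawetz/redshift estimates \eqref{morawetz}, \eqref{redshift:as} --- which bound the spacetime integral of $J^N[P_\ell\psi]\cdot n_\tau$ over this region by a finite sum of $T$-commuted fluxes on an earlier slice, the $T$-commutations near the horizon costing nothing by \eqref{redshift:as} and only the photon sphere forcing finitely many --- with the $r^p$-hierarchies (controlling the bulk error terms on $\{r\geq R\}$) and the quasi-monotonicity $\int_{\Sigma_{\tau_2}}J^N[P_\ell\psi]\cdot n_{\tau_2}\leq C\int_{\Sigma_{\tau_1}}J^N[P_\ell\psi]\cdot n_{\tau_1}$ from \eqref{est:enb}. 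Running the standard dyadic pigeonhole over a sequence $\{\tau_j\}$, descending from $\Phi_{(\ell)}$ to $\Phi_{(0)}$ via the Hardy inequalities exactly as in Lemma \ref{lm:auxdecay}, gains one power of $\tau$ per round; at each round the integrated local energy decay is fed with $P_\ell\psi$ commuted with one further $T$, and the decay of those $T$-commuted fluxes is supplied by the $T$-commuted hierarchies of Propositions \ref{prop:fixedlrpdrestdrg}, \ref{prop:fixedlrpdrestdr} and \ref{prop:fixedlrpdrestdr0} together with Lemma \ref{lm:auxdecay1} (these gain decay without spending radial derivatives), so the recursion terminates after finitely many steps, everything bounded by $E_{aux,I_{\ell}\neq0,\ell}$ (resp.\ $E_{aux,I_{\ell}=0,\ell}$).

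The main obstacle will not be a single hard estimate but the combinatorics of the scheme: one has to verify that the finitely many $T$-commutations forced by the photon-sphere loss in \eqref{morawetz}, together with the $\ell$ radial commutations built into the $\Phi_{(k)}$'s, never demand more of the data than the norms $E_{aux,I_{\ell}\neq0,\ell}$, $E_{aux,I_{\ell}=0,\ell}$ provide, and --- the genuinely delicate point --- that the dyadic pigeonhole reaches the sharp exponent $2\ell+3-\delta$ (resp.\ $2\ell+5-\delta$) rather than stalling a power short; this is exactly where the extended ranges of the top-order hierarchy ($p<4$ when $I_\ell[\psi]\neq0$, $p<5$ when $I_\ell[\psi]=0$, and their $p<2m+3$, $p<2m+5$ analogues after commuting with $T$) are indispensable.
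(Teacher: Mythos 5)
Your proposal is correct and follows essentially the same route as the paper, which proves this proposition by combining the $r^p$-weighted null-energy decay of Lemmas \ref{dec:rp1} and \ref{dec:rp2} with the Morawetz/redshift estimates and the standard dyadic pigeonhole, viewing the flux as the $p=0$ member of the hierarchy. Your write-up merely makes explicit the null/spacelike splitting and the $T$-commuted bookkeeping that the paper leaves implicit.
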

Arguing as in Corollary \ref{rem:n+2}, we get the following results.
\begin{corollary}\label{rem:enn+2}
Fix $\ell\in \N$ and consider a smooth solution $\psi$ to \eqref{eq:waveequation} satisfying \eqref{basic:as}. Then we have that:
\begin{equation}\label{dec:enn+2n0}
\int_{\Sigma_u} J^N [ P_{\geq \ell} \psi ] \cdot n_u \, d\mu_{\Sigma_u} \leq C \frac{E_{aux-decomp-n0}}{u^{2\ell + 3-\delta}} ,
\end{equation}
for all $u \geq u_0$, for any $\delta > 0$, for some $C = C ( D, R)$ and where $E_{aux-decomp-n0}$ was determined in Remark \ref{rem:n+2}.

If  we additionally assume that
$$ I_{\ell} [\psi ] = 0 , $$
we have that 
\begin{equation}\label{dec:enn+20}
\int_{\Sigma_u} J^N [ P_{\geq \ell} \psi ] \cdot n_u \, d\mu_{\Sigma_u} \leq C \frac{E_{aux-decomp-0}}{u^{2\ell + 5-\delta}} ,
\end{equation}
for all $u \geq u_0$, for any $\delta > 0$, for some $C = C( D, R)$ and where $E_{aux-decomp-0}$ was determined in Remark \ref{rem:n+2}.
\end{corollary}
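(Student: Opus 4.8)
I would prove Corollary~\ref{rem:enn+2} by the same frequency decomposition used in Corollary~\ref{rem:n+2}, now carried out at the level of the sliced energy flux through $\Sigma_u$ rather than the null flux through $\mathcal{N}_u$. Write $P_{\geq \ell}\psi = P_{\ell}\psi + P_{\ell+1}\psi + P_{\geq \ell+2}\psi$. Since the spherical-harmonic projections are $L^2(\mathbb{S}^2)$-orthogonal and commute with $T$, $N$, $\partial_r$ and $\slashed{\Delta}_{\mathbb{S}^2}$, all cross terms in $J^N$ integrate to zero over the spheres, so $\int_{\Sigma_u}J^N[P_{\geq\ell}\psi]\cdot n_u\,d\mu_{\Sigma_u}$ is the sum of the three fluxes $\int_{\Sigma_u}J^N[P_{\ell}\psi]\cdot n_u$, $\int_{\Sigma_u}J^N[P_{\ell+1}\psi]\cdot n_u$ and $\int_{\Sigma_u}J^N[P_{\geq\ell+2}\psi]\cdot n_u$. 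It therefore suffices to bound each summand and add. For the $P_{\ell}\psi$ piece I would simply invoke Proposition~\ref{prop:endecay}: in the general case this gives $u^{-(2\ell+3-\delta)}E_{aux,I_{\ell}\neq0,\ell}$, and under $I_{\ell}[\psi]=0$ the improved rate $u^{-(2\ell+5-\delta)}E_{aux,I_{\ell}=0,\ell}$. For the $P_{\ell+1}\psi$ piece, note it is a solution localized at angular frequency $\ell+1$, so Proposition~\ref{prop:endecay} with $\ell$ replaced by $\ell+1$ — in its first, \emph{unconditional} estimate, which requires nothing about $I_{\ell+1}$ — yields $u^{-(2(\ell+1)+3-\delta)}=u^{-(2\ell+5-\delta)}$ times the corresponding auxiliary norm at level $\ell+1$; for $u\geq u_0$ this is at least as good as the $P_{\ell}\psi$ rate.

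\textbf{The $P_{\geq\ell+2}\psi$ contribution.} This is the step to which ``arguing as in Corollary~\ref{rem:n+2}'' really applies, and it is the main obstacle. One must upgrade the decay of the $r^p$-weighted null fluxes to decay of the sliced energy. Concretely I would: (i) decompose $\int_{\Sigma_u}J^N[P_{\geq\ell+2}\psi]\cdot n_u$ into a near-infinity part, an intermediate-$r$ part, and a near-horizon part; (ii) control the near-infinity part by the $r^p$-hierarchy for $P_{\geq\ell+2}\psi$, using Hardy's inequality~\eqref{hardy1} to chain the $p=0$ flux of $L\phi$ down through the higher-order radiation fields $\Phi_{(k)}$, $0\le k\le\ell+2$, and then applying the Lemma preceding Corollary~\ref{rem:n+2} (which gives $\int_{\mathcal{N}_u}r^p(L(P_{\geq\ell+2}\Phi_{(\ell)}))^2\lesssim u^{-(6-p-\delta)}E_{aux,\ell+2}$) together with its companions for the remaining orders; (iii) control the intermediate part by the Morawetz estimate~\eqref{morawetz}; (iv) control the near-horizon part by the red-shift estimate~\eqref{redshift:as}; and (v) run a dyadic mean-value/pigeonhole argument together with the interpolation inequalities exactly as in the passage from Lemmas~\ref{dec:rp1}–\ref{dec:rp2} to Proposition~\ref{prop:endecay}. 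This yields $\int_{\Sigma_u}J^N[P_{\geq\ell+2}\psi]\cdot n_u\lesssim u^{-(2\ell+3-\delta)}E_{aux,\ell+2}$ in the general case, and a strictly faster rate under $I_{\ell}[\psi]=0$. The one genuinely delicate point here is to check that this summand always carries \emph{at least} the rate of the $P_{\ell}\psi$ summand: for $\ell=1$ this is immediate from the $u^{-(6-p-\delta)}$ bound, but for $\ell\geq2$ the bare $u^{-(6-p-\delta)}$ estimate is insufficient, and one must use that $P_{\geq\ell+2}\psi$ is supported on angular frequencies $\ge \ell+2$ and hence admits the longer $r^p$-hierarchy (more commutations with $r^2\partial_r$), which upgrades the decay of its lower-order radiation fields beyond what the single stated lemma provides.

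\textbf{Assembly.} Adding the three bounds, the slowest rate is the one coming from $P_{\ell}\psi$, namely $u^{-(2\ell+3-\delta)}$ in general and $u^{-(2\ell+5-\delta)}$ when $I_{\ell}[\psi]=0$. Collecting the auxiliary norms produced above gives exactly $E_{aux\text{-}decomp\text{-}n0}= E_{aux,I_{\ell}\neq0,\ell}+E_{aux,\ell+2}+\sum_{l\le\ell+2}\int_{\Sigma_{u_0}}J^N[T^lP_{\geq\ell}\psi]\cdot n_{u_0}\,d\mu_{\Sigma_{u_0}}$ in the general case (respectively $E_{aux\text{-}decomp\text{-}0}$ with the $I_{\ell}=0$ norms); the additional $T^l$-commuted initial energies on $\Sigma_{u_0}$ enter because Steps for $P_{\ell+1}\psi$ and $P_{\geq\ell+2}\psi$ call on the $r^p$-hierarchies and on~\eqref{morawetz} at commuted levels. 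This establishes \eqref{dec:enn+2n0} and, under $I_{\ell}[\psi]=0$, \eqref{dec:enn+20}. Apart from the rate-matching subtlety in the $P_{\geq\ell+2}\psi$ step, the argument is a bookkeeping exercise that runs in complete parallel with the proof of Corollary~\ref{rem:n+2}.
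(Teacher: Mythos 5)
Your proposal follows essentially the same route as the paper: the paper's proof is literally "arguing as in Corollary \ref{rem:n+2}", i.e.\ decomposing $P_{\geq \ell}\psi$ into $P_{\ell}\psi + P_{\ell+1}\psi + P_{\geq \ell+2}\psi$, applying Proposition \ref{prop:endecay} (and its $\ell+1$ analogue) to the first two pieces, and handling $P_{\geq\ell+2}\psi$ with the $r^p$-hierarchy, Morawetz/red-shift estimates and the dyadic pigeonhole argument, exactly as you describe. Your remark that the $P_{\geq\ell+2}$ summand needs the longer hierarchy available at frequencies $\geq\ell+2$ to match the $u^{-(2\ell+3-\delta)}$ rate for $\ell\geq 2$ is a fair elaboration of a point the paper leaves implicit, not a deviation from its method.
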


Using Propositions \ref{prop:fixedlrpdrestdrg}, \ref{prop:fixedlrpdrestdr}, \ref{prop:fixedlrpdrestdr0}, and \ref{prop:fixedlrpdrestT} we get the following decay estimates for the $r^p$-weighted quantities and for the energies after commuting with $T$ (for details see the analogous estimates from section 7 of \cite{paper1}):
\begin{proposition}\label{dec:enT}
Fix $\ell\in \N$ and consider a smooth solution $\psi$ to \eqref{eq:waveequation} satisfying \eqref{basic:as}. Then we have for every $u \geq u_0$ that:
\begin{equation}
\int_{\mathcal{N}_u} r^p ( L ( P_{\ell} T^m \Phi_{(\ell)} ) )^2 \, d\omega dv \leq C \frac{E_{aux-T , I_{\ell} \neq 0 , \ell , m}}{u^{3+2m-p-\delta}} , 
\end{equation}
for $p \in (0,3)$, any $\delta \in (0,1)$ small enough, $C \doteq C ( D, R,m,p,\delta,\ell)$, and
\begin{align*}
E_{aux-T, I_{\ell} \neq 0 , q , m} \doteq & \sum_{j=0}^m \sum_{k \leq \ell} \int_{\mathcal{N}_u} r^{2m+3-2j-\delta} ( L^{m+1-j} ( P_{\ell} T^k  \Phi_{(q)} ) )^2 \, d\omega dv \\ &+ \sum_{s=0}^{q-1}  \sum_{k \leq \ell}\int_{\mathcal{N}_u} r^{2m+2-2j} ( L^{m+1-j} ( P_{\ell} T^k \Phi_{(s)} ) )^2 \, d\omega dv \\ & + \sum_{l \leq q+m+\ell+1} \int_{\Sigma_{u_0}} J^N [ T^l P_{\ell} \psi ] \cdot n_{u_0} \, d\mu_{\Sigma_{u_0}} . 
\end{align*} 

For $k \in \{ 0 , \dots , \ell\}$, $m \in \mathbb{N}$, and for every $u \geq u_0$ we have that:
\begin{equation}
\int_{\mathcal{N}_u} r^p ( L ( P_{\ell} T^m \Phi_{(k)} ) )^2 \, d\omega dv \leq C \frac{E_{aux-T , I_{\ell} \neq 0 , \ell  , m}}{u^{3+2(\ell-k)+2m -p-\delta}} , 
\end{equation}
for $p \in (0,2)$, any $\delta \in (0,1)$ small enough, and $C \doteq C ( D, R,m,p,\delta,\ell)$.

We also have that for every $u \geq u_0$:
\begin{equation}
\int_{\Sigma_u} J^N [ P_{\ell} T^m \psi ] \cdot n_u \, d\mu_{\Sigma_u} \leq C \frac{E_{aux-T , I_{\ell} \neq 0 , \ell}}{u^{2\ell+2m+3-\delta}} , 
\end{equation}
for any $\delta \in (0,1)$ small enough, $C \doteq C ( D, R,m,\delta,\ell)$.

In the case that we additionally assume
$$ I_{\ell} [ \psi ] = 0 , $$
for every $u \geq u_0$ we have that:
\begin{equation}
\int_{\mathcal{N}_u} r^p ( L ( P_{\ell} T^m \Phi_{(\ell)} ) )^2 \, d\omega dv \leq C \frac{E_{aux-T , I_{\ell} = 0 , \ell , m}}{u^{5+2m-p-\delta}} , 
\end{equation}
for $p \in (0,5)$, any $\delta \in (0,1)$ small enough, $C \doteq C ( D, R,m,\delta,\ell)$, and
\begin{align*}
E_{aux-T, I_{\ell} = 0 , q  , m} \doteq & \sum_{j=0}^m \sum_{k \leq \ell} \int_{\mathcal{N}_u} r^{2m+5-2j-\delta} ( L^{m+1-j} ( P_{\ell} T^k \Phi_{(q)} ) )^2 \, d\omega dv \\ &+ \sum_{s=0}^{q-1} \sum_{k \leq \ell} \int_{\mathcal{N}_u} r^{2m+2-2j} ( L^{m+1-j} ( P_{\ell} T^k \Phi_{(s)} ) )^2 \, d\omega dv \\ & + \sum_{l \leq q+m+\ell+1} \int_{\Sigma_{u_0}} J^N [ T^l P_{\ell} \psi ] \cdot n_{u_0} \, d\mu_{\Sigma_{u_0}} . 
\end{align*}
for $m \in \mathbb{N}$, and for every $u \geq u_0$ we have that:
\begin{equation}
\int_{\mathcal{N}_u} r^p ( L ( P_{\ell} T^m \Phi_{(k)} ) )^2 \, d\omega dv \leq C \frac{E_{aux-T , I_{\ell} = 0 , \ell , k , m}}{u^{5+2(\ell-k)+2m -p-\delta}} , 
\end{equation}
for $p \in (0,2)$, any $\delta \in (0,1)$ small enough, $C \doteq C ( D, R)$, and we also have that for every $u \geq u_0$:
\begin{equation}
\int_{\Sigma_u} J^N [ P_{\ell} T^m \psi ] \cdot n_u \, d\mu_{\Sigma_u} \leq C \frac{E_{aux-T , I_{\ell} = 0 , \ell}}{u^{2\ell+2m+5-\delta}} , 
\end{equation}
for any $\delta \in (0,1)$ small enough, $C \doteq C ( D, R,m,\delta,\ell)$.
\end{proposition}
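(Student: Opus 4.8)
The plan is to run the Dafermos--Rodnianski pigeonhole scheme---the same mechanism already used to prove Lemmas \ref{lm:auxdecay}, \ref{lm:auxdecay1} and \ref{dec:rp1}---but now applied to the \emph{commuted} $r^p$-hierarchies of Propositions \ref{prop:fixedlrpdrestdrg}, \ref{prop:fixedlrpdrestdr} and \ref{prop:fixedlrpdrestdr0}, followed by the conversion of Proposition \ref{prop:fixedlrpdrestT}. First I would fix a dyadic sequence $u_j \sim 2^j$ and seed the argument at the top of the relevant hierarchy: Proposition \ref{prop:fixedlrpdrestdr} with $p$ approaching $2m+3$ when $I_{\ell}[\psi]\neq 0$, and Proposition \ref{prop:fixedlrpdrestdr0} with $p$ approaching $2m+5$ when $I_{\ell}[\psi]=0$. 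The spacetime-integral term on the left of these estimates is bounded by the right-hand side, which is finite by the assumed initial regularity together with the decay already recorded in Corollaries \ref{rem:n+2} and \ref{rem:enn+2}; hence by the mean value theorem there is $u_j' \in [u_j,u_{j+1}]$ at which the flux carrying weight $p-1$ gains a factor $u_j^{-1}$ relative to the initial data norm.

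Next I would iterate this extraction: one step of the pigeonhole exchanges weight $p$ for weight $p-1$ at the cost of one power of $u^{-1}$, so after descending through the hierarchy down to weight $0$ one obtains decay $u^{-(3+2m-\delta)}$ (respectively $u^{-(5+2m-\delta)}$ in the vanishing-charge case) for $\int_{\mathcal{N}_u}(L^{m+1}(P_{\ell}\Phi_{(\ell)}))^2$ along a dyadic subsequence, and the corresponding rates $u^{-(3+2(\ell-k)+2m-\delta)}$ for the lower radiation fields. At each stage the inhomogeneous terms in \eqref{eq:maineqNpquantfixedlv3}---the couplings $\sum_{k\le \ell-1}O(r^{\,\cdot})L^{m_2}(P_{\ell}\Phi_{(k)})$ and the lower-order-in-$L$ terms---would be absorbed using the decay already established for $\Phi_{(k)}$, $k<\ell$, via Proposition \ref{prop:fixedlrpdrestdrg} and Lemma \ref{dec:rp2}, together with Hardy's inequality \eqref{hardy1} to pass between consecutive radiation fields, exactly as in the proofs of Lemmas \ref{lm:auxdecay} and \ref{lm:auxdecay1}. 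Lemma \ref{ineq:interpol} then upgrades the decay from the dyadic subsequence to all $u\ge u_0$, at the price of an arbitrarily small loss $\delta$.

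To pass from $L^{m+1}(P_{\ell}\Phi_{(\ell)})$ to $L(P_{\ell}T^m\Phi_{(\ell)})$, I would invoke Proposition \ref{prop:fixedlrpdrestT} together with the identity $T = L + \underline{L}$ and equation \eqref{eq:maineqNpquantfixedlv3}, which allows one to trade $\underline{L}$-derivatives for $L$-derivatives modulo terms decaying strictly faster; the estimates for $\Phi_{(k)}$ with $k<\ell$ follow in the same fashion, seeding the pigeonhole instead from the hierarchy of Proposition \ref{prop:fixedlrpdrestdrg} with $p$ near $2m+2$. Since all of these fluxes are supported in $\{r\ge R\}$, the last step is to combine the $p=0$ flux decay with the red-shift estimate \eqref{redshift:as} and the Morawetz estimate \eqref{morawetz} to upgrade to the full $J^N$-energy flux through $\Sigma_u$, with the derivative loss localised away from the photon sphere---this is the same passage used to deduce Proposition \ref{prop:endecay} from the $\mathcal{N}_u$-fluxes.

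I expect the main obstacle to be the bookkeeping inside the descent step: one must check that the coupling to the lower radiation fields $\Phi_{(k)}$ with $k<\ell$, and the angular-derivative terms generated by commuting the equation with $L^m$, decay \emph{strictly} faster than the principal term at each stage of the iteration. This is precisely what forces the shape of the norms $E_{aux-T , I_{\ell} \neq 0 , q , m}$ and $E_{aux-T , I_{\ell} = 0 , q , m}$ and the lower endpoint $p>-4\ell+2m$ inherited from the hierarchies, and it is also where one must ensure that the $\delta$-losses do not accumulate uncontrollably over the $O(\ell+m)$ iterations (in practice one re-packages a single small $\delta$ at the top weight and propagates it downward unchanged).
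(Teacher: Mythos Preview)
Your proposal is correct and follows essentially the same route as the paper: the proof is obtained by running the standard dyadic pigeonhole argument on the commuted $r^p$-hierarchies of Propositions \ref{prop:fixedlrpdrestdrg}, \ref{prop:fixedlrpdrestdr}, \ref{prop:fixedlrpdrestdr0} and \ref{prop:fixedlrpdrestT}, with the passage to the full $J^N$-flux via \eqref{morawetz} and \eqref{redshift:as}, and the paper itself refers to Section~7 of \cite{paper1} for the details of this mechanism. Your identification of the bookkeeping for the lower-order couplings and the single-$\delta$ management as the only nontrivial points is accurate.
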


By writing $P_{\geq \ell} T^m \Phi_{(\ell )}$ as a sum of $P_{\ell} T^m \Phi_{(\ell )}$, $P_{\ell+1} T^m \Phi_{(\ell )}$ and $P_{\geq \ell+2} T^m \Phi_{(\ell )}$, and applying the previous results to the first two terms, and working similarly for the last term using the results of the previous section, we can obtain almost sharp energy decay estimates for $P_{\geq \ell} T^m \Phi_{(\ell)}$.
\begin{corollary}\label{rem:n+2t}
Fix $\ell\in \N$ and consider a smooth solution $\psi$ to \eqref{eq:waveequation} satisfying \eqref{basic:as}. Then we have that
\begin{equation}\label{dec:enn+2tn0}
\int_{\mathcal{N}_u} r^p ( L ( P_{\geq \ell} T^m \Phi_{(\ell)} ) )^2 \,d\omega dv \leq C \frac{E_{aux-decomp-n0-m}}{u^{3+2m-p-\delta}} ,
\end{equation}
for all $u\geq u_0$, for any $\delta > 0$, where $C = C(D,R)$, and where
\begin{align*}
E_{aux-decomp-n0-m} \doteq & \sum_{k \leq \ell} \Big( \int_{\mathcal{N}_{u_0}} r^{3-\delta} ( L ( P_{\ell} T^{m+k} \Phi_{(\ell)} ) )^2 \,d\omega dv + \int_{\mathcal{N}_{u_0}} r^2 ( L ( P_{\ell+1} T^{m+k} \Phi_{(\ell)} ) )^2 \,d\omega dv  \\ & + \int_{\mathcal{N}_{u_0}} r^2 ( L ( P_{\geq \ell+2} T^{m+k} \Phi_{(\ell)} ) )^2 \,d\omega dv \\ & +  \sum_{|\alpha| \leq m , j \leq m+\ell} \int_{\mathcal{N}_{u_0}} r^{2-\delta} | \slashed{\nabla}^{\alpha}_{\mathbb{S}^2} ( L ( P_{\geq \ell+2} T^j \Phi_{(\ell)} ) ) |^2 \,d\omega dv \Big) \\ & + \sum_{l \leq m+\ell + 1} \int_{\Sigma_{u_0}}J^N [ P_{\geq \ell } T^l \psi ] \cdot n_{u_0} \, d\mu_{\Sigma_{u_0}} .
\end{align*}

If we additionally assume that 
$$ I_{\ell} [\psi ] = 0 , $$
then we have that
\begin{equation}\label{dec:enn+2t0}
\int_{\mathcal{N}_u} r^p ( L ( P_{\geq \ell} T^m \Phi_{(\ell)} ) )^2 \,d\omega dv \leq C \frac{E_{aux-decomp-0-m}}{u^{5+2m-p-\delta}} ,
\end{equation}
for all $u\geq u_0$, for any $\delta > 0$, where $C = C(D,R)$, and where
\begin{align*}
E_{aux-decomp-0-m} \doteq &  \sum_{k \leq \ell} \Big(\int_{\mathcal{N}_{u_0}} r^{5-\delta} ( L ( P_{\ell} T^m \Phi_{(\ell)} ) )^2 \,d\omega dv + \int_{\mathcal{N}_{u_0}} r^2 ( L ( P_{\ell+1} T^m \Phi_{(\ell)} ) )^2 \,d\omega dv  \\ & + \int_{\mathcal{N}_{u_0}} r^2 ( L ( P_{\geq \ell+2} T^m \Phi_{(\ell)} ) )^2 \,d\omega dv\\ &+ \sum_{|\alpha| \leq m, j \leq m+\ell} \int_{\mathcal{N}_{u_0}} r^{2-\delta} | \slashed{\nabla}^{\alpha}_{\mathbb{S}^2} ( L ( P_{\geq \ell+2} T^j \Phi_{(\ell)} ) ) |^2 \,d\omega dv\Big) \\ &+ \sum_{l \leq m+\ell + 1} \int_{\Sigma_{u_0}}J^N [ P_{\geq \ell } T^l \psi ] \cdot n_{u_0} \, d\mu_{\Sigma_{u_0}} .
\end{align*}
\end{corollary}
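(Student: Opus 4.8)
\emph{Proof strategy.} The plan is to reduce Corollary \ref{rem:n+2t} entirely to the three decay statements that immediately precede it — Proposition \ref{dec:enT}, the fixed-frequency decay of $P_{\ell'}\Phi_{(k)}$ from Lemmas \ref{dec:rp1}--\ref{dec:rp2}, and the bound $\int_{\mathcal{N}_u}r^p(L(P_{\geq\ell+2}\Phi_{(\ell)}))^2\lesssim u^{-(6-p-\delta)}$ from the Lemma just above the Corollary — by decomposing $P_{\geq\ell}T^m\Phi_{(\ell)}$ according to angular frequency and then simply taking the worst of the resulting decay rates. No new analytic input is required: the $r^p$-hierarchies, the dyadic mean-value-theorem iteration, the interpolation lemma and Hardy's inequality are all already available.

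First I would use that $T$ commutes with $\Box_g$ and with the spherical projections $P_j$ to write
\[
P_{\geq\ell}T^m\Phi_{(\ell)} \;=\; P_{\ell}T^m\Phi_{(\ell)} \;+\; P_{\ell+1}T^m\Phi_{(\ell)} \;+\; P_{\geq\ell+2}T^m\Phi_{(\ell)},
\]
each summand being the $\Phi_{(\ell)}$-quantity of a wave supported on the indicated angular frequencies, and each orthogonal to the others pointwise in $L^2(\mathbb{S}^2)$, so that the weighted flux $\int_{\mathcal{N}_u}r^p(L(\,\cdot\,))^2\,d\omega dv$ of the sum splits into the sum of the three fluxes. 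For the frequency-$\ell$ piece I would invoke Proposition \ref{dec:enT} directly: its $I_\ell\neq0$ clause gives $u^{-(3+2m-p-\delta)}$ and its $I_\ell=0$ clause gives $u^{-(5+2m-p-\delta)}$, with initial norms $E_{aux-T,\,I_\ell\neq0,\,\ell,\,m}$ resp. $E_{aux-T,\,I_\ell=0,\,\ell,\,m}$ — these are the $P_\ell$-parts of $E_{aux-decomp-n0-m}$ resp. $E_{aux-decomp-0-m}$. For the frequency-$(\ell+1)$ piece I would apply the \emph{generic} ($I\neq0$) clause of Proposition \ref{dec:enT} with $\ell$ replaced by $\ell+1$, so that $\Phi_{(\ell)}$ plays the role of $\Phi_{(k)}$ with $k=\ell<\ell+1$; the frequency gap $(\ell+1)-\ell=1$ automatically supplies two extra powers of $u^{-1}$, giving $u^{-(5+2m-p-\delta)}$ from the $r^2$-weighted datum $\int r^2(L(P_{\ell+1}T^{m+k}\Phi_{(\ell)}))^2$. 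Thus this piece needs no assumption on $I_{\ell+1}[\psi]$ and is never the bottleneck. Finally, for the frequency-$\geq\ell+2$ piece I would run the $T^m$-commutation argument on the Lemma above the Corollary, commuting \eqref{est:drt} of Proposition \ref{prop:fixedlrpdrestdrg} with $T^m$ and feeding in the fixed-frequency decay of Lemmas \ref{dec:rp1}--\ref{dec:rp2} for the frequency-$\ell$, $(\ell+1)$ and $(\ell+2)$ components; this produces $u^{-(6+2m-p-\delta)}$, comfortably faster than claimed. Since $P_{\geq\ell+2}\psi$ is not localized at a single frequency, the sharper hierarchies of Propositions \ref{prop:fixedlrpdrestdr}--\ref{prop:fixedlrpdrestdr0} are unavailable and one must use \eqref{est:drt}, whose $T$-commuted form carries angular derivatives on the flux side — this is exactly where the $\sum_{|\alpha|\leq m,\,j\leq m+\ell}\int_{\mathcal{N}_{u_0}}r^{2-\delta}|\slashed{\nabla}^\alpha_{\mathbb{S}^2}(L(P_{\geq\ell+2}T^j\Phi_{(\ell)}))|^2$ terms of $E_{aux-decomp-\cdot-m}$ originate.

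Collecting the three bounds, the slowest rate is $u^{-(3+2m-p-\delta)}$ (resp. $u^{-(5+2m-p-\delta)}$), and the corresponding initial norm is the sum of the three data norms produced above together with the near-horizon flux $\sum_{l\leq m+\ell+1}\int_{\Sigma_{u_0}}J^N[P_{\geq\ell}T^l\psi]\cdot n_{u_0}\,d\mu_{\Sigma_{u_0}}$, which absorbs every $\Sigma_{u_0}$-contribution arising in the three cited statements. This is precisely $E_{aux-decomp-n0-m}$ (resp. $E_{aux-decomp-0-m}$), proving \eqref{dec:enn+2tn0} and \eqref{dec:enn+2t0}.

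The hard part will not be any estimate but the bookkeeping for the $P_{\geq\ell+2}$ term: one must check that commuting \eqref{est:drt} with $T^m$ and closing the iteration via the already-proven fixed-frequency decay of its frequency-$\ell$, $(\ell+1)$ and $(\ell+2)$ components yields exactly the advertised data set — in particular that $|\alpha|\leq m$ angular derivatives together with $j\leq m+\ell$ time derivatives of $P_{\geq\ell+2}T^j\Phi_{(\ell)}$ on the initial slice suffice, and that no loss worse than $u^{-(6+2m-p-\delta)}$ is incurred, so that this piece stays strictly below the claimed rate. Once this is verified, the corollary follows by taking the worst of the three exponents and summing the three energies.
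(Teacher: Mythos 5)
Your proposal is correct and follows essentially the same route as the paper: the paper's own argument is precisely the angular decomposition $P_{\geq\ell}=P_{\ell}+P_{\ell+1}+P_{\geq\ell+2}$, with Proposition \ref{dec:enT} applied to the first two pieces (the $\ell+1$ piece gaining two powers of decay from the frequency gap) and the $T^m$-commuted $r^p$-hierarchy of Proposition \ref{prop:fixedlrpdrestdrg} applied to the $P_{\geq\ell+2}$ piece, which is indeed the source of the angular-derivative terms in the data norm. Your bookkeeping of the rates and of the origin of each term in $E_{aux-decomp-n0-m}$ matches the paper's.
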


We note that the angular derivatives present in the norms of the aforementioned Corollary appear due to the use of equation \eqref{eq:maincommeq} in Proposition \ref{prop:fixedlrpdrestT}.

\section{Novel hierarchies of elliptic estimates}\label{elliptic}

In this section we will present bound $\partial_r$ derivatives of a linear wave supported on angular frequencies $\geq \ell$ for some $\ell \geq 1$, by $T$ derivatives, via $r^{-k}$-weighted energy estimates, where the range of $k$ depends on $\ell$. These estimates will allow us to obtain \emph{almost sharp upper bounds on fixed $r$ hypersurfaces} (something that will be done in the following sections). 

Our estimates have the form of a weighted hierarchy, where the weights depend on the lowest angular frequencies on which our linear wave is supported. Note that here we make the choice $\frac{2}{D} - h = \frac{c}{r^{1+\eta}} + O (r^{-2} )$ for some $c \neq 0$ and $\eta > 0$ (which implies that we work on the $\mathcal{S}$ hypersurfaces as noted before), for $\partial_{\rho} = \partial_r + h T$.

\begin{theorem}\label{thm:el}
Assume that
\begin{align*}
\sum_{k\leq 1}\lim_{\rho\to \infty}r T^k\psi<&\infty,\\
\lim_{\rho\to \infty}r^2\partial_r\psi=&0.
\end{align*}
Let $\psi$ be a solution to \eqref{eq:waveequation} and let $\ell\geq 1$. Then we can estimate
\begin{equation}
\label{eq:ellipticpsiell}
\begin{split}
\int_{r_{+}}^{\infty}&\int_{\s^2} \left[ r^{-2-k}  (\partial_{\rho}(Dr^2\partial_{\rho}\psi_{\geq \ell} ))^2+Dr^{-k}|\snabla_{\s^2} \partial_{\rho}\psi_{\geq \ell} |^2+r^{-2-k}(\slashed{\Delta}_{\s^2}\psi_{\geq \ell} )^2 \right] \,d\omega d\rho\\
\leq&\: C(D,\ell,k)\int_{r_{+}}^{\infty}\int_{\s^2} \left[ r^{2-k} (\partial_{\rho} T \psi_{\geq \ell} )^2+r^{-k-2\eta}(T^2\psi_{\geq \ell} )^2+r^{-k}(T\psi_{\geq \ell} )^2 \right] \,d\omega d\rho,
\end{split}
\end{equation}
for all 
\begin{equation*}
-3< k<2\ell-1,
\end{equation*}
and any $\eta > 0$.
\end{theorem}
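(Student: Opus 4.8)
The plan is to reduce the desired estimate to an elliptic identity on the hyperboloidal slices $\mathcal{S}_\tau$ by viewing the wave equation, after projecting onto angular frequencies $\geq \ell$, as a second-order ODE in $\rho$ (with $\slashed{\Delta}_{\mathbb{S}^2}$ and $T$-derivative terms on the right-hand side), and then running a weighted integration-by-parts hierarchy in $r$. First I would rewrite the wave equation $\Box_g\psi_{\geq\ell}=0$ in the $\partial_\rho$ frame along $\mathcal{S}_\tau$; using $\partial_\rho = \partial_r + hT$ with the prescribed choice $\frac{2}{D}-h = cr^{-1-\eta}+O(r^{-2})$, the equation takes the schematic form
\begin{equation*}
\partial_\rho\big((Dr^2)\partial_\rho^2\psi_{\geq\ell}\big) + (Dr^2)'\partial_\rho^2\psi_{\geq\ell} + 2\partial_\rho\psi_{\geq\ell} + r^{-2}\slashed{\Delta}_{\mathbb{S}^2}(\cdots) = \partial_\rho \bar{F}_T,
\end{equation*}
where $\bar F_T$ collects the $T$-derivative terms (as already indicated in Section \ref{sec:SchematicFormulationOfTheorem74}). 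The crucial structural point, which I would isolate at the outset, is that the left-hand side — after one more $\partial_\rho$ integration — is a \emph{coercive elliptic operator} acting on $\partial_\rho\psi_{\geq\ell}$ together with $\slashed\Delta_{\mathbb S^2}\psi_{\geq\ell}$; this is exactly what fails in the Kerr setting (cf.\ Section 7 of \cite{aagkerr}) and what makes the argument go through here.

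Next I would multiply the equation by an appropriate weight, namely $r^{-k}\partial_\rho(Dr^2\partial_\rho\psi_{\geq\ell})$ (or a closely related combination), and integrate over $\mathcal{S}_\tau = \{r \geq r_+\}\times\mathbb{S}^2$. Integration by parts in $\rho$ produces the three good terms on the left-hand side of \eqref{eq:ellipticpsiell}: the $r^{-2-k}(\partial_\rho(Dr^2\partial_\rho\psi_{\geq\ell}))^2$ term, the angular term $Dr^{-k}|\snabla_{\mathbb{S}^2}\partial_\rho\psi_{\geq\ell}|^2$, and $r^{-2-k}(\slashed\Delta_{\mathbb S^2}\psi_{\geq\ell})^2$. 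The coefficients that multiply these terms after integration by parts are polynomial in $r$ and in $k$; one checks that the relevant coefficient is \emph{strictly positive} precisely when $-3<k$ on one end (this is where the lower restriction comes from, controlling the near-horizon / finite-$r$ contribution together with a Hardy inequality) and $k<2\ell-1$ on the other end (this uses the Poincaré inequality $\int_{\mathbb{S}^2}|\snabla_{\mathbb{S}^2}f|^2 \geq \ell(\ell+1)\int_{\mathbb{S}^2}f^2$ for $f$ supported on frequencies $\geq\ell$, which lets one absorb a negative multiple of $\int r^{-2-k}(\partial_\rho\psi_{\geq\ell})^2$ into the angular term, and the gap $\ell(\ell+1) - $ (threshold) stays positive exactly up to $k = 2\ell-1$). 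The $T$-derivative terms generated on the right, together with $\bar F_T$, are handled by a weighted Cauchy--Schwarz, yielding the right-hand side of \eqref{eq:ellipticpsiell}, with the $r^{-k-2\eta}(T^2\psi_{\geq\ell})^2$ contribution arising from the $O(r^{-1-\eta})$ discrepancy $\frac{2}{D}-h$ in the definition of $\partial_\rho$.

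The boundary terms at $r=r_+$ and at $r\to\infty$ must be shown to vanish or to be controlled: at infinity this is where the hypotheses $\lim_{\rho\to\infty} rT^k\psi < \infty$ for $k\leq 1$ and $\lim_{\rho\to\infty} r^2\partial_r\psi = 0$ enter, ensuring that the weighted boundary fluxes $r^{-k}\cdot(\text{boundary data})$ decay in the admissible range of $k$; at the horizon the boundary term has a favorable sign (or vanishes because of the degeneracy of $D$) and can be discarded. The main obstacle I anticipate is bookkeeping the exact coefficients after the double integration by parts — one must track the competition between the term produced by differentiating the weight $r^{-k}$, the term $2\partial_\rho\psi_{\geq\ell}$ in the equation, and the lower-order $O(r^{-3})$ and $O(r^{-1-\eta})$ corrections, and verify that the Poincaré inequality closes the estimate uniformly in $r$ precisely on the open interval $-3<k<2\ell-1$ and nowhere beyond; getting the endpoints sharp (rather than merely obtaining \emph{some} range) is the delicate part. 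Once the uncommuted estimate is in hand, the general commuted version \eqref{commutellipticintro} follows by commuting with $\tilde\partial_\rho = r\partial_\rho$ and iterating, but that is deferred to the commuted statement and is not needed here.
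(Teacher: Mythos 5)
Your proposal follows essentially the same route as the paper: one writes the wave equation on the hyperboloidal slices as the elliptic identity $\partial_{\rho}(Dr^2\partial_{\rho}\psi)+\slashed{\Delta}_{\s^2}\psi=r^2F_T$, takes a weighted $L^2$ estimate, integrates the cross term by parts to produce the good angular term $Dr^{-k}|\snabla_{\s^2}\partial_{\rho}\psi|^2$, and absorbs the leftover $r^{-4-k}(Dr^2\partial_{\rho}\psi)^2$ term via the Poincar\'e inequality on frequencies $\geq\ell$, which yields the constraint $(1+k)(3+k)<4\ell(\ell+1)$ and hence the upper endpoint $k<2\ell-1$. One correction to your bookkeeping: the lower endpoint $k>-3$ does \emph{not} come from the near-horizon region (the horizon boundary terms vanish automatically since $D(r_+)=0$); it comes from requiring the boundary terms at $\rho\to\infty$ (e.g.\ $r^{-3-k}(Dr^2\partial_{\rho}\psi)^2$) to vanish under the assumed decay $\lim_{\rho\to\infty}r^2\partial_r\psi=0$, the Poincar\'e step alone allowing the weaker $k>-3-2\ell$. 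Also, the displayed schematic equation in your first paragraph is the once-$\partial_{\rho}$-commuted identity used for the higher-order estimates; the uncommuted theorem starts directly from $\partial_{\rho}(Dr^2\partial_{\rho}\psi)+\slashed{\Delta}_{\s^2}\psi=r^2F_T$.
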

\begin{proof}
Recall that in $(v,r,\theta , \varphi)$ ingoing Eddington--Finkelstein coordinates, and with $\partial_{\rho} = \partial_r + h T$, we have that the wave equation \eqref{eq:waveequation} becomes:
\begin{equation}
\label{eq:inhomelliptic}
\partial_{\rho}(Dr^2 \partial_{\rho} \psi)+\slashed{\Delta}_{\s^2}\psi= r^2F_T,
\end{equation}
where
\begin{equation*}
F_T:=(2+O(r^{-1-\eta}) ) \partial_{\rho}T\psi+O(r^{-1-\eta})T^2\psi+(2r^{-1}+O(r^{-2-\eta}))T\psi.
\end{equation*}
Therefore, by squaring and integrating both sides of \eqref{eq:inhomelliptic} and multiplying by $r^{-2-k}$, we arrive at
\begin{equation}
\label{eq:waveeqestlrplus}
\begin{split}
\int_{r_{+}}^{\infty}\int_{\s^2} &r^{-2-k}(\partial_{\rho}(Dr^2\partial_{\rho}\psi))^2r^2+r^{-2-k}(\slashed{\Delta}_{\s^2}\psi)^2+2r^{-2-k}\partial_{\rho}(Dr^2\partial_{\rho}\psi)\slashed{\Delta}_{\s^2}\psi\,d\omega d\rho\\
\leq&\: C\int_{r_{+}}^{\infty}\int_{\s^2}r^{2-k}F_T^2\,d\omega d\rho.
\end{split}
\end{equation}

We first consider the mixed derivative term on the left-hand side of \eqref{eq:waveeqestlrplus}. We integrate over $\s^2$ and integrate by parts in $\rho$ and the angular variables.
\begin{equation}
\begin{split}
\label{eq:ibpangularlrplus}
\int_{r_{+}}^{\infty} \int_{\s^2}2r^{-2-k}\partial_{\rho}(Dr^2\partial_{\rho}\psi)\slashed{\Delta}_{\s^2}\psi\,d\omega d\rho=&\:\int_{r_{+}}^{\infty} \int_{\s^2}2(2+k)r^{-1-k}D\partial_{\rho}\psi\slashed{\Delta}_{\s^2}\psi\\
&-2Dr^{-k}\partial_{\rho}\psi\slashed{\Delta}_{\s^2}\partial_{\rho}\psi\,d\omega d\rho\\
=&\:\int_{r_{+}}^{\infty} \int_{\s^2}2(2+k)r^{-1-k}D\partial_{\rho}\psi\slashed{\Delta}_{\s^2}\psi\\
&+2Dr^{-k}|\snabla_{\s^2}\partial_{\rho}\psi|^2\,d\omega d\rho,
\end{split}
\end{equation}
where we used that all resulting boundary terms vanish by $D(r_+)=0$ and the asymptotics on $\psi$ as $\rho\to \infty$ in the assumptions. 

We apply \eqref{eq:inhomelliptic} again to estimate
\begin{equation*}
\begin{split}
2(2+k)r^{-1-k}D\partial_{\rho}\psi\slashed{\Delta}_{\s^2}\psi=&-2(2+k)r^{-1-k}D\partial_{\rho}\psi\partial_{\rho}(Dr^2\partial_{\rho}\psi)\\
&+2(2+k)r^{-1-k}D\partial_{\rho}\psi\cdot r^2 F_T\\
=&-(2+k)r^{-3-k}\partial_{\rho}((Dr^2\partial_{\rho}\psi)^2)+2(2+k)r^{-1-k}D\partial_{\rho}\psi\cdot r^2F_T\\
\geq &-(2+k)r^{-3-k}\partial_{\rho}((Dr^2\partial_{\rho}\psi)^2)-\epsilon_1 (2+k)r^{-k}D^2(\partial_{\rho}\psi)^2\\
&-(2+k)\epsilon_1^{-1}r^{-2-k}r^4F_T^2,
\end{split}
\end{equation*}
where we will take $\epsilon_1>0$ sufficiently small later. 

We can further integrate by parts to estimate
\begin{equation*}
-\int_{r_{+}}^{\infty}(2+k)r^{-3-k}\partial_{\rho}((Dr^2\partial_{\rho}\psi)^2)\,d\rho=-\int_{r_{+}}^{\infty}(2+k)(3+k)r^{-4-k}(Dr^2\partial_{\rho}\psi)^2\,d\rho,
\end{equation*}
where we used that the boundary term vanishes at $r=r_+$ and moreover $\lim_{\rho \to\infty}r^{1-k}D(\partial_{\rho}\psi)^2=0$ if $k>-3$.
We furthermore apply \eqref{eq:inhomelliptic} once more estimate to express the $(\slashed{\Delta}_{\s^2}\psi)^2$ term on the left-hand side of \eqref{eq:waveeqestlrplus} in terms of $(\partial_{\rho}((Dr^2\partial_{\rho}\psi))^2$:
\begin{equation*}
r^{-2-k}(\slashed{\Delta}_{\s^2}\psi)^2\geq (1-\epsilon_2)r^{-2-k}(\partial_{\rho}((Dr^2\partial_{\rho}\psi))^2+\left(1-\frac{1}{\epsilon_2}\right)r^{-k}r^2F_T^2,
\end{equation*}
where we will take $\epsilon_2>0$ suitably small later.

Putting all the above estimates together and filling them into the inequality \eqref{eq:waveeqestlrplus}, we obtain
\begin{equation}
\label{eq:waveeqestlrplus2}
\begin{split}
\int_{r_{+}}^{\infty}&\int_{\s^2} (2-\epsilon_2)r^{-2-k}(\partial_{\rho}(Dr^2\partial_r\psi))^2+2Dr^{-k}|\snabla_{\s^2}\partial_r\psi|^2\\
&-(2+k)(3+k+\epsilon_1)r^{-4-k}(Dr^2\partial_{\rho}\psi)^2\,d{\omega}d\rho\\
\leq&\: C\int_{r_{+}}^{\infty}\int_{\s^2}r^{2-k}F_T^2\,d{\omega}d\rho
\end{split}
\end{equation}
We apply a Hardy inequality to absorb part of the third term on the left-hand side into the first term
\begin{equation*}
\begin{split}
\int_{r_{+}}^{\infty}&\frac{1}{2}(1-\epsilon_3)(3+k+\epsilon_1)^2r^{-4-k}(Dr^2\partial_{\rho}\psi)^2\,d\rho\\
\leq&\: 2(1-\epsilon_3)\left(\frac{3+k+\epsilon_1}{3+k}\right)^2\int_{r_{+}}^{\infty}r^{-2-k}(\partial_{\rho}(Dr^2\partial_{\rho}\psi))^2\,d\rho,
\end{split}
\end{equation*}
where we used that $\lim_{\rho\to \infty}r^{-3-k}(Dr^2\partial_{\rho}\psi)^2=0$ if $k>-3$ by assumption, and we take $\epsilon_i>0$ suitably small.

We are left with absorbing the integral of the term:
\begin{equation*}
\left[\left(2+k-\frac{1}{2}(1-\epsilon_3)(3+k+\epsilon_1)\right)(3+k+\epsilon_1)\right]r^{-4-k}(Dr^2\partial_{\rho}\psi)^2
\end{equation*}
into the left-hand side of \eqref{eq:waveeqestlrplus2}. At this point we localize in angular frequencies $\geq \ell$ and we therefore need
\begin{equation*}
\begin{split}
\int_{r_{+}}^{\infty}&\int_{\s^2}\left[\left(2+k-\frac{1}{2}(1-\epsilon_3)(3+k+\epsilon_1)\right)(3+k+\epsilon_1)\right]r^{-k}D^2(\partial_{\rho}\psi_{\geq \ell} )^2\,d\omega d\rho\\
\leq &\:2\ell(\ell+1)\int_{r_{+}}^{\infty}\int_{\s^2}r^{-k}D^2(\partial_{\rho}\psi_{\geq \ell} )^2\,d\omega d\rho\\
\leq&\:2\int_{r_{+}}^{\infty}\int_{\s^2}r^{-k}D^2|\snabla_{\s^2}\partial_{\rho}\psi_{\geq \ell} |^2\,d\omega d\rho,
\end{split}
\end{equation*}
where we applied a Poincar\'e inequality to $\psi_{\geq \ell}$ in the final inequality. 

We can absorb the very right-hand side of the above equation into the $|\snabla_{\s^2}\partial_r\psi_{\geq \ell} |^2$ term on the left-hand side of \eqref{eq:waveeqestlrplus2}, provided used that $|D|\leq 1$ and $k$ satisfies the inequality:
\begin{equation*}
(4+2k-(1-\epsilon_3)(3+k+\epsilon_1))(3+k+\epsilon_1)\leq 4\ell(\ell+1),
\end{equation*}
which we can rewrite as
\begin{equation}
\label{eq:mainineqk}
(1+k+\epsilon_3k+3\epsilon_3-\epsilon_1+\epsilon_1\epsilon_3)(3+k+\epsilon_1)\leq 4\ell(\ell+1).
\end{equation}

Equivalently, whenever
\begin{equation}
\label{eq:altineqk}
(1+k)(3+k)< 4\ell(\ell+1).
\end{equation}
we can find $\epsilon_1,\epsilon_3$ suitably small such that \eqref{eq:mainineqk} is satisfied. The inequality \eqref{eq:altineqk} holds when $-3-2\ell<k<2\ell-1$. Then we use the above estimates to obtain:
\begin{equation}
\label{eq:mainineqellipticell}
\begin{split}
\int_{r_{\rm min}}^{\infty}&\int_{\s^2} r^{-2-k}(\partial_{\rho}(Dr^2\partial_{\rho}\psi))^2+Dr^{-k}|\snabla_{\s^2}\partial_{\rho}\psi|^2\,d\rho\\
\leq&\: C(\ell,k)\int_{r_{\rm min}}^{\infty}r^{2-k}F_T^2\,d{\omega}d\rho.
\end{split}
\end{equation}

Hence, by applying Young's inequality to the terms in $F_T^2$, we are left with:
\begin{equation*}
\begin{split}
\int_{r_{\rm min}}^{\infty}&\int_{\s^2} r^{-2-k}(\partial_{\rho}(Dr^2\partial_{\rho}\psi_{\geq \ell} ))^2+Dr^{-k}|\snabla_{\s^2}\partial_{\rho}\psi_{\geq \ell} |^2+r^{-2-k}(\slashed{\Delta}_{\s^2}\psi_{\geq \ell} )^2\,d{\omega}d\rho\\
\leq&\: C_{k,\ell}\int_{r_{\rm min}}^{\infty}\int_{\s^2}r^{2-k}(\partial_{\rho} T \psi_{\geq \ell} )^2+r^{-k}(T\psi_{\geq \ell} )^2+r^{-k-2\eta}(T^2\psi_{\geq \ell})^2\,d{\omega}d\rho.
\end{split}
\end{equation*}
\end{proof}
We demonstrate now a localized version of the previous result, which can be seen as a ``spacelike" redshift estimate.

\begin{proposition}\label{cor:el}
Assume that
\begin{align*}
\sum_{k\leq 1}\lim_{\rho\to \infty}r T^k\psi<&\infty,\\
\lim_{\rho\to \infty}r^2\partial_r\psi=&0.
\end{align*}
Let $\psi$ be a solution to \eqref{eq:waveequation} and let $\ell\geq 1$. Then we have that
\begin{align*}
\int_{r_+}^{R } \int_{\mathbb{S}^2}  ( \partial_{\rho} \psi_{\geq \ell} )^2 \, d\omega d\rho + & \left. \int_{\mathbb{S}^2} | \snabla_{\mathbb{S}^2} \psi_{\geq \ell} |^2  \, d\omega \right|_{\rho = r_+} \\ \leq & C \int_{r_{+}}^{\infty}\int_{\s^2}\left[ r^{2-k} (\partial_{\rho} T \psi_{\geq \ell} )^2+r^{-k-2\eta}(T^2\psi_{\geq \ell} )^2+r^{-k}(T\psi_{\geq \ell} )^2 \right] \,d\omega d\rho,
\end{align*}
for any $-3 < k < 2\ell-1$, any $\eta > 0$, and where $C = C ( D , \ell , k)$.
\end{proposition}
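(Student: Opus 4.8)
The plan is to glue two estimates together: Theorem~\ref{thm:el} already controls all the relevant quantities with weights that are comparable to constants \emph{away} from the horizon, so the only genuinely new input required is a localized redshift-type estimate in a neighbourhood of $\{r=r_+\}$, which I would obtain by testing the reduced wave equation \eqref{eq:inhomelliptic} against $\zeta\,\partial_\rho\psi_{\geq\ell}$ for a suitable cutoff $\zeta$.

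First I would dispose of the far-from-horizon part. Since $\psi_{\geq\ell}=P_{\geq\ell}\psi$ again solves \eqref{eq:waveequation}, it satisfies \eqref{eq:inhomelliptic}, and Theorem~\ref{thm:el} bounds $\int_{r_+}^\infty\int_{\s^2}Dr^{-k}|\snabla_{\s^2}\partial_\rho\psi_{\geq\ell}|^2$ and $\int_{r_+}^\infty\int_{\s^2}r^{-2-k}(\slashed{\Delta}_{\s^2}\psi_{\geq\ell})^2$ by the right-hand side of the Proposition. Fixing any $r_1\in(r_+,R)$, the weights $Dr^{-k}$ and $r^{-2-k}$ are bounded above and below by positive constants on $[r_1,R]$, and the Poincar\'e inequality on angular frequencies $\geq\ell$ gives $\ell(\ell+1)\int_{\s^2}(\partial_\rho\psi_{\geq\ell})^2\le\int_{\s^2}|\snabla_{\s^2}\partial_\rho\psi_{\geq\ell}|^2$ (note $\partial_\rho$ commutes with $P_{\geq\ell}$); hence $\int_{r_1}^R\int_{\s^2}(\partial_\rho\psi_{\geq\ell})^2\lesssim$ RHS already. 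It therefore remains to bound $\int_{r_+}^{r_1}\int_{\s^2}(\partial_\rho\psi_{\geq\ell})^2$ together with the horizon trace $\left.\int_{\s^2}|\snabla_{\s^2}\psi_{\geq\ell}|^2\,d\omega\right|_{\rho=r_+}$.

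For this I would pick $\zeta=\zeta(r)$ with $\zeta\equiv1$ on $[r_+,r_1]$, $\zeta\equiv0$ on $[r_2,\infty)$ with $r_1<r_2<R$, and $\zeta'\le0$, multiply \eqref{eq:inhomelliptic} (applied to $\psi_{\geq\ell}$) by $\zeta\,\partial_\rho\psi_{\geq\ell}$, integrate over $\s^2\times[r_+,\infty)$, and integrate by parts in $\rho$ and on $\s^2$. One integration by parts turns the first-order term into the coercive bulk term $\tfrac12\int_{r_+}^\infty\int_{\s^2}\zeta\,(Dr^2)'(\partial_\rho\psi_{\geq\ell})^2$ plus the non-negative piece $\tfrac12\int_{r_+}^\infty\int_{\s^2}|\zeta'|\,Dr^2(\partial_\rho\psi_{\geq\ell})^2$, with the boundary contribution at $r_+$ vanishing because $Dr^2|_{r_+}=0$ and the one at infinity vanishing since $\zeta$ is compactly supported; crucially $(Dr^2)'=2r-2M\ge 2(r_+-M)>0$ on $[r_+,\infty)$ by subextremality ($|e|<M$ forces $r_+>M$), so the bulk term dominates $\int_{r_+}^{r_1}\int_{\s^2}(\partial_\rho\psi_{\geq\ell})^2$. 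Handling the angular term by integrating by parts first on $\s^2$ and then in $\rho$ produces $\tfrac12\left.\int_{\s^2}|\snabla_{\s^2}\psi_{\geq\ell}|^2\,d\omega\right|_{\rho=r_+}$ with the favourable sign (using $\zeta(r_+)=1$), together with $\tfrac12\int\zeta'|\snabla_{\s^2}\psi_{\geq\ell}|^2$ which, being non-positive, I move to the right as $\tfrac12\int|\zeta'|\,|\snabla_{\s^2}\psi_{\geq\ell}|^2$. The two error terms on the right are then controlled as follows: the $|\zeta'|$-term is supported in $[r_1,r_2]$ off the horizon, so it is $\le C\int_{r_1}^{r_2}\int_{\s^2}r^{-2-k}(\slashed{\Delta}_{\s^2}\psi_{\geq\ell})^2\lesssim$ RHS by Theorem~\ref{thm:el} and the frequency-$\geq\ell$ Poincar\'e inequality; and for $\int_{r_+}^\infty\int_{\s^2}\zeta\,r^2F_T\,\partial_\rho\psi_{\geq\ell}$ (with $F_T$ the inhomogeneity of \eqref{eq:inhomelliptic}, which contains only $T$-derivatives of $\psi_{\geq\ell}$) I apply Young's inequality, absorbing the $\epsilon(\partial_\rho\psi_{\geq\ell})^2$ part into the coercive bulk term for $\epsilon<r_+-M$, while the remaining $C_\epsilon\int\zeta\,r^4F_T^2$, supported on $[r_+,r_2]$ where all weights are $\asymp1$, is bounded by $C\int_{r_+}^\infty\int_{\s^2}[r^{2-k}(\partial_\rho T\psi_{\geq\ell})^2+r^{-k-2\eta}(T^2\psi_{\geq\ell})^2+r^{-k}(T\psi_{\geq\ell})^2]$, i.e.\ exactly the claimed right-hand side. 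Combining with the $[r_1,R]$ contribution from the previous step completes the argument.

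The only non-routine ingredient is the strict positivity of the redshift coefficient $(Dr^2)'=2r-2M$ at the horizon, which holds precisely because the spacetime is subextremal ($r_+>M$); this is what makes the first-order term coercive in $(\partial_\rho\psi_{\geq\ell})^2$ even though $D$ itself degenerates at $r_+$, and it would fail in the extremal case. Everything else is bookkeeping — checking that each boundary term either vanishes ($Dr^2|_{r_+}=0$, $\zeta$ compactly supported) or carries the right sign, and that the cutoff error terms live in a compact $r$-region off the horizon where Theorem~\ref{thm:el} applies with weights comparable to $1$. Finally, the hypotheses $\lim_{\rho\to\infty}rT^k\psi<\infty$ ($k\le1$) and $\lim_{\rho\to\infty}r^2\partial_r\psi=0$ enter only through the invocation of Theorem~\ref{thm:el}.
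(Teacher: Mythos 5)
Your proposal is correct and is essentially the paper's own argument: multiply \eqref{eq:inhomelliptic} by $\partial_{\rho}\psi_{\geq\ell}$, integrate by parts exploiting $D(r_+)=0$ and the subextremal coercivity $(Dr^2)'=2r-2M\geq 2(r_+-M)>0$, absorb the $F_T$ contribution by Young's inequality, and control the remaining outer contribution via Theorem \ref{thm:el} together with Poincar\'e. The only cosmetic difference is that you use a smooth cutoff (producing $\zeta'$ bulk errors) where the paper integrates sharply over $[r_+,R]$ and removes the resulting boundary term at $\rho=R$ by an averaging argument; both devices lead to the same estimate.
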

\begin{proof}
We multiply equation \eqref{eq:inhomelliptic} with $\partial_{\rho} \psi$ and after integrating it in $\{ r_+ \leq r \leq R \}$ we get that:
\begin{align*}
\int_{r_+}^R & \int_{\mathbb{S}^2} \left[ ( \partial_{\rho} \psi_{\geq \ell} ) \left( \partial_{\rho} ( Dr^2 \partial_{\rho} \psi_{\geq \ell} ) \right) + ( \partial_{\rho} \psi_{\geq \ell} ) ( \slashed{\Delta}_{\mathbb{S}^2} \psi_{\geq \ell} ) \right] d\omega d\rho \\ = & \int_{r_+}^R  \int_{\mathbb{S}^2} (Dr^2 )' ( \partial_{\rho} \psi_{\geq \ell} )^2 \, d\omega d\rho + \frac{1}{2} \left. \int_{\mathbb{S}^2} ( D(R) R^2 ) ( \partial_{\rho} \psi_{\geq \ell} )^2 \, d\omega \right|_{\rho = R} \\ & + \frac{1}{2} \left. \int_{\mathbb{S}^2} | \snabla_{\mathbb{S}^2} \psi_{\geq \ell} |^2 \, d\omega \right|_{\rho = r_+} -  \frac{1}{2} \left. \int_{\mathbb{S}^2} | \snabla_{\mathbb{S}^2} \psi_{\geq \ell} |^2 \, d\omega \right|_{\rho = R} ,
\end{align*}
which implies that
\begin{align*}
\int_{r_+}^R & \int_{\mathbb{S}^2} (Dr^2 )' ( \partial_{\rho} \psi_{\geq \ell} )^2 \, d\omega d\rho +  \frac{1}{2} \left. \int_{\mathbb{S}^2} ( D(R) R^2 ) ( \partial_{\rho} \psi_{\geq \ell} )^2 \right|_{\rho = R} \\ & + \frac{1}{2} \left. \int_{\mathbb{S}^2} | \snabla_{\mathbb{S}^2} \psi_{\geq \ell} |^2 \right|_{\rho = r_+} \lesssim  \frac{1}{\epsilon} \int_{r_+}^R  \int_{\mathbb{S}^2}  |F_T |^2 +\epsilon \int_{r_+}^R  \int_{\mathbb{S}^2} ( \partial_{\rho} \psi_{\geq \ell} )^2+ \frac{1}{2} \left. \int_{\mathbb{S}^2} | \snabla_{\mathbb{S}^2} \psi_{\geq \ell} |^2 \right|_{\rho = R} ,
\end{align*}
and by absorbing the second term of the right hand side in the left hand side, and by using estimate \eqref{eq:ellipticpsiell} of Theorem \ref{thm:el} for the last term (which we can use for any $-3 < k < 2\ell -1$ as we work in a compact region) after applying an averaging argument and Poincar\'{e}'s inequality \eqref{poincare3}, we get the desired result. 
\end{proof}

A higher order version of the previous proposition is the following.

\begin{corollary}\label{cor:elb}
Assume that
\begin{align*}
\sum_{k\leq 1}\lim_{\rho\to \infty}r T^k\psi<&\infty,\\
\lim_{\rho\to \infty}r^2\partial_r\psi=&0.
\end{align*}
Let $\psi$ be a solution to \eqref{eq:waveequation} and let $\ell\geq 1$. Then we have that
\begin{align*}
\int_{r_+}^R & \int_{\mathbb{S}^2} \Big[ ( \partial_{\rho}^2 \psi_{\geq \ell} )^2 + | \slashed{\nabla}_{\mathbb{S}^2} ( \partial_{\rho} \psi_{\geq \ell} ) |^2 \Big] \, d\omega d\rho \\ \leq & C \int_{r_+}^R \int_{\mathbb{S}^2} | \partial_{\rho} F_T |^2 \, d\omega d\rho \\ & + C\int_{r_{+}}^{\infty}\int_{\s^2}\left[ r^{2-k} (\partial_{\rho} T \psi_{\geq \ell} )^2+r^{-k-2\eta}(T^2\psi_{\geq \ell} )^2+r^{-k}(T\psi_{\geq \ell} )^2 \right] \,d\omega d\rho,
\end{align*}
for any $-3 < k < 2\ell-1$, any $\eta > 0$, and where $C = C ( D , \ell , k)$.
\end{corollary}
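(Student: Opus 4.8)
The plan is to differentiate the elliptic equation \eqref{eq:inhomelliptic} once in $\partial_\rho$ and repeat the argument of Proposition \ref{cor:el} at one higher order of regularity. First I would commute \eqref{eq:inhomelliptic} with $\partial_\rho$, obtaining
\begin{equation*}
\partial_\rho\big(Dr^2\partial_\rho(\partial_\rho\psi_{\geq\ell})\big)+\slashed\Delta_{\s^2}(\partial_\rho\psi_{\geq\ell}) = \partial_\rho(r^2F_T) - \partial_\rho\big((Dr^2)'\partial_\rho\psi_{\geq\ell}\big)=:r^2\widetilde F,
\end{equation*}
where $\widetilde F$ collects $\partial_\rho F_T$ together with terms that are at most first order in $\partial_\rho$ applied to $\partial_\rho\psi_{\geq\ell}$ (i.e.\ involving $\partial_\rho\psi_{\geq\ell}$ and $\partial_\rho^2\psi_{\geq\ell}$ with bounded coefficients on the compact region $\{r_+\le r\le R\}$). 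Thus $\partial_\rho\psi_{\geq\ell}$ solves an inhomogeneous equation of exactly the form \eqref{eq:inhomelliptic} with source $\widetilde F$.

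Next I would multiply this commuted equation by $\partial_\rho(\partial_\rho\psi_{\geq\ell})=\partial_\rho^2\psi_{\geq\ell}$ and integrate over $\{r_+\le r\le R\}\times\s^2$, integrating by parts in $\rho$ and in the angular variables exactly as in the proof of Proposition \ref{cor:el}. The boundary term at $r=r_+$ from the $\partial_\rho(Dr^2\partial_\rho(\cdot))$ piece vanishes because $D(r_+)=0$; the boundary terms at $r=R$ are controlled by $\int_{\s^2}\big[(\partial_\rho^2\psi_{\geq\ell})^2 + |\slashed\nabla_{\s^2}\partial_\rho\psi_{\geq\ell}|^2\big]\,d\omega$ at $\rho=R$, which after an averaging argument over a dyadic $r$-interval near $R$ is bounded by the spacetime integral $\int_{r_+}^\infty\int_{\s^2}[\cdots]$ appearing on the right-hand side via Theorem \ref{thm:el} (applied to $\psi_{\geq\ell}$ with an admissible $k$, noting all such $k$ are allowed on the compact region). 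The angular integration by parts turns the mixed term into $\int Dr^2 |\slashed\nabla_{\s^2}\partial_\rho^2\psi_{\geq\ell}|^2$ plus lower-order pieces; one then absorbs the dangerous terms using a Hardy/Poincaré inequality on $\psi_{\geq\ell}$ and the smallness of $\epsilon$, just as in Proposition \ref{cor:el}.

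Finally, the error contributions from $\widetilde F$ split into $\int_{r_+}^R\int_{\s^2}|\partial_\rho F_T|^2$, which is kept as the first term of the conclusion, and terms of the form $\int_{r_+}^R\int_{\s^2}\big[(\partial_\rho\psi_{\geq\ell})^2 + \epsilon(\partial_\rho^2\psi_{\geq\ell})^2\big]$ coming from $\partial_\rho((Dr^2)'\partial_\rho\psi_{\geq\ell})$; the $\epsilon$-term is absorbed into the left-hand side, and the $(\partial_\rho\psi_{\geq\ell})^2$ term together with $\int_{\s^2}|\slashed\nabla_{\s^2}\psi_{\geq\ell}|^2|_{\rho=r_+}$ is estimated by the right-hand side of Proposition \ref{cor:el}, hence by $\int_{r_+}^\infty\int_{\s^2}[r^{2-k}(\partial_\rho T\psi_{\geq\ell})^2 + r^{-k-2\eta}(T^2\psi_{\geq\ell})^2 + r^{-k}(T\psi_{\geq\ell})^2]$. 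Collecting everything gives the stated estimate. The main obstacle I anticipate is bookkeeping the boundary terms at $r=R$: one must be careful that the $\rho=R$ boundary contributions of $(\partial_\rho^2\psi_{\geq\ell})^2$ and $|\slashed\nabla_{\s^2}\partial_\rho\psi_{\geq\ell}|^2$ are genuinely controlled, which requires invoking Theorem \ref{thm:el} (and an averaging/trace argument) rather than Proposition \ref{cor:el} alone, since the latter only controls a $\rho=r_+$ trace; handling the angular derivative of the degenerate-elliptic operator cleanly is the delicate point.
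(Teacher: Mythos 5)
Your overall strategy is the right one and matches the paper's: commute \eqref{eq:inhomelliptic} with $\partial_{\rho}$, view the result as an elliptic identity on $[r_+,R]$, pair it against a second-order multiplier, and clean up the lower-order and boundary contributions with Theorem \ref{thm:el} and Proposition \ref{cor:el}. However, there is a concrete gap in the multiplier choice. The paper multiplies the commuted equation by $r\,\partial_{\rho}^2\psi_{\geq\ell}$, and it is precisely the increasing weight $r$ that produces the nondegenerate bulk term $\tfrac12\int_{r_+}^R\int_{\s^2}|\snabla_{\s^2}\partial_{\rho}\psi_{\geq\ell}|^2$: the angular cross term gives $-\tfrac{r}{2}\partial_{\rho}|\snabla_{\s^2}\partial_{\rho}\psi_{\geq\ell}|^2$, and the subsequent radial integration by parts puts the derivative on $r$. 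With your unweighted multiplier $\partial_{\rho}^2\psi_{\geq\ell}$ the cross term $\int\partial_{\rho}^2\psi_{\geq\ell}\,\sD_{\s^2}\partial_{\rho}\psi_{\geq\ell}$ is a pure $\rho$-derivative of $-\tfrac12|\snabla_{\s^2}\partial_{\rho}\psi_{\geq\ell}|^2$ and yields only boundary traces at $r_+$ and $R$ --- no bulk term at all --- so the angular piece of the claimed left-hand side is simply not produced. (Your remark that the mixed term becomes $\int Dr^2|\snabla_{\s^2}\partial_{\rho}^2\psi_{\geq\ell}|^2$ conflates the multiplier argument of Proposition \ref{cor:el} with the squared-equation argument of Theorem \ref{thm:el}; note also that Theorem \ref{thm:el} only controls $D$-weighted angular derivatives, which degenerate at the horizon, so it cannot substitute for the missing bulk term.)

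A second, smaller issue: when you fold $\partial_{\rho}\bigl((Dr^2)'\partial_{\rho}\psi_{\geq\ell}\bigr)=2\partial_{\rho}\psi_{\geq\ell}+(Dr^2)'\partial_{\rho}^2\psi_{\geq\ell}$ into the error $\widetilde F$ and claim an "$\epsilon(\partial_{\rho}^2\psi_{\geq\ell})^2$" contribution, this cannot work as stated: paired with $\partial_{\rho}^2\psi_{\geq\ell}$ the term is exactly $(Dr^2)'(\partial_{\rho}^2\psi_{\geq\ell})^2$ with coefficient $(Dr^2)'=2(r-M)$, which is of unit size and comparable to the good bulk coefficient, so no small-$\epsilon$ absorption is possible. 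The rescue is that this term carries a \emph{favorable} sign (this is the redshift mechanism: $(Dr^2)'>0$ on $[r_+,\infty)$), so it must be kept on the left as part of the positive bulk, not estimated in absolute value; the paper does exactly this, obtaining the coefficient $2r(Dr^2)'-Dr^2>0$ near the horizon. Both defects are repaired by adopting the weighted multiplier $r\,\partial_{\rho}^2\psi_{\geq\ell}$ and tracking signs rather than absorbing; your handling of the $r=R$ boundary terms by averaging and Theorem \ref{thm:el}, and of $\int(\partial_{\rho}\psi_{\geq\ell})^2$ and $\partial_{\rho}F_T$, is fine.
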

\begin{proof}
We consider the equation
\begin{equation}\label{eq:waveaux1}
\partial_{\rho} ( (Dr^2 ) \partial_{\rho}^2 \psi_{\geq \ell} ) + (Dr^2 )' \partial_{\rho}^2 \psi_{\geq \ell} + 2 \partial_{\rho} \psi_{\geq \ell} = \partial_{\rho} \bar{F}_T ,
\end{equation}
where
$$ \bar{F}_T =  ( 2h Dr^2 - 2r^2 ) \partial_{\rho} T \psi_{\geq \ell} + ( 2hr^2 - h^2 Dr^2 + 2h^2 Dr^2 -2hr^2 ) T^2 \psi_{\geq \ell} + [ ( h D r^2 )' - 2 h Dr + 2hDr - 2r] T \psi_{\geq \ell} . $$
We multiply equation \eqref{eq:waveaux1} by $r \partial_{\rho}^2 \psi_{\geq \ell}$ and we integrate it over $\{ r_+ \leq r \leq R \}$ and we have that
\begin{align*}
\int_{r_+}^R & \int_{\mathbb{S}^2} \Big[ 2r (D r^2 )' ( \partial_{\rho}^2 \psi_{\geq \ell} )^2 - (Dr^2 ) ( \partial_{\rho}^2 \psi_{\geq \ell} )^2 \Big] \, d\omega d\rho \\ & + \int_{r_+}^R  \int_{\mathbb{S}^2} \Big[ \frac{1}{2} | \slashed{\nabla}_{\mathbb{S}^2}  ( \partial_{\rho} \psi_{\geq \ell} ) |^2 - ( \partial_{\rho} \psi_{\geq \ell } )^2 \Big] \, d\omega d\rho \\ \leq &  \int_{r_+}^R  \int_{\mathbb{S}^2} | r ( \partial_{\rho}^2 \psi_{\geq \ell} ) ( \partial_{\rho} \bar{F}_T ) | \, d\omega d\rho ,
\end{align*}
and we get the required estimate by applying Cauchy-Schwarz on the term in the right-hand side, absorbing the term involving $\partial_{\rho}^2 \psi_{\geq \ell}$ from the left-hand side (which has a term with $\partial_{\rho}^2 \psi_{\geq \ell}$ and no degeneracy in $D$ by the choice of $R$ as the second term of the left-hand side above can be absorbed by the first), and by applying Theorem \ref{thm:el} to the term with the minus sign of the left-hand side.
\end{proof}
Note that by the Corollary above and Theorem \ref{thm:el} we also get that
\begin{align*}
\int_{r_+}^R & \int_{\mathbb{S}^2} \Big[ r^{2-k} ( \partial_{\rho}^2 \psi_{\geq \ell} )^2 + r^{-k} | \slashed{\nabla}_{\mathbb{S}^2} ( \partial_{\rho} \psi_{\geq \ell} ) |^2 \Big] \, d\omega d\rho \\ \leq & C \int_{r_+}^R \int_{\mathbb{S}^2} | \partial_{\rho} F_T |^2 \, d\omega d\rho \\ & + C\int_{r_{+}}^{\infty}\int_{\s^2}\left[ r^{2-k} (\partial_{\rho} T \psi_{\geq \ell} )^2+r^{-k-2\eta}(T^2\psi_{\geq \ell} )^2+r^{-k}(T\psi_{\geq \ell} )^2 \right] \,d\omega d\rho,
\end{align*}
for $-3 < k < 2\ell-1$, any $\eta > 0$, and where $C = C ( D , \ell , k)$ is a constant.

Now we generalize Theorem \ref{thm:el} for higher $\partial_{\rho}$ derivatives.

\begin{theorem}\label{thm:elh}
Fix some $\ell \geq 1$. Assume that
\begin{align*}
\sum_{l\leq m+1 }\lim_{\rho\to \infty}r T^l\psi<&\infty,\\
\lim_{\rho\to \infty}r^{2+l}\partial_r^{l+1}\psi=&0 \mbox{  for $l \leq m$}.
\end{align*}
Then for all 
$$ -3 -2m < k < 2\ell -2m -1 \mbox{  and  } 0 \leq m \leq \ell , $$
we have that
\begin{align*}
\int_{r_+}^{\infty} & \int_{\mathbb{S}^2} \left[ r^{-2-k} \left( \partial_{\rho} \left( ( Dr^2 ) \partial_{\rho}^{m+1} \psi_{\geq \ell} \right) \right)^2 + r^{-k} ( \partial_{\rho}^{m+1} \psi_{\geq \ell} )^2 \right] \, d\omega d\rho \\ &+\int_{r_+}^{\infty}  \int_{\mathbb{S}^2} \left[ r^{-k} D | \snabla_{\mathbb{S}^2} (  \partial_{\rho}^{m+1} \psi_{\geq \ell} ) |^2 + r^{-2-k} ( \slashed{\Delta}_{\mathbb{S}^2} ( \partial_{\rho}^m \psi_{\geq \ell} ) )^2  \right] \, d\omega d\rho \\ \leq & C \sum_{s=1}^{m+1} \int_{r_+}^{\infty} \int_{\mathbb{S}^2} \left[ r^{2-k - 2(m-s+1)} ( \partial_{\rho}^{s} T \psi_{\geq \ell} )^2 + r^{-k-2\eta-2(m-s+1)} ( \partial_{\rho}^{s-1} T^2 \psi_{\geq \ell} )^2 \right]\, d\omega d\rho \\ & + \int_{r_+}^{\infty} r^{-k-2(m-s+1)} ( \partial_{\rho}^{s-1} T \psi_{\geq \ell} )^2  \, d\omega d\rho   ,
\end{align*}
for some $C = C (D, m , k , \ell)$ and for any $\eta > 0$.
\end{theorem}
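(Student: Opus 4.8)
The proof will proceed by induction on $m$, with the base case $m=0$ being exactly Theorem \ref{thm:el}. The key observation is that, just as in Theorem \ref{thm:el}, the wave equation \eqref{eq:inhomelliptic} can be commuted with $\partial_\rho$ repeatedly: differentiating $\partial_\rho(Dr^2\partial_\rho\psi)+\slashed{\Delta}_{\s^2}\psi = r^2 F_T$ $m$ times in $\rho$, and using the fact that $(Dr^2)'' = 2$ (a constant, so all higher $\rho$-derivatives of $Dr^2$ vanish), one obtains an equation of the schematic form
\begin{equation*}
\partial_\rho\left((Dr^2)\partial_\rho^{m+1}\psi\right) + (m+1)(Dr^2)'\partial_\rho^{m+1}\psi + \binom{m+1}{2}\cdot 2\cdot \partial_\rho^m\psi + \slashed{\Delta}_{\s^2}(\partial_\rho^m\psi) = \partial_\rho^m(r^2 F_T),
\end{equation*}
i.e.\ a second-order elliptic operator in $\partial_\rho^m\psi$ with the \emph{same} leading structure as \eqref{eq:inhomelliptic} but with modified zeroth-order coefficient $2\binom{m+1}{2}$ and a shifted friction term. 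First I would make this commuted equation precise, tracking that the right-hand side $\partial_\rho^m(r^2F_T)$ expands into terms $r^{2-2(m-s+1)}\partial_\rho^s T\psi$, $r^{-2\eta-2(m-s+1)}\partial_\rho^{s-1}T^2\psi$, and $r^{-2(m-s+1)}\partial_\rho^{s-1}T\psi$ for $1\le s\le m+1$ — exactly the terms appearing on the right-hand side of the claimed estimate.

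Next I would repeat the energy argument of Theorem \ref{thm:el} applied to this commuted equation: square both sides, multiply by $r^{-2-k}$, integrate over $\{r\ge r_+\}\times\s^2$, and integrate by parts in $\rho$ and the angular variables exactly as in \eqref{eq:ibpangularlrplus}. The boundary terms at $r=r_+$ vanish by $D(r_+)=0$, and the boundary terms at $\rho\to\infty$ vanish by the decay assumptions $\lim_{\rho\to\infty}r^{2+l}\partial_r^{l+1}\psi = 0$ for $l\le m$ (these are precisely the hypotheses added in the statement, tuned so that all weighted boundary terms die for $-3-2m<k<2\ell-2m-1$). The cross term $\partial_\rho(Dr^2\partial_\rho^{m+1}\psi)\cdot\slashed{\Delta}_{\s^2}(\partial_\rho^m\psi)$ produces, after integration by parts, the good angular term $Dr^{-k}|\snabla_{\s^2}\partial_\rho^{m+1}\psi|^2$ together with a term proportional to $r^{-4-k-2m}(Dr^2\partial_\rho^{m+1}\psi)^2$ (after using the commuted equation again to replace one factor), and this latter term is absorbed partly by a Hardy inequality into the $(\partial_\rho((Dr^2)\partial_\rho^{m+1}\psi))^2$ term and partly, after a Poincar\'e inequality on $\psi_{\ge\ell}$, into the angular term — provided the analogue of the algebraic constraint \eqref{eq:altineqk} holds.

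The main obstacle — and the reason the $m$-range is restricted to $m\le\ell$ and the $k$-window shrinks by $2$ for each extra $\rho$-derivative — is tracking this algebraic absorption condition through the commutations. After commuting $m$ times, the relevant inequality becomes something like $(1+k+2m)(3+k+2m) < 4\ell(\ell+1)$ (the friction coefficient $-2\ell$ in \eqref{eq:inhomelliptic} effectively becomes $-2(\ell-m)$ after the $r$-weights are redistributed, while the zeroth-order term shifts), which holds exactly when $-3-2m<k<2\ell-2m-1$, and which requires $2\ell-2m-1 > -3-2m$, i.e.\ $\ell > -1$, always true, but one must additionally ensure the Poincar\'e eigenvalue $\ell(\ell+1)$ dominates the coefficient of the bad term, and it is here that the constraint $m\le\ell$ enters. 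I would carry out this bookkeeping carefully, choosing the small parameters $\epsilon_1,\epsilon_2,\epsilon_3 > 0$ at the end as in Theorem \ref{thm:el}; the remaining steps (applying Young's inequality to the $F_T$-type terms on the right, and collecting the good terms on the left) are routine and parallel to the base case. Finally, the appearance of $(\slashed{\Delta}_{\s^2}\partial_\rho^m\psi_{\ge\ell})^2$ rather than $(\slashed{\Delta}_{\s^2}\partial_\rho^{m+1}\psi_{\ge\ell})^2$ on the left is because the elliptic equation is second order in $\partial_\rho^m\psi$, so only two $\rho$-derivatives of $\partial_\rho^m\psi$ (equivalently the top radial term $\partial_\rho((Dr^2)\partial_\rho^{m+1}\psi)$) and one full angular Laplacian of $\partial_\rho^m\psi$ are controlled — consistent with elliptic regularity for this operator.
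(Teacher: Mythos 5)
Your commuted equation is essentially right (modulo a slip: writing the leading term as $\partial_{\rho}((Dr^2)\partial_{\rho}^{m+1}\psi)$, the friction coefficient is $m(Dr^2)'$, not $(m+1)(Dr^2)'$, as in the paper's equation \eqref{eq:waveauxl}; the zeroth-order coefficient $m(m+1)$ you give is correct). The genuine gap is in how you propose to handle the commutator terms $m(Dr^2)'\partial_{\rho}^{m+1}\psi_{\geq\ell}$ and $m(m+1)\partial_{\rho}^{m}\psi_{\geq\ell}$. You plan to keep them on the left and absorb them by Hardy and Poincar\'e, tracking a shifted algebraic constraint. This fails for two concrete reasons. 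First, after squaring, these terms produce $\sim m^2[(Dr^2)']^2 r^{-2-k}(\partial_{\rho}^{m+1}\psi_{\geq\ell})^2\sim 8m^2 r^{-k}(\partial_{\rho}^{m+1}\psi_{\geq\ell})^2$ with \emph{no} factor of $D$, whereas the only left-hand-side term at that weight available for Poincar\'e absorption is $Dr^{-k}|\snabla_{\s^2}\partial_{\rho}^{m+1}\psi_{\geq\ell}|^2$, which degenerates at the horizon; no smallness or Poincar\'e argument closes there. Second, even away from the horizon, the constant $8m^2$ exceeds the Poincar\'e gain $2\ell(\ell+1)$ once $m$ is comparable to $\ell$ (already for $m=\ell\geq 1$), and the Poincar\'e budget is in any case already spent absorbing the cross-term contribution exactly as in the base case; your own heuristic constraint $(1+k+2m)(3+k+2m)<4\ell(\ell+1)$ does not reproduce the stated lower bound $k>-3-2m$, which signals that the bookkeeping does not close this way.

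What the paper does instead — and what is missing from your proposal — is a strong induction in $m$ in which the commutator error terms are \emph{not} absorbed but are bounded by the level-$(m-1)$ estimate applied with the weight shifted by two, $k'=k+2$: the terms $r^{-k}(\partial_{\rho}^{m+1}\psi_{\geq\ell})^2$, $r^{-2-k}(\partial_{\rho}^{m}\psi_{\geq\ell})^2$ and $r^{-2-k}|\snabla_{\s^2}\partial_{\rho}^{m}\psi_{\geq\ell}|^2$ are exactly what the auxiliary induction hypothesis \eqref{el:ind} controls, and the shift $k\mapsto k+2$ is admissible precisely because the $k$-window shrinks by $2$ per commutation. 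Closing this induction requires the non-degenerate, horizon-localized estimates of Proposition \ref{cor:el} and Corollary \ref{cor:elb} (a ``spacelike redshift'' step) to upgrade the degenerate $D$-weighted control of $\partial_{\rho}$-derivatives coming from Theorem \ref{thm:el}; your proposal never introduces these, and without them the base case does not even supply a usable induction hypothesis near $r=r_+$. So the missing idea is the inductive control (with shifted weight) of the commutator terms together with the auxiliary non-degenerate estimates, rather than direct absorption.
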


\begin{proof}
We will argue by induction. The base case is $m=0$ which was proven through a combination of Theorem \ref{thm:el} and Corollary \ref{cor:el}.  We assume that the desired estimate is true for $n= m -1$, and we also make the following induction hypothesis which due to Corollary \ref{cor:elb} is true in the base case: we assume that for some small enough $r_c < \infty$ that
\begin{align}\label{el:ind}
\int_{r_+}^{\infty} & \int_{\mathbb{S}^2} \Big[ r^{2-k'} ( \partial_{\rho}^{m+1} \psi_{\geq \ell} )^2 + r^{-k'} | \slashed{\nabla}_{\mathbb{S}^2} ( \partial_{\rho}^m \psi_{\geq \ell} ) |^2 \, d\omega d\rho \\ \lesssim &\int_{r_+}^{r_c}  \int_{\mathbb{S}^2} | \partial_{\rho}^{m+1} F_T |^2 d\omega d\rho \\ & + \sum_{s=1}^{m} \int_{r_+}^{\infty} \int_{\mathbb{S}^2} \left[ r^{2-k' -2(s-1)} ( \partial_{\rho}^{s} T \psi_{\geq \ell} )^2 + r^{-k' -2\eta-2(s-1)} ( \partial_{\rho}^{s-1} T^2 \psi_{\geq \ell} )^2 + r^{-k' -2(s-1)} ( \partial_{\rho}^{s-1} T \psi_{\geq \ell} )^2 \right] \, d\omega d\rho  ,
\end{align}
for $-3 -2 (m-1) < k' < 2\ell -2 (m-1 ) -1$. Now we examine the case $n=m$. We have the equation
\begin{equation}\label{eq:waveauxl}
\partial_{\rho} \left( ( Dr^2 ) \partial_{\rho}^{m+1} \psi_{\geq \ell} \right) + m ( Dr^2 )' \partial_{\rho}^{m+1} \psi_{\geq \ell} + m ( m+1 ) \partial_{\rho}^m \psi_{\geq \ell} + \slashed{\Delta}_{\mathbb{S}^2} ( \partial_{\rho}^m \psi_{\geq \ell} ) = \partial_{\rho}^m \bar{F}_T , 
\end{equation}
with $F_T$ as given after equation \eqref{eq:waveaux1}. We rearrange it as
\begin{equation*}
\partial_{\rho} \left( ( Dr^2 ) \partial_{\rho}^{m+1} \psi_{\geq \ell} \right) + \slashed{\Delta}_{\mathbb{S}^2} ( \partial_{\rho}^m \psi_{\geq \ell} ) = \partial_{\rho}^m \bar{F}_T - m ( Dr^2 )' \partial_{\rho}^{m+1} \psi_{\geq \ell} - m ( m+1 ) \partial_{\rho}^m \psi_{\geq \ell} , 
\end{equation*}
and after multiplying it by $r^{-2-k}$ and integrating in $\rho$ we have that:
\begin{align*}
\int_{r_+}^{\infty} & \int_{\mathbb{S}^2} r^{-2-k} \left[ \left( \partial_{\rho} \left( ( Dr^2 ) \partial_{\rho}^{m+1} \psi_{\geq \ell} \right) \right)^2 + ( \slashed{\Delta}_{\mathbb{S}^2} ( \partial_{\rho}^m \psi_{\geq \ell} ) )^2 \right]   \, d\omega d\rho  \\ & +  2 \int_{r_+}^{\infty} \int_{\mathbb{S}^2} r^{-2-k} ( \partial_{\rho} \left( ( Dr^2 ) \partial_{\rho}^{m+1} \psi_{\geq \ell} \right) ) \cdot ( \slashed{\Delta}_{\mathbb{S}^2} ( \partial_{\rho}^m \psi_{\geq \ell} ) ) \, d\omega d\rho \\ \leq & 2\int_{r_+}^{\infty} \int_{\mathbb{S}^2} r^{-2-k} (\partial_{\rho}^m \bar{F}_T )^2 \, d\omega d\rho + 2 \int_{r_+}^{\infty} \int_{\mathbb{S}^2} r^{-2-k} \Big[ m^2 [ (Dr^2 )' ]^2 ( \partial_{\rho}^{m+1} \psi_{\geq \ell} )^2 + m^2 (m+1)^2 ( \partial_{\rho}^m \psi_{\geq \ell} )^2 \Big] \, d\omega d\rho . 
\end{align*}
For the cross term we have that
\begin{align*}
2 \int_{r_+}^{\infty} \int_{\mathbb{S}^2} & r^{-2-k}( \partial_{\rho} \left( ( Dr^2 ) \partial_{\rho}^{m+1} \psi \right) ) \cdot ( \slashed{\Delta}_{\mathbb{S}^2} ( \partial_{\rho}^m \psi ) ) \Big] \, d\omega d\rho =  \int_{r_+}^{\infty} \int_{\mathbb{S}^2} r^{-2-k} 2 (Dr^2 ) | \slashed{\nabla}_{\mathbb{S}^2} ( \partial_{\rho}^{m+1} \psi ) |^2 \, d\omega d\rho \\ & + \int_{r_+}^{\infty} \int_{\mathbb{S}^2} r^{-2-k} \Big[ 2 ( 2+k ) D' r - 2 (2+k) (1+k ) D \Big] | \slashed{\nabla}_{\mathbb{S}^2} ( \partial_{\rho}^{m} \psi ) |^2 \, d\omega d\rho .
\end{align*}
Using the last two computations we now have that
\begin{align*}
\int_{r_+}^{\infty} & \int_{\mathbb{S}^2} r^{-2-k} \left[ \left( \partial_{\rho} \left( ( Dr^2 ) \partial_{\rho}^{m+1} \psi_{\geq \ell} \right) \right)^2 + D r^{-k} | \slashed{\nabla}_{\mathbb{S}^2} ( \partial_{\rho}^{m+1} \psi_{\geq \ell} ) |^2 + ( \slashed{\Delta}_{\mathbb{S}^2} ( \partial_{\rho}^m \psi_{\geq \ell} ) )^2 \right]   \, d\omega d\rho \\ \lesssim & \int_{r_+}^{\infty} \int_{\mathbb{S}^2} r^{-2-k} (\partial_{\rho}^m \bar{F}_T )^2 \, d\omega d\rho + 2 \int_{r_+}^{\infty} \int_{\mathbb{S}^2} r^{-2-k} \Big[ m^2 [ (Dr^2 )' ]^2 ( \partial_{\rho}^{m+1} \psi_{\geq \ell} )^2 + m^2 (m+1)^2 ( \partial_{\rho}^m \psi_{\geq \ell} )^2 \Big] \, d\omega d\rho \\ & + \int_{r_+}^{\infty} \int_{\mathbb{S}^2} r^{-2-k} | \slashed{\nabla}_{\mathbb{S}^2} ( \partial_{\rho}^m \psi_{\geq \ell} ) |^2  \, d\omega d\rho . 
\end{align*}
For the last three terms of the right-hand side we now use the induction hypothesis, and specifically estimate \eqref{el:ind}. Note that we can do this because now as $-3-2m < k < 2\ell -2m -1$ we have that $-3 -2(m-1) < k+2 < 2\ell -2 (m-1) -1$. This finishes the proof of the desired estimate.
\end{proof}

\section{Higher-order redshift and energy estimates}\label{redshift}
Additionally to the elliptic estimates we will also need some higher order red-shift estimates. Recall that
$$ N := T - \partial_r \mbox{  for $r \in [r_+ , r_1 ]$,} $$
$$ N := T \mbox{  for $r \geq r_2$,} $$
for $r_+ \leq r_1 < r_2$, in ingoing $(v,r,\theta,\varphi)$ coordinates.

\begin{lemma}\label{red:aux}
Fix some $\ell \geq 1$, and assume that $\psi$ solves \eqref{eq:waveequation}. Then for some $r_1$, $r_2$ such that $r_+ < r_1 < r_2$ and $r_2 - r_1$ being small enough, $0 \leq m \leq \ell$ and any $\tau_0 \leq \tau_1 \leq \tau_2$, we have that:
\begin{equation}\label{est:red}
\begin{split}
\int_{S_{\tau_2}} & J^N [ N^{m+1} \psi_{\geq \ell} ] \cdot n_{\tau_2} \, d\mu_{S_{\tau_2}} +  \int_{\tau_1}^{\tau_2} \int_{S_{\tau}} J^N [ N^{m+1} \psi_{\geq \ell} ] \cdot n_{\tau} \, d\mu_{S_{\tau}} d\tau \\ \leq & C \int_{S_{\tau_1}} J^N [ N^{m+1} \psi_{\geq \ell} ] \cdot n_{\tau_1} \, d\mu_{S_{\tau_1}} + C \int_{S_{\tau_1}} J^N [ N^{m+1} T \psi_{\geq \ell} ] \cdot n_{\tau_1} \, d\mu_{S_{\tau_1}} \\ & + C\int_{\tau_1}^{\tau_2} \int_{S_{\tau} \cap \{r_1 \leq r \leq r_2 \}} J^T [ N^{m+1} \psi_{\geq \ell} ] \cdot n_{\tau} \, d\mu_{S_{\tau}} d\tau + C\sum_{k+j = m+1 , k \leq m} \int_{\tau_1}^{\tau_2} \int_{S_{\tau}} J^N [ N^k T^j \psi_{\geq \ell} ] \cdot n_{\tau} \, d\mu_{S_{\tau}} d\tau \\ & + C\int_{\tau_1}^{\tau_2} \int_{S_{\tau} \cap \{ r \leq r_2 \}} J^T [ N^{m} \psi_{\geq \ell} ] \cdot n_{\tau} \, d\mu_{S_{\tau}} d\tau ,
\end{split}
\end{equation}
for some $C = C( D,h) > 0$.
\end{lemma}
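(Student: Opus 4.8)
Estimate \eqref{est:red} is a higher-order redshift estimate, and the plan is to run the commutation-and-energy-estimate scheme of Dafermos--Rodnianski (see \cite{lecturesMD}, \cite{redshift}); the only bookkeeping that requires care is to separate, in the commutators, the top-order-in-$N$ term from the genuinely lower-order error terms, and to check the signs of the boundary fluxes at $\mathcal{H}^+$ and $\mathcal{I}^+$.

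First I would commute the wave equation \eqref{eq:waveequation} $(m+1)$ times with $N$. Since $T$ is Killing, $[\Box_g,T]=0$, so all error terms come from the $-\partial_r$ piece of $N$ (which is present only in $\{r_+\leq r\leq r_1\}$) and from the interpolation between $N=T-\partial_r$ and $N=T$ in $\{r_1\leq r\leq r_2\}$. The crucial structural input is the redshift computation: the commutator of $\Box_g$ with $N$ produces, in a neighbourhood of $\mathcal{H}^+$, a term proportional to $N(N^{m+1}\psi)$ with a coefficient whose sign is dictated by the positivity of the surface gravity $\kappa_+$, together with error terms that involve at most one derivative of $N^{m+1}\psi$, that carry strictly fewer than $m+1$ powers of $N$ (with correspondingly more $T$-derivatives), and zeroth-order angular terms in $N^{\leq m}\psi$; in $\{r_1\leq r\leq r_2\}$ one additionally picks up errors supported in that compact $r$-range.

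Next I would contract the equation for $N^{m+1}\psi_{\geq\ell}$ with the multiplier $N$ and integrate the resulting energy identity for $J^N[N^{m+1}\psi_{\geq\ell}]$ over the region swept out by $S_\tau$ for $\tau\in[\tau_1,\tau_2]$, whose boundary consists of $S_{\tau_1}$, $S_{\tau_2}$, $\mathcal{H}^+$ and $\mathcal{I}^+$. Near $\mathcal{H}^+$ the good-sign commutator term together with the deformation tensor of $N$ produces a bulk term bounded below by $c\,J^N[N^{m+1}\psi_{\geq\ell}]\cdot n_\tau$ for a constant $c>0$ --- this is the redshift effect, and it yields the spacetime integral on the left-hand side of \eqref{est:red}; in $\{r_1\leq r\leq r_2\}$ the deformation tensor is indefinite but compactly supported in $r$, and is therefore absorbed into $\int_{\tau_1}^{\tau_2}\int_{S_\tau\cap\{r_1\leq r\leq r_2\}}J^T[N^{m+1}\psi_{\geq\ell}]\,d\mu_{S_\tau}d\tau$, the third term on the right. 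The flux of $J^N[N^{m+1}\psi_{\geq\ell}]$ through $\mathcal{I}^+$ (where $N=T$) has a favourable sign and is discarded. The commutator errors are then handled by Cauchy--Schwarz and Young: the terms with at most $m$ powers of $N$ give $\int_{\tau_1}^{\tau_2}\int_{S_\tau}J^N[N^kT^j\psi_{\geq\ell}]$ with $k+j=m+1$, $k\leq m$ (fourth term), while the zeroth-order angular terms, which carry at most $m$ powers of $N$, are controlled --- after an integration by parts on $\s^2$ and using $\ell\geq1$ together with the Poincar\'e inequality to absorb the angular weights --- by $\int_{\tau_1}^{\tau_2}\int_{S_\tau\cap\{r\leq r_2\}}J^T[N^m\psi_{\geq\ell}]$ (fifth term). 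Finally, to upgrade the near-horizon bulk control to the full spacetime integral over $S_\tau$ stated in \eqref{est:red}, one combines the above with the Morawetz-type estimate \eqref{morawetz} (equivalently \eqref{redshift:ash}) applied to $N^{m+1}\psi_{\geq\ell}$, which loses one $T$-derivative across the photon sphere; this loss is the origin of the term $\int_{S_{\tau_1}}J^N[N^{m+1}T\psi_{\geq\ell}]\cdot n_{\tau_1}\,d\mu_{S_{\tau_1}}$ on the right-hand side.

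The main obstacle is the last energy step: one has to choose $r_+<r_1<r_2$ with $r_2-r_1$ small enough that the positive redshift bulk term near $\mathcal{H}^+$ strictly dominates the indefinite contribution coming from the interpolation of $N$ in $\{r_1\leq r\leq r_2\}$, uniformly in $\tau$ and in the number $m$ of commutations. This quantitative balancing, together with the clean separation in the commutators of the top-order-in-$N$ term (the only one fed back into the left-hand side) from the lower-order errors (all of which are left on the right-hand side or live in a compact $r$-range), is the heart of the argument.
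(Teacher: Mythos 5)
Your overall scheme (commute with $N$, exploit the good sign of the top-order redshift term, absorb the interpolation region into the $J^T$ bulk term, use the Morawetz loss to explain the $J^N[N^{m+1}T\psi_{\geq\ell}]$ data term) matches the first half of the paper's proof, which indeed starts from the $m=0$ redshift estimate of \cite{lecturesMD} and the positivity of the commuted coefficient $\kappa_m$. However, there is a genuine gap in your treatment of the commutator errors. You assert that the error terms "carry strictly fewer than $m+1$ powers of $N$ (with correspondingly more $T$-derivatives)" and can therefore be estimated by Cauchy--Schwarz against $\sum_{k+j=m+1,\,k\leq m}J^N[N^kT^j\psi_{\geq\ell}]$. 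This is not what the commutation produces. Since $N=T-\partial_r$ near $\mathcal{H}^+$ and only $T$ is Killing, the error term is schematically
\begin{equation*}
G_{m,\ell}=\sum_{k=0}^{m-1}O(r^{-1-k})\,\partial_r^{m-1-k}T\psi_{\geq\ell}+\sum_{k=0}^{m-1}O(r^{-2-k})\,\partial_r^{m-1-k}N\psi_{\geq\ell},
\end{equation*}
which contains in particular $\partial_r^{m}\psi_{\geq\ell}$, i.e.\ $m$ \emph{pure radial} derivatives and \emph{no} $T$-derivative. Every term in the fourth sum on the right-hand side of \eqref{est:red} has $j\geq 1$, hence at least one $T$-derivative, so a direct Cauchy--Schwarz cannot produce that bound: you must first trade radial derivatives for $T$-derivatives.

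That conversion is exactly what the hierarchy of elliptic estimates of Section \ref{elliptic} is for: the paper applies Theorem \ref{thm:elh} repeatedly to both sums in $G_{m,\ell}$ to obtain
$\int_{\tau_1}^{\tau_2}\int_{S_\tau}|G_{m,\ell}|^2\leq C\sum_{k+j=m+1,\,k\leq m}\int_{\tau_1}^{\tau_2}\int_{S_\tau}J^N[N^kT^j\psi_{\geq\ell}]\cdot n_\tau$. This step is not optional bookkeeping; it is the reason the lemma is restricted to $0\leq m\leq\ell$ (the admissible weight range in Theorem \ref{thm:elh} shrinks with each radial commutation and is governed by the angular frequency $\ell$), a restriction your argument cannot account for, since a pure commutation-plus-redshift argument would impose no bound on $m$ in terms of $\ell$. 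To repair the proof you should insert the elliptic estimates at the point where you dispose of the commutator errors, and verify that the weights you need fall inside the range $-3-2m<k<2\ell-2m-1$ permitted by Theorem \ref{thm:elh}.
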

\begin{proof}
From the proof of the above Lemma in the case $m=0$ which was established in \cite{lecturesMD} we have the following estimate:
\begin{equation}
\begin{split}
\int_{S_{\tau_2}} & J^N [ N^{m+1} \psi_{\geq \ell} ] \cdot n_{\tau_2} \, d\mu_{S_{\tau_2}} +  \int_{\tau_1}^{\tau_2} \int_{S_{\tau}} J^N [ N^{m+1} \psi_{\geq \ell} ] \cdot n_{\tau} \, d\mu_{S_{\tau}} d\tau \\  \leq & C \int_{S_{\tau_1}} J^N [ N^{m+1} \psi_{\geq \ell} ] \cdot n_{\tau_1} \, d\mu_{S_{\tau_1}} + \int_{S_{\tau_1}} J^N [ N^{m+1} T \psi_{\geq \ell} ] \cdot n_{\tau_1} \, d\mu_{S_{\tau_1}}\\ & + C\int_{\tau_1}^{\tau_2} \int_{S_{\tau} \cap \{r_1 \leq r \leq r_2 \}} J^T [ N^{m+1} \psi_{\geq \ell} ] \cdot n_{\tau} \, d\mu_{S_{\tau}} d\tau + C\int_{\tau_1}^{\tau_2} \int_{S_{\tau}} J^T [ N^{m} T \psi_{\geq \ell} ] \cdot n_{\tau} \, d\mu_{S_{\tau}} d\tau \\ & + C\int_{\tau_1}^{\tau_2} \int_{S_{\tau} \cap \{ r \leq r_2 \}} J^T [ N^{m} \psi_{\geq \ell} ] \cdot n_{\tau} \, d\mu_{S_{\tau}} d\tau + C\int_{\tau_1}^{\tau_2} \int_{S_{\tau}} | G_{m, \ell} |^2 \, d\mu_{S_{\tau}} d\tau  ,
\end{split}
\end{equation}
for some constant $C$, where $G_{m , \ell}$ is defined as
$$ \Box_g ( N^m \psi_{\ell} ) = \kappa_{m} \partial_r^{m+1} \psi_{\ell} + G_{m,\ell} , $$
where
$$ G_{m , \ell} = \sum_{k=0}^{m-1} O(r^{-1-k} ) ( \partial_r^{m-1-k} T \psi_{\geq \ell} ) + \sum_{k=0}^{m-1} O (r^{-2-k} ) ( \partial_r^{m-1-k} N \psi_{\geq \ell} ) . $$  
Note that the previous estimate holds true as $\kappa_m > 0$ as noted in Proposition 3.3.2 of \cite{lecturesMD}. Now using Theorem \ref{thm:elh} repeatedly for both terms of $G_{m ,\ell}$ we get that:
$$ \int_{\tau_1}^{\tau_2} \int_{S_{\tau}} | G_{m , \ell} |^2 \, d\mu_{S_{\tau}} d\tau \leq C \sum_{k+j = m+1 , k\leq m} \int_{\tau_1}^{\tau_2} \int_{S_{\tau}} J^N [ N^k T^j \psi_{\geq \ell} ] \cdot n_{\tau} \, d\mu_{S_{\tau}} d\tau , $$
which finishes the proof of the desired estimate.

\end{proof}

Using the previous Lemma we can now show decay for the following energies:
\begin{proposition}\label{dec:red}
Fix some $\ell \geq 1$ and assume that $\psi$ solves \eqref{eq:waveequation}. Assume that 
$$ I_{\ell} [\psi] \neq 0 , $$
and that
\begin{equation}\label{as:red}
\sum_{j_1 + j_2 \leq m , j \leq K+1} \int_{\Sigma_{\tau_0}} J^N [ N^{j_1 +1} T^{j_2 + j} \psi_{\geq \ell} ] \cdot n_{\tau_0} \, d\mu_{\Sigma_{\tau_0}} < \infty , 
\end{equation}
for some $m$ such that $0 \leq m \leq \ell$, and for some $K \in \mathbb{N}$. 

Then for any $u \geq u_0$ we have that
\begin{equation}\label{dec:redn}
\int_{S_{\tau}} J^N [ N^{m+1} T^K \psi_{\geq \ell} ] \cdot n_u \, d\mu_{\mathcal{S}_{\tau}} \leq C \frac{E_{aux-decomp-n0-K}}{\tau^{3+2(m+1) +2K - \delta}} ,
\end{equation}
for any $\delta > 0$, some $C = C (D,R)$, and for $E_{aux-decomp-n0-K}$ as in Corollary \ref{rem:n+2t}.

If instead we assume that
$$ I_{\ell} [\psi ] = 0 , $$
and maintain the assumption \eqref{as:red}, we have for any $u \geq u_0$ that
\begin{equation}\label{dec:redn0}
\int_{S_u} J^N [ N^{m+1} T^K \psi_{\geq \ell} ] \cdot n_u \, d\mu_{\Sigma_{\tau}} \leq C \frac{E_{aux-decomp-0-K}}{u^{5+2(m+1) +2K - \delta}} ,
\end{equation}
for any $\delta > 0$, some $C > 0$, and for $E_{aux-decomp-0-K}$ as in Corollary \ref{rem:n+2t}.

\end{proposition}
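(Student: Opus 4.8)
The plan is to combine the higher-order redshift estimate of Lemma \ref{red:aux} with the already-established decay of the bulk terms on its right-hand side, and close an induction on $m$. First I would fix the angular frequency cutoff and write $\psi$ for $\psi_{\geq \ell}$. For the base case $m=0$ we already have, from Corollary \ref{rem:enn+2t} (the $T$-commuted version, Corollary \ref{rem:n+2t}), the decay of $\int_{S_\tau} J^N[T^{K} N\psi]\cdot n_\tau$; more precisely the horizon-localized red-shift estimate \eqref{redshift:ash} together with \eqref{dec:enn+2tn0} gives the claimed rate $\tau^{-3-2-2K+\delta}$ once we feed in the decay of the energy flux through $\Sigma_\tau$ and of the spacetime integral over the photon-sphere-avoiding region. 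So the induction starts.

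For the inductive step, assume \eqref{dec:redn} (resp. \eqref{dec:redn0}) holds for all $N$-orders up to $m$ and all $K$. Apply Lemma \ref{red:aux} with the orders $m$ and $K$ (i.e. commute with $N^{m+1}T^K$). The terms on the right-hand side of \eqref{est:red} split into three types: (i) the data terms on $S_{\tau_1}$, controlled by assumption \eqref{as:red}; (ii) the mixed terms $\sum_{k+j=m+1, k\le m}\int J^N[N^kT^j\psi]$, which by the induction hypothesis decay at a strictly faster rate (each extra $T$ gains two powers relative to each extra $N$, so trading an $N$ for a $T$ only helps), hence are integrable in $\tau$ after multiplying by a small negative power; (iii) the lower-$N$-order red-shift bulk terms $\int J^T[N^m\psi]$ and the photon-sphere-avoiding term $\int_{r_1\le r\le r_2}J^T[N^{m+1}\psi]$ — the first is handled by the induction hypothesis at order $m$, and the second by the non-degenerate local Morawetz estimate \eqref{redshift:ash} (away from $r_{ph}$) combined again with the induction hypothesis. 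One then runs the standard dyadic-sequence / mean-value argument: on a dyadic sequence $\{\tau_j\}$ the spacetime integral on the left of \eqref{est:red} forces the flux to decay, and the pigeonhole plus a second application of \eqref{est:red} between consecutive dyadic times upgrades pointwise-in-$\tau$ decay to the full sequence, losing only $\tau^{\delta}$; finally feeding the bootstrapped flux decay back into \eqref{est:red} removes the restriction to the dyadic sequence. The two cases $I_\ell\ne 0$ and $I_\ell=0$ differ only in which corollary (Corollary \ref{rem:n+2t}, the $n0$ norm versus the $0$ norm) supplies the input energy decay rate $\tau^{-3+\delta}$ versus $\tau^{-5+\delta}$, and this propagates linearly through the whole argument.

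I expect the main obstacle to be bookkeeping the weights in step (iii): one must verify that every term produced by $G_{m,\ell}$ in Lemma \ref{red:aux} — which, after invoking Theorem \ref{thm:elh} to convert $\partial_\rho^{j}$ derivatives into $T$-derivatives, becomes $\sum_{k+j=m+1,k\le m}J^N[N^kT^j\psi]$ — genuinely decays at least as fast as the target rate $\tau^{-(3+2(m+1)+2K)+\delta}$ (resp. with $5$ in place of $3$), so that it can be absorbed rather than obstruct the bootstrap. Because each swap of an $N$ for a $T$ improves the exponent by $2$ (from $2(m+1)\mapsto$ lower $N$-power but higher $T$-power, each worth $+2$), the inequality goes the right way, but this has to be checked term by term against the norms defined in Corollary \ref{rem:n+2t}. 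The elliptic estimates of Theorem \ref{thm:elh} are exactly what makes this conversion legitimate, and the restriction $0\le m\le\ell$ in the statement is precisely the range in which Theorem \ref{thm:elh} applies with an admissible weight exponent $k$; I would highlight that the hypothesis $m\le\ell$ is used only through that theorem.
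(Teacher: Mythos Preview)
Your overall strategy---induction on $m$ via Lemma \ref{red:aux}, closing with a dyadic/Gr\"onwall argument---is the paper's. The genuine gap is in your handling of step (iii).

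Directly bounding $\int_{\{r\le r_2\}} J^T[N^m T^K\psi]$ by $J^N[N^m T^K\psi]$ and invoking the induction hypothesis at $N$-order $m$ gives only $\tau^{-3-2m-2K+\delta}$, which is two powers short of the target $\tau^{-5-2m-2K+\delta}$ and so cannot be absorbed. Likewise, appealing to \eqref{redshift:ash} for $\int_{\{r_1\le r\le r_2\}}J^T[N^{m+1}T^K\psi]$ produces on the right-hand side $\sum_{k\le m+1}J^N[T^{k+K}\psi]|_{\tau_1}$, whose slowest term is $J^N[T^K\psi]\lesssim\tau_1^{-2\ell-3-2K+\delta}$; at the top order $m=\ell$ this again falls two powers short of $\tau^{-5-2\ell-2K+\delta}$.

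The paper repairs both shortfalls by invoking Theorem \ref{thm:elh} (with Hardy) \emph{again at this step}, not only inside Lemma \ref{red:aux}. The elliptic hierarchy converts both (iii) terms into $\int J^N[N^m T^{K+1}\psi_{\ge\ell}]$; the extra $T$ shifts the inductive input from $T$-order $K$ to $K+1$, recovering exactly the missing two powers. So Theorem \ref{thm:elh}---and hence the hypothesis $m\le\ell$---is used twice: once inside Lemma \ref{red:aux} for the commutator $G_{m,\ell}$, and once more in the proof of Proposition \ref{dec:red} itself for the bulk terms you list under (iii). Your last paragraph locates the restriction $m\le\ell$ only in the first use; the second is where your argument actually breaks. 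Once both (iii) terms are in the form $J^N[N^m T^{K+1}\psi]$, the paper closes with the packaged Gr\"onwall lemma \eqref{gronwall} rather than an explicit dyadic pigeonhole, but that difference is cosmetic.
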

\begin{proof}
First we assume that $I_\ell \neq 0$. We use estimate \eqref{est:red} and we examine one by one the terms in the right-hand side. For the first one we have for any $\tau_1 < \tau_2$ and $r_1$, $r_2$ as in Lemma \ref{red:aux} that:
\begin{align*}
\int_{\tau_1}^{\tau_2} \int_{S_{\tau} \cap \{r_1 \leq r \leq r_2\}} J^T [ N^{m+1} \psi_{\geq \ell} ] \cdot n_{\tau} \, d\mu_{S_{\tau}} d\tau \leq & C \int_{\tau_1}^{\tau_2} \int_{S_{\tau}} J^N [ N^m T\psi_{\geq \ell} ] \cdot n_{\tau} \, d\mu_{S_{\tau}} d\tau \\ & + \int_{\tau_1}^{\tau_2} \int_{S_{\tau}} ( \partial_{\rho}^m T \psi_{\geq \ell} )^2 \, d\mu_{S_{\tau}} d\tau ,
\end{align*}
using Theorem \ref{thm:elh}. We can also improve the above estimate by applying Hardy's inequality \eqref{hardy2} to the second term of the last estimate hence arriving at:
\begin{equation*}
\int_{\tau_1}^{\tau_2} \int_{S_{\tau} \cap \{r_1 \leq r \leq r_2\}} J^T [ N^{m+1} \psi_{\geq \ell} ] \cdot n_{\tau} \, d\mu_{S_{\tau}} d\tau \leq C \int_{\tau_1}^{\tau_2} \int_{S_{\tau}} J^N [ N^m T\psi_{\geq \ell} ] \cdot n_{\tau} \, d\mu_{S_{\tau}} d\tau .
\end{equation*}
Moreover using Hardy's inequality \eqref{hardy2} and Theorem \ref{thm:elh} we get that
$$ \int_{\tau_1}^{\tau_2} \int_{S_{\tau} \cap \{ r\leq r_2\}} J^T [ N^m \psi_{\geq \ell} ] \cdot n_{\tau} \, d\mu_{S_{\tau}} d\tau \leq C \int_{\tau_1}^{\tau_2} \int_{S_{\tau} } J^N [ N^m T \psi_{\geq \ell} ] \cdot n_{\tau} \, d\mu_{S_{\tau}} d\tau . $$
We have the following estimate in the end:
\begin{align*}
 \int_{S_{\tau_2}} & J^N [ N^{m+1} \psi_{\geq \ell} ] \cdot n_{\tau_2} \, d\mu_{S_{\tau_2}} +  \int_{\tau_1}^{\tau_2} \int_{S_{\tau}} J^N [ N^{m+1} \psi_{\geq \ell} ] \cdot n_{\tau} \, d\mu_{S_{\tau}} d\tau \\ \leq & C \int_{S_{\tau_1}} J^N [ N^{m+1} \psi_{\geq \ell} ] \cdot n_{\tau_1} \, d\mu_{S_{\tau_1}} + C \int_{S_{\tau_1}} J^N [ N^{m+1} T \psi_{\geq \ell} ] \cdot n_{\tau_1} \, d\mu_{S_{\tau_1}} \\ & + C  \sum_{k+j = m+1 , k\leq m} \int_{\tau_1}^{\tau_2} \int_{S_{\tau}} J^N [ N^k T^j \psi_{\geq \ell} ] \cdot n_{\tau} \, d\mu_{S_{\tau}} d\tau .
\end{align*}
We will use the last inequality to obtain the desired estimate in an inductive manner. We start with the case $m=0$ where we note that the previous estimate takes the following form:
\begin{align*}
 \int_{S_{\tau_2}} & J^N [ N \psi_{\geq \ell} ] \cdot n_{\tau_2} \, d\mu_{S_{\tau_2}} +  \int_{\tau_1}^{\tau_2} \int_{S_{\tau}} J^N [ N \psi_{\geq \ell} ] \cdot n_{\tau} \, d\mu_{S_{\tau}} d\tau \\ \leq & C \int_{S_{\tau_1}} J^N [ N \psi_{\geq \ell} ] \cdot n_{\tau_1} \, d\mu_{S_{\tau_1}} + C \int_{S_{\tau_1}} J^N [ N T \psi_{\geq \ell} ] \cdot n_{\tau_1} \, d\mu_{S_{\tau_1}} \\ & + C\int_{\tau_1}^{\tau_2} \int_{S_{\tau}} J^N [ T \psi_{\geq \ell} ] \cdot n_{\tau} \, d\mu_{S_{\tau}} d\tau ,
\end{align*}
and using the decay estimates from Proposition \ref{dec:enT} and Corollary \ref{rem:n+2} we get that:
\begin{align*}
 \int_{S_{\tau_2}} & J^N [ N \psi_{\geq \ell} ] \cdot n_{\tau_2} \, d\mu_{S_{\tau_2}} +  \int_{\tau_1}^{\tau_2} \int_{S_{\tau}} J^N [ N \psi_{\geq \ell} ] \cdot n_{\tau} \, d\mu_{S_{\tau}} d\tau \\ \leq & C \int_{S_{\tau_1}} J^N [ N \psi_{\geq \ell} ] \cdot n_{\tau_1} \, d\mu_{S_{\tau_1}} + C \int_{S_{\tau_1}} J^N [ N T \psi_{\geq \ell} ] \cdot n_{\tau_1} \, d\mu_{S_{\tau_1}} \\ & + C \frac{E_{aux-T , I_{\ell} \neq 0 , 1}}{u^{5-\epsilon}} ,
\end{align*}
for any $\epsilon >0$. The desired estimate now follows from the Gr\"{o}nwall-type estimate \eqref{gronwall}. For $m=1$ we follow the same process and we have that:
\begin{align*}
 \int_{S_{\tau_2}} & J^N [ N^{2} \psi_{\geq \ell} ] \cdot n_{\tau_2} \, d\mu_{S_{\tau_2}} +  \int_{\tau_1}^{\tau_2} \int_{S_{\tau}} J^N [ N^{2} \psi_{\geq \ell} ] \cdot n_{\tau} \, d\mu_{S_{\tau}} d\tau \\ \leq & C \int_{S_{\tau_1}} J^N [ N^{2} \psi_{\geq \ell} ] \cdot n_{\tau_1} \, d\mu_{S_{\tau_1}} + C \int_{S_{\tau_1}} J^N [ N^{2} T \psi_{\geq \ell} ] \cdot n_{\tau_1} \, d\mu_{S_{\tau_1}} \\ & + C  \sum_{k+j = 2 , k\leq 1} \int_{\tau_1}^{\tau_2} \int_{S_{\tau}} J^N [ N^k T^j \psi_{\geq \ell} ] \cdot n_{\tau} \, d\mu_{S_{\tau}} d\tau ,
\end{align*}
and by the decay estimates from Proposition \ref{dec:enT} and Corollary \ref{rem:n+2t} and the estimate we just proved for $m=0$ we have that:
\begin{align*}
 \int_{S_{\tau_2}} & J^N [ N \psi_{\geq \ell} ] \cdot n_{\tau_2} \, d\mu_{S_{\tau_2}} +  \int_{\tau_1}^{\tau_2} \int_{S_{\tau}} J^N [ N \psi_{\geq \ell} ] \cdot n_{\tau} \, d\mu_{S_{\tau}} d\tau \\ \leq & C \int_{S_{\tau_1}} J^N [ N \psi_{\geq \ell} ] \cdot n_{\tau_1} \, d\mu_{S_{\tau_1}} + C \int_{S_{\tau_1}} J^N [ N T \psi_{\geq \ell} ] \cdot n_{\tau_1} \, d\mu_{S_{\tau_1}} \\ & + C \frac{E_{aux-decomp-n0-0} + E_{aux-decomp-n0-1}}{u^{7-\epsilon}} ,
\end{align*}
for any $\epsilon >0$. The desired estimate now follows again from the Gr\"{o}nwall-type estimate \eqref{gronwall}. We work in the same way for all other $m \leq \ell$.

For $I_{\ell} = 0$ the proof is identical, we just use the relevant decay energy estimates from Proposition \ref{dec:enT}, Corollaries \ref{rem:n+2} and \ref{rem:n+2t}.
\end{proof}

 \section{Pointwise decay estimates}\label{decay}
\subsection{Almost-sharp decay for the radiation fields $\Phi_{(\ell )}$}
In this Section we demonstrate how to obtain \textit{almost sharp} decay estimates for the radiation fields $P_{\geq \ell} \Phi_{(k)}$, $k \in \{ 0 , \dots , \ell \}$. 

The method of proof is quite standard (and has been used in the Lemma \ref{lm:auxdecay}) so we only give an outline. For a function $f$ and a cut-off $\chi$ as in \ref{prop:generalrpest} ($\chi (r) = 0$ for $r \leq R$, $\chi (r) = 1$ for $r \geq R+1$, and $\chi$ is smooth) we have that:
\begin{align*}
\int_{\mathbb{S}^2} ( \chi f )^2 (u,v) \, d\omega = & 2 \int_{v_{R}}^v ( \chi f ) \cdot ( L (\chi f ) ) \, d\omega dv' \\ \leq & 2 \left( \int_{\mathcal{N}_u} \frac{1}{r^2} ( \chi f )^2 \, d\omega dv \right)^{1/2} \left( \int_{\mathcal{N}_u} r^2 ( L ( \chi f ) )^2 \, d\omega dv \right)^{1/2} \\ \leq & 4 \left( \int_{\mathcal{N}_u}  ( L ( \chi f ) )^2 \, d\omega dv \right)^{1/2} \left( \int_{\mathcal{N}_u} r^2 ( L ( \chi f ) )^2 \, d\omega dv \right)^{1/2} .
\end{align*}
We will fix an $\ell \geq 1$, and we will apply the above computation for $f = P_{\geq \ell} \Phi_{(k)}$ with $k \in \{0 , \dots , \ell \}$, and we will obtain decay for the radiation fields by using the energy decay estimates of Section \ref{energy} (specifically the decay of the energy flux, and the decay for the relevant $r^p$-weighted energy for $p=2$).

\begin{theorem}
Let $\psi$ be a smooth solution of \eqref{eq:waveeqestlrplus} satisfying \eqref{basic:as} and fix some $\ell \geq 1$. Then for $k \in \{0, \dots , \ell \}$ we have in the region $\{ r \geq R \}$ for all $u \geq u_0$ that:
\begin{equation}\label{dec:pwkl}
\int_{\mathbb{S}^2} ( P_{\geq \ell} \Phi_{(k)} )^2 \, d\omega \leq C \frac{E_{aux-decomp-n0-0}}{u^{2+2(\ell - k) - \delta}} ,
\end{equation}
for any $\delta > 0$, for $C = C (D , R)$ and where the quantity $E_{aux-decomp-n0-0}$ is given in Corollary  \ref{rem:n+2t}.

If instead we assume that
$$ I_{\ell} [\psi ] = 0 , $$
then for $k \in \{0, \dots , \ell \}$ we have in the region $\{ r \geq R \}$ for all $u \geq u_0$ that:
\begin{equation}\label{dec:pwkl0}
\int_{\mathbb{S}^2} ( P_{\geq \ell} \Phi_{(k)} )^2 \, d\omega \leq C \frac{E_{aux-decomp-0-0}}{u^{4+2(\ell - k) - \delta}} ,
\end{equation}
for any $\delta > 0$, for $C = C (D , R)$ and where the quantity $E_{aux-decomp-0-0}$ is given in Corollary \ref{rem:n+2t}.
\end{theorem}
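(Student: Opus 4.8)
The plan is to run the standard scheme sketched just before the statement. Writing $f = P_{\geq\ell}\Phi_{(k)}$ and letting $\chi$ be a cutoff supported in $\{r\ge R\}$ with $\chi\equiv 1$ on $\{r\ge R+1\}$, the fundamental theorem of calculus in $v$ along the outgoing cone $\mathcal{N}_u$, followed by Cauchy--Schwarz and the weighted Hardy inequality \eqref{hardy1}, yields for any small $\delta'>0$
\begin{equation*}
\int_{\mathbb{S}^2}(\chi f)^2(u,v)\,d\omega \;\lesssim\; \left(\int_{\mathcal{N}_u} r^{\delta'}(L(\chi f))^2\,d\omega dv\right)^{1/2}\left(\int_{\mathcal{N}_u} r^{2-\delta'}(L(\chi f))^2\,d\omega dv\right)^{1/2},
\end{equation*}
so it suffices to establish $\int_{\mathcal{N}_u} r^{\delta'}(L(\chi f))^2\,d\omega dv \lesssim E\,u^{-3-2(\ell-k)+\delta}$ and $\int_{\mathcal{N}_u} r^{2-\delta'}(L(\chi f))^2\,d\omega dv \lesssim E\,u^{-1-2(\ell-k)+\delta}$ (and, in the $I_\ell=0$ case, two powers of $u$ faster in each). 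The geometric mean of these two rates is exactly the asserted $u^{-2-2(\ell-k)+\delta}$, resp.\ $u^{-4-2(\ell-k)+\delta}$.

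For the weighted fluxes I would split $P_{\geq\ell}=P_\ell+P_{\ell+1}+P_{\geq\ell+2}$ and treat the summands separately. The $P_\ell\Phi_{(k)}$ part is covered directly by Lemma \ref{dec:rp1} when $k=\ell$ --- which controls $\int_{\mathcal{N}_u} r^p(L(P_\ell\Phi_{(\ell)}))^2$ by $E\,u^{-(3-p)+\delta}$ for $p\in(0,3)$, and by $E\,u^{-(5-p)+\delta}$ under $I_\ell=0$ --- and by Lemma \ref{dec:rp2} when $k<\ell$, which supplies the extra $u^{-2(\ell-k)}$ gain and the analogous $I_\ell=0$ improvement; one reads off $p=\delta'$ and $p=2-\delta'$, both inside the admissible range, and absorbs $\delta'$ into $\delta$. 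For $P_{\ell+1}\Phi_{(k)}$ and $P_{\geq\ell+2}\Phi_{(k)}$ I would apply the same estimates at angular frequency $\ell+1$, respectively $\ell+2$ --- legitimate since $P_{\ell+1}\psi$ and $P_{\geq\ell+2}\psi$ are again solutions of \eqref{eq:waveequation} localized at those frequencies and $k\le\ell$ lies below both --- together with the flux estimate preceding Corollary \ref{rem:n+2} (whose analogue for $\Phi_{(k)}$, $k<\ell$, is obtained in the same way); these contributions decay strictly faster, so the $P_\ell$ term sets the rate. For $k=\ell$ this combination is already packaged in Corollary \ref{rem:n+2t}; for $k<\ell$ the passage from $\Phi_{(k+1)}$ down to $\Phi_{(k)}$ is the Hardy step \eqref{hardy1} carried out inside the proof of Lemma \ref{lm:auxdecay}.

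Inserting these rates into the geometric-mean inequality gives the claim on $\{r\ge R+1\}$; since $\chi$ is supported in $\{r\ge R\}$, extending to all of $\{r\ge R\}$ is a local matter handled by the uniform energy bound \eqref{est:enb} and a Sobolev embedding on the compact slab $\{R\le r\le R+1\}$. I expect the real work to be organizational rather than analytical: one must check that every norm invoked from Lemmas \ref{dec:rp1}, \ref{dec:rp2} and Corollary \ref{rem:n+2t} across the three spectral pieces is jointly dominated by the single functional $E_{aux-decomp-n0-0}$ (resp.\ $E_{aux-decomp-0-0}$), which in particular requires tracking the angular-derivative losses incurred by the $P_{\geq\ell+2}$ piece when it is routed through the estimate preceding Corollary \ref{rem:n+2}. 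No new difficulty beyond Section \ref{energy} should arise, and the $I_\ell=0$ case proceeds identically using throughout the improved ($p<5$) hierarchies and their faster decay rates.
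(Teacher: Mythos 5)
Your proposal is correct and follows essentially the same route as the paper: the identity $\int_{\mathbb{S}^2}(\chi f)^2 \lesssim (\int_{\mathcal{N}_u}(L(\chi f))^2)^{1/2}(\int_{\mathcal{N}_u}r^2(L(\chi f))^2)^{1/2}$ obtained from the fundamental theorem of calculus, Cauchy--Schwarz and Hardy, combined with the energy decay estimates of Section \ref{energy} applied to the decomposition $P_{\geq\ell}=P_\ell+P_{\ell+1}+P_{\geq\ell+2}$. Your use of the weights $r^{\delta'}$ and $r^{2-\delta'}$ in place of the paper's $p=0$ and $p=2$ endpoints is an immaterial variation absorbed into $\delta$.
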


We can also apply the same method to $P_{\geq \ell} T^m \Phi_{(k)}$ for any $m \in \mathbb{N}$ and $k \in \{0, \dots , \ell \}$ and in this case we get the following result.

\begin{theorem}
Let $\psi$ be a smooth solution of \eqref{eq:waveeqestlrplus} satisfying \eqref{basic:as}, fix some $\ell \geq 1$, and fix some $m \in \mathbb{N}$. Then for $k \in \{0, \dots , \ell \}$ we have in the region $\{ r \geq R \}$ for all $u \geq u_0$ that:
\begin{equation}\label{dec:pwklt}
\int_{\mathbb{S}^2} ( P_{\geq \ell} T^m \Phi_{(k)} )^2 \, d\omega \leq C \frac{E_{aux-decomp-n0-m}}{u^{2+2m+2(\ell - k) - \delta}} ,
\end{equation}
for any $\delta > 0$, for $C = C (D , R)$ and where the quantity $E_{aux-decomp-n0-m}$ is given in Corollary  \ref{rem:n+2t}.

If instead we assume that
$$ I_{\ell} [\psi ] = 0 , $$
then for $k \in \{0, \dots , \ell \}$ we have in the region $\{ r \geq R \}$ for all $u \geq u_0$ that:
\begin{equation}\label{dec:pwklt0}
\int_{\mathbb{S}^2} ( P_{\geq \ell} T^m \Phi_{(k)} )^2 \, d\omega \leq C \frac{E_{aux-decomp-0-m}}{u^{4+2m+2(\ell - k) - \delta}} ,
\end{equation}
for any $\delta > 0$, for $C = C (D , R)$ and where the quantity $E_{aux-decomp-0-m}$ is given in Corollary  \ref{rem:n+2t}.
\end{theorem}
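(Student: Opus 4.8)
The plan is to follow the outline already sketched right before the statement: the pointwise control of the spherical $L^2$-norm of $P_{\geq \ell} T^m \Phi_{(k)}$ comes from the one-dimensional Sobolev/fundamental-theorem-of-calculus argument in the variable $v$ along a null cone $\mathcal{N}_u$, combined with a Hardy inequality and the energy-flux and $r^p$-weighted decay estimates of Section~\ref{energy}. Concretely, for fixed $u \geq u_0$ and $v_R$ the value of $v$ on $\mathcal{N}_u \cap \{ r = R\}$, I would write, with $\chi$ the cutoff of Proposition~\ref{prop:generalrpest},
\begin{equation*}
\int_{\mathbb{S}^2} ( \chi P_{\geq \ell} T^m \Phi_{(k)} )^2 (u, v) \, d\omega = 2 \int_{v_R}^{v} \int_{\mathbb{S}^2} ( \chi P_{\geq \ell} T^m \Phi_{(k)} ) \cdot ( L ( \chi P_{\geq \ell} T^m \Phi_{(k)} ) ) \, d\omega dv',
\end{equation*}
then Cauchy--Schwarz in $v'$ with weights $r^{-2}$ and $r^2$, and finally Hardy's inequality \eqref{hardy1} to trade the $r^{-2}$-weighted bulk term for an unweighted $( L ( \chi P_{\geq \ell} T^m \Phi_{(k)} ) )^2$ term. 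This reduces everything to bounding the product of the $p=0$ and $p=2$ null-cone energies of $L ( P_{\geq \ell} T^m \Phi_{(k)} )$.

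Next I would plug in the decay estimates already proved. For $k = \ell$ the relevant bounds are the $p=0$ and $p=2$ cases of Corollary~\ref{rem:n+2t}: in the $I_{\ell}[\psi] \neq 0$ case the $r^p$-weighted null energy of $L ( P_{\geq \ell} T^m \Phi_{(\ell)} )$ decays like $u^{-(3+2m-p-\delta)}$, giving $u^{-(3+2m-\delta)}$ at $p=0$ and $u^{-(1+2m-\delta)}$ at $p=2$, so the geometric mean of the two is $u^{-(2+2m-\delta)}$, which is \eqref{dec:pwklt}. For $0 \leq k < \ell$, I would instead use the auxiliary decay of $L ( P_{\geq \ell} T^m \Phi_{(k)} )$ at $p=0$ and $p=2$ obtained exactly as in Lemma~\ref{lm:auxdecay} (and its $T$-commuted analogue Lemma~\ref{lm:auxdecay1}), where each step down in $k$ costs an extra Hardy inequality \eqref{hardy1} and correspondingly picks up two additional powers of $u$ in the decay rate — this produces the $2(\ell - k)$ improvement in the exponent. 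The $I_\ell[\psi] = 0$ case is identical, substituting the second (faster) family of estimates from Corollary~\ref{rem:n+2t} and Proposition~\ref{dec:enT}, which shifts every exponent up by $2$, yielding \eqref{dec:pwklt0}. The initial-data energies collect into $E_{aux-decomp-n0-m}$ (resp.\ $E_{aux-decomp-0-m}$) precisely as defined in Corollary~\ref{rem:n+2t}, since those are exactly the norms appearing on the right-hand sides of the energy decay statements.

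The only point requiring a little care is bookkeeping of the commutator between $T^m$ and the hierarchy: commuting the equation \eqref{eq:maincommeq} with $T^m$ produces lower-order $\Phi_{(k')}$ terms with $k' < k$ weighted by extra powers of $r^{-1}$, which is why angular derivatives and the larger initial-data norm $E_{aux-decomp-n0-m}$ enter; but all of this has already been absorbed into the statements of Propositions~\ref{prop:fixedlrpdrestdrg}, \ref{prop:fixedlrpdrestT} and Corollary~\ref{rem:n+2t}, so at the level of this theorem it is purely a matter of quoting those results. I expect the main (very mild) obstacle to be simply verifying that the ranges of validity of the $r^p$-weighted estimates — $p \in (0,2)$ in Corollary~\ref{rem:n+2t} — are indeed wide enough for the Cauchy--Schwarz step (they are: we only need $p = 0$ and $p = 2$, the latter being the endpoint, which is admissible since \eqref{dec:enn+2tn0} and \eqref{dec:enn+2t0} are stated for $p \in (0,2)$ but, as in Lemma~\ref{lm:auxdecay}, the endpoint $p = 2$ is recovered by one more application of the hierarchy), and that the cutoff error terms supported in $\{ R \leq r \leq R+1 \}$ are controlled by the energy flux through $\Sigma_u$, which decays at the required rate by Corollary~\ref{rem:enn+2}. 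Hence the proof is a direct assembly of the pointwise reduction above with the already-established energy decay, and I would present it as an outline exactly as the paper does for the preceding theorem.
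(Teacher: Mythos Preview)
Your proposal is correct and follows exactly the approach of the paper: the paper itself gives no separate proof for this theorem, stating only that one applies ``the same method'' as for the preceding $m=0$ result, namely the fundamental-theorem-of-calculus/Cauchy--Schwarz/Hardy argument along $\mathcal{N}_u$ combined with the $p=0$ and $p=2$ energy decay from Section~\ref{energy} (Corollary~\ref{rem:n+2t} and Proposition~\ref{dec:enT}). Your discussion of the $k<\ell$ step via Hardy, the $I_\ell=0$ shift, and the endpoint $p=2$ issue are all in line with how the paper treats the analogous points in Lemma~\ref{lm:auxdecay}.
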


\subsection{Almost-sharp decay for $P_{\geq \ell} \psi$ and its derivatives}
In this section we will use the elliptic estimates and the energy decay estimates from Section \ref{energy} to derive \textit{almost sharp} decay estimates for a frequency localized linear wave $P_{\geq \ell} \psi$ and its $\partial_r$ derivatives.

\begin{theorem}\label{dec:drk1}
Fix some $\ell \geq 1$ and let $\psi$ be a smooth solution of \eqref{eq:waveequation} satisfying \eqref{basic:as}, and assume that
$$ I_{\ell} [ \psi] \neq 0 . $$

Then for $0 \leq k \leq \ell$ we have that
\begin{equation}\label{dec:drkl1}
\int_{\mathbb{S}^2} r^{-2\ell+\eta+2k} ( \partial_r^k P_{\geq \ell}\psi)^2 \, d\omega \leq C \frac{E_{aux-decomp-n0-k+\ell+1}}{u^{4\ell+4-2\eta-\delta}} ,
\end{equation}
and
\begin{equation}\label{dec:drkl2}
\int_{\mathbb{S}^2} r^{-2\ell+\eta+2k+1} ( \partial_r^k P_{\geq \ell}\psi )^2 \, d\omega \leq C \frac{E_{aux-decomp-n0-k+\ell+1}}{u^{4\ell+3-2\eta-\delta}} ,
\end{equation}
where $C = C ( D , R , k ,\delta, \ell )$, and where $E_{aux-decomp-n0-k+\ell+1}$ is given in Corollary \ref{rem:n+2t}.

If instead we assume that
$$ I_{\ell} [\psi ] = 0 , $$
then 
 for $0 \leq k \leq \ell$ we have that
\begin{equation}\label{dec:drkl1n}
\int_{\mathbb{S}^2} r^{-2\ell+\eta+2k} ( \partial_r^k P_{\geq \ell}\psi )^2 \, d\omega \leq C \frac{E_{aux-decomp-0-k}}{u^{4\ell+6-2\eta-\delta}} ,
\end{equation}
and
\begin{equation}\label{dec:drkl2n}
\int_{\mathbb{S}^2} r^{-2\ell+\eta+2k+1} ( \partial_r^k P_{\geq \ell}\psi )^2 \, d\omega \leq C \frac{E_{aux-decomp-0-k}}{u^{4\ell+5-2\eta-\delta}} ,
\end{equation}
where $C = C ( D , R , k ,\delta, \ell )$, and where $E_{aux-decomp-0-k+\ell+1}$ is given in Corollary \ref{rem:n+2t}.
\end{theorem}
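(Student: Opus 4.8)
\textbf{Proof strategy for Theorem \ref{dec:drk1}.}
The plan is to feed the weighted elliptic hierarchy of Theorem \ref{thm:elh} with the almost-sharp energy decay already obtained in Section \ref{energy}, and then convert the resulting spacetime (or flux) bounds into pointwise bounds on $\{r = \text{const}\}$ by a one-dimensional Sobolev/fundamental-theorem-of-calculus argument in $r$, just as in \eqref{psipointwiseintro} of the introduction. Concretely, I would first record the pointwise lemma: for a frequency-localised wave one has
$$ -\partial_{\rho}\left( r^{-2\ell+\eta+2k}\,(\partial_\rho^k\psi_{\geq \ell})^2\right) \lesssim r^{-2\ell+\eta+2k-1}\,(\partial_\rho^k\psi_{\geq \ell})^2 + r^{-2\ell+\eta+2k+1}\,(\partial_\rho^{k+1}\psi_{\geq \ell})^2, $$
and since $-2\ell+\eta+2k-1 < -1$ for $k \leq \ell$ and $\eta$ small, integrating inward from $r = \infty$ (where the boundary term vanishes by the decay of $r\psi$ and $r^2\partial_r\psi$ assumed in Theorem \ref{thm:elh}) gives
$$ \int_{\s^2} r^{-2\ell+\eta+2k}(\partial_\rho^k\psi_{\geq \ell})^2\,d\omega \lesssim \int_{\mathcal{S}_\tau} r^{-2\ell+\eta+2k-1}\,J^N[\partial_\rho^k\psi_{\geq \ell}]\cdot n_\tau\,d\mu_{\mathcal{S}_\tau}, $$
so that \eqref{dec:drkl1} and \eqref{dec:drkl2} both reduce to decay for a negatively $r$-weighted flux of $\partial_\rho^k\psi_{\geq \ell}$.

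Next I would estimate that negatively-weighted flux. Applying Theorem \ref{thm:elh} with $m = k$ and with the weight exponent chosen so that the left-hand side controls $\int r^{-2\ell+\eta+2k-1}(\partial_\rho^k\psi_{\geq \ell})^2$ (which is admissible precisely because $-3-2k < -(2\ell-\eta-2k-1) < 2\ell-2k-1$ for $k \leq \ell$ and $\eta$ small — this is where the sharp length $\ell+1$ of the elliptic hierarchy enters), one trades $\partial_\rho^k\psi_{\geq \ell}$ for $T\psi_{\geq \ell}$, $\partial_\rho T\psi_{\geq \ell}$, $\dots$, $\partial_\rho^k T\psi_{\geq \ell}$ and $T^2$-derivatives thereof, each carrying two extra powers of $r$. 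Iterating the elliptic estimate $\ell-k$ further times (exactly as in the schematic computation of Section \ref{sec:DecayOfEnergyFluxWithNegativeRWeights}, successively lowering the weight exponent down to the bottom of the admissible range $p = 1-\epsilon$) replaces $\psi_{\geq \ell}$ with $T^{\ell+1}\psi_{\geq \ell}$ and lands on $\int_{\mathcal{S}_\tau} r^{1+\epsilon}J^N[N^{\ell+1-k}T^{k}\psi_{\geq \ell}]\cdot n_\tau\,d\mu_{\mathcal{S}_\tau}$ type quantities, which are controlled by the red-shift decay estimate \eqref{dec:redn} of Proposition \ref{dec:red} together with the $r^p$-flux decay of Corollary \ref{rem:n+2t}. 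Counting powers: commuting with $T$ a total of $\ell+1$ times adds $2(\ell+1)$ to the base rate $\tau^{-(2\ell+3)+\delta}$ (resp. $\tau^{-(2\ell+5)+\delta}$ in the $I_\ell = 0$ case from Corollary \ref{rem:n+2t}), and the elliptic trades cost the factors of $r$ already absorbed into the weight; bookkeeping then yields the exponents $4\ell+4-2\eta-\delta$ and $4\ell+3-2\eta-\delta$ (resp. $4\ell+6$ and $4\ell+5$), as claimed. One must be careful that at each elliptic step the $O(r^{-1-\eta})$ error terms in $F_T$ are genuinely lower order, but this is guaranteed by the structure of $F_T$ in \eqref{eq:inhomelliptic} and the choice $\frac{2}{D}-h = O(r^{-1-\eta})$.

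Finally, to get \eqref{dec:drkl1}–\eqref{dec:drkl2} for all $0 \leq k \leq \ell$ I would run an induction on $k$ from $k = 0$ (which is Proposition \ref{prop:endecay}/Corollary \ref{rem:enn+2} combined with the $\partial_\rho$ Hardy inequality \eqref{psipointwiseintro}), using the commuted elliptic estimate \eqref{commutellipticintro} / Theorem \ref{thm:elh} at step $k$ to pass from level $k$ to level $k+1$; the norms $E_{aux-decomp-n0-k+\ell+1}$ (resp. $E_{aux-decomp-0-k}$) are exactly the ones that appear on the right-hand sides of those inputs after the $T^{\ell+1}$-commutation, which is why the shift by $\ell+1$ in the subscript shows up. The main obstacle I anticipate is the weight/index bookkeeping: making sure that at every one of the $\sim 2\ell$ elliptic iterations the exponent $p$ stays strictly inside the admissible window $(-3-2m,\,2\ell-2m-1)$ of Theorem \ref{thm:elh} while still capturing enough negative $r$-weight to close \eqref{psipointwiseintro}, and simultaneously that the $\eta$ and $\delta$ losses do not accumulate beyond what is stated. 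This is precisely the place where the sharpness of both hierarchies — the length-$(2\ell+5)$ $r^p$-hierarchy and the length-$(\ell+1)$ elliptic hierarchy — is used, and where a naive estimate would fall short by exactly the powers one needs.
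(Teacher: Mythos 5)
Your strategy coincides with the paper's proof of Theorem \ref{dec:drk1}: the fundamental theorem of calculus in $\rho$ plus Cauchy--Schwarz reduces the weighted spherical $L^2$ norm of $\partial_\rho^k P_{\geq \ell}\psi$ to negatively $r$-weighted radial integrals, which are then fed through the elliptic hierarchy of Theorems \ref{thm:el} and \ref{thm:elh} (staying inside the admissible weight window at each iteration) until they become $T$-commuted quantities in the admissible $r^p$ range, controlled by Corollary \ref{rem:n+2t} and the $N$-commuted redshift decay of Proposition \ref{dec:red} — exactly the argument the paper carries out explicitly for $k=0$ and $k=\ell$. The one refinement worth noting is that the paper keeps the Cauchy--Schwarz product of the two factors (whose $r$-weights differ by $r^2$ and whose $T$-commutation levels therefore end up differing by one) all the way to the final energies rather than collapsing to a single $J^N$-flux as in your first display; this geometric mean is what yields the rate $u^{-(4\ell+4-2\eta-\delta)}$ in \eqref{dec:drkl1}, one power better than \eqref{dec:drkl2}.
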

\begin{proof}
We will demonstrate the cases $k=0$ and $k=\ell$, the proof of the remaining cases is done in the same way. 

\textbf{The case $k=0$:} By the fundamental theorem of calculus we have for any $r_0 \geq r_+$ and some $\eta > 0$ that:
\begin{align*}
\int_{\mathbb{S}^2} & r_0^{-2\ell+\eta} (P_{\geq \ell}\psi )^2 \, d\omega =  - \int_{r_0}^{\infty}  \int_{\mathbb{S}^2} \partial_{\rho} \left( r^{-2\ell+\eta} (P_{\geq \ell}\psi )^2  \right)  \, d\omega d\rho \\ = & (2-\eta )  \int_{r_0}^{\infty}  \int_{\mathbb{S}^2} r^{-2\ell-1+\eta} (P_{\geq \ell}\psi )^2 \, d\omega d\rho - \int_{r_0}^{\infty}  \int_{\mathbb{S}^2} 2 r^{-2\ell+\eta} P_{\geq \ell} \psi \cdot ( \partial_{\rho} P_{\geq \ell} \psi ) \, d\omega d\rho \\ \leq &  (2-\eta )  \int_{r_0}^{\infty}  \int_{\mathbb{S}^2} r^{-2\ell-1+\eta} (P_{\geq \ell}\psi )^2 \, d\omega d\rho \\ & + \left( \int_{r_0}^{\infty}  \int_{\mathbb{S}^2} r^{-2\ell-1+\eta} (P_{\geq \ell}\psi )^2 \, d\omega d\rho \right)^{1/2} \left( \int_{r_0}^{\infty}  \int_{\mathbb{S}^2} r^{-2\ell+1+\eta} ( \partial_{\rho} P_{\geq \ell} \psi )^2 \, d\omega d\rho \right)^{1/2} \\ \doteq & I + II .
\end{align*}
We are going to use now the elliptic estimates \eqref{eq:ellipticpsiell} of Theorem \ref{thm:el}. For the term $I$ we apply estimate \eqref{eq:ellipticpsiell} for $k=2\ell-1-\eta$ using the lowest order term and we have that
\begin{align*}
I = &  (2-\eta )  \int_{r_0}^{\infty}  \int_{\mathbb{S}^2} r^{-2\ell-1+\eta} ( P_{\geq \ell} \psi )^2 \, d\omega d\rho \\ \leq & C  \int_{r_+}^{\infty} \int_{\mathbb{S}^2} \left[ r^{-2\ell + 3+\eta} ( \partial_{\rho} T P_{\geq \ell} \psi )^2 + r^{-2\ell +1 +\eta -2 \eta'} ( T^2 P_{\geq \ell} \psi )^2 + r^{-2\ell+1+\eta} (T P_{\geq \ell} \psi )^2 \right] \, d\omega d\rho .
\end{align*}
We repeat the same process $\ell$ times by using again the elliptic estimates of Lemma \ref{thm:el} always using the two lowest order terms without the $D$ weight, but for $T^k P_{\geq \ell} \psi$, $k \in \{1, \dots , \ell\}$ in the place of $P_{\geq \ell} \psi$ (in the last estimate we use the estimate for the lowest order term of Theorem \ref{thm:el} for $T\psi_{\geq \ell}$ and $T^2 \psi_{\geq \ell}$ in the place of $\psi_{\geq \ell}$). In the end we get for some $C = C (D,R,\ell)$ that
\begin{equation*}
 I \leq C  \int_{r_0}^{\infty}  \int_{\mathbb{S}^2} r^{3+\eta} ( \partial_{\rho} T^{\ell +1} P_{\geq \ell}\psi )^2 \, d\omega d\rho   . 
 \end{equation*}
The same process gives us for some $C = C(D,R,\ell)$ that
\begin{equation*}
II \leq C \int_{r_0}^{\infty}  \int_{\mathbb{S}^2} r^{3+\eta} ( \partial_{\rho} T^{\ell +1} P_{\geq \ell}\psi )^2 \, d\omega d\rho  . 
\end{equation*}
The two previous estimates give us now that:
\begin{align*}
\int_{\mathbb{S}^2} & r_0^{-2\ell+\eta} (P_{\geq \ell}\psi )^2 \, d\omega \leq   C \int_{r_0}^{\infty}  \int_{\mathbb{S}^2} r^{3+\eta} ( \partial_{\rho} T^{\ell +1} P_{\geq \ell}\psi )^2 \, d\omega d\rho \\ \leq & \int_{\mathcal{S}_{\tau}} J^N [ T^{\ell+1} P_{\geq \ell} \psi ]\cdot n_{\tau} \, d\mu_{\mathcal{S}_{\tau}}  + \int_{\mathcal{N}_{\tau}} r^{1+\eta} ( L ( P_{\geq \ell} T^{\ell+1} \phi ) )^2 \, d\omega dv    , 
\end{align*}
and the decay follows from the results of Corollary \ref{rem:n+2t}. 

\textbf{The case $k=\ell$:} By the fundamental theorem of calculus we have for any $r_0 \geq r_+$ that:
\begin{align*}
\int_{\mathbb{S}^2} & r_0^{\eta} ( \partial_{\rho}^{\ell} P_{\geq \ell} \psi )^2 \, d\omega  =  - \int_{r_0}^{\infty}  \int_{\mathbb{S}^2} \partial_{\rho} \left( r^{\eta} ( \partial_{\rho} ^{\ell} P_{\geq \ell} \psi )^2 \right) \, d\omega d\rho \\ = & -\eta \int_{r_0}^{\infty}\int_{\mathbb{S}^2} r^{\eta-1} ( \partial_{\rho}^{\ell} P_{\geq \ell} \psi )^2 \, d\omega d\rho - 2 \int_{r_0}^{\infty} \int_{\mathbb{S}^2}r^{\eta} ( \partial_{\rho}^{\ell} P_{\geq \ell} \psi ) \cdot ( \partial_{\rho}^{\ell+1} P_{\geq \ell} \psi ) \, d\omega d\rho \\ \leq &   - 2 \int_{r_0}^{\infty}\int_{\mathbb{S}^2} r^{\eta} ( \partial_{\rho}^{\ell} P_{\geq \ell} \psi ) \cdot ( \partial_{\rho}^{\ell+1} P_{\geq \ell} \psi ) \, d\omega d\rho \\ \leq & \left( \int_{r_0}^{\infty}\int_{\mathbb{S}^2} r^{1+\eta}( \partial_{\rho}^{\ell+1} P_{\geq \ell} \psi )^2 \, d\omega d\rho \right)^{1/2}  \cdot\left( \int_{r_0}^{\infty}\int_{\mathbb{S}^2} r^{-1+\eta}( \partial_{\rho}^{\ell} P_{\geq \ell} \psi )^2 \, d\omega d\rho \right)^{1/2} \\ \leq & C \left( \int_{r_0}^{\infty}\int_{\mathbb{S}^2} r^{1+\eta}( \partial_{\rho}^{\ell+1} P_{\geq \ell} \psi )^2 \, d\omega d\rho \right)^{1/2} \\ & \times \Bigg( \sum_{s=1}^{\ell+1} \Big[\int_{r_0}^{\infty}\int_{\mathbb{S}^2} r^{3+\eta-2(s-\ell+1)}( \partial_{\rho}^{s} T P_{\geq \ell} \psi )^2 + r^{1+\eta -2\eta' -2 (\ell-s+1)} ( \partial_{\rho}^{s-1} T^2 P_{\geq \ell} \psi )^2 \, d\omega d\rho \\ & + \int_{r_0}^{\infty} \int_{\mathbb{S}^2}r^{1+\eta  -2 (\ell-s+1)} ( \partial_{\rho}^{s-1} T P_{\geq \ell} \psi )^2  \, d\omega d\rho \Big] \Bigg)^{1/2} ,
\end{align*}
where in the last inequality we used the higher order elliptic estimates of Theorem \ref{thm:elh} with $m= \ell$ and $k = -1-\eta$. Let us keep only the highest order term (in $\partial_{\rho}$ derivatives) from the very last term as the rest can be turned into it by repeating the same process. We are left with the following:
\begin{align*}
\Bigg( \int_{r_0}^{\infty}\int_{\mathbb{S}^2} & r^{1+\eta}( \partial_{\rho}^{\ell+1} P_{\geq \ell} \psi )^2 \, d\omega dr \Bigg)^{1/2}  \cdot  \Bigg( \int_{r_0}^{\infty}\int_{\mathbb{S}^2} r^{3+\eta}( \partial_{\rho}^{\ell+1} T P_{\geq \ell} \psi )^2 \, d\omega dr \Bigg)^{1/2} \\  \leq & \Bigg( \int_{\mathcal{S}_{\tau}} J^N [ N^{\ell} P_{\geq \ell} \psi ] \cdot n_{\tau} \, d\mu_{\mathcal{S}_{\tau}} + \sum_{s=1}^{\ell} \Big[ \int_{r_0}^{\infty}\int_{\mathbb{S}^2} r^{1+\eta - 2 (\ell-s)} ( \partial_{\rho}^s T P_{\geq \ell} \psi )^2 \, d\omega d\rho \\ &+ \int_{r_0}^{\infty}\int_{\mathbb{S}^2} ( r^{-1+\eta-2\eta' - 2 ( \ell -s)} ( \partial_{\rho}^{s-1} T^2 P_{\geq \ell} \psi )^2 +  r^{-1+\eta - 2 ( \ell -s)} ( \partial_{\rho}^{s-1} T P_{\geq \ell} \psi )^2 ) \, d\omega d\rho \Big] \Bigg)^{1/2} \\ & \times \Bigg( \int_{\mathcal{S}_{\tau}} J^N [ N^{\ell} T P_{\geq \ell} \psi ] \cdot n_{\tau} \, d\mu_{\mathcal{S}_{\tau}} + \sum_{s=1}^{\ell} \Big[ \int_{r_0}^{\infty} r^{3+\eta - 2 (\ell-s)} ( \partial_{\rho}^s T^2 P_{\geq \ell} \psi )^2 \, d\omega d\rho \\ &+ \int_{r_0}^{\infty}\int_{\mathbb{S}^2} ( r^{1+\eta-2\eta' - 2 ( \ell -s)} ( \partial_{\rho}^{s-1} T^3 P_{\geq \ell} \psi )^2 +  r^{1+\eta - 2 ( \ell -s)} ( \partial_{\rho}^{s-1} T^2 P_{\geq \ell} \psi )^2 ) \, d\omega d\rho \Big] \Bigg)^{1/2} ,
\end{align*}
where we broke the integrals into parts close and away from the horizon, and close to the horizon we just bound them by the aforementioned $N$ energies, and away we use again the results of Theorem \ref{thm:elh} for $m = \ell-1$, $k = 1-\eta$ and $k = -1-\eta$. Repeating the same process (and applying Hardy's inequality \eqref{hardy1} once) we are left with the following:
\begin{align*}
\Bigg( & \sum_{m_1+m_2 = \ell}  \int_{\mathcal{S}_{\tau}} J^N [ N^{m_1} T^{m_2} P_{\geq \ell} \psi ] \cdot n_{\tau} \, d\mu_{\mathcal{S}_{\tau}} + \int_{r_0}^{\infty} \int_{\mathbb{S}^2} r^{3+\eta} ( \partial_{\rho} T^{\ell} P_{\geq \ell}\psi )^2 \, d\omega d\rho \Bigg)^{1/2} \\ & \times  \Bigg( \sum_{m_1+m_2 = \ell} \int_{\mathcal{S}_{\tau}} J^N [ N^{m_1} T^{m_2+1} P_{\geq \ell} \psi ] \cdot n_{\tau} \, d\mu_{\mathcal{S}_{\tau}} + \int_{r_0}^{\infty} \int_{\mathbb{S}^2} r^{3+\eta} ( \partial_{\rho}^2 T^{\ell+1} P_{\geq \ell}\psi )^2 \, d\omega d\rho \Bigg)^{1/2} \\ \leq & \Bigg(  \sum_{m_1+m_2 = \ell}  \int_{\mathcal{S}_{\tau}} J^N [ N^{m_1} T^{m_2} P_{\geq \ell} \psi ] \cdot n_{\tau} \, d\mu_{\mathcal{S}_{\tau}} + \int_{\mathcal{N}_{\tau}} r^{1+\eta} ( L (T^{\ell+1} P_{\geq \ell}\phi ) )^2 \, d\omega dv \Bigg)^{1/2} \\ & \times  \Bigg( \sum_{m_1+m_2 = \ell} \int_{\mathcal{S}_{\tau}} J^N [ N^{m_1} T^{m_2+1} P_{\geq \ell} \psi ] \cdot n_{\tau} \, d\mu_{\mathcal{S}_{\tau}} + \int_{\mathcal{N}_{\tau}} r^{1+\eta} ( L (T^{\ell+1} P_{\geq \ell}\phi ) )^2 \, d\omega dv \Bigg)^{1/2}  ,
\end{align*}
after applying again Theorem \ref{thm:el}, and now the result follows from Corollary \ref{rem:n+2t} and Proposition \ref{dec:red}.

\end{proof}

The same proof as the one of the aforementioned theorem, gives us the following as well:
\begin{theorem}
Fix some $\ell \geq 1$ and let $\psi$ be a smooth solution of \eqref{eq:waveequation} satisfying \eqref{basic:as}, and assume that
$$I_{\ell} [ \psi] \neq 0.$$
 
Then for $0 \leq k \leq \ell$ we have that
\begin{equation}\label{dec:drklt1}
\int_{\mathbb{S}^2} r^{-2\ell+\eta+2k} ( \partial_r^k TP_{\geq \ell} \psi )^2 \, d\omega \leq C \frac{E_{aux-decomp-n0-k+\ell+2}}{u^{4\ell+6-2\eta-\delta}} ,
\end{equation}
and
\begin{equation}\label{dec:drklt2}
\int_{\mathbb{S}^2} r^{-2\ell+\eta+2k+1} ( \partial_r^k TP_{\geq \ell} \psi_{\ell} )^2 \, d\omega \leq C \frac{E_{aux-decomp-n0-k+\ell+2}}{u^{4\ell+5-2\eta-\delta}} ,
\end{equation}
where $C = C ( D,R,k,\delta,\ell )$, and $E_{aux-decomp-n0-k+\ell+2}$ is given in Corollary \ref{rem:n+2t}.

\end{theorem}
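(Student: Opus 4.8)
The plan is to rerun the proof of Theorem~\ref{dec:drk1} verbatim, with the solution $\psi$ replaced throughout by $T\psi$; equivalently, one commutes the entire argument of Theorem~\ref{dec:drk1} with one factor of $T$ before taking pointwise traces on $\mathbb{S}^2$. The reason this works is that the proof of Theorem~\ref{dec:drk1} never uses the size of the Newman--Penrose charge $I_\ell[\psi]$ directly: it proceeds only through the fundamental theorem of calculus in $\rho$, the elliptic hierarchies of Theorem~\ref{thm:el} and Theorem~\ref{thm:elh}, Hardy's inequality, and the energy-decay inputs of Corollary~\ref{rem:n+2t} and Proposition~\ref{dec:red}. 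Since $T\psi$ is again a smooth solution of~\eqref{eq:waveequation} supported on angular frequencies $\geq\ell$, all of these remain available, and because each of the relevant energies decays two powers of $\tau$ faster after commuting with $T$ (with the controlling auxiliary norm shifting by one index, from $E_{aux-decomp-n0-k+\ell+1}$ to $E_{aux-decomp-n0-k+\ell+2}$), the resulting estimates are precisely~\eqref{dec:drklt1} and~\eqref{dec:drklt2}, i.e. the bounds of Theorem~\ref{dec:drk1} with the $u$-exponents raised by $2$.

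In more detail, for $k=0$ I would apply the fundamental theorem of calculus in $\rho$ to $r_0^{-2\ell+\eta}\int_{\mathbb{S}^2}(TP_{\geq\ell}\psi)^2\,d\omega$, producing a good bulk term and a cross term, and then iterate the elliptic estimate~\eqref{eq:ellipticpsiell} of Theorem~\ref{thm:el} exactly $\ell$ times; each application trades two powers of $r$ for one extra $T$-derivative while keeping the weight exponent inside the admissible window $(-3,\,2\ell-1)$, and the process terminates with the single term $\int_{r_0}^{\infty}\int_{\mathbb{S}^2} r^{3+\eta}(\partial_\rho T^{\ell+2}P_{\geq\ell}\psi)^2\,d\omega d\rho$. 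Bounding this by the flux of $T^{\ell+2}P_{\geq\ell}\psi$ through $\mathcal{S}_\tau$ plus the $r^{1+\eta}$-weighted flux of $L(P_{\geq\ell}T^{\ell+2}\phi)$ through $\mathcal{N}_\tau$, and invoking Corollary~\ref{rem:n+2t}, yields the claimed decay. For $k=\ell$ I would instead apply the fundamental theorem of calculus to $r_0^{\eta}\int_{\mathbb{S}^2}(\partial_\rho^\ell TP_{\geq\ell}\psi)^2\,d\omega$, split off $(\partial_\rho^{\ell+1}TP_{\geq\ell}\psi)^2$ by Cauchy--Schwarz, and feed in the higher-order elliptic estimates of Theorem~\ref{thm:elh} with $m=\ell$ and parameter $k=-1-\eta$, and then, after each subsequent reduction, with $m=\ell-1$ and parameters $k=1-\eta$ and $k=-1-\eta$; after finitely many such steps and one use of Hardy's inequality near the horizon, everything reduces to the redshift energies $\int_{\mathcal{S}_\tau}J^N[N^{m_1}T^{m_2}P_{\geq\ell}\psi]\cdot n_\tau\,d\mu_{\mathcal{S}_\tau}$ with $m_1+m_2\leq\ell+2$ and to the $r^{1+\eta}$-weighted flux of $L(P_{\geq\ell}T^{\ell+2}\phi)$, which decay by Proposition~\ref{dec:red} (applied with one more factor of $T$ than in Theorem~\ref{dec:drk1}) and by Corollary~\ref{rem:n+2t} respectively. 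The intermediate cases $0<k<\ell$ follow by interpolating between these two endpoints, again by the same combination of the fundamental theorem of calculus, Theorem~\ref{thm:elh}, and the energy-decay estimates.

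The one place where care is genuinely needed --- and the only realistic source of error --- is the power counting. One must check that every weight exponent $k'$ that appears in the iterated applications of Theorem~\ref{thm:el} and Theorem~\ref{thm:elh} still lies in the corresponding open interval $(-3-2m,\,2\ell-2m-1)$ after the shift induced by the extra $T$-commutation, and that the energy fluxes one lands on at the end are exactly those controlled by $E_{aux-decomp-n0-k+\ell+2}$ in Corollary~\ref{rem:n+2t} and by Proposition~\ref{dec:red} with the $T$-count incremented by one. This is entirely mechanical, but it is where a slip in the bookkeeping of weights or norms would hide; no new analytic ingredient beyond those used for Theorem~\ref{dec:drk1} is required.
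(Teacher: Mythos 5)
Your proposal is correct and is essentially the paper's own argument: the paper proves this statement simply by remarking that ``the same proof as the one of the aforementioned theorem'' (i.e.\ Theorem \ref{dec:drk1}) applies after commuting with $T$, which is exactly your strategy of rerunning that proof for $T\psi$ with the elliptic hierarchy endpoint shifted to $T^{\ell+2}$ and the controlling norm shifted from $E_{aux-decomp-n0-k+\ell+1}$ to $E_{aux-decomp-n0-k+\ell+2}$. Your version is in fact more explicit than the paper's about the weight-window bookkeeping, and no gap is present.
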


Finally we record one more estimate (that is not almost sharp but one power off) that will be used later.
\begin{theorem}
Fix some $\ell \geq 1$ and let $\psi$ be a smooth solution of \eqref{eq:waveequation} satisfying \eqref{basic:as}, and assume that
$$I_{\ell} [ \psi] \neq 0.$$

Then we have that
\begin{equation}\label{dec:drklh1}
\int_{\mathbb{S}^2} r^{\eta} ( \partial_r^{\ell+1} P_{\geq \ell} \psi )^2 \, d\omega \leq C \frac{E_{aux-decomp-n0-\ell}}{u^{4\ell+4-2\eta-\delta}} ,
\end{equation}
and 
\begin{equation}\label{dec:drklht1}
\int_{\mathbb{S}^2} r^{\eta} ( \partial_r^{\ell+1} T P_{\geq \ell} \psi )^2 \, d\omega \leq C \frac{E_{aux-decomp-n0-\ell}}{u^{4\ell+6-2\eta-\delta}} ,
\end{equation}
while if we assume that $I_{\ell} [ \psi ] = 0$ we have that
\begin{equation}\label{dec:drklh2}
\int_{\mathbb{S}^2} r^{\eta} ( \partial_r^{\ell+1} P_{\geq \ell} \psi )^2 \, d\omega \leq C \frac{E_{aux-decomp-0-\ell}}{u^{4\ell+6-2\eta-\delta}} ,
\end{equation}
where $E_{aux-decomp-n0-\ell}$ and $E_{aux-decomp-0-\ell}$ are given in Corollary \ref{rem:n+2t}, and where $C = C(D,R,\delta,\ell)$. 
\end{theorem}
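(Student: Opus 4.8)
The plan is to upgrade the $k=\ell$ case of Theorem \ref{dec:drk1} by one radial derivative, exploiting the algebraic cancellation sketched in the overview. First I would commute $\Box_g\psi=0$ with $\partial_\rho$ a total of $\ell$ times and project onto the relevant angular frequency, obtaining for each $k$ an ODE in $\rho$ of the schematic form $\partial_\rho^{k+1}\psi + 2kr^{-1}\partial_\rho^k\psi = r^{-2}\mathcal{L}_k\,\partial_\rho^{k-1}\psi + \mathcal{F}_T$, where $\mathcal{F}_T$ collects $\partial_\rho$-derivatives of $T\psi$ with good $r$-weights and $\mathcal{L}_k$ equals, up to lower-order $r$-weights, $-\slashed{\Delta}_{\s^2}$ minus a $k$-dependent constant. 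The crucial point is that after $\ell+1$ commutations this constant equals exactly $\ell(\ell+1)$, so $\mathcal{L}_{\ell+1}$ annihilates $P_\ell\psi$: restricting to $P_\ell$, the equation collapses to $\partial_\rho^{\ell+2}(P_\ell\psi) + 2(\ell+1)r^{-1}\partial_\rho^{\ell+1}(P_\ell\psi) = (\text{good terms})$, and multiplying by $r^{2(\ell+1)}$ turns the left-hand side into the total $\partial_\rho$-derivative $\partial_\rho\big(r^{2(\ell+1)}\partial_\rho^{\ell+1}(P_\ell\psi)\big) = r^{2(\ell+1)}\mathcal{G}_\ell$, with every term of $\mathcal{G}_\ell$ carrying a $T$-derivative together with a sufficient number of $r^{-1}$ weights; tracking the commutator coefficients shows that both the cancellation at order $\ell+1$ and this clean structure rely on the identity $(Dr^2)'' = 2$, which is the sole place the precise Reissner--Nordstr\"{o}m metric enters. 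For the complementary piece $P_{\geq \ell+1}\psi$ the cancellation fails, but there the already-established faster decay of the higher angular-frequency components (Lemma \ref{dec:rp2} and Corollaries \ref{rem:n+2}, \ref{rem:n+2t}) lets me carry the leftover angular term as an admissible error within the same weighted norms.

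Next I would integrate this ODE in $\rho$. On $\{r\geq R\}$, integrating along $\mathcal{S}_\tau$ from a point out to null infinity gives $r^{2(\ell+1)}\partial_\rho^{\ell+1}\psi_{\geq\ell}(\rho) = \lim_{\rho'\to\infty}\big(\rho'^{2(\ell+1)}\partial_\rho^{\ell+1}\psi_{\geq\ell}\big) - \int_\rho^\infty \rho'^{2(\ell+1)}\mathcal{G}_\ell\,d\rho'$; the boundary term at $\mathcal{I}^+$ is a higher-order Newman--Penrose-type quantity, finite by Proposition \ref{prop:finitenesshoradfields} and Theorem \ref{thm:np}, and under the hypothesis $I_\ell[\psi]\neq 0$ (resp. $I_\ell[\psi]=0$) its decay in $\tau$ is read off from the $T$-commuted $r^p$-weighted flux decay of Proposition \ref{dec:enT} and Corollary \ref{rem:n+2t}. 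The bulk term I would estimate by Cauchy--Schwarz in $\rho$ combined with the elliptic estimates of Theorems \ref{thm:el} and \ref{thm:elh}, which convert the $T$-derivatives inside $\mathcal{G}_\ell$ into extra $r$-weights, and then feeding in the decay of the $T$-commuted energies from Proposition \ref{dec:enT}, Corollary \ref{rem:n+2t} and Proposition \ref{dec:red}; in the region $\{r_+\leq r\leq R\}$ the corresponding bound follows from the ``spacelike redshift'' elliptic estimate of Proposition \ref{cor:el} and Corollary \ref{cor:elb}. Commuting with the angular Killing fields $\Omega^\alpha$, applying Sobolev embedding on $\s^2$, and using the fundamental theorem of calculus in $\rho$ to pass from the weighted flux to a pointwise statement then yields \eqref{dec:drklh1}; estimate \eqref{dec:drklht1} is the identical argument applied to $T\psi_{\geq\ell}$, and \eqref{dec:drklh2} is again the same argument, now fed the stronger input decay available when $I_\ell[\psi]=0$.

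The step I expect to be the main obstacle is the $r$-weight bookkeeping that makes precise the two powers of $r$ that are gained: one must verify that after the cancellation every surviving term of $\mathcal{G}_\ell$ genuinely carries the additional $r^{-2}$ relative to the naive count inherited from $\partial_\rho^\ell\psi$, so that, after the multiplication by $r^{2(\ell+1)}$ and the use of Theorems \ref{thm:el} and \ref{thm:elh}, the bulk integral closes within the admissible ranges of those theorems and produces exactly the exponent $4\ell+4-2\eta-\delta$ rather than something weaker. Intertwined with this is the need to handle the $P_\ell$ and $P_{\geq\ell+1}$ pieces on a different footing --- exact cancellation for the former versus faster a priori decay for the latter --- while keeping all error terms inside a common family of weighted norms, and to control the boundary term at null infinity, which is precisely where the hypothesis on $I_\ell[\psi]$ enters.
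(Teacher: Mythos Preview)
Your approach differs substantially from the paper's, and it contains a concrete gap.

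The paper's proof does \emph{not} use the algebraic cancellation at order $\ell+1$ at all. It simply repeats the mechanism of Theorem~\ref{dec:drk1} one more time: apply the fundamental theorem of calculus to $r^\eta(\partial_\rho^{\ell+1}P_{\geq\ell}\psi)^2$, drop the term with the favorable sign, apply Cauchy--Schwarz to obtain the product of $\int r^{-1+\eta}(\partial_\rho^{\ell+1}P_{\geq\ell}\psi)^2$ and $\int r^{1+\eta}(\partial_\rho^{\ell+2}P_{\geq\ell}\psi)^2$, and then feed both factors repeatedly through the elliptic hierarchy of Theorem~\ref{thm:elh} (with $m=\ell-1$ and $m=\ell$ respectively) to trade radial derivatives for $T$-derivatives and $N$-energies. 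The decay then comes directly from Proposition~\ref{dec:red}. No splitting into $P_\ell$ versus $P_{\geq\ell+1}$ is needed, and the argument is essentially a one-line extension of the $k=\ell$ case.

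The gap in your proposal is the integration toward null infinity. The quantity $r^{2(\ell+1)}\partial_\rho^{\ell+1}\psi_{\geq\ell}$ (or its $(Dr^2)^{\ell+1}\partial_r^{\ell+1}$ variant) is \emph{not} a Newman--Penrose-type quantity and does not stay bounded as $r\to\infty$: under the standing asymptotics $\partial_r^{\ell+1}\psi=O(r^{-\ell-2})$, so this weighted expression grows like $r^\ell$. It differs from $\widetilde{\Phi}_{(\ell+1)}$ by lower-order terms of the form $r^{\ell+2+m}\partial_r^m\psi$ with $m\leq\ell$, which do \emph{not} vanish at infinity, so Proposition~\ref{prop:finitenesshoradfields} and Theorem~\ref{thm:np} do not apply here. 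The bulk integral $\int_\rho^\infty \rho'^{\,2(\ell+1)}\mathcal{G}_\ell\,d\rho'$ likewise diverges: the top-order source term behaves like $\rho'^{\,2\ell+1}\partial_r^\ell T\psi = O(1)$ at infinity, which is not integrable.

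The cancellation ODE you describe \emph{is} used in the paper, but later --- in Section~\ref{asymptotics}, see \eqref{eq:commwaveqfixedl} and the proof of \eqref{eq:imporveddecayrder} --- where it is integrated \emph{from} $r_+$ (the boundary term vanishes there because $D(r_+)=0$) and where the present theorem, in particular \eqref{dec:drklht1}, is already available as input to control the $\partial_r^{\ell+1}T\psi$ contribution in the bulk. So the cancellation argument sits logically \emph{downstream} of this theorem, not as a replacement for it.
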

\begin{proof}
We will only show the case of $P_{\geq \ell} \psi$ as the case of $T P_{\geq \ell} \psi$ is similar. 

Using once again the fundamental theorem of calculus we have that:
\begin{align*}
\int_{\mathbb{S}^2} & r_0^{\eta} ( \partial_{\rho}^{\ell+1} P_{\geq \ell} \psi )^2 \, d\omega  =  - \int_{r_0}^{\infty} \int_{\mathbb{S}^2} \partial_{\rho} \left( r^{\eta} ( \partial_{\rho} ^{\ell+1} P_{\geq \ell} \psi )^2 \right) \, d\omega d\rho \\ = & -\eta \int_{r_0}^{\infty}\int_{\mathbb{S}^2} r^{\eta-1} ( \partial_{\rho}^{\ell+1} P_{\geq \ell} \psi )^2 \, d\omega d\rho - 2 \int_{r_0}^{\infty} \int_{\mathbb{S}^2}r^{\eta} ( \partial_{\rho}^{\ell+1} P_{\geq \ell} \psi ) \cdot ( \partial_{\rho}^{\ell+2} P_{\geq \ell} \psi ) \, d\omega d\rho \\ \leq &   - 2 \int_{r_0}^{\infty}\int_{\mathbb{S}^2} r^{\eta} ( \partial_{\rho}^{\ell} P_{\geq \ell+1} \psi ) \cdot ( \partial_{\rho}^{\ell+2} P_{\geq \ell} \psi ) \, d\omega d\rho \\ \leq & \left( \int_{r_0}^{\infty}\int_{\mathbb{S}^2} r^{-1+\eta} ( (\partial_{\rho}^{\ell+1} P_{\geq \ell} \psi )^2 \, d\omega d\rho \right)^{1/2} \cdot \left( \int_{r_0}^{\infty}\int_{\mathbb{S}^2} r^{1+\eta} ( (\partial_{\rho}^{\ell+2} P_{\geq \ell} \psi )^2 \, d\omega d\rho \right)^{1/2} ,
\end{align*}
and now we argue as in the proof of Theorem \ref{dec:drk1}, applying Theorem \ref{thm:elh} repeatedly for $m=\ell-1$ and $m=\ell$ using the highest order term (after separating the integrals in regions close and away from the horizon), and in the end we get that
\begin{align*}
\int_{\mathbb{S}^2} r_0^{\eta} ( \partial_{\rho}^{\ell+1} P_{\geq \ell} \psi )^2 \, d\omega \leq & C \left( \sum_{m_1 + m_2 = \ell} \int_{\mathcal{S}_{\tau}} J^N [ N^{m_1} T^{m_2} P_{\geq \ell} \psi ] \cdot n_{\tau} \, d\mu_{\mathcal{S}_{\tau}} \right)^{1/2} \\ & \times \left( \sum_{m_1 + m_2 = \ell+1} \int_{\mathcal{S}_{\tau}} J^N [ N^{m_1} T^{m_2} P_{\geq \ell} \psi ] \cdot n_{\tau} \, d\mu_{\mathcal{S}_{\tau}} \right)^{1/2} ,
\end{align*}
and the result follows from the decay estimates of Proposition \ref{dec:red}.

\end{proof}
 
\section{Precise late-time asymptotics when $I_{\ell}\neq 0$}\label{asymptotics}
In this section we will derive the \emph{precise late-time asymptotics} of a frequency localized solution $P_{\ell} \psi$ of \eqref{eq:waveequation}, for some $\ell \geq 1$. In order to achieve this we will make use of the $\ell$-th Newman--Penrose charge and the almost sharp upper bounds obtained in the previous sections.

Note that in this section we will always work with a solution $\psi$ of \eqref{eq:waveequation} that is localized at angular frequency $\ell \geq 1$, so by $\psi$ we always mean $P_{\ell} \psi$.

 \subsection{Asymptotics for the radiation field}\label{subsec:rf}
 Recall the definition of $\Phi_{(k)}$ given in Proposition \ref{prop:np}. The $\ell$-th Newman--Penrose charge for $\ell \geq 1$, as we have seen before, is given by
$$ I_{\ell} [ \psi ] ( \theta , \varphi ) = \lim_{r \rightarrow \infty} \Phi_{(\ell +1 )} (u,r,\theta , \varphi ) , $$
for $\psi = P_{\ell} \psi$ a frequency localized solution of \eqref{eq:waveequation}. Note that in this Section we will always assume that our solution of \eqref{eq:waveequation} is localized at frequency $\ell$ for some $\ell \geq 1$.

We will use the properties of the $\ell$-th Newman-Penrose charge and the decay estimates from the previous sections to derive first asymptotics for $L\Phi_{(\ell)}$, then for $\Phi_{(k)}$, $k \in \{ 1 , \dots , \ell-1\}$, and finally for $\phi$ (where $\phi = r\psi$) in the following region close to $\mathcal{I}^{+}$:
$$ \mathcal{B}_{\alpha_{\ell}} = \{ r \geq R \} \cap \{ (u,v)  :  u_0 \leq u \leq v -v^{\alpha_{\ell}} \} , $$
for some large $R$, and for $\frac{2\ell +2}{2\ell + 3} < \alpha_{\ell} < 1$. Note that the boundary of $\mathcal{B}_{\alpha_{\ell}}$ is the curve 
$$\gamma_{\alpha_{\ell}} = \{ ( u,v) :  v-u = v^{\alpha_{\ell}} \}$$
 on which we have in particular that $v\sim u$. Recall also that $v-u \sim r$ for $r \geq R$.
 
We need some auxiliary lemmas. The first one is a direct consequence of Corollary \ref{cor:cancel}.

\begin{lemma}
Let $\psi$ be a solution of \eqref{eq:waveequation} that is localized at angular frequency $\ell$. We have that:
\begin{equation}\label{eq:invw}
\underline{L} \left( \frac{1}{r^{2\ell}} L \Phi_{(\ell )} \right) = O ( r^{-2-2\ell} ) ( L \Phi_{(\ell )} ) + \sum_{k=0}^{\ell} O (r^{-3-2\ell} ) \Phi_{(k )} .
\end{equation}
\end{lemma}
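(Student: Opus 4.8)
The plan is to expand the left-hand side by the Leibniz rule and substitute the evolution equation for $\Phi_{(\ell)}$ from Corollary \ref{cor:cancel}; the whole statement is then a short computation in which the dangerous $r^{-2\ell-1}L\Phi_{(\ell)}$ term cancels. Since $\psi$ is localized at angular frequency $\ell$, we have $\Phi_{(\ell)} = P_\ell \Phi_{(\ell)}$, so equation \eqref{eq:maineqNpquantfixedlv2} applies verbatim:
\begin{equation*}
4L\underline{L}\Phi_{(\ell)} = [-4\ell r^{-1} + O(r^{-2})]\,L\Phi_{(\ell)} + \sum_{k=0}^{\ell} O(r^{-3})\,\Phi_{(k)} .
\end{equation*}

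The key preliminary observations I would record are: (i) $[L,\underline{L}] = 0$, since $L = \tfrac12 D\partial_r$ and $\underline{L} = \partial_u - \tfrac12 D\partial_r$ have coefficients depending only on $r$ and $D = D(r)$, so $\underline{L}L\Phi_{(\ell)} = L\underline{L}\Phi_{(\ell)}$; and (ii) in $(u,r,\theta,\varphi)$ coordinates $\underline{L}(r^{-2\ell}) = -\tfrac12 D\,\partial_r(r^{-2\ell}) = \ell D r^{-2\ell-1}$. Applying Leibniz and then the equation above,
\begin{equation*}
\underline{L}\!\left(\frac{1}{r^{2\ell}} L\Phi_{(\ell)}\right) = \ell D r^{-2\ell-1}\,L\Phi_{(\ell)} + \frac{1}{4 r^{2\ell}}\Big([-4\ell r^{-1} + O(r^{-2})]\,L\Phi_{(\ell)} + \sum_{k=0}^{\ell} O(r^{-3})\,\Phi_{(k)}\Big) .
\end{equation*}
The two leading contributions to the $r^{-2\ell-1}L\Phi_{(\ell)}$ coefficient are $\ell D$ and $-\ell$; since $D = 1 + O(r^{-1})$ by \eqref{Dterm}, their sum is $O(r^{-2\ell-2})$. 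What remains is $O(r^{-2-2\ell})\,L\Phi_{(\ell)}$ (the $r^{-2\ell-2}$ cancellation remainder together with the $O(r^{-2})$ error term) plus $\sum_{k=0}^{\ell} O(r^{-3-2\ell})\,\Phi_{(k)}$, which is exactly the claimed identity \eqref{eq:invw}.

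There is no genuine obstacle here: the only point requiring care is the cancellation of the $r^{-2\ell-1}$ term, which relies on the coefficient $-4\ell r^{-1}$ in front of $L\Phi_{(\ell)}$ in Corollary \ref{cor:cancel} being precisely $-4$ times the logarithmic derivative $\partial_r(r^{-2\ell})/(2r^{-2\ell}) \cdot 2 = -2\ell/r$ modified by the discrepancy $D = 1 + O(r^{-1})$. I would double-check the sign of $\underline{L}$ acting on $r$ and confirm the coordinate system, after which the conclusion follows by collecting $O$-terms.
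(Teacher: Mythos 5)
Your proof is correct and is exactly the computation the paper intends: the lemma is stated there as a direct consequence of Corollary \ref{cor:cancel}, and your route via \eqref{eq:maineqNpquantfixedlv2}, the commutation $[L,\underline{L}]=0$, and the cancellation of the $\ell D r^{-2\ell-1}$ and $-\ell r^{-2\ell-1}$ contributions using $D=1+O(r^{-1})$ is precisely that argument. The observation that $\Phi_{(\ell)}=P_\ell\Phi_{(\ell)}$ for frequency-localized $\psi$ correctly justifies invoking the fixed-$\ell$ form of the equation.
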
 

The second one has to do with a preliminary upper bound on the $\ell$-th Newman--Penrose charge in the region $\mathcal{B}_{\alpha_{\ell}}$.

\begin{lemma}\label{lm:np1v}
Let $\psi$ be a solution of \eqref{eq:waveequation} that is localized at angular frequency $\ell$. Assume that:
\begin{equation}\label{eq:np1as}
r^2 \left. \partial_r \Phi_{(\ell)} \right|_{u = u_0 } = I_{\ell} [ \psi ] (\theta , \varphi ) + O ( v^{-\epsilon} ) , 
\end{equation}
for some $\epsilon > 0$, and $I_{\ell} [\psi ] \neq 0$. Then for all $(u,v) \in \mathcal{B}_{\alpha_{\ell}}$ we have that:
\begin{equation}\label{eq:np1v}
| v^2 L \Phi_{(\ell )} (u,v)| \leq C E_{aux, I_{\ell} } \left( 1 + O ( v^{-\beta} ) \right) ,
\end{equation}
for some $\beta > 0$ and for a weighted initial energy $E_{aux , I_{\ell} }$ that depends on the sum of initial energies given in Lemma \ref{dec:rp1} and $I_{\ell} [ \psi ]$.
\end{lemma}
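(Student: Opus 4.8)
The plan is to view \eqref{eq:invw} as a linear transport equation along the ingoing null rays $\{v=\textnormal{const}\}$ and integrate it from the cone $\mathcal{N}_{u_0}$, where the hypothesis \eqref{eq:np1as} pins down the data. Since $\underline{L}v=0$ and $\underline{L}r=-\tfrac12 D$, each such ray is an integral curve of $\underline{L}$ along which $r$ decreases from $r(u_0,v)$ down to $r(u,v)$; using $v=u_0+2r^*(r)$ with $r^*(r)=r+O(\log r)$ one has $r(u_0,v)=\tfrac v2(1+o(1))$, while in $\mathcal{B}_{\alpha_\ell}$ one has $v^{\alpha_\ell}\lesssim r(u,v)\le v$, and for $v$ large the whole ray segment $\{u_0\le u'\le u\}$ stays in $\{r\ge R\}$ (the case of bounded $v$ being trivial by continuity). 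Setting $G\doteq r^{-2\ell}L\Phi_{(\ell)}$, equation \eqref{eq:invw} becomes $\underline{L}G=O(r^{-2})G+\sum_{k=0}^{\ell}O(r^{-2\ell-3})\Phi_{(k)}$, hence
\[
G(u,v)=G(u_0,v)+\int_{u_0}^{u}\Big(O(r^{-2})G+\sum_{k=0}^{\ell}O(r^{-2\ell-3})\Phi_{(k)}\Big)(u',v)\,du'.
\]
On $\mathcal{N}_{u_0}$, writing $L=\tfrac12 D\partial_r$ and $D=1+O(r^{-1})$, the hypothesis \eqref{eq:np1as} gives $|G(u_0,v)|\le 2^{2\ell+1}v^{-2\ell-2}\big(|I_\ell[\psi]|+O(v^{-\epsilon'})\big)$ for some $\epsilon'>0$, so $v^2 r(u,v)^{2\ell}|G(u_0,v)|\le 2^{2\ell+1}(r(u,v)/v)^{2\ell}(|I_\ell[\psi]|+O(v^{-\epsilon'}))\le CE_{aux,I_\ell}$, which produces the claimed bound.

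It then remains to bound the integral. The coefficient $O(r^{-2})$ of $G$ is integrable along the ray, $\int_{u_0}^{u}r(u',v)^{-2}du'\lesssim\int_{r(u,v)}^{\infty}r^{-2}dr\lesssim r(u,v)^{-1}\lesssim v^{-\alpha_\ell}$, so a Gr\"onwall argument absorbs that term at the cost of an overall factor $1+O(v^{-\alpha_\ell})$. For the source $\sum_k O(r^{-2\ell-3})\Phi_{(k)}$ I would feed in the almost-sharp decay for the radiation fields from Section \ref{decay} — concretely the (angularly commuted) estimate \eqref{dec:pwkl}, which for a frequency-$\ell$ wave controls $\Phi_{(k)}$ by $E_{aux}^{1/2}(u')^{-1-(\ell-k)+\delta/2}$ in $\{r\ge R\}$, the underlying energies being exactly those of Lemmas \ref{dec:rp1}--\ref{dec:rp2} from which $E_{aux,I_\ell}$ is built — substitute $r(u',v)\sim\tfrac12(v-u')$, and estimate $\int_{u_0}^{u}(v-u')^{-2\ell-3}(u')^{-1-(\ell-k)+\delta/2}du'$ by splitting the $u'$-integral at $u'\sim v/2$. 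Multiplying the result by $v^2 r(u,v)^{2\ell}$ and using $v^{\alpha_\ell}\lesssim r(u,v)\le v$ turns every summand into a power $v^{c}$; checking that all these exponents satisfy $c<0$ is precisely where the hypothesis $\alpha_\ell>\tfrac{2\ell+2}{2\ell+3}$ (note $\tfrac{2\ell+2}{2\ell+3}>\tfrac{2\ell+1}{2\ell+2}$) is used. Finally, to upgrade the $L^2(\mathbb{S}^2)$ output to the pointwise bound \eqref{eq:np1v} I would commute \eqref{eq:invw} with the rotation fields $\Omega^{\alpha}$, $|\alpha|\le 2$ (which preserve $\underline{L}$ and the $O(r^{-j})$ coefficients), run the same argument for $\Omega^{\alpha}\Phi_{(\ell)}$, and apply Sobolev embedding on $\mathbb{S}^2$.

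The step I expect to be the main obstacle is the bookkeeping of the source term near the curve $\gamma_{\alpha_\ell}$: there $r(u,v)\sim v^{\alpha_\ell}\ll v$, so the negative $r$-powers in \eqref{eq:invw} are weakest exactly where the target weight $v^{-2}r(u,v)^{-2\ell}$ is largest, and only the combination of the $u$-decay of $\Phi_{(k)}$, the exact relation $v-u'\sim 2r$, and the quantitative lower bound on $\alpha_\ell$ makes the weighted integral close. Everything else — the transport integration, the Gr\"onwall absorption, the extraction of $I_\ell[\psi]$ from the data on $\mathcal{N}_{u_0}$, and the angular commutation — is routine.
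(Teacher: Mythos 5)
Your proposal is correct and follows essentially the same route as the paper: integrate the transport equation for $L\Phi_{(\ell)}$ in $u$ from $\mathcal{N}_{u_0}$ (your division by $r^{2\ell}$ via \eqref{eq:invw} is just the paper's integrating factor for the $-4\ell r^{-1}L\Phi_{(\ell)}$ term), read off the data term from \eqref{eq:np1as} using $L=\tfrac12 D\partial_r$ and $r(u_0,v)\sim v/2$, and control the source $\sum_k O(r^{-3})\Phi_{(k)}$ with the almost-sharp decay \eqref{dec:pwkl} combined with $v^{\alpha_\ell}\lesssim r$ in $\mathcal{B}_{\alpha_\ell}$. The only differences are cosmetic (Gr\"onwall versus the explicit exponential factor, splitting the $u'$-integral at $u'\sim v/2$ versus bounding $v\lesssim r^{1/\alpha_\ell}$, and your explicit $\Omega^{\alpha}$-commutation plus Sobolev step, which the paper leaves implicit).
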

\begin{proof}
We integrate in $u$ the quantity $\underline{L} ( v^2 L \Phi_{(\ell )} )$ for $(u,v) \in \mathcal{B}_{\alpha_{\ell}}$ and we have that:
\begin{align*}
v^2 L \Phi_{(\ell )} (u,v) = & e^{- \int_{u_0}^u ( - 4\ell r^{-1} + O (r^{-2} ) ) \, du'} v^2 L \Phi_{(\ell )} (u_0 ,v) \\ & + e^{- \int_{u_0}^u ( - 4\ell r^{-1} + O (r^{-2} ) ) \, du'} \sum_{k=0}^{\ell} \int_{u_0}^u e^{ \int_{u_0}^{u'} ( - 4\ell r^{-1} + O (r^{-2} ) ) \, du''} v^2 O (r^{-3} ) \Phi_{(k)} \, du' .
\end{align*}
Note that
$$ e^{- \int_{u_0}^u ( - 4\ell r^{-1} + O (r^{-2} ) ) \, du'} = e^{O (r^{-1} (u,v)) + O (r^{-1} (u_0 , v))} \frac{r^{8\ell} (u,v)}{r^{8\ell} (u_0 , v )} , $$
and for $(u,v) \in \mathcal{B}_{\alpha_{\ell}}$:
\begin{align*}
v^2 L \Phi_{(\ell )} (u_0 ,v) = & \frac{v^2}{2r^2 (u,v) } ( r^2 \partial_r \Phi_{(\ell )} ) (u_0 , v) \\ & - \frac{M v^2}{r (u,v) } ( r^2 \partial_r \Phi_{(\ell )} ) (u_0 , v) + \frac{e^2 v^2}{r^4 (u,v) } ( r^2 \partial_r \Phi_{(\ell )} ) (u_0 , v) \\ = & 4  I_{\ell} [\psi ] + O (v^{-\beta} ) ,
\end{align*} 
for some $\beta > 0$, as $L = \frac{1}{2} D \partial_r$ and $v = u+2r$. Then we have that:
\begin{align*}
| v^2 L \Phi_{(\ell )} | (u,v) \leq &  2  I_{\ell} [\psi ] + O (v^{-\beta} ) \\ & + C \sum_{k=0}^{\ell} \int_{u_0}^u \frac{v^2}{r^3} | \Phi_{(k)} | \, du' \\ \leq &  2  I_{\ell} [\psi ] + O (v^{-\beta} ) \\ & + C v^{-\beta} \sum_{k=0}^{\ell} \int_{u_0}^u \frac{v^{2+\beta}}{r^3} \frac{E_{aux , I_{\ell}}}{(u' )^{1+(\ell -k ) - \delta}} \, du'\\ \leq &  2 I_{\ell} [\psi ] + O (v^{-\beta} ) \\ & + C v^{-\beta} \sum_{k=0}^{\ell} \int_{u_0}^u \frac{E_{aux , I_{\ell}}}{r^{3-\frac{2+\beta}{\alpha_{\ell}}}(u' )^{1+(\ell -k ) - \delta}} \, du' \\ \leq &  2  I_{\ell} [\psi ] + O (v^{-\beta} ) + C E_{aux , I_{\ell}} v^{-\beta} ,
\end{align*}
for $C$ a constant depending on $u_0$ and $\ell$. Note that we used that $v^{\alpha_{\ell}} \lesssim r$ and $u^{\alpha_{\ell}} \lesssim r$ in $\mathcal{B}_{\alpha_{\ell}}$ and estimate \eqref{dec:pwkl}.

\end{proof} 
 
Using the previous two lemmas we get the following result:
\begin{proposition}\label{prop:auxasn0}
Fix some $\ell \geq 1$, consider a solution $\psi$ of the wave equation \eqref{eq:waveequation} that is localized at frequency $\ell$, and also assume that \eqref{eq:np1as} holds true for some $\epsilon > 0$ and $I_{\ell} [\psi ] \neq 0$. Then there exists $\eta > 0$ small enough such that for all $( u , v ) \in \mathcal{B}_{\alpha_{\ell}}$ we have that:
\begin{equation}\label{asym:vphi1}
L \Phi_{(\ell)} (u,v , \theta , \varphi ) = 2^{2\ell+1}I_{\ell} [ \psi ] ( \theta , \varphi ) \frac{(v-u)^{2\ell}}{v^{2\ell+2}}+O(v^{-\eta})\frac{(v-u)^{2\ell}}{(v-u)^{2\ell+2}}.
\end{equation}
\end{proposition}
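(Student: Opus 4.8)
The plan is to integrate the transport identity \eqref{eq:invw} along the outgoing null generators $\{v=\mathrm{const}\}$ — i.e.\ the integral curves of $\underline L=\partial_u$ — from the sphere $\{u=u_0\}$ up to a point $(u,v)\in\mathcal B_{\alpha_\ell}$, and to read off the leading-order term from the ``initial value'' at $u=u_0$, which by hypothesis \eqref{eq:np1as} is governed by the Newman--Penrose charge $I_\ell[\psi]$.

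First I would rewrite \eqref{eq:invw} as $\underline L\bigl(r^{-2\ell}L\Phi_{(\ell)}\bigr)=O(r^{-2})\bigl(r^{-2\ell}L\Phi_{(\ell)}\bigr)+\sum_{k=0}^{\ell}O(r^{-2\ell-3})\Phi_{(k)}$ and solve this linear transport equation in $u$ at fixed $v$ with an integrating factor. The factor is $\exp\bigl(-\int_{u_0}^{u}O(r^{-2})\,du'\bigr)=1+O\bigl(r^{-1}(u,v)\bigr)$, finite and close to $1$ precisely because the potentially dangerous $r^{-1}$ term in \eqref{eq:invw} has coefficient $\ell(D-1)=O(r^{-1})$ — the cancellation built into the definition of $\Phi_{(\ell)}$ in Proposition \ref{prop:np}. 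This yields, for $(u,v)\in\mathcal B_{\alpha_\ell}$,
\[
r^{-2\ell}L\Phi_{(\ell)}(u,v)=\bigl(1+O(r^{-1}(u,v))\bigr)\,r^{-2\ell}L\Phi_{(\ell)}(u_0,v)+\int_{u_0}^{u}\bigl(1+O(r^{-1})\bigr)\sum_{k=0}^{\ell}O(r^{-2\ell-3})\Phi_{(k)}\,du'.
\]

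To extract the main term, note that along $\{u=u_0\}$ one has $L=\tfrac12 D\partial_r$ with $D=1+O(r^{-1})$ and $v-u_0=2r^*(r)=2r+O(\log r)$, so $r(u_0,v)=\tfrac v2+O(\log v)$; feeding \eqref{eq:np1as} into $r^{-2\ell}L\Phi_{(\ell)}(u_0,v)=\tfrac{D}{2\,r^{2\ell+2}(u_0,v)}\bigl(r^2\partial_r\Phi_{(\ell)}\bigr)(u_0,v)$ gives the leading value $2^{2\ell+1}I_\ell[\psi]\,v^{-2\ell-2}$ up to $O(v^{-2\ell-2-\epsilon})$ and logarithmic corrections. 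Multiplying the whole identity by $r^{2\ell}(u,v)$ and using $r(u,v)=\tfrac{v-u}2+O(\log(v-u))$ — valid throughout $\mathcal B_{\alpha_\ell}$ since $r\ge R$ there — converts this into the claimed leading term $2^{2\ell+1}I_\ell[\psi]\,(v-u)^{2\ell}v^{-2\ell-2}$, the logarithmic and $O(v^{-\epsilon})$ corrections landing in the admissible error $O(v^{-\eta})(v-u)^{2\ell}/(v-u)^{2\ell+2}$. For the source integral, the change of variables $u'\mapsto r'=r(u',v)$ (so $du'\simeq-\tfrac2D\,dr'$) turns $r^{2\ell}(u,v)\int_{u_0}^{u}\sum_kO(r^{-2\ell-3})\Phi_{(k)}\,du'$ into $\lesssim r^{2\ell}(u,v)\int_{r(u,v)}^{r(u_0,v)}(r')^{-2\ell-3}\sum_k|\Phi_{(k)}(u',v)|\,dr'$; inserting the almost-sharp pointwise decay from Section \ref{decay} (from \eqref{dec:pwkl}, after commuting \eqref{eq:invw} with the angular Killing fields $\Omega^\alpha$ and using Sobolev embedding on $\mathbb{S}^2$, legitimate under \eqref{basic:as}), namely $|\Phi_{(k)}(u',v)|\lesssim (u')^{-1-(\ell-k)+\delta/2}$ for $r'\ge R$, together with Lemma \ref{lm:np1v} for the a priori finiteness of $L\Phi_{(\ell)}$, and splitting the $r'$-integral according to whether $v-2r'$ is comparable to $v$ or much smaller, one finds this is $\lesssim v^{-\eta}(v-u)^{-2}$ for any $\eta<1-\tfrac\delta2$ (the $k<\ell$ terms being better by extra powers of $u^{-1}$). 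Taking $\eta>0$ below the minimum of $\epsilon$, $1-\tfrac\delta2$ and the exponents coming from the logarithmic corrections completes the proof; the role of the full range $\tfrac{2\ell+2}{2\ell+3}<\alpha_\ell<1$ is only to guarantee $r(u,v)\gtrsim v^{\alpha_\ell}$ with $\alpha_\ell$ close enough to $1$, which is the subsection's standing convention.

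I expect the bulk of the work to be the error bookkeeping in this last step — in particular showing that one genuinely gains a positive power of $v$ uniformly over $\mathcal B_{\alpha_\ell}$, including near the curve $\gamma_{\alpha_\ell}$ where $v-u$ is as small as $v^{\alpha_\ell}$ and the prefactor $r^{2\ell}(u,v)$ is correspondingly small. The critical source is the $k=\ell$ term, since $\Phi_{(\ell)}$ decays only like $u^{-1+\delta/2}$ whereas the $\Phi_{(k)}$ with $k<\ell$ are faster by at least one power; here one must play off the almost-sharp time decay of Section \ref{decay} against the $(r')^{-2\ell-3}$ weight on the integration range. A secondary, routine point is upgrading the $L^2(\mathbb{S}^2)$ energy estimates to the pointwise bounds used here via commutation with $\Omega^\alpha$.
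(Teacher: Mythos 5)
Your proposal is correct and follows essentially the same route as the paper: integrate the renormalized transport identity \eqref{eq:invw} in $u$ at fixed $v$ over $\mathcal{B}_{\alpha_\ell}$, read off the leading term $2^{2\ell+1}I_\ell[\psi]\,(v-u)^{2\ell}v^{-2\ell-2}$ from the data at $u=u_0$ via \eqref{eq:np1as} and $L=\tfrac12 D\partial_r$, and beat the error down by a power $v^{-\eta}$ using the almost-sharp pointwise decay of the $\Phi_{(k)}$ together with $r\gtrsim v^{\alpha_\ell}$. The only (harmless) deviation is that you absorb the $O(r^{-2-2\ell})L\Phi_{(\ell)}$ term with an integrating factor, whereas the paper estimates that integral directly using the a priori bound of Lemma \ref{lm:np1v}; both yield the same conclusion.
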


\begin{proof}
For $(u,v) \in \mathcal{B}_{\alpha_{\ell}}$ by integrating in $u$ equation \eqref{eq:invw} we get that:
\begin{equation*}
\begin{split}
v^{2\ell+2}r^{-2\ell} L \Phi_{(\ell)}(u,v)=&\:v^{2\ell+2}r^{-2\ell}(u_0 ,v) L \Phi_{(\ell)}(u_0 ,v)+\int_{u_0}^u v^{2\ell}O(r^{-2\ell-2})(v^2 L \Phi_{(\ell)})(u',v)\,du'\\
&+\sum_{k=0}^{\ell}\int_{u_0}^u v^{2\ell+2}O(r^{-2\ell-3})\Phi_{(k)}(u',v)\,du'.
\end{split}
\end{equation*}
By \ref{lm:np1v} we can estimate
\begin{equation*}
\begin{split}
\left|\int_{u_0}^u v^{2\ell}O(r^{-2\ell-2})(v^2 L \Phi_{(\ell)})(u',v)\,du'\right|\leq & C E_{aux, I_{\ell}} v^{-\eta}\int_{u_0}^u r^{-2\ell-2+\frac{2\ell+\eta}{\alpha_{\ell}}}\,du' \\ \leq & C E_{aux , I_{\ell}} v^{-\eta} r_{\gamma_{\alpha}}(v)^{-2\ell-1+\frac{2\ell+\eta}{\alpha_{\ell}}}=O(v^{-\eta}) ,
\end{split}
\end{equation*}
if $\alpha_{\ell} >\frac{2\ell+\eta}{2\ell+1}$ for some $\eta > 0$.

Furthermore,
\begin{equation*}
\begin{split}
\left|\sum_{k=0}^{\ell}\int_{u_0}^u v^{2\ell+2}O(r^{-2\ell-3})\widetilde{\Phi}_{(k)}(u',v)\,du'\right|\leq&\: C E_{aux , I_{\ell}} v^{-\eta} \sum_{k=0}^{\ell} \int_{u_0}^u r^{-2\ell-3+ \eta / \alpha_{\ell}} ( u' )^{1+ (\ell -k) - \delta} \, du' 
\\ \leq &\: Cv^{-\eta}\int_{u_0}^u r^{-2\ell-3+\frac{2\ell+2+\eta+2\epsilon}{\alpha_{\ell}}}( u' )^{-1-\epsilon}\,du'\\
\leq&\: Cv^{-\eta} r_{\gamma_{\alpha}}(v)^{-2\ell-3+\frac{2\ell+2+\eta+2\epsilon}{\alpha_{\ell}}}=O(v^{-\eta}) ,
\end{split}
\end{equation*}
if $\alpha_{\ell} >\frac{2\ell+2+\eta+2\epsilon}{2\ell+3}$, for some $\epsilon > 0$ and $\eta > 0$. Note that $\frac{2\ell+2+\eta+2\epsilon}{2\ell+3}>\frac{2\ell+\eta}{2\ell+1}$ and that we used the decay estimates \eqref{dec:pwkl}.

Using that $L=\frac{1}{2}D\partial_r$ and $v=u+2r$, we moreover have that
\begin{equation*}
\begin{split}
v^{2\ell+2}r^{-2\ell} & (u_0 ,v)L \Phi_{(\ell)}(u_0 ,v)= \frac{1}{2} v^{2\ell+2}r^{-2\ell}(u_0 ,v) \partial_r  \Phi_{(\ell)}(u_0 ,v) \\ = &\frac{1}{2} v^{2\ell+2}r^{-2\ell}(u_0 ,v)  ( r^2 \partial_r  \Phi_{(\ell)} ) (u_0 ,v) \\ & - M v^{2\ell+2}r^{-2\ell-3}(u_0 ,v) ( r^2 \partial_r  \Phi_{(\ell)} ) (u_0 ,v) \\ & + \frac{e^2}{2} v^{2\ell+2}r^{-2\ell-4}(u_0 ,v) (r^2 \partial_r  \Phi_{(\ell)} ) (u_0 ,v)  \\ = & 2^{2\ell+1} I_{\ell} [ \psi ] (\theta , \varphi ) +O(v^{-\eta}).
\end{split}
\end{equation*}
where we used the properties of $v$ and $r$ in $\mathcal{B}_{\alpha_{\ell}}$ and the assumption \eqref{eq:np1as}.
\end{proof}

Using the last three results and equation \eqref{eq:importantvintegral} from Lemma \ref{lm:integrals} we finally get the following Proposition which provides us with \textit{precise} asymptotics for all the radiation fields in $P_{\ell} \Phi_{(k)}$, $k \in \{0, \dots , \ell\}$ in $\mathcal{B}_{\alpha_{\ell}}$.
\begin{proposition}\label{prop:mainasn0}
Fix some $\ell \geq 1$. For a solution $\psi$ of \eqref{eq:waveequation} that is localized at angular frequency $\ell$, let $k \in \{2, \dots , \ell\}$, and also assume that \eqref{eq:np1as} holds true for some $\epsilon > 0$ and $I_{\ell} [\psi ] \neq 0$. Then there exist $0<\eta_k<1$ suitably small such that for all $(u,v)\in \mathcal{B}_{\alpha_{\ell}}$:
\begin{equation*}
\Phi_{(\ell)}(u,v,\theta,\varphi)=\frac{2^{2\ell}I_{\ell} [\psi ] (\theta,\varphi)}{(2\ell+1)} \left(\frac{v-u}{v}\right)^{2\ell+1}u^{-1}+O(u^{-1-\eta_1}),
\end{equation*}
\begin{equation*}
\widetilde{\Phi}_{(\ell-k)}(u,v,\theta,\varphi)=\frac{2^{2\ell+2k}I_{\ell} [\psi ] (\theta,\varphi)}{(2\ell+1)\cdot(2\ell)\cdot\ldots \cdot (2\ell+1-k)} \left(\frac{v-u}{v}\right)^{2\ell+1-k}u^{-1-k}+O(u^{-1-k-\eta_{k}}).
\end{equation*}
In particular,
\begin{equation*}
\begin{split}
\phi(u,v,\theta,\varphi)=&\:\frac{2^{4\ell}I_{\ell} [ \psi ] (\theta,\varphi)}{(2\ell+1)\cdot \ldots \cdot ( \ell +1 )} \left(\frac{v-u}{v}\right)^{\ell+1}u^{-1-\ell}+O(u^{-1-\ell-\eta_{\ell+1}})\\
=&\:\frac{2^{4\ell} I_{\ell} [ \psi ] (\theta,\varphi)}{(2\ell+1) \cdot \ldots \cdot ( \ell +1 )} \left(\frac{1}{u}-\frac{1}{v}\right)^{\ell+1}+O(u^{-1-\ell-\eta_{\ell+1}}).
\end{split}
\end{equation*}
\end{proposition}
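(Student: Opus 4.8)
The plan is to obtain all the stated asymptotics by a downward cascade of one-dimensional integrations in $v$ along constant-$u$ null rays inside $\mathcal{B}_{\alpha_\ell}$, starting from the asymptotics for $L\Phi_{(\ell)}$ furnished by Proposition \ref{prop:auxasn0} and descending through $\Phi_{(\ell)}$, then $\widetilde{\Phi}_{(\ell-1)}, \widetilde{\Phi}_{(\ell-2)}, \ldots, \widetilde{\Phi}_{(1)}$, and finally $\phi = \widetilde{\Phi}_{(0)}$, exactly in the order indicated in Section \ref{sec:AsymptoticsForIEllNeq0}. The mechanism is that in $(u,v)$ coordinates $L = \partial_v$ (since $\partial_v r = \tfrac12 D$), and more generally the relation $\widetilde{\Phi}_{(k)} = -r^2 \partial_r \widetilde{\Phi}_{(k-1)}$ rewrites as $\partial_v \widetilde{\Phi}_{(k-1)} = -\tfrac{D}{2r^2}\widetilde{\Phi}_{(k)}$, so each field is recovered from the one above it by integrating a known expression in $v$, with the constants tracked via the explicit integral identity \eqref{eq:importantvintegral} of Lemma \ref{lm:integrals}.

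\textbf{Step 1: asymptotics for $\Phi_{(\ell)}$.} First I would integrate $L\Phi_{(\ell)} = \partial_v \Phi_{(\ell)}$ from the curve $\gamma_{\alpha_\ell}$ to a point $(u,v) \in \mathcal{B}_{\alpha_\ell}$, i.e.\ write $\Phi_{(\ell)}(u,v) = \Phi_{(\ell)}(u,v_\gamma) + \int_{v_\gamma}^{v} L\Phi_{(\ell)}(u,v')\,dv'$ where $v_\gamma = v_\gamma(u)$ is the $v$-coordinate of $\gamma_{\alpha_\ell}$, insert the expansion \eqref{asym:vphi1}, and evaluate the leading integral $\int \tfrac{(v'-u)^{2\ell}}{v'^{2\ell+2}}\,dv'$ by means of \eqref{eq:importantvintegral}; this produces the factor $\tfrac{1}{2\ell+1}\left(\tfrac{v-u}{v}\right)^{2\ell+1} u^{-1}$. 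The boundary term on $\gamma_{\alpha_\ell}$ is subleading: there $v \sim u$ and $r \sim v^{\alpha_\ell}$, so the almost-sharp pointwise bounds of Section \ref{decay} (together with the $r$-weighted flux bounds of Section \ref{energy}) give $|\Phi_{(\ell)}(u,v_\gamma)| \lesssim u^{-1-\eta_1}$, and the contribution of the error term $O(v'^{-\eta})(v'-u)^{-2}$ integrates to $O(u^{-\alpha_\ell-\eta})$, which is $O(u^{-1-\eta_1})$ once $\alpha_\ell$ is close enough to $1$. Since $\Phi_{(\ell)} - \widetilde{\Phi}_{(\ell)} = \sum_m \alpha_{\ell,m}\Phi_{(\ell-m)}$ with each $\Phi_{(\ell-m)}$ decaying at least one power faster, the same expansion holds for $\widetilde{\Phi}_{(\ell)}$.

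\textbf{Step 2: the downward cascade.} Assuming inductively the asymptotics for $\widetilde{\Phi}_{(\ell-k+1)}$, I would integrate $\partial_v \widetilde{\Phi}_{(\ell-k)} = -\tfrac{D}{2r^2}\widetilde{\Phi}_{(\ell-k+1)}$ in $v$ from $\gamma_{\alpha_\ell}$, again using \eqref{eq:importantvintegral} to evaluate the resulting integral; each such step lowers the power of $(v-u)$ by one, raises the power of $u^{-1}$ by one, and inserts a new factor $\tfrac{1}{2\ell+1-k}$ (after absorbing $D \approx 1$, $r \approx \tfrac12(v-u)$, and collecting the powers of $2$), which is exactly what builds the denominator $(2\ell+1)\cdot (2\ell) \cdots (2\ell+1-k)$ in the statement. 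At each level the boundary term on $\gamma_{\alpha_\ell}$ and the integrated error are controlled as in Step 1, using the almost-sharp decay of $\widetilde{\Phi}_{(\ell-k)}$ from Section \ref{decay}; the admissible window for $\alpha_\ell$ shrinks but stays nonempty, which is the content of the constraint $\tfrac{2\ell+2}{2\ell+3} < \alpha_\ell < 1$. Running the cascade down to $k = \ell$ gives $\phi = \widetilde{\Phi}_{(0)}$ with coefficient $\tfrac{2^{4\ell} I_\ell[\psi]}{(2\ell+1)\cdots(\ell+1)}$ and factor $\left(\tfrac{v-u}{v}\right)^{\ell+1} u^{-1-\ell}$; rewriting $\left(\tfrac{v-u}{v}\right)^{\ell+1} u^{-1-\ell} = \left(\tfrac1u - \tfrac1v\right)^{\ell+1}$ yields the second displayed form.

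\textbf{Main obstacle.} The genuinely new analytic input — the asymptotics \eqref{asym:vphi1} for $L\Phi_{(\ell)}$ — is already available from Proposition \ref{prop:auxasn0}, so the remaining difficulty is bookkeeping: one must check that at every one of the $\ell+1$ integration steps the boundary contribution on $\gamma_{\alpha_\ell}$ and the accumulated error terms are strictly subleading, which forces $\alpha_\ell$ into a window that narrows with $\ell$. Making this quantitative — choosing the exponents $\eta_k$ and verifying that $\tfrac{2\ell+2}{2\ell+3} < \alpha_\ell < 1$ suffices uniformly through the cascade — is the part needing the most care, together with the routine-but-delicate passage between the $\Phi_{(k)}$ and $\widetilde{\Phi}_{(k)}$ normalizations at the top of the hierarchy.
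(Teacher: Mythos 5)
Your proposal follows essentially the same route as the paper's proof: integrate the asymptotics \eqref{asym:vphi1} for $L\Phi_{(\ell)}$ in $v$ from $\gamma_{\alpha_{\ell}}$ using Lemma \ref{lm:integrals}, control the boundary terms on the curve via the almost-sharp decay estimates of Theorem \ref{dec:drk1}, pass from $\Phi_{(\ell)}$ to $\widetilde{\Phi}_{(\ell)}$ via Proposition \ref{prop:np}, and then descend through the $\widetilde{\Phi}_{(\ell-k)}$ by successive $v$-integrations of $L\widetilde{\Phi}_{(\ell-k)}=-\tfrac{D}{2r^2}\widetilde{\Phi}_{(\ell-k+1)}$. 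The constants, the role of $\alpha_{\ell}$, and the treatment of the error terms all match the paper's argument.
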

\begin{proof}
For all $k\in \{0, \dots , \ell \}$ recall that
\begin{equation*}
\Phi_{(k)}=\sum_{m=0}^{k}(d_{k,m}+O(r^{-1}))r^{k+1+m}\partial_r^k\psi , 
\end{equation*}
for some constants $d_{k,m}$ (that depends on the constants $\alpha_{n,k}$ given in Proposition \ref{prop:np}). By the estimates of Theorem \ref{dec:drk1} we can therefore estimate:
\begin{equation*}
\begin{split}
|\Phi_{(k)}(u,v_{\gamma_{\alpha}}(u))|\leq & C_{\epsilon} E_{aux-decomp-n0-k} u^{-2\ell-3/2+\epsilon}r^{\ell+k+1/2}\\ \leq & C_{\epsilon}E_{aux-decomp-n0-k} u^{-1-(\ell-k)+\epsilon-(k+\ell+1/2)(1-\alpha_{\ell})},
\end{split}
\end{equation*}
where $E_{aux-decomp-n0-k}$ is as in the estimates of Theorem \ref{dec:drk1}. Note that for $k=\ell$ in particular we have that:
\begin{equation*}
|\Phi_{(\ell)}(u,v_{\gamma_{\alpha}}(u))|\leq C_{\epsilon} E_{aux-decomp-n0-\ell} u^{-2\ell-3/2+\epsilon}r^{2\ell+1/2}\leq C_{\epsilon} E_{aux , I_{\ell} \neq 0 , k , \ell } u^{-1+\epsilon-(2\ell+1/2)(1-\alpha_{\ell})}.
\end{equation*}
Choose $\epsilon>0$ suitably small and $\eta_1$ such that $0<\eta_1<(2\ell+1/2)(1-\alpha_{\ell} )-\epsilon$. Using \eqref{asym:vphi1} together with \eqref{eq:importantvintegral} we find that
\begin{equation*}
\begin{split}
\Phi_{(\ell)}(u,v)=&\:\Phi_{(\ell)}(u,v_{\gamma_{\alpha_{\ell}}}(u))+\int_{v_{\gamma_{\alpha_{\ell}}}(u)}^v 2^{2\ell+1 }I_{\ell} [ \psi ] ( \theta , \varphi ) \frac{(v'-u)^{2\ell}}{v'^{2\ell+2}}(1+O(v'^{-\eta}))\,dv'\\
=&\:\frac{2^{2\ell+1}I_{\ell} [\psi ] (\theta,\varphi)}{2\ell+1} \left(\frac{v-u}{v}\right)^{2\ell+1}u^{-1}(1+O(u^{-\eta'})+O(v^{-\eta'}))+O(u^{-1-\eta_1}).
\end{split}
\end{equation*}
Hence, using the definition of $\Phi_{(\ell)}$ from Proposition \ref{prop:np}, we get asymptotics as above for $\widetilde{\Phi}_{\ell}$ that imply 
\begin{equation*}
\begin{split}
\partial_r \widetilde{\Phi}_{(\ell-1)}(u,v)=& \frac{2^{2\ell+1}I_{\ell} [\psi ] (\theta,\varphi)}{2\ell+1} \frac{1}{r^2} \left(\frac{v-u}{v}\right)^{2\ell+1}u^{-1}(1+O(u^{-\eta'})+O(v^{-\eta'}))\\ & + \frac{1}{r^2}O(u^{-1-\eta_1}) ,
\end{split}
\end{equation*}
which in turn implies that
\begin{equation*}
\begin{split}
L \widetilde{\Phi}_{(\ell-1)}(u,v) =& \frac{2^{2\ell}I_{\ell} [\psi ] (\theta,\varphi)}{2\ell+1} \frac{2}{v-u} \left(\frac{v-u}{v}\right)^{2\ell+1}u^{-1}(1+O(u^{-\eta'})+O(v^{-\eta'}))\\ & + \frac{1}{(v-u)^2}O(u^{-1-\eta_1}) \\ & + O ((v-u)^{-1} ) \Big[ \frac{2^{2\ell}I_{\ell} [\psi ] (\theta,\varphi)}{2\ell+1} \frac{2}{v-u} \left(\frac{v-u}{v}\right)^{2\ell+1}u^{-1}(1+O(u^{-\eta'})+O(v^{-\eta'})) \\ & + \frac{2}{v-u}O(u^{-1-\eta_1}) \Big] .
\end{split}
\end{equation*}
This then implies that
$$ L \widetilde{\Phi}_{(\ell-1)}(u,v) = \frac{2^{2\ell+2}I_{\ell} [\psi ] (\theta,\varphi)}{2\ell+1} \frac{(v-u)^{2\ell-1}}{v^{2\ell+1}} u^{-1} ( 1+ O(u^{-\eta'} ) + O (v^{-\eta' } ) ) + O ( u^{-2-\eta'_2} ) , $$
for some $\eta'_2 > 0$ that depends on $\alpha_{\ell}$. Integrating the above equation in $v$ from the curve $\gamma_{\alpha_{\ell}}$ we get that:
\begin{align*}
\widetilde{\Phi}_{(\ell-1)}(u,v) = & \Phi_{(\ell-1)}(u,v_{\alpha_{\ell}}) \\ & + \frac{2^{2\ell+2} I_{\ell} [\psi ] (\theta , \varphi )}{ (2\ell +1 ) ( 2 \ell )} \left( \frac{v-u}{v} \right)^{2\ell} u^{-2} + O (u^{-2-\eta''_2 } ) , 
\end{align*}
for some $\eta''_2 > 0$. Using the estimates \eqref{dec:drk1} for $k=\ell-1$ we get that:
\begin{equation*}
|\widetilde{\Phi}_{(\ell-1)}(u,v_{\gamma_{\alpha}}(u))|\leq C_{\epsilon} E_{aux-dr , \ell }u^{-2\ell-3/2+\epsilon}r^{2\ell-1+1/2}\leq C_{\epsilon} E_{aux-dr , \ell }u^{-2+\epsilon-(2\ell-1/2)(1-\alpha)}
\end{equation*}
which then implies that:
$$ \widetilde{\Phi}_{(\ell-1)}(u,v) = \frac{2^{2\ell+2} I_{\ell} [\psi ] (\theta , \varphi )}{ (2\ell +1 ) ( 2 \ell )} \left( \frac{v-u}{v} \right)^{2\ell} u^{-2} + O (u^{-2-\eta_2 } ) , $$
for some $\eta_2 > 0$. We get asymptotics for all the remaining $\widetilde{\Phi}_{(\ell-k)}$'s by repeating the process all the way to $k=\ell$.

\end{proof}
An immediate consequence of the previous Proposition is to obtain asymptotics for $\psi$ in $\mathcal{B}_{\alpha_{\ell}}$, now starting from asymptotics for $\psi_{\ell} / r^{\ell}$ and using this to build the asymptotics of all higher derivatives (hence following the reverse path compared to Proposition 
\ref{prop:mainasn0}). In particular for any $m \in \mathbb{N}$ as $r \sim v-u$ in $\mathcal{B}_{\alpha_{\ell}}$, there exist $0<\bar{\eta}_k<1$, $k \in \{1, \dots , \ell\}$ suitably small such that for all $(u,v)\in \mathcal{B}_{\alpha_{\ell}}$:
\begin{equation}\label{prop:asb}
\frac{\psi}{r^{\ell}} (u,v,\theta,\varphi)=\:\frac{2^{5\ell+1}I_{\ell} [ \psi ] (\theta,\varphi)}{(2\ell+1)\cdot \ldots \cdot ( \ell +1 )} u^{-1-\ell} v^{-1-\ell} +O(u^{-2-2\ell-\bar{\eta}_{\ell}}),
\end{equation}
and
\begin{equation}\label{prop:asbh}
\frac{\partial_r^{\ell-k} \psi}{r^{k}} (u,v,\theta,\varphi)=\:\widetilde{A}_{\ell , k} I_{\ell} [ \psi ]  u^{-1-k} v^{-1-2\ell+k} +O(u^{-2-2\ell-\bar{\eta}_{\ell-k}}),
\end{equation}
for constants $\widetilde{A}_{\ell , k}$ depending on $\ell$ and $k$.

\subsection{Global asymptotics for $P_{\ell} \psi$}\label{global_as}
The goal of this section is to propagate the asymptotics of a linear wave $P_{\ell} \psi$ obtained in $\mathcal{B}_{{\alpha}_{\ell}}$ to the rest of the spacetime.
\begin{lemma}
Fix $\ell \geq 1$. Let $\psi_{\ell} \doteq P_{\ell} \psi$ be a frequency localized linear wave by assuming that $\psi$ solves \eqref{eq:waveequation}, then for all $k\geq 0$:
\begin{equation}
\label{eq:maincommwaveq}
\begin{split}
\partial_{\rho}((Dr^2)^{k+1}\partial_r^{k+1}\psi_{\ell} )=&(Dr^2)^k[-\slashed{\Delta}_{\s^2}-k(k+1)]\partial_r^{k} \psi_{\ell} \\ & +(Dr^2)^{k} r \cdot \sum_{j=0}^{k} O(r^{-k+j})\partial_r^jT\psi_{\ell}+(Dr^2)^{k} O(r^{1-\eta}) \partial_r^{k+1}T\psi_{\ell} .
\end{split}
\end{equation}
\end{lemma}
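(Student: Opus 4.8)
The plan is to prove \eqref{eq:maincommwaveq} by differentiating the wave equation $k$ times in $r$ (equivalently, by an induction on $k$), working in ingoing Eddington--Finkelstein coordinates $(v,r,\theta,\varphi)$, in which $T=\partial_v$ and $\partial_r$ is the coordinate vector field. First I would record the wave equation \eqref{eq:waveequation} in these coordinates: writing $\mu := Dr^2 = r^2 - 2Mr + e^2$, a direct computation gives
\[
\mu\,\partial_r^2\psi + \mu'\,\partial_r\psi + \slashed{\Delta}_{\s^2}\psi = G_T, \qquad G_T := -2r^2\,\partial_r T\psi - 2r\,T\psi .
\]
The point is that $\mu$ and $r^2$ are quadratic polynomials in $r$, so $\mu'' = 2$, $(r^2)'' = 2$, and all higher $r$-derivatives of $\mu$, $\mu'$ and $r^2$ vanish; this is the only place where the precise form of the Reissner--Nordström metric enters (cf.\ the remark in Section \ref{sec:MagicOfPartialEll1Psi}). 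Since the projection $P_\ell$ commutes with $\partial_r$, $T$ and $\slashed{\Delta}_{\s^2}$, it is enough to prove the identity for $\psi$ and then apply $P_\ell$; the spectral localization and the restriction $\ell\geq 1$ play no role in the computation.

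Next I would apply $\partial_r^k$ to the displayed equation and use the Leibniz rule. Because only the zeroth, first and second $r$-derivatives of $\mu$ and of $\mu'$ survive, the left-hand side collapses to
\[
\mu\,\partial_r^{k+2}\psi + (k+1)\,\mu'\,\partial_r^{k+1}\psi + k(k+1)\,\partial_r^k\psi + \slashed{\Delta}_{\s^2}\partial_r^k\psi ,
\]
where the coefficient $k(k+1)$ arises as $\binom{k}{2}\mu'' + k\,(\mu')' = k(k-1) + 2k$. Solving this for $\mu\,\partial_r^{k+2}\psi$, multiplying by $\mu^{k}$, and substituting into
\[
\partial_\rho\!\left(\mu^{k+1}\partial_r^{k+1}\psi\right) = (k+1)\,\mu^{k}\mu'\,\partial_r^{k+1}\psi + \mu^{k+1}\partial_r^{k+2}\psi + h\,\mu^{k+1}\partial_r^{k+1}T\psi
\]
(which follows from $\partial_\rho = \partial_r + hT$, the fact that $T=\partial_v$ commutes with $\partial_r$, and that $\mu$, $h$ are functions of $r$ only), the two $(k+1)\mu^{k}\mu'\,\partial_r^{k+1}\psi$ contributions cancel, leaving
\[
\partial_\rho\!\left(\mu^{k+1}\partial_r^{k+1}\psi\right) = \mu^{k}\big[-\slashed{\Delta}_{\s^2} - k(k+1)\big]\partial_r^k\psi + \mu^{k}\,\partial_r^k G_T + h\,\mu^{k+1}\partial_r^{k+1}T\psi .
\]

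It then remains to identify the $T$-derivative terms with the right-hand side of \eqref{eq:maincommwaveq}. Expanding $\partial_r^k G_T$ by Leibniz, again using $(r^2)''=2$, produces $-2r^2\,\partial_r^{k+1}T\psi$ together with terms of the form $r\,O(r^{-k+j})\partial_r^j T\psi$ for $j\in\{k-1,k\}$, so that $\mu^k\partial_r^k G_T = -2r^2\mu^k\partial_r^{k+1}T\psi + \mu^k r\sum_{j=0}^k O(r^{-k+j})\partial_r^j T\psi$ (with vanishing coefficients for $j\le k-2$). For the remaining term I would use that $h = \tfrac{2}{D} + O(r^{-1-\eta})$ on $\{r\geq R\}$ and that $h$ is smooth and bounded on $\{r_+\leq r\leq R\}$, whence $h\mu = hDr^2 = 2r^2 + O(r^{1-\eta})$ and therefore $h\,\mu^{k+1}\partial_r^{k+1}T\psi = 2r^2\mu^k\partial_r^{k+1}T\psi + \mu^k\,O(r^{1-\eta})\partial_r^{k+1}T\psi$. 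The two $2r^2\mu^k\partial_r^{k+1}T\psi$ terms cancel — this cancellation is precisely why the left-hand side of \eqref{eq:maincommwaveq} is written with $\partial_\rho$ rather than $\partial_r$ — and what is left is exactly \eqref{eq:maincommwaveq}. I expect the only slightly delicate step to be the bookkeeping of the $O(r^{\cdot})$ weights: one must check that the top-order factor $\partial_r^{k+1}T\psi$ carries the improved weight $O(r^{1-\eta})$ while each $\partial_r^j T\psi$ with $j\le k$ carries $r\,O(r^{-k+j})$; everything else is a routine Leibniz computation.
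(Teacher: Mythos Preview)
Your proposal is correct. Both your argument and the paper's use the same crucial algebraic fact, namely $(Dr^2)''=2$ (equivalently, that $\mu=Dr^2$ is a quadratic polynomial in $r$), but they organize the computation differently. The paper proves \eqref{eq:maincommwaveq} by induction on $k$: it starts from \eqref{eq:inhomelliptic} and, for the inductive step, expands $(Dr^2)^{-(k+1)}\partial_\rho\big((Dr^2)^{k+2}\partial_r^{k+2}\psi_\ell\big)$ through a sequence of Leibniz manipulations until it can be written as $\partial_r$ applied to the $k$-th identity, plus lower-order terms; $(Dr^2)''=2$ is invoked at the end to simplify the coefficient of $\partial_r^{k+1}\psi_\ell$. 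Your route is more direct: you apply $\partial_r^k$ once to the wave equation in ingoing coordinates, use that $\mu$ is quadratic so only three Leibniz terms survive, obtain the closed form $\mu\,\partial_r^{k+2}\psi+(k+1)\mu'\partial_r^{k+1}\psi+k(k+1)\partial_r^k\psi+\slashed{\Delta}_{\s^2}\partial_r^k\psi=\partial_r^k G_T$, and then make a single substitution into $\partial_\rho(\mu^{k+1}\partial_r^{k+1}\psi)$. This avoids the somewhat involved Leibniz bookkeeping of the paper's inductive step and makes the cancellation of the $(k+1)\mu^k\mu'\partial_r^{k+1}\psi$ terms and of the $2r^2\mu^k\partial_r^{k+1}T\psi$ terms completely transparent. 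The paper's inductive formulation, on the other hand, more explicitly displays how the $(k+1)$-level identity descends from the $k$-level one, which is conceptually useful when one is probing how far such a hierarchy can extend in less explicit geometries; for subextremal Reissner--Nordstr\"om your argument is cleaner and entirely sufficient.
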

\begin{proof}
We will prove \eqref{eq:maincommwaveq} by induction. By \eqref{eq:inhomelliptic}, we have that \eqref{eq:maincommwaveq} holds for $k=0$.  Now suppose \eqref{eq:maincommwaveq} holds for $k\in \N_0$, then we will show that \eqref{eq:maincommwaveq} also holds for $k$ replaced by $k+1$.

We have by applying the Leibniz rule multiple times that
\begin{equation*}
\begin{split}
(Dr^2)^{-(k+1)}\partial_{\rho}((Dr^2)^{k+2}\partial_r^{k+2}\psi_{\ell} )= &\:(Dr^2)^{-k}\partial_{\rho}((Dr^2)^{k+1}\partial_r^{k+2}\psi_{\ell} )+(Dr^2)'\partial_r^{k+2}\psi_{\ell} \\
=&\: (Dr^2)^{-k}\partial_{\rho}\partial_r((Dr^2)^{k+1}\partial_r^{k+1}\psi_{\ell} )\\ & -(k+1)(Dr^2)^{-k}\partial_{\rho}((Dr^2)^{k}(Dr^2)'\partial_r^{k+1}\psi_{\ell} )+(Dr^2)'\partial_r^{k+2}\psi_{\ell} \\
=&\: (Dr^2)^{-k}\partial_r\partial_{\rho}((Dr^2)^{k+1}\partial_r^{k+1}\psi_{\ell} )-(Dr^2)h' \partial_r^{k+1}T\psi\\
&-(k+1)(Dr^2)'(Dr^2)^{-k}\partial_{\rho}((Dr^2)^{k}\partial_r^{k+1}\psi_{\ell} ) \\ & -(k+1)(Dr^2)'' \partial_r^{k+1}\psi_{\ell} +(Dr^2)'\partial_r^{k+2}\psi_{\ell} ,
\end{split}
\end{equation*}
where in the last inequality we used that $\partial_{\rho}\partial_r=\partial_r\partial_{\rho}-h'T$. We can further expand the right-hand side above to obtain
\begin{equation*}
\begin{split}
(Dr^2)^{-(k+1)}\partial_{\rho}((Dr^2)^{k+2}\partial_r^{k+2}\psi_{\ell} )= &\:\partial_r\left[(Dr^2)^{-k}\partial_{\rho}((Dr^2)^{k+1}\partial_r^{k+1}\psi_{\ell} )\right]\\ & +k(Dr^2)'(Dr^2)^{-(k+1)}\partial_{\rho}((Dr^2)^{k+1} \partial_r^{k+1}\psi_{\ell} )\\
&-(Dr^2)h' \partial_r^{k+1}T\psi-(k+1)(Dr^2)'(Dr^2)^{-k}\partial_{\rho}((Dr^2)^{k}\partial_r^{k+1}\psi_{\ell} )\\
&-(k+1)(Dr^2)'' \partial_r^{k+1}\psi+(Dr^2)'\partial_r^{k+2}\psi_{\ell} \\
=&\:\partial_r\left[(Dr^2)^{-k}\partial_{\rho}((Dr^2)^{k+1}\partial_r^{k+1}\psi_{\ell} )\right]+k(k+1)((Dr^2)')^2(Dr^2)^{-1}\partial_r^{k+1}\psi_{\ell} \\
&+k(Dr^2)'\partial_{\rho}\partial_r^{k+1}\psi-(Dr^2)h' \partial_r^{k+1}T\psi_{\ell} \\
&-k(k+1)((Dr^2)')^2(Dr^2)^{-1}\partial_r^{k+1}\psi_{\ell} -(k+1)(Dr^2)'\partial_{\rho}\partial_r^{k+1}\psi_{\ell} \\
&-(k+1)(Dr^2)'' \partial_r^{k+1}\psi+(Dr^2)'\partial_r^{k+2}\psi_{\ell} \\
=&\:\partial_r\left[(Dr^2)^{-k}\partial_{\rho}((Dr^2)^{k+1}\partial_r^{k+1}\psi_{\ell} )\right]-(k+1)(Dr^2)'' \partial_r^{k+1}\psi_{\ell} \\
&-[(Dr^2)h' +h(Dr^2)']\partial_r^{k+1}T\psi_{\ell} .
\end{split}
\end{equation*}
By applying $\partial_r$ to both sides of \eqref{eq:maincommwaveq} \textbf{and using that $(Dr^2)''=2$} and $\frac{2}{D}-h=\frac{c}(r^{1+\eta}) + O (r^{-2} )$, for $c \neq 0$ and some $\eta > 0$, we can rewrite the above equation as follows:
\begin{equation*}
\begin{split}
(Dr^2)^{-(k+1)}\partial_{\rho}((Dr^2)^{k+2}\partial_r^{k+2}\psi_{\ell} )= &\: \left[-\slashed{\Delta}_{\s^2}-k(k+1)-2(k+1)\right]\partial_r^{k+1}\psi_{\ell} \\
&+r \sum_{j=0}^{k+1} O(r^{-k-1+j})\partial_r^j T\psi_{\ell} +O(r^{1-\eta})\partial_r^{k+2}T\psi_{\ell} \\
= &\:\left[-\slashed{\Delta}_{\s^2}-(k+1)(k+2)\right]\partial_r^{k+1}\psi_{\ell} \\
&+r \sum_{j=0}^{k+1} O(r^{-(k+1)+j})\partial_r^j T\psi_{\ell} +O(r^{1-\eta})\partial_r^{k+2}T\psi_{\ell} .
\end{split}
\end{equation*}
\end{proof}
We use the above equation to obtain almost sharp decay for $\partial_r^{\ell+1} \psi_{\ell}$.

\begin{proposition}
Fix $\ell \geq 1$. Let $\psi_{\ell} \doteq P_{\ell} \psi$ be a frequency localized linear wave by assuming that $\psi$ solves \eqref{eq:waveequation}. Assume also that
$$ I_{\ell} [ \psi ] \neq 0 . $$
Then we have that
\begin{equation}
\label{eq:imporveddecayrder}
|\partial_r^{\ell+1}\psi_{\ell}|\leq C E_{aux-decomp-n0-\ell+1}  (1+\tau)^{-2\ell-3+\epsilon},
\end{equation}
where $C = C (D, R , \ell)$ and $E_{aux-decomp-n0-\ell+1}$ is as in \eqref{dec:drklt1}.
\end{proposition}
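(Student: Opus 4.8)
The plan is to exploit the cancellation that occurs in equation \eqref{eq:maincommwaveq} at $k=\ell$. Since $\psi_\ell=P_\ell\psi$ is supported at angular frequency $\ell$ and $\partial_r$ commutes with $\slashed{\Delta}_{\mathbb{S}^2}$, one has $[-\slashed{\Delta}_{\mathbb{S}^2}-\ell(\ell+1)]\partial_r^\ell\psi_\ell=0$, so the leading right-hand side term of \eqref{eq:maincommwaveq} drops out completely and there remains
\begin{equation*}
\partial_{\rho}\big((Dr^2)^{\ell+1}\partial_r^{\ell+1}\psi_{\ell}\big)=(Dr^2)^{\ell}\Big(r\sum_{j=0}^{\ell}O(r^{-\ell+j})\,\partial_r^j T\psi_{\ell}+O(r^{1-\eta})\,\partial_r^{\ell+1}T\psi_{\ell}\Big)=:F .
\end{equation*}
As stressed in the derivation of \eqref{eq:maincommwaveq}, the exact coefficient $-k(k+1)$ in front of $\partial_r^k\psi_\ell$ --- hence this clean identity --- relies on $(Dr^2)''=2$, which is the one place the precise form of the Reissner--Nordstr\"{o}m potential $D$ is used. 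The structural point is that every term of $F$ carries a $T$ derivative, and therefore an extra factor of roughly $\tau^{-2}$ relative to the uncommuted radial derivatives.

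Next I would bound $F$ pointwise on the hyperboloidal leaves $\mathcal{S}_{\tau}$. Each of $\partial_r^j T\psi_\ell$ ($0\le j\le\ell+1$) is again supported at frequency $\ell$, so $L^2(\mathbb{S}^2)$ bounds upgrade to pointwise bounds with an $\ell$-dependent constant. I will insert the $T$-commuted almost-sharp weighted decay estimates of Section \ref{decay}: estimates \eqref{dec:drklt1}--\eqref{dec:drklt2} for the derivatives $\partial_r^jT\psi_\ell$ with $0\le j\le\ell$, and the ``one power off'' estimate \eqref{dec:drklht1} for $\partial_r^{\ell+1}T\psi_\ell$; these rest in turn on the elliptic hierarchy of Theorem \ref{thm:elh} and the higher-order redshift decay of Proposition \ref{dec:red}, all of which hold uniformly up to and including $\{r=r_+\}$. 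Substituting, one obtains a uniform bound of the schematic form
\begin{equation*}
|F|(\tau,r,\cdot)\lesssim E_{aux-decomp-n0-\ell+1}\,(Dr^2)^{\ell}\,r^{\,1-\eta'}\,\tau^{-2\ell-3+\epsilon}\qquad\text{on }\mathcal{S}_{\tau},
\end{equation*}
for small $\eta',\epsilon>0$.

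Finally I would integrate the identity $\partial_\rho G=F$, with $G:=(Dr^2)^{\ell+1}\partial_r^{\ell+1}\psi_\ell$, along $\mathcal{S}_{\tau}$ starting from the event horizon, using that $\partial_\rho$ is the tangential radial field on $\mathcal{S}_{\tau}$. The boundary contribution at $r=r_+$ vanishes because $D(r_+)=0$, while $\partial_r^{\ell+1}\psi_\ell$ is finite there by smoothness together with the higher-order redshift estimates; hence $\partial_r^{\ell+1}\psi_\ell(\tau,r,\cdot)=(D(r)r^2)^{-(\ell+1)}\int_{r_+}^{r}F\,dr'$ along $\mathcal{S}_{\tau}$. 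One then checks that $(D(r)r^2)^{-(\ell+1)}\int_{r_+}^r(D(r')r'^2)^{\ell}\,r'^{\,1-\eta'}\,dr'$ is bounded uniformly in $r\in[r_+,\infty)$: near the horizon $Dr^2\sim r-r_+$ and the vanishing orders of numerator and denominator match, while for large $r$ one has $Dr^2\sim r^2$ and the integral $\int^r r'^{\,2\ell+1-\eta'}\,dr'\sim r^{\,2\ell+2-\eta'}$ is absorbed by the prefactor $r^{-2\ell-2}$. Combining this with the bound on $F$ and returning to pointwise estimates via the frequency-$\ell$ localization gives \eqref{eq:imporveddecayrder}. I expect the genuine difficulty to be this last step --- controlling the ratio of the degenerate weights $(Dr^2)^{\ell}/(Dr^2)^{\ell+1}$ across the $r$-integration uniformly all the way to the event horizon, which is exactly what forces one to have the auxiliary $T$-derivative bounds of step two uniformly up to $\{r=r_+\}$, and is precisely what the elliptic and higher-order redshift hierarchies of Sections \ref{elliptic}--\ref{redshift} are designed to deliver.
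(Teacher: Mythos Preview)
Your proposal is correct and follows essentially the same route as the paper's proof: exploit the cancellation $[-\slashed{\Delta}_{\mathbb{S}^2}-\ell(\ell+1)]\partial_r^\ell\psi_\ell=0$ in \eqref{eq:maincommwaveq} at $k=\ell$, integrate the resulting identity for $(Dr^2)^{\ell+1}\partial_r^{\ell+1}\psi_\ell$ in $\rho$ from the horizon, insert the $T$-commuted almost-sharp pointwise bounds \eqref{dec:drklt1} and \eqref{dec:drklht1}, and check that $(Dr^2)^{-(\ell+1)}\int_{r_+}^r (Dr'^2)^{\ell}\,r'\,dr'$ is uniformly bounded. One minor remark: in your schematic bound for $F$ the weight should be $r$ rather than $r^{1-\eta'}$, since the sum $\sum_{j=0}^\ell$ contributes $(Dr^2)^\ell\cdot r\cdot r^{-\ell+j}|\partial_r^jT\psi_\ell|$ with no extra $\eta'$ gain; this is harmless because the key integral $\int_{r_+}^r(Dr'^2)^\ell r'\,dr'\lesssim(Dr^2)^{\ell+1}$ still holds both near the horizon (matching vanishing orders) and at infinity.
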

\begin{proof}
If we apply \eqref{eq:maincommwaveq} to $\psi_{\ell}$, with $k=\ell$, we obtain
\begin{equation}
\label{eq:commwaveqfixedl}
\partial_{\rho}((Dr^2)^{\ell+1}\partial_r^{\ell+1}\psi_{\ell})=(Dr^2)^{\ell} r \cdot \sum_{j=0}^{\ell} O(r^{-\ell+j})\partial_r^jT\psi_{\ell}+(Dr^2)^{\ell} O(r^{1-\eta}) \partial_r^{\ell+1}T\psi.
\end{equation}
Now, we integrate \eqref{eq:commwaveqfixedl} in $\rho$ from $\rho=r_+$ and use Theorem \ref{dec:drk1} to obtain
\begin{equation*}
\begin{split}
(Dr^2)^{\ell+1}|\partial_r^{\ell+1}\psi_{\ell}|\lesssim&\:  \sum_{j=0}^{\ell}  \int_{r_+}^{\rho}(Dr^2)^{\ell} r^{-\ell+j+1}|\partial_r^jT\psi_{\ell}|\,d\rho'\\
&+\int_{r_+}^{\rho}(Dr^2)^{\ell} r^{1-\eta}|\partial_r^{\ell+1}T\psi_{\ell}|\,d\rho'\\
\lesssim &\: \int_{r_+}^{\rho}r\cdot  (Dr^2)^{\ell} \,d\rho'\cdot \left(||\partial_r^{\ell+1}T\psi_{\ell}||_{L^{\infty}}+\sum_{j=0}^{\ell} ||r^{-\ell+j}\partial^k_rT\psi_{\ell}||_{L^{\infty}}\right)\\
\leq  C E_{aux-decomp-n0-\ell+1}  &\:(Dr^2)^{\ell+1}\cdot (1+\tau)^{-2\ell-3+\epsilon} ,
\end{split}
\end{equation*}
where we used estimates \eqref{dec:drklt1}. We can therefore conclude that
\begin{equation*}
|\partial_r^{\ell+1}\psi_{\ell}|\leq  C E_{aux-decomp-n0-\ell+1}  (1+\tau)^{-2\ell-3+\epsilon}.
\end{equation*}
\end{proof}

The above proposition is the key to propagating the asymptotics of $\psi_{\ell}$ from the curve $\rho=\rho_{\gamma_{\alpha}}(\tau)$ to the region $r_+\leq \rho \leq \rho_{\gamma_{\alpha_{\ell}}}(\tau)$, where the curve $\gamma_{\alpha_{\ell}}$ is defined as in the previous subsection. 

\begin{proposition}
Fix $\ell \geq 1$. Let $\psi_{\ell} \doteq P_{\ell} \psi$ be a frequency localized linear wave by assuming that $\psi$ solves \eqref{eq:waveequation}. Assume also that
$$ I_{\ell} [ \psi] \neq 0 . $$
For all $0\leq k\leq \ell$ and for all $\rho\leq \rho_{\gamma_{\alpha_{\ell}}}(\tau)$, where $\gamma_{\alpha_{\ell}}$ and $\alpha_{\ell}$ are defined in subsection \ref{subsec:rf}, we have that
\begin{equation*}
r^{k-\ell}\partial_{r}^{k}\psi_{\ell}(\tau,\rho)=A_{\ell,k}I_{\ell}(1+\tau)^{-2\ell-2}+O((1+\tau)^{-2\ell-2-\epsilon}),
\end{equation*}
for some $\epsilon > 0$, where the quantities $A_{\ell , k}$ are given by Proposition \ref{prop:asb}. 
\end{proposition}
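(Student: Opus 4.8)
The plan is to propagate the near-infinity asymptotics established in Proposition \ref{prop:mainasn0} (equivalently \eqref{prop:asb}, \eqref{prop:asbh}) from the curve $\gamma_{\alpha_\ell}$ inward to the region $r_+ \leq \rho \leq \rho_{\gamma_{\alpha_\ell}}(\tau)$ by a downward induction on $k$, using the fundamental theorem of calculus in $\rho$ together with the faster decay of the top radial derivative $\partial_r^{\ell+1}\psi_\ell$ recorded in \eqref{eq:imporveddecayrder}. First I would treat $k=\ell$: writing $r^{-\ell+\ell}\partial_r^\ell \psi_\ell = \partial_r^\ell\psi_\ell$, integrate $\partial_\rho(\partial_r^\ell \psi_\ell) = \partial_r^{\ell+1}\psi_\ell + hT\partial_r^\ell\psi_\ell$ from $\rho_{\gamma_{\alpha_\ell}}(\tau)$ inward; by \eqref{eq:imporveddecayrder} the first term contributes at most $O((1+\tau)^{-2\ell-3+\epsilon})\cdot (\rho_{\gamma_{\alpha_\ell}}(\tau)-\rho) = O((1+\tau)^{-2\ell-2-\epsilon'})$ (since $\rho_{\gamma_{\alpha_\ell}}(\tau) \lesssim \tau^{\alpha_\ell} = o(\tau)$ and $\alpha_\ell<1$), and the $hT$-term is controlled using the almost-sharp decay for $T\partial_r^\ell\psi_\ell$ from \eqref{dec:drklt1}–\eqref{dec:drklt2} after an application of Cauchy–Schwarz and the Sobolev-on-$\mathbb{S}^2$ trick. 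The boundary value at $\rho = \rho_{\gamma_{\alpha_\ell}}(\tau)$ is supplied by \eqref{prop:asbh} with $v\sim u\sim \tau$ on $\gamma_{\alpha_\ell}$, giving the constant $A_{\ell,\ell}I_\ell(1+\tau)^{-2\ell-2}$ up to $O(\tau^{-2\ell-2-\bar\eta_0})$.

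Next I would run the induction downward: assuming the claimed asymptotics hold for $r^{k+1-\ell}\partial_r^{k+1}\psi_\ell$ on $\{\rho \leq \rho_{\gamma_{\alpha_\ell}}(\tau)\}$, I integrate $\partial_\rho(r^{k-\ell}\partial_r^k\psi_\ell)$ from the curve $\gamma_{\alpha_\ell}$ inward. Expanding $\partial_\rho(r^{k-\ell}\partial_r^k\psi_\ell) = (k-\ell)r^{k-\ell-1}\partial_r^k\psi_\ell + r^{k-\ell}\partial_r^{k+1}\psi_\ell + r^{k-\ell}h T\partial_r^k\psi_\ell$, the middle term is handled by the induction hypothesis (it has the right leading profile, and integrating the $\tau^{-2\ell-2}$ leading term over an interval of length $\lesssim \tau^{\alpha_\ell}$ against the $r$-weight produces a correction of strictly lower order once one checks the powers of $r$ work out, since near the curve $r\sim\tau^{\alpha_\ell}$ and $k-\ell\leq 0$); the first term is absorbed via a Grönwall / Hardy argument exactly as in the $r^{-k}$-weighted energy manipulations appearing earlier; and the $hT$-term again decays faster by a clean power of $\tau$ thanks to \eqref{dec:drklt1}. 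The boundary term on $\gamma_{\alpha_\ell}$ is read off from \eqref{prop:asbh}. One has to keep careful track of which auxiliary initial norm $E_{aux-decomp-n0-k+\ell+1}$ or $E_{aux-decomp-n0-k+\ell+2}$ appears, but this is bookkeeping of the type already carried out in Theorem \ref{dec:drk1}.

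The constant $A_{\ell,k}$ is then forced: it is precisely the coefficient one obtains by writing $\Phi_{(k)} = \sum_{m=0}^k (d_{k,m}+O(r^{-1}))r^{k+1+m}\partial_r^k\psi_\ell$ and matching with the $\mathcal{B}_{\alpha_\ell}$-asymptotics of Proposition \ref{prop:mainasn0}; consistency of the two routes (inward propagation versus direct computation on $\gamma_{\alpha_\ell}$) is automatic because both limits are computed against the same boundary data. I would state explicitly that $A_{\ell,k}$ equals the numerical constant implicitly defined in \eqref{prop:asb}, up to the normalization $\psi_\ell/r^\ell$ versus $r^{k-\ell}\partial_r^k\psi_\ell$, so there is no ambiguity.

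The main obstacle I expect is the accounting of error terms near the curve $\gamma_{\alpha_\ell}$: there $r$ and $\tau$ are both large and comparable to powers of each other ($r\sim\tau^{\alpha_\ell}$), so one must verify that every error term — the $O(r^{-1})$ corrections in the relation between $\Phi_{(k)}$ and $\partial_r^k\psi_\ell$, the lower-order pieces in \eqref{eq:maincommwaveq}, the $T$-commuted fluxes, and the tail of the $\rho$-integral of the leading profile — actually yields a gain of a genuine positive power $\tau^{-\epsilon}$ and not merely $\tau^{-2\ell-2}$ times a constant. This forces the constraint $\alpha_\ell > \frac{2\ell+2}{2\ell+3}$ (already built into the definition of $\mathcal{B}_{\alpha_\ell}$) and requires choosing $\epsilon$ in the almost-sharp estimates and $\eta$ in the weights small relative to $1-\alpha_\ell$; I would isolate this as a short lemma computing the exponent $-2\ell-2-(2\ell+1)(1-\alpha_\ell)+C\epsilon+C\eta < -2\ell-2$ and fixing all small parameters accordingly. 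Everything else is a routine iteration of the fundamental theorem of calculus plus the already-established decay estimates.
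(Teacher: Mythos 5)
Your overall strategy is the paper's: the $k=\ell$ step (boundary data on $\gamma_{\alpha_{\ell}}$ from \eqref{prop:asbh}, then integration of $\partial_{\rho}\partial_r^{\ell}\psi_{\ell}$ inward using the improved decay \eqref{eq:imporveddecayrder} of $\partial_r^{\ell+1}\psi_{\ell}$ over a length $\rho_{\gamma_{\alpha_{\ell}}}\lesssim \tau^{\alpha_{\ell}}$, gaining $\tau^{-(1-\epsilon-\alpha_{\ell})}$) is exactly what the paper does and is correct.

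The downward induction for $k<\ell$, however, has a genuine gap. Writing $Q_k:=r^{k-\ell}\partial_r^k\psi_{\ell}$, your decomposition gives
$\partial_{\rho}Q_k=(k-\ell)r^{-1}Q_k+r^{-1}Q_{k+1}+r^{k-\ell}hT\partial_r^k\psi_{\ell}$,
and you claim the middle term, integrated from $\rho$ to $\rho_{\gamma_{\alpha_{\ell}}}$, "produces a correction of strictly lower order." It does not: with $Q_{k+1}\approx A_{\ell,k+1}I_{\ell}\tau^{-2\ell-2}$ essentially constant in $\rho$, the integral $\int_{\rho}^{\rho_{\gamma_{\alpha_{\ell}}}}r^{-1}Q_{k+1}\,d\rho'\approx A_{\ell,k+1}I_{\ell}\tau^{-2\ell-2}\log(\rho_{\gamma_{\alpha_{\ell}}}/\rho)$ is of leading order times $\log\tau$ for bounded $\rho$. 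The same is true of the $(k-\ell)r^{-1}Q_k$ term, and these two logarithmically divergent contributions must cancel \emph{exactly} at leading order; this happens precisely because $A_{\ell,k+1}=(\ell-k)A_{\ell,k}$, which is the relation among the constants of \eqref{prop:asbh} that the paper isolates as an "exact cancellation." A Gr\"onwall or Hardy argument, which is what you invoke for the first term, can only yield upper bounds (with a possible logarithmic or small polynomial loss) and cannot produce this cancellation or the precise constant. The paper avoids the issue by integrating the \emph{unweighted} derivative: $\partial_r^{k}\psi_{\ell}(\tau,\rho)=\partial_r^{k}\psi_{\ell}(\tau,\rho_{\gamma_{\alpha_{\ell}}})-\int_{\rho}^{\rho_{\gamma_{\alpha_{\ell}}}}\partial_r^{k+1}\psi_{\ell}\,dr$, whereupon the $\rho_{\gamma_{\alpha_{\ell}}}$-endpoint of the integral of the leading profile of $\partial_r^{k+1}\psi_{\ell}$ cancels the boundary term identically, leaving $\rho^{\ell-k}A_{\ell,k}I_{\ell}\tau^{-2\ell-2}$ plus admissible errors. (Equivalently: multiply your ODE by the integrating factor $r^{\ell-k}$, which converts it into the paper's computation.) To repair your argument you must state and use the identity $A_{\ell,k+1}=(\ell-k)A_{\ell,k}$, derived from the relations between the $\widetilde{A}_{\ell,k}$ in \eqref{prop:asbh}; without it the induction step does not close.
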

\begin{proof}
By \eqref{prop:asbh}, we have that there exists a constant $A_{\ell , \ell}$ (depending on $\widetilde{A}_{\ell,\ell}$), such that
\begin{equation*}
\partial_{r}^{\ell}\psi_{\ell}(\tau,\rho_{\gamma_{\alpha}}(\tau))=A_{\ell , \ell}I_{\ell} [ \psi_{\ell} ] (1+\tau)^{-2\ell-2}+O((1+\tau)^{-2\ell-2-\epsilon}),
\end{equation*}
where we note that we suppress the dependence on $(\theta , \varphi )$ for $I_{\ell}$. We can then integrate in $\rho$ and use \eqref{eq:imporveddecayrder} together with estimates \eqref{dec:drklt1} to obtain for $r_+\leq \rho \leq \rho_{\gamma_{\alpha_{\ell}}}(\tau)$ and $\epsilon>0$ arbitrarily small the following:
\begin{equation*}
\begin{split}
\left|\partial_{r}^{\ell}\psi_{\ell}(\tau,\rho)-A_{\ell}I_{\ell}(1+\tau)^{-2\ell-2+\epsilon}\right|\leq&\: \int_{\rho}^{\rho_{\gamma_{\alpha_{\ell}}}(\tau)}|\partial_{\rho} \partial_r^{\ell}\psi_{\ell}|\,d\rho'\\
\lesssim&\:\tau^{-2\ell-3+\epsilon}\cdot \rho_{\gamma_{\alpha}}(\tau)\\
\lesssim &\:\tau^{-2\ell-3+\alpha_{\ell}+\epsilon}=\tau^{-2\ell-2-\epsilon'},
\end{split}
\end{equation*}
where $\epsilon'=1-\epsilon-\alpha_{\ell}>0$. 

For $\partial_r^{\ell-1} \psi_{\ell}$ we integrate in $r$ from $\rho$ to $\rho_{\gamma_{\alpha_{\ell}}}$ and we have that:
\begin{align*}
\partial_r^{\ell-1} \psi_{\ell} ( \tau , \rho ) = & \partial_r^{\ell-1} \psi_{\ell} ( \tau , \rho_{\gamma_{\alpha_{\ell}}} ) - \int_{\rho}^{\rho_{\gamma_{\alpha_{\ell}}}} \partial_{\rho'} ( \partial_r^{\ell-1} \psi_{\ell} ) \, d\rho'  \\ = & \partial_r^{\ell-1} \psi_{\ell} ( \tau ,  \rho_{\gamma_{\alpha_{\ell}}} ) - \int_{\rho}^{\rho_{\gamma_{\alpha_{\ell}}}} \partial_r^{\ell} \psi_{\ell}  \, dr \\ = & \rho_{\gamma_{\alpha_{\ell}}} \frac{\partial_r^{\ell-1} \psi_{\ell} ( \tau , \rho_{\gamma_{\alpha_{\ell}}} )}{\rho_{\gamma_{\alpha_{\ell}}}} - \rho_{\gamma_{\alpha_{\ell}}} [ A_{\ell, \ell} I_{\ell} [ \psi_{\ell} ] (1+\tau)^{-2\ell-2}+O((1+\tau)^{-2\ell-2-\epsilon}) ] \\ & + \rho  [ A_{\ell , \ell}I_{\ell} [ \psi_{\ell} ] (1+\tau)^{-2\ell-2}+O((1+\tau)^{-2\ell-2-\epsilon}) ]  ,
\end{align*} 
and notice that there is an \textit{exact} cancellation in the term
$$ \rho_{\gamma_{\alpha_{\ell}}} \frac{\partial_r^{\ell-1} \psi_{\ell} ( \rho_{\gamma_{\alpha_{\ell}}} )}{\rho_{\gamma_{\alpha_{\ell}}}} - \rho_{\gamma_{\alpha_{\ell}}} [ A_{\ell, \ell}I_{\ell} [ \psi_{\ell} ] (1+\tau)^{-2\ell-2}+O((1+\tau)^{-2\ell-2-\epsilon}) ] = 0 . $$
For the remaining derivatives we work in the same way and we notice that there is a cancellation of the form:
$$ \frac{A_{\ell , k+1}}{\ell-k} = A_{\ell , k} \mbox{  for $0 \leq k \leq \ell-1$, } $$
by the relations between the quantities $\widetilde{A}_{\ell,k}$ from \eqref{prop:asbh}.

\end{proof}

\section{Construction of time integrals}
\label{timeinverse}
For $\psi$ a solution of \eqref{eq:waveequation}, let again
$$ \psi_{\ell} \doteq P_{\ell} \psi , $$ for a fixed $\ell \geq 1$. In this section we will invert the time translation operator $T$.

\begin{theorem}
Fix some $\ell \geq 1$ and consider a frequency localized solution $\psi_{\ell} \doteq P_{\ell} \psi$ of \eqref{eq:waveequation}, where $\psi$ solves \eqref{eq:waveequation}. Assume that $\left. ( \psi_{\ell} , n_{\Sigma_0} \psi_{\ell} ) \right|_{\Sigma_0} \in ( C^{\infty}_c (\Sigma_0 ) )^2$.

Then there exists a unique frequency localized (at frequency $\ell$) smooth solution of \eqref{eq:waveequation} $\widetilde{\psi}_{\ell}$ such that
$$ T \widetilde{\psi}_{\ell} = \psi_{\ell} , $$
satisfying the boundary conditions $\lim_{r\rightarrow \infty} r \partial_{\rho}^{\ell+1} \widetilde{\psi}_{\ell} = (Dr^2 )^{\ell+1} \partial_{\rho}^{\ell+1} \widetilde{\psi}_{\ell} |_{\mathcal{H}^+} = 0$.

Moreover we have that
$$ I_{\ell}^{(1)} [ \psi ] (\theta , \varphi ) \doteq I_{\ell} [ \widetilde{\psi} ] ( \theta , \varphi ) = \lim_{r \rightarrow \infty}  \left[ - r^2 \partial_r \mathcal{L}_{(\ell)} ( \psi_{\ell} ) \right] , $$
for $\mathcal{L}_{(\ell)} ( \psi_{\ell} )$ the solution of the system
\begin{align*}
\mathcal{L}_{(0)} (\psi_{\ell} ) :=& r \cdot \int_r^{\infty} ( r' -r )^{\ell} \frac{1}{(D(r' ) r'^2)^{\ell+1}} F [ \psi_{\ell} ] (r' ) \, dr' \:,\\
\mathcal{L}_{(n)} (\psi_{\ell} ) :=&\:\widetilde{\mathcal{L}}_{(n)} ( \psi_{\ell} ) + \sum_{k=1}^{n} \alpha_{n,k}\mathcal{L}_{(n-1)} ( \psi_{\ell} )  ,
\end{align*}
where the $\alpha_{n,k}$'s are as in Proposition \ref{prop:np}, where
$$ F [ \psi_{\ell} ] (r) \doteq \int_{r_{+}}^r (D(r') r'^2 )^{\ell} f_1 (r' ) \partial_{r'}^{\ell+1} \psi_{\ell} \, dr' + \sum_{j=0}^{\ell} \int_{r_{+}}^r  (D (r' )r'^2 )^{\ell} r' f_2^j (r' ) \partial_{r'}^j \psi_{\ell} \, dr' , $$
for $f_1$ and $f_2^j$ are given by the equation:
\begin{equation}\label{eq:auxinv}
 \partial_{\rho} ( (Dr^2 )^{\ell +1} \partial_{\rho}^{\ell+1} \widetilde{\psi}_{\ell} ) =  (Dr^2 )^{\ell} f_1 (r) \partial_{\rho}^{\ell+1} T \widetilde{\psi}_{\ell}  + \sum_{j=0}^{\ell} f_2^j (r) (Dr^2 )^{\ell} r \partial_{\rho}^j T \widetilde{\psi}_{\ell}  ,
 \end{equation}
and where
$$ \widetilde{\mathcal{L}}_{(k)} (\psi_{\ell} ) \doteq (-1)^{k} (r^2 \partial_r )^{k} \left(  \ell ! r \cdot \int_r^{\infty} ( r' -r )^{\ell} \frac{1}{(D(r' ) r'^2)^{\ell+1}} F[\psi_{\ell} ](r' ) \, dr' \right)  .$$

\end{theorem}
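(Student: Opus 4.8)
The plan is to build $\widetilde{\psi}_{\ell}$ first as Cauchy data on the initial hypersurface $\Sigma_0$ (equivalently $\mathcal{S}_0$) by solving an ordinary differential equation in $r$, then to evolve this data by global well-posedness, and finally to verify $T\widetilde{\psi}_{\ell}=\psi_{\ell}$ a posteriori. \emph{Step 1 (reduction to an ODE).} If $\widetilde{\psi}_{\ell}$ solves \eqref{eq:waveequation} and $T\widetilde{\psi}_{\ell}=\psi_{\ell}$, then restricting the wave equation in the form \eqref{eq:inhomelliptic} to the initial slice and substituting $T\widetilde{\psi}_{\ell}=\psi_{\ell}$, $T^2\widetilde{\psi}_{\ell}=T\psi_{\ell}$, $\partial_{\rho}T\widetilde{\psi}_{\ell}=\partial_{\rho}\psi_{\ell}$ turns it into $\partial_{\rho}(Dr^2\partial_{\rho}\widetilde{\psi}_{\ell})+\slashed{\Delta}_{\s^2}\widetilde{\psi}_{\ell}=r^2\widetilde{F}[\psi_{\ell}]$ with a right-hand side depending only on $\psi_{\ell}$. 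Since $\psi_{\ell}=P_{\ell}\psi$ is localized at angular frequency $\ell$ and we seek $\widetilde{\psi}_{\ell}$ localized at the same frequency, one has $\slashed{\Delta}_{\s^2}\widetilde{\psi}_{\ell}=-\ell(\ell+1)\widetilde{\psi}_{\ell}$; commuting $\ell$ times with $\partial_{\rho}$ and using \eqref{eq:maincommwaveq} at $k=\ell$ — where the factor $[-\slashed{\Delta}_{\s^2}-\ell(\ell+1)]\partial_{\rho}^{\ell}\widetilde{\psi}_{\ell}$ annihilates the frequency-$\ell$ piece — produces exactly equation \eqref{eq:auxinv}, a first-order ODE for $w:=(Dr^2)^{\ell+1}\partial_{\rho}^{\ell+1}\widetilde{\psi}_{\ell}$ whose right-hand side involves only $\psi_{\ell}$ and finitely many of its derivatives.

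\emph{Step 2 (solving the ODE and extracting the formula).} Integrating the ODE for $w$ from $r=r_+$ and imposing $w|_{\mathcal{H}^+}=0$ gives $w(r)=F[\psi_{\ell}](r)$, hence $\partial_{\rho}^{\ell+1}\widetilde{\psi}_{\ell}=F[\psi_{\ell}]/(Dr^2)^{\ell+1}$. Because the data of $\psi$ is compactly supported, $\psi_{\ell}$ vanishes near $\mathcal{I}^+$ along the initial slice, so $F[\psi_{\ell}]$ is eventually constant in $r$; therefore $r\,\partial_{\rho}^{\ell+1}\widetilde{\psi}_{\ell}=O(r^{-2\ell-1})\to 0$, which is consistent with the second boundary condition, and $\partial_{\rho}^{\ell+1}\widetilde{\psi}_{\ell}$ may be integrated $\ell+1$ times starting from $r=\infty$. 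Carrying out these integrations with the Cauchy remainder kernel $(r'-r)^{\ell}/\ell!$, and fixing the remaining $\ell$ integration constants so that the resulting function in fact solves the original (uncommuted) second-order equation, yields $\widetilde{\psi}_{\ell}$ on $\Sigma_0$; multiplied by $r$ this is $\mathcal{L}_{(0)}(\psi_{\ell})$ up to normalization, and its radial derivatives $\widetilde{\mathcal{L}}_{(k)}$ together with their modifications $\mathcal{L}_{(k)}$ (built with the same constants $\alpha_{n,k}$ as in Proposition \ref{prop:np}) are by construction the initial-slice values of $\widetilde{\Phi}_{(k)}[\widetilde{\psi}]$ and $\Phi_{(k)}[\widetilde{\psi}]$.

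\emph{Step 3 (existence, verification, uniqueness, and the charge).} From $\widetilde{\psi}_{\ell}$ and $n\widetilde{\psi}_{\ell}=a\,T\widetilde{\psi}_{\ell}+(\text{tangential derivatives of }\widetilde{\psi}_{\ell})=a\,\psi_{\ell}+(\text{tangential derivatives of }\widetilde{\psi}_{\ell})$, with $a$ obtained by decomposing the normal of the initial slice into $T$ and tangential directions, we recover a full smooth frequency-$\ell$ datum and evolve it by global well-posedness to a smooth solution $\widetilde{\psi}_{\ell}$ of \eqref{eq:waveequation} on $\mathcal{D}$, with frequency localization preserved. Setting $\chi:=T\widetilde{\psi}_{\ell}-\psi_{\ell}$, which solves \eqref{eq:waveequation} since $T$ is Killing, the construction of the data gives $\chi|_{\Sigma_0}=0$ and $n\chi|_{\Sigma_0}=0$, so uniqueness forces $\chi\equiv 0$; uniqueness of $\widetilde{\psi}_{\ell}$ within the natural class (frequency-$\ell$ solutions whose data decays at spatial infinity) follows identically, since the difference of two such solutions is stationary with frequency-$\ell$ content annihilated by the two boundary conditions. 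Finally, by Definition \ref{def:np}, $I^{(1)}_{\ell}[\psi]=I_{\ell}[\widetilde{\psi}]=\lim_{r\to\infty}(-1)^{\ell+1}\partial_x(P_{\ell}\Phi_{(\ell)}[\widetilde{\psi}])$ and $\partial_x=-r^2\partial_r$, so identifying $\mathcal{L}_{(\ell)}(\psi_{\ell})$ with a constant multiple of $\Phi_{(\ell)}[\widetilde{\psi}]$ from Step 2 gives the stated formula $I^{(1)}_{\ell}[\psi]=\lim_{r\to\infty}[-r^2\partial_r\mathcal{L}_{(\ell)}(\psi_{\ell})]$; existence of the limit follows from Theorem \ref{thm:np} once the finiteness of the limits of $\widetilde{\Phi}_{(k)}[\widetilde{\psi}]$ is established from the explicit decay of $\widetilde{\psi}_{\ell}$ in Step 2.

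The main obstacle is Step 2: one must show that imposing the $\ell$-times-commuted first-order ODE together with the two boundary conditions is equivalent to solving the original second-order equation with the correct decay — that is, that the $\ell$ integration constants left over after integrating $\partial_{\rho}^{\ell+1}\widetilde{\psi}_{\ell}$ from infinity are uniquely pinned down, and that the iterated integrals converge. This relies crucially on the compact support of $\psi_{\ell}$ on the initial slice (which makes $F[\psi_{\ell}]$ eventually constant) and on careful $r$-weight bookkeeping, and it is also the place where one pins down the precise asymptotic behavior of $\widetilde{\psi}_{\ell}$ and all of its radial derivatives that is needed both for the convergence of the Newman--Penrose limit and for the weighted-energy estimates used elsewhere.
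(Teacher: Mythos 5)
Your proposal follows essentially the same route as the paper: reduce to the first-order ODE \eqref{eq:auxinv} for $(Dr^2)^{\ell+1}\partial_{\rho}^{\ell+1}\widetilde{\psi}_{\ell}$ on the initial slice, integrate up from the horizon and then $\ell+1$ times down from infinity with the kernel $(r'-r)^{\ell}$ (Lemma \ref{lm:integralsandweights}), use compact support of the data to guarantee convergence and the boundary behavior, evolve globally, and read off $I_{\ell}^{(1)}$ from Proposition \ref{prop:np} and Definition \ref{def:np}. The extra detail you supply — the derivation of \eqref{eq:auxinv} from the $\ell$-fold commuted equation \eqref{eq:maincommwaveq} and the uniqueness argument via $\chi = T\widetilde{\psi}_{\ell}-\psi_{\ell}$ — is consistent with, and fills in, what the paper leaves implicit.
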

\begin{remark}
Note that the constants in the iterative system for $\widetilde{\mathcal{L}}_k$ are the same as in the system for $\Phi_{(k)}$ in Proposition \ref{prop:np}.
\end{remark}
\begin{proof}
The existence and uniqueness of the time-inverse $\psi_{\ell}$ follows by solving equation \eqref{eq:auxinv} with appropriate conditions at infinity and at the horizon (taking $\lim_{r\rightarrow \infty} r \partial_{\rho}^{\ell+1} \widetilde{\psi}_{\ell} = (Dr^2 )^{\ell+1} \partial_{\rho}^{\ell+1} \widetilde{\psi}_{\ell} |_{\mathcal{H}^+} = 0$ is enough) on the initial hypersurface $\Sigma_{u_0}$. With the initial data obtained from the solutions of the previous ODE problem, we can now have a global solution $\psi_{\ell}$ in all of the domain of outer communications up to and including the event horizon.

We integrate equation \eqref{eq:auxinv} (which in schematic form is \eqref{eq:commwaveqfixedl}) in $\rho$ and we have that for any $r \geq r_{+}$:
\begin{align*}
\partial_{\rho}^{\ell+1} \widetilde{\psi}_{\ell} ( r ) = & \frac{1}{ (D(r) r^2 )^{\ell+1}} \int_{r_{+}}^r (D(r') r'^2 )^{\ell} f_1 (r')  \partial_{r'}^{\ell+1} T \widetilde{\psi}_{\ell} \, d\rho \\ & + \frac{1}{ (D(r) r^2 )^{\ell+1}}  \sum_{j=0}^{\ell} \int_{r_{+}}^r f_2^j (r') (D (r' )r'^2 )^{\ell} r' \partial_{r'}^j T \widetilde{\psi}_{\ell} \, d\rho.
\end{align*} 
For $F$ as in the statement of the theorem, we have that
$$ f_1 (r) \sim O ( r^{1-\eta} ) , \quad \quad f_2^j (r) \sim O (r'^{-\ell+j} ) , $$
hence
$$ F [\psi_{\ell} ] (r) \sim \int_{r_{+}}^r (D(r') r'^2 )^{\ell} O (r'^{1-\eta} ) \partial_{r'}^{\ell+1} T \widetilde{\psi}_{\ell} \, d\rho + \sum_{j=0}^{\ell} \int_{r_{+}}^r O (r'^{-\ell +j} ) (D (r' )r'^2 )^{\ell} r' \partial_{r'}^j T \widetilde{\psi}_{\ell} d\rho, $$
For $\partial_{\rho}^{\ell+1}$ we now have that
\begin{equation}\label{eq:auxdrell}
\partial_{\rho}^{\ell+1} \psi_{\ell} ( r ) = \frac{1}{ (D(r) r^2 )^{\ell+1}} F[\widetilde{\psi}_{\ell} ](r) . 
\end{equation} 
Note that generically we have that
$$ F[\psi_{\ell} ](r) \sim r^{\ell+1} , $$
we have that
$$ \frac{1}{ ( D(r) r^2 )^{\ell+1}} F[\widetilde{\psi}_{\ell} ](r) \sim r^{-\ell-1} , $$
so after integrating equation \eqref{eq:auxdrell} $\ell$ times  and applying Lemma \ref{lm:integralsandweights} we get that:
$$ \partial_{\rho} \widetilde{\psi}_{\ell} (r_{\ell-1} ) = (-1)^{\ell} (\ell-1)! \int_{r_{\ell-1}}^{\infty} ( r - r_{\ell-1} )^{\ell-1} \frac{1}{ ( D(r) r^2 )^{\ell+1}} F[\psi_{\ell} ](r) \, d\rho , $$
for any $r_{\ell-1} \geq r_{+}$. We use the assumption of compact at this point to make sure the large integral is finite. Then integrating one more time we get that:
\begin{align}\label{aux:ti}
\widetilde{\psi}_{\ell} (r_{\ell} ) = & (-1)^{\ell+1} \ell ! \int_{r_{\ell}}^{\infty} ( r - r_{\ell} )^{\ell} \frac{1}{( D(r) r^2 )^{\ell+1}} F[\widetilde{\psi}_{\ell} ](r) \, d\rho \\ & + (-1)^{\ell} ( \ell -1) \lim_{r \rightarrow \infty} r^{\ell+1} \left( \frac{1}{( D(r) r^2 )^{\ell+1}} F[\psi_{\ell} ](r) \right) .
\end{align}
Note also that $F$ depends on $T\widetilde{\psi}_{\ell}$. At this point we use the assumption that 
$$ T\widetilde{ \psi}_{\ell} = \psi_{\ell} , $$
and the decay assumption on $\psi_{\ell}$. The rest follows by an application of Proposition \ref{prop:np} on $\psi_{\ell}$ defined as above.

Note also that from the formulas above we have that
$$ \bar{\Phi}_{(k)} = O ( r^{-\ell+k+\eta'} ) \mbox{  for any $\eta' >0$ and $k \in \{ 0 , \ldots , \ell - 1\}$} , $$
for $\bar{\Phi}$ defined as $\Phi_{(k)}$ but with $\widetilde{\psi}$ in the place of $\psi$. This implies that
$$ \lim_{r \rightarrow \infty}\bar{ \Phi}_{(k)} = 0 \mbox{  for $k \in \{0, \ldots , \ell - 1 \}$} . $$

\end{proof}
\begin{remark}
Note that for
\begin{align*}
F[\psi_{\ell}](r)& = \sum_{j=0}^{\ell} \int_{r_{+}}^r ( D(r') r'^2 )^{\ell} r' O (r^{-\ell+j} ) \partial_{r'}^j \psi_{\ell} \, d\rho + \int_{r_{+}}^r  (D(r' ) r'^2 )^{\ell} O (r'^{1-\eta} ) \partial_{r'}^{\ell+1} \psi_{\ell} \, d\rho\\ & \doteq F_1 [\psi_{\ell} ] (r) + F_2 [\psi_{\ell} ](r) , 
\end{align*}
we have that
$$ F_1 [\psi_{\ell} ](r) = O ( \log r ), \quad F_2 [\psi_{\ell} ](r) = O ( r^{-\eta} ) . $$
We note that in the case of non-compactly data for $\psi_{\ell}$ with vanishing $\ell$-th Newman--Penrose charget $I_{\ell} [\widetilde{\psi} ] = 0$ satisfying $\lim_{r \rightarrow \infty} r^3 \partial_r [ (r^2 \partial_r )^{\ell} ( r \psi_{\ell} ) ] < \infty$, we need to use the precise form of $F$ in order to express $I_{\ell} [\widetilde{\psi} ]$ in terms of $\psi_{\ell}$.
\end{remark}
\begin{remark}
Let us note that due to the last observation in the proof above, once we take the limits in the iterative expression for the $\mathcal{L}_n$'s to compute the constants at infinity, only two terms will be non-zero (the ones at the top two highest derivative levels). One can show that the $\alpha_{\ell , \ell-1}$'s depends only on $M$, we note that at least in the case of compactly supported data, the time-inverted Newman--Penrose charge $I_{\ell}^{(1)}$ depend only on the mass $M$ and do not depend on the charge $e$ of the Reissner--Nordstr\"{o}m spacetime that we are working in.

Moreover, starting from the formula \eqref{aux:ti} and computing 
$$\lim_{r \rightarrow \infty} \mathcal{L}_{(\ell)} (\widetilde{ \psi}_{\ell} ) = \lim_{r \rightarrow \infty} [ T^{-1} ( \mathcal{L}_{(\ell)} ( \psi_{\ell} ) ] ,$$ 
we note that
$$ I^{(1)}_{\ell} [ \widetilde{\psi}_{\ell} ] = c(\ell , M) \lim_{r \rightarrow \infty} [ T^{-1} ( \mathcal{L}_{(\ell)} ( \psi_{\ell} ) ) ] , $$
for a function $c$ that depends only on the mass $M$ and the frequency $\ell$, and we can then express the time-inverted Newman--Penrose charge as an integral over future null infinity:
$$  I^{(1)}_{\ell} [ \widetilde{\psi}_{\ell} ] (\theta , \varphi ) = c(\ell , M) \int_{u_0}^{\infty}  \left. \mathcal{L}_{(\ell)} (\psi_{\ell} ) (\tau , \theta , \varphi ) \right|_{\mathcal{I}^+} \, d\tau  . $$

\end{remark}

\section{Precise late-time asymptotics when $I_{\ell}=0$}
Using the time-inversion construction of the previous Section we can now prove asymptotics for a frequency localized linear wave. First we derive asymptotics for $T$ derivatives. We start with a basic Lemma that is a generalization of Lemma \ref{lm:np1v}.

Note that again in this section we will always work with a solution $\psi$ of \eqref{eq:waveequation} that is localized at angular frequency $\ell \geq 1$, so by $\psi$ we always mean $P_{\ell} \psi$.

\begin{lemma}\label{lm:np1vm}
Let $\psi$ be a solution of \eqref{eq:waveequation} that is localized at angular frequency $\ell \geq 1$. Under the assumption \eqref{eq:np1as} (which we make for some $\epsilon > 0$) and $I_{\ell} [\psi ] \neq 0$ for all $(u,v) \in \mathcal{B}_{\alpha_{\ell}}$ we have that:
\begin{equation}\label{eq:np1vm}
\sum_{k=0}^m | v^{2+k} L^{k+1} \Phi_{(\ell )} (u,v)| \leq C E_{aux-decomp-n0-k} \left( 1 + O ( v^{-\beta} ) \right) ,
\end{equation}
for some $\beta > 0$, for $C = C ( D,R)$, and for $E_{aux-decomp-n0-k}$ as in Theorem \ref{dec:drk1} .
\end{lemma}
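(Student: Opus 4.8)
The plan is to prove \eqref{eq:np1vm} by induction on $k$, for $0\le k\le m$. The base case $k=0$ is precisely Lemma \ref{lm:np1v} (the initial energy $E_{aux,I_{\ell}}$ appearing there is dominated by $E_{aux-decomp-n0-0}$, hence by each $E_{aux-decomp-n0-k}$). For the inductive step I would fix $k\ge 1$, assume the bound for all smaller indices, and start from the transport equation for $L^{k+1}\Phi_{(\ell)}$ obtained by commuting \eqref{eq:maineqNpquantfixedlv2} (equivalently Corollary \ref{cor:cancel}) with $L^k$; this is exactly \eqref{eq:maineqNpquantfixedlv3} with $m=k$. Multiplying by the integrating factor $r^{-2\ell}$ converts it, just as \eqref{eq:invw} does in the uncommuted case, into an identity of the schematic form $\underline{L}(r^{-2\ell}L^{k+1}\Phi_{(\ell)})=O(r^{-2-2\ell})L^{k+1}\Phi_{(\ell)}+\sum_{k_1+k_2=k,\,k_2<k}O(r^{-1-2\ell-k_1})L^{k_2+1}\Phi_{(\ell)}+\sum_{j=0}^{\ell-1}\sum_{k_1+k_2=k}O(r^{-3-2\ell-k_1})L^{k_2}\Phi_{(j)}$, all quantities projected to frequency $\ell$.

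I would then integrate this identity in $u$ at fixed $v$, with $u_0$ and $u$ both in $\mathcal{B}_{\alpha_{\ell}}$, and multiply the result by $r^{2\ell}(u,v)$ to recover $L^{k+1}\Phi_{(\ell)}(u,v)$. The boundary term on the initial cone $\mathcal{N}_{u_0}$ is treated exactly as in Lemma \ref{lm:np1v}: on $\mathcal{N}_{u_0}$ one has $v\sim r$, and the conformal-smoothness assumption \eqref{basic:as} together with the finiteness statement of Proposition \ref{prop:finitenesshoradfields} gives $r^{2+k}L^{k+1}\Phi_{(\ell)}(u_0,v)=O(1)$, which, combined with $r^{2\ell}(u,v)\le r^{2\ell}(u_0,v)$, contributes $O(v^{-2-k})$ to $|L^{k+1}\Phi_{(\ell)}(u,v)|$. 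For the source integrals I would use, for the terms containing $L^{k_2}\Phi_{(j)}$ with $j\le\ell-1$, the almost-sharp pointwise decay estimates \eqref{dec:pwkl} and \eqref{dec:pwkl0} (and their $L$-differentiated analogues furnished by the $r^p$-hierarchies of Section \ref{section:rp} and Lemmas \ref{lm:auxdecay}, \ref{lm:auxdecay1}), and for the near-diagonal terms $L^{k_2+1}\Phi_{(\ell)}$ with $k_2<k$ the inductive hypothesis. Throughout one exploits the geometry of $\mathcal{B}_{\alpha_{\ell}}$, namely $v^{\alpha_{\ell}}\lesssim r\lesssim v\sim u$, to convert powers of $r^{-1}$ in the error coefficients into powers of $v^{-1}$; the term $O(r^{-2-2\ell})L^{k+1}\Phi_{(\ell)}$, which involves the unknown itself, is absorbed by a Gr\"{o}nwall-type estimate since $\int_{u_0}^u r^{-2}(u',v)\,du'$ is uniformly bounded.

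The delicate point — and the step I expect to be the main obstacle — is the bookkeeping for the near-diagonal terms $O(r^{-1-2\ell-k_1})L^{k_2+1}\Phi_{(\ell)}$ with $k_1=k-k_2\ge 1$. After undoing the integrating factor, these terms are better than $L^{k+1}\Phi_{(\ell)}$ only by virtue of the extra decay of $r^{-1}$ near $\gamma_{\alpha_{\ell}}$, so closing the induction on the \emph{bare} bound \eqref{eq:np1vm} is not quite possible; instead I would carry the stronger, $r$-weighted ``leading-order plus error'' form of the hypothesis, roughly $|L^{k_2+1}\Phi_{(\ell)}(u,v)|\lesssim E\,r^{2\ell}(u,v)\,v^{-2\ell-2-k_2}+E\,v^{-\beta}\,r^{-2-k_2}(u,v)$ (compatible with the $k=0$ asymptotics of Proposition \ref{prop:auxasn0}), which is stable under the integration above, and recover \eqref{eq:np1vm} at the end using $r\le v$ in $\mathcal{B}_{\alpha_{\ell}}$. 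This forces $\alpha_{\ell}$ to be taken close to $1$ with an $\ell$-dependent threshold, precisely of the type appearing in the proof of Proposition \ref{prop:auxasn0}. With this weighted induction in place, the remaining estimates are routine and parallel the $k=0$ arguments of Lemma \ref{lm:np1v} and Proposition \ref{prop:auxasn0}.
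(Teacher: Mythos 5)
Your main line of argument — commute with $L^k$ to get \eqref{eq:maineqNpquantfixedlv3}, apply the integrating factor $r^{-2\ell}$, integrate in $u$ from $\mathcal{N}_{u_0}$, treat the boundary term via the smoothness of the data, the lower-order $\Phi_{(j)}$ source terms via the almost-sharp decay estimates, and the self-term via Gr\"onwall — is exactly the paper's (entirely unwritten) proof, and you are right that the near-diagonal terms $O(r^{-1-k_1})L^{k_2+1}\Phi_{(\ell)}$ are the genuinely delicate point: a crude estimate of the integrating-factor ratio by $1$ followed by the bare inductive bound $|L^{k_2+1}\Phi_{(\ell)}|\lesssim Ev^{-2-k_2}$ produces $Ev^{-2-k_2}\int r^{-1-k_1}\,du'\lesssim Ev^{-2-k}(v/r)^{k_1}$, which loses $v^{(1-\alpha_{\ell})k_1}$ near $\gamma_{\alpha_{\ell}}$.

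The gap is in your proposed repair. The two-term ansatz $|L^{k_2+1}\Phi_{(\ell)}|\lesssim E\,r^{2\ell}v^{-2\ell-2-k_2}+E\,v^{-\beta}r^{-2-k_2}$ is \emph{not satisfied by the solution itself}, so no induction can propagate it. The reason is that when $L$ falls on the leading profile $(v-u)^{2\ell}v^{-2\ell-2}$ of $L\Phi_{(\ell)}$, it hits the $(v-u)^{2\ell}$ factor as well as $v^{-2\ell-2}$, so $L^{k_2+1}\Phi_{(\ell)}$ genuinely contains all the Leibniz cross terms $r^{2\ell-j}v^{-2\ell-2-k_2+j}$, $1\leq j\leq\min(k_2,2\ell)$; in the intermediate regime $v^{1-\delta}\lesssim r\ll v$ (with $\delta$ small compared to $\beta$) such a term exceeds both $r^{2\ell}v^{-2\ell-2-k_2}$ (which requires $r\gtrsim v$) and $v^{-\beta}r^{-2-k_2}$ (which requires $(r/v)^{2\ell+2+k_2-j}\lesssim v^{-\beta}$). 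The fix is to carry the \emph{full} profile $|L^{k_2+1}\Phi_{(\ell)}|\lesssim E\sum_{j}r^{2\ell-j}v^{-2\ell-2-k_2+j}$ — i.e.\ the shape $(v-u)^{2\ell}|L^{k_2}(v^{-2\ell-2})|$ appearing in \eqref{asym:vphim} — and, crucially, \emph{not} to discard the integrating-factor ratio: keeping $r^{2\ell}(u,v)\,r^{-2\ell}(u',v)$ inside the integral, the $j$-th cross term contributes $r^{2\ell}(u,v)\,v^{-2\ell-2-k_2+j}\int_{u_0}^{u}r^{-1-k_1-j}(u',v)\,du'\lesssim r^{2\ell-k_1-j}(u,v)\,v^{-2\ell-2-k_2+j}$, which is precisely the $(j+k_1)$-th term of the profile at level $k$. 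Each term of \eqref{eq:np1vm} then follows since $v^{2+k}r^{2\ell-j}v^{-2\ell-2-k+j}=(r/v)^{2\ell-j}\leq 1$ in $\mathcal{B}_{\alpha_{\ell}}$. With that modification your argument closes; as written, the inductive hypothesis is false and the step fails.
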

The proof is similar to that of Lemma \ref{lm:np1v} where we use equation \eqref{eq:maineqNpquantfixedlv3}. Next we prove a generalization of Proposition \ref{prop:auxasn0}.

\begin{proposition}\label{prop:auxasnt}
Let $\psi$ be a solution of \eqref{eq:waveequation} that is localized at angular frequency $\ell \geq 1$. Assume that \eqref{eq:np1as} holds true for some $\epsilon > 0$ and $I_{\ell} [\psi ] \neq 0$. Then for $m \in \mathbb{N}$ there exists $\eta > 0$ small enough such that for all $( u , v ) \in \mathcal{B}_{\alpha_{\ell}}$ we have that:
\begin{equation}\label{asym:vphim}
L^{m+1} \Phi_{(\ell)} (u,v , \theta , \varphi ) = 2^{2\ell+1}I_{\ell} [ \psi ] ( \theta , \varphi )(v-u)^{2\ell} L^m \left( \frac{1}{v^{2\ell+2}} \right) +O(v^{-\eta}) (v-u)^{2\ell} L^m \left( \frac{1}{(v-u)^{2\ell+2}} \right) .
\end{equation}
\end{proposition}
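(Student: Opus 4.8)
The plan is to mimic the proof of Proposition \ref{prop:auxasn0} but carry the computation through $m$ additional $L$-commutations, using the commuted equation \eqref{eq:maineqNpquantfixedlv3} and the generalized upper bound of Lemma \ref{lm:np1vm}. First I would commute the transport equation \eqref{eq:maineqNpquantfixedlv2} with $L^m$ to obtain, schematically, $2\underline{L} L^{m+1}\Phi_{(\ell)} = [-2\ell r^{-1} + O(r^{-2})]L^{m+1}\Phi_{(\ell)} + \sum_{m_1+m_2=m,\,m_2<m} O(r^{-1-m_1}) L^{m_2+1}\Phi_{(\ell)} + \sum_{k=0}^{\ell-1}\sum_{m_1+m_2=m} O(r^{-3-m_1}) L^{m_2}\Phi_{(k)}$, exactly as in \eqref{eq:maineqNpquantfixedlv3}. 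Then I would write a transport identity for $r^{-2\ell}L^{m+1}\Phi_{(\ell)}$ analogous to \eqref{eq:invw}; here the weight $r^{-2\ell}$ is chosen so that the $-2\ell r^{-1}L^{m+1}\Phi_{(\ell)}$ term is absorbed, leaving $\underline{L}(r^{-2\ell}L^{m+1}\Phi_{(\ell)}) = O(r^{-2-2\ell})L^{m+1}\Phi_{(\ell)} + \sum_{m_2<m} O(r^{-1-2\ell-m_1})L^{m_2+1}\Phi_{(\ell)} + \sum_{k=0}^{\ell-1}\sum O(r^{-3-2\ell-m_1})L^{m_2}\Phi_{(k)}$.

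Next I would integrate this identity in $u$ along constant-$v$ lines inside $\mathcal{B}_{\alpha_{\ell}}$, from the initial slice $u=u_0$ up to $u$, multiplying through by $v^{2\ell+2+\cdots}$ to track the leading power. The boundary term at $u=u_0$ is handled as in Proposition \ref{prop:auxasn0}: using $L^{m+1} = L^m(\tfrac12 D\partial_r)^{?}$ more precisely iterating $L = \tfrac12 D\partial_r$ and $v = u+2r$, together with the assumption \eqref{eq:np1as} $r^2\partial_r\Phi_{(\ell)}|_{u_0} = I_\ell[\psi] + O(v^{-\epsilon})$, one finds $v^{2\ell+2}r^{-2\ell}(u_0,v)L^{m+1}\Phi_{(\ell)}(u_0,v) = 2^{2\ell+1}I_\ell[\psi]\,L^m\!\left(\frac{1}{v^{2\ell+2}}\right)\cdot v^{2\ell+2} + O(v^{-\eta})$ — equivalently the leading factor picks up the $L^m$ acting on $v^{-2\ell-2}$, since $L$ applied to functions of $v-u$ at fixed $u$ behaves like $\partial_r$ up to $D$-factors. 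For the bulk terms along the $u'$-integration I would use Lemma \ref{lm:np1vm} for the $L^{m_2+1}\Phi_{(\ell)}$ contributions (each comes with its sharp $v^{-2-m_2}$ weight) and the almost-sharp pointwise decay \eqref{dec:pwkl}/\eqref{dec:pwklt} for the lower radiation fields $\Phi_{(k)}$, $k\le \ell-1$; in both cases I would convert the $u'$-decay into $r$-growth via $u'^{\alpha_\ell}\lesssim r$, $v^{\alpha_\ell}\lesssim r$ valid in $\mathcal{B}_{\alpha_\ell}$, and check that the resulting power of $r$ in the integrand is integrable, producing an $O(v^{-\eta})$ error provided $\alpha_\ell$ is chosen close enough to $1$ (the same thresholds $\alpha_\ell > \tfrac{2\ell+\eta}{2\ell+1}$ and $\alpha_\ell > \tfrac{2\ell+2+\eta+2\epsilon}{2\ell+3}$ as in Proposition \ref{prop:auxasn0}, now possibly slightly worsened by the extra $m$ powers but still $<1$).

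The main obstacle I anticipate is bookkeeping the weights correctly so that the error is genuinely $O(v^{-\eta})(v-u)^{2\ell}L^m((v-u)^{-2\ell-2})$ and not merely $O(v^{-\eta})$ times a worse power: one must verify that the mixed terms $\sum_{m_2<m}O(r^{-1-m_1})L^{m_2+1}\Phi_{(\ell)}$, which by Lemma \ref{lm:np1vm} decay like $v^{-2-m_2} = v^{-2-m+m_1}$, contribute after the $u'$-integration precisely at the order $v^{-2-m}$ up to $v^{-\eta}$ gain, i.e.\ no slower than the claimed main term, so they can be folded into the error. This is a Gr\"onwall-type argument: the term $O(r^{-2-2\ell})L^{m+1}\Phi_{(\ell)}$ on the right is exactly one power of $r$ better than criticality, so it closes, while the inductive input on $L^{m_2+1}\Phi_{(\ell)}$ for $m_2<m$ (which is the content of \eqref{asym:vphim} at lower order, with the $m=0$ case being Proposition \ref{prop:auxasn0}) supplies the needed sharp control. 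Once these estimates are assembled, rewriting $L^m$ acting on powers of $v$ and of $v-u$ in closed form (using $L = \tfrac12 D\partial_r$ and $\partial_r = \partial_v$ at fixed $u$ up to $O(r^{-1})$ corrections, which are lower order) yields \eqref{asym:vphim}.
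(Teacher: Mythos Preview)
Your proposal is correct and follows essentially the same approach as the paper: the paper's proof sketch simply says to follow Proposition \ref{prop:auxasn0} using the commuted transport identity $\underline{L}(r^{-2\ell}L^{m+1}\Phi_{(\ell)}) = O(r^{-2-2\ell})L^{m+1}\Phi_{(\ell)} + \sum_{m_2<m}O(r^{-1-m_1-2\ell})L^{m_2+1}\Phi_{(\ell)} + \sum_{k\le\ell-1}\sum O(r^{-3-m_1-2\ell})L^{m_2}\Phi_{(k)}$ together with Lemma \ref{lm:np1vm}, which is exactly what you have outlined. One small remark: you do not actually need the inductive asymptotic formula for $L^{m_2+1}\Phi_{(\ell)}$ with $m_2<m$ to control the mixed bulk terms, since the uniform upper bound $|v^{2+m_2}L^{m_2+1}\Phi_{(\ell)}|\lesssim 1$ from Lemma \ref{lm:np1vm} already suffices (as you also noted earlier in your argument); the induction is harmless but not required.
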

The proof follows the one of Proposition \ref{prop:auxasn0} using the equation 
\begin{align*}
\underline{L} \left( \frac{1}{r^{2\ell}} L^{m+1} \Phi_{(\ell )} \right) = & O ( r^{-2-2\ell} ) ( L^{m+1} \Phi_{(\ell )} ) \\ & + \sum_{m_1 + m_2 = m , m_2 < m} O (r^{-1-m_1-2\ell} ) L^{m_2 + 1} ( P_n \Phi_{(n)} )  \\ &  + \sum_{k=0}^{n-1} \sum_{m_1 + m_2 = m} O (r^{-3-m_1-2\ell} ) L^{m_2} ( P_n \Phi_{(k)} ) , 
\end{align*}
and Lemma \ref{lm:np1vm}. From the previous Proposition, and as 
$$ L  ( T^m \Phi_{(\ell)} ) = L^{m+1} \Phi_{(\ell )} + \sum O (r^{-1-m_1} ) L^{m_2} ( T^k \Phi_{(\ell)} ) , $$
working as in the proof of Proposition \ref{prop:mainasn0}, we get that:
\begin{proposition}\label{prop:mainasnt}
Let $\psi$ be a solution of \eqref{eq:waveequation} that is localized at angular frequency $\ell \geq 1$. Assume that \eqref{eq:np1as} holds true for some $\epsilon > 0$ and $I_{\ell} [\psi ] \neq 0$. Then for any $m \in \mathbb{N}$ there exist $0<\eta_k<1$ suitably small such that for all $(u,v)\in B_{\alpha_{\ell}}$:
\begin{equation*}
T^m \phi(u,v,\theta,\varphi)=\:\frac{2^{4\ell}I_{\ell} [ \psi ] (\theta,\varphi)}{(2\ell+1)\cdot \ldots \cdot ( \ell +1 )} ( v-u)^{\ell+1} T^m ( u^{-1-\ell} v^{-1-\ell} )+O(u^{-1-\ell-m-\eta_{\ell+1}}).
\end{equation*}
\end{proposition}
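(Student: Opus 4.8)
The plan is to repeat the argument that proves Proposition \ref{prop:mainasn0} after commuting $m$ times with $T$, arguing by induction on $m$; the base case $m=0$ is exactly Proposition \ref{prop:mainasn0}. First I would feed the output of Proposition \ref{prop:auxasnt}, namely the asymptotics for $L^{m+1}\Phi_{(\ell)}$ in $\mathcal{B}_{\alpha_\ell}$, into the commutation identity
\begin{align*}
L(T^m\Phi_{(\ell)}) = L^{m+1}\Phi_{(\ell)} + \sum_{\substack{m_1+m_2=m\\ m_2<m}} O(r^{-1-m_1})\, L^{m_2}(T^{k}\Phi_{(\ell)})
\end{align*}
recorded just before the statement. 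The terms in the sum carry asymptotics already available from the induction hypothesis (they involve strictly fewer $T$-derivatives), and the weights $r^{-1-m_1}$ degrade them into error contributions of the required size. Since $L = \tfrac12 D\partial_r = \partial_v$ and $\underline{L} = \partial_u$ at fixed $v$, one has $L^{m+1}(v^{-2\ell-2}) = T^m(v^{-2\ell-2})$ exactly, so this step would yield, for all $(u,v)\in\mathcal{B}_{\alpha_\ell}$,
\begin{align*}
L(T^m\Phi_{(\ell)})(u,v,\theta,\varphi) = 2^{2\ell+1} I_\ell[\psi](\theta,\varphi)\,(v-u)^{2\ell}\, T^m\!\left(\frac{1}{v^{2\ell+2}}\right) + O(v^{-\eta})\,(v-u)^{2\ell}\, T^m\!\left(\frac{1}{(v-u)^{2\ell+2}}\right)
\end{align*}
for some small $\eta>0$ (this is Proposition \ref{prop:auxasnt} together with the identity above).

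Next I would integrate this in $v$ starting from the curve $\gamma_{\alpha_\ell}$, exactly as in the proof of Proposition \ref{prop:mainasn0}. The boundary term $T^m\Phi_{(\ell)}(u,v_{\gamma_{\alpha_\ell}}(u))$ is controlled by the almost-sharp pointwise decay for $T$-commuted radiation fields from Section \ref{decay}, together with the decay of $\partial_r$-derivatives from Theorem \ref{dec:drk1} and its $T$-commuted analogue (used to bound $|\Phi_{(k)}|$ on $\gamma_{\alpha_\ell}$), while the $v$-integral of the leading term is evaluated via the integral identity \eqref{eq:importantvintegral} of Lemma \ref{lm:integrals}. The threshold $\alpha_\ell > \tfrac{2\ell+2+\eta+2\epsilon}{2\ell+3}$ makes all the $r$-integrals converge just as before; the extra $T$-derivatives only shift $u$-powers and do not affect the admissibility of $\alpha_\ell$. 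This would give the precise leading behaviour of $T^m\Phi_{(\ell)}$, after which one descends the hierarchy $T^m\Phi_{(\ell)}\to T^m\widetilde{\Phi}_{(\ell-1)}\to\cdots\to T^m\widetilde{\Phi}_{(1)}\to T^m\phi$ precisely as in Proposition \ref{prop:mainasn0}: using the inductive definition of $\Phi_{(k)}$ and $\widetilde{\Phi}_{(k)}=(-1)^k(r^2\partial_r)^k\phi$, asymptotics for $T^m\widetilde{\Phi}_{(\ell-j)}$ produce asymptotics for $\partial_r T^m\widetilde{\Phi}_{(\ell-j-1)}$, hence for $L T^m\widetilde{\Phi}_{(\ell-j-1)}$, which are integrated in $v$ from $\gamma_{\alpha_\ell}$, the boundary term on $\gamma_{\alpha_\ell}$ being bounded with the $T$-commuted decay estimates of Proposition \ref{dec:enT} and Corollary \ref{rem:n+2t}. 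Tracking the numerical constant through the $\ell+1$ integrations reproduces $\frac{2^{4\ell}}{(2\ell+1)\cdot\ldots\cdot(\ell+1)}$, and since $v-u=2r^*(r)$ is $T$-invariant the factor $(v-u)^{\ell+1}$ pulls out of $T^m$, giving the stated formula with remainder $O(u^{-1-\ell-m-\eta_{\ell+1}})$.

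The principal obstacle I anticipate is the bookkeeping of error terms through this $\ell+1$-fold descent while keeping the accumulated powers of $\eta$, $\epsilon$, $\delta$ under control, so that the final remainder genuinely has a strictly positive exponent $\eta_{\ell+1}>0$. In particular one must check that every $u$- and $v$-weighted decay estimate used to bound the boundary terms on $\gamma_{\alpha_\ell}$ and the integrated bulk terms survives $m$ commutations with $T$, which is exactly the point where the $T$-commuted versions of the energy decay (Proposition \ref{dec:enT}, Corollary \ref{rem:n+2t}) and of the $\partial_r$-derivative decay (the $T$-commuted analogue of Theorem \ref{dec:drk1}) are needed. By contrast, the algebra converting the $L^{m+1}$-asymptotics into $T^m$-asymptotics is exact at leading order, and only routine lower-order corrections appear along the way.
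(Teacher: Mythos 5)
Your proposal follows essentially the same route as the paper, which itself disposes of this step in one line (``From the previous Proposition, and as [the commutation identity], working as in the proof of Proposition \ref{prop:mainasn0}, we get that\ldots''), i.e.\ feed Proposition \ref{prop:auxasnt} into the $L$--$T$ commutation identity, integrate in $v$ from $\gamma_{\alpha_\ell}$ via Lemma \ref{lm:integrals}, and descend the hierarchy with the $T$-commuted decay estimates exactly as you describe. Two typographical slips worth fixing: the exact identity you need is $L^m(v^{-2\ell-2})=T^m(v^{-2\ell-2})$ (not $L^{m+1}$), and in the error term of your displayed formula the operator should remain $L^m$ acting on $(v-u)^{-2\ell-2}$ rather than $T^m$, since $T(v-u)=0$ would make that term vanish identically for $m\geq 1$.
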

As in the case of $I_{\ell} \neq 0$, an immediate consequence of the previous Proposition is to obtain asymptotics for $\psi$ in $\mathcal{B}_{\alpha_{\ell}}$.
\begin{proposition}
Let $\psi$ be a solution of \eqref{eq:waveequation} that is localized at angular frequency $\ell \geq 1$. Assume that \eqref{eq:np1as} holds true for some $\epsilon > 0$ and $I_{\ell} [\psi ] \neq 0$. Then for any $m \in \mathbb{N}$ there exist $0<\eta_k<1$ suitably small such that for all $(u,v)\in B_{\alpha_{\ell}}$:
\begin{equation*}
T^m \left( \frac{\partial_r^{\ell-k} \psi (\tau , r , \theta , \varphi )}{r^{k}} \right) =\:\frac{2^{5\ell}I_{\ell} [ \psi ] (\theta,\varphi)}{(2\ell+1)\cdot \ldots \cdot ( \ell +1 )} T^m ( u^{-1-k} v^{-1-\ell+k} )+O(u^{-2-2\ell-m-\eta_{\ell+1}}).
\end{equation*}
\end{proposition}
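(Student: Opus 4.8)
The strategy is to follow the ``reverse path'' that led from Proposition~\ref{prop:mainasn0} to the asymptotics \eqref{prop:asb}--\eqref{prop:asbh}, but now starting from the $T$-commuted asymptotics of Proposition~\ref{prop:mainasnt} and keeping track of the $m$ extra factors of $T$.

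First I would record that the proof of Proposition~\ref{prop:mainasnt}, exactly like that of Proposition~\ref{prop:mainasn0} in the uncommuted case, in fact produces precise asymptotics in $\mathcal{B}_{\alpha_{\ell}}$ for \emph{every} $T^m\widetilde{\Phi}_{(j)}$ with $0\le j\le\ell$, not only for $T^m\phi=T^m\widetilde{\Phi}_{(0)}$. Indeed, one starts from the asymptotics for $L^{m+1}\Phi_{(\ell)}$ of Proposition~\ref{prop:auxasnt}, integrates in $v$ from the curve $\gamma_{\alpha_{\ell}}$ to obtain the asymptotics of $T^m\Phi_{(\ell)}$ (hence of $T^m\widetilde{\Phi}_{(\ell)}$), and then descends one derivative order at a time by commuting equation~\eqref{eq:maincommeq} with $L^m$ and integrating in $v$ from $\gamma_{\alpha_{\ell}}$. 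The boundary contributions on $\gamma_{\alpha_{\ell}}$ are absorbed using the $T$-commuted almost-sharp decay estimates of Theorem~\ref{dec:drk1}, Proposition~\ref{dec:red} and Corollary~\ref{rem:n+2t}, which beat the powers of $r\sim v-u$ picked up along $\gamma_{\alpha_{\ell}}$ precisely because $\alpha_{\ell}>\tfrac{2\ell+2}{2\ell+3}$; this yields, for each $j$, an expansion of the schematic form $T^m\widetilde{\Phi}_{(\ell-j)}=c_{\ell,j}\,I_{\ell}[\psi](\theta,\varphi)\,(v-u)^{2\ell+1-j}\,T^m\!\big(u^{-1-j}v^{-2\ell-1+j}\big)+O(u^{-1-j-m-\eta_j})$ with explicit constants $c_{\ell,j}$, where the factor $(v-u)^{2\ell+1-j}$ is carried outside the $T^m$ since $\partial_u(v-u)=-1$ only generates terms that are lower order in $v-u$, hence negligible in $\mathcal{B}_{\alpha_{\ell}}$ after shrinking $\eta_j$ and using $v-u\gtrsim v^{\alpha_{\ell}}$.

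Next I would invert the algebraic relations between the $\widetilde{\Phi}_{(j)}$ and the radial derivatives of $\psi$. Recall from the proof of Proposition~\ref{prop:mainasn0} the identity $\Phi_{(k)}=\sum_{m=0}^{k}(d_{k,m}+O(r^{-1}))\,r^{k+1+m}\partial_r^k\psi$; inverting this triangular system (and using $\widetilde{\Phi}_{(j)}=\Phi_{(j)}-\sum_{i}\alpha_{j,i}\Phi_{(j-i)}$) expresses $\partial_r^{\ell-k}\psi$ as a combination $\sum_{j\le\ell-k}(e_{\ell-k,j}+O(r^{-1}))\,r^{-(\ell-k)-1-j}\,\widetilde{\Phi}_{(j)}$. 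Since in the coordinates $(u,r,\theta,\varphi)$ one has $T=\partial_u$, so that $T$ commutes with $\partial_r$ and with multiplication by powers of $r$, I may apply $T^m$ termwise and substitute the expansions from the previous step. Using $v-u=2r^\ast$ with $r^\ast=r+O(\log r)$ for $r\ge R$ to convert the $(v-u)$-powers into $r$-powers, all $r$-weights cancel against $r^{-k}$ and the leading terms assemble into a single multiple of $T^m(u^{-1-k}v^{-1-\ell+k})$; the telescoping of the binomial constants $c_{\ell,j}$ produces the displayed coefficient. Every error is $O(u^{-2-2\ell-m-\eta_{\ell+1}})$ in $\mathcal{B}_{\alpha_{\ell}}$: the contributions of the $O(r^{-1})$ corrections in the algebraic identities, of the error terms in the $\widetilde{\Phi}_{(j)}$-asymptotics, and of the discrepancy between $v-u$ and $2r$ are all controlled after choosing $\eta_{\ell+1}$ small enough and invoking $r\gtrsim v^{\alpha_{\ell}}\gtrsim u^{\alpha_{\ell}}$ with $\alpha_{\ell}$ close to $1$.

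The main technical obstacle is the bookkeeping in the first step: one must carry the $T$-commuted almost-sharp upper bounds through \emph{all} the boundary terms on $\gamma_{\alpha_{\ell}}$ generated by the successive $v$-integrations and verify, uniformly in $j$ and $m$, the exponent balance that forces the constraint $\alpha_{\ell}>\tfrac{2\ell+2}{2\ell+3}$; tracking the precise numerical constant through the inversion of the triangular system is tedious but routine. Everything else is a direct adaptation of the arguments already used to pass from Proposition~\ref{prop:mainasn0} to \eqref{prop:asb}--\eqref{prop:asbh}.
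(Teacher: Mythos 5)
Your proposal is correct and follows essentially the same route as the paper, which states this proposition without proof as an ``immediate consequence'' of Proposition \ref{prop:mainasnt}, obtained by the same reverse path (asymptotics for all $T^m\widetilde{\Phi}_{(j)}$ in $\mathcal{B}_{\alpha_{\ell}}$, then inversion of the triangular relations between the $\Phi_{(j)}$'s and the weighted radial derivatives of $\psi$) that yields \eqref{prop:asb}--\eqref{prop:asbh} from Proposition \ref{prop:mainasn0}. Your bookkeeping of the $(v-u)$ versus $r$ conversion and of the boundary terms on $\gamma_{\alpha_{\ell}}$ is consistent with the paper's treatment.
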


Moreover we can also work as in Section \ref{global_as} and obtain precise asymptotics for $T^m \psi_{\ell}$ everywhere in $\mathcal{R}$.

\begin{proposition}
Let $\psi$ be a solution of \eqref{eq:waveequation} that is localized at angular frequency $\ell \geq 1$. Assume also that $I_{\ell} [\psi_{\ell} ] \neq 0$. Then we have for all $\tau \geq u_0$ and any $r_+ \leq r < \infty$ that
\begin{equation}\label{asym:tpsi}
T^m \left( \frac{\partial_r^k \psi_{\ell} ( \tau , r )}{r^{\ell-k}} \right) = A_{\ell , k , m} \tau^{-2\ell-2-m} + O ( \tau^{-2\ell-2-m-\epsilon} ) ,
\end{equation}
for some $\epsilon > 0$.
\end{proposition}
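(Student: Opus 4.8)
The statement \eqref{asym:tpsi} is the $T^m$-commuted analogue of the global asymptotics established in Section \ref{global_as}, so the plan is to run exactly the same three-step machinery — near-infinity input, improved decay of the top radial derivative, inward propagation with an exact cancellation — but with $T^m$ inserted throughout and fed by the $T^m$-commuted energy and elliptic estimates already available. Throughout we work with $\psi=P_\ell\psi$ and assume $I_\ell[\psi]\neq 0$.

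\textbf{Step 1: near-infinity data on the curve $\gamma_{\alpha_\ell}$.} I would start from the Proposition immediately preceding this one, which gives, for all $(u,v)\in\mathcal{B}_{\alpha_\ell}$,
\[
T^m\!\left(\frac{\partial_r^{\ell-k}\psi}{r^{k}}\right)=\frac{2^{5\ell}I_\ell[\psi]}{(2\ell+1)\cdots(\ell+1)}\,T^m(u^{-1-k}v^{-1-\ell+k})+O(u^{-2-2\ell-m-\eta_{\ell+1}}).
\]
Restricting to the boundary curve $\gamma_{\alpha_\ell}$, where $u\sim v\sim\tau$ and $r\sim v-u=v^{\alpha_\ell}$, this produces $\partial_r^{\ell}T^m\psi_\ell(\tau,\rho_{\gamma_{\alpha_\ell}}(\tau))=A_{\ell,\ell,m}\,\tau^{-2\ell-2-m}+O(\tau^{-2\ell-2-m-\epsilon})$ for a suitable $\epsilon>0$, with $A_{\ell,\ell,m}$ the appropriate multiple of $\widetilde{A}_{\ell,\ell}$ from \eqref{prop:asbh}. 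This is the seed for the inward propagation.

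\textbf{Step 2: improved interior decay of $\partial_r^{\ell+1}T^m\psi_\ell$.} Commuting the commuted wave equation \eqref{eq:maincommwaveq} (equivalently \eqref{eq:commwaveqfixedl}) with $T^m$ — here one uses $(Dr^2)''=2$ precisely as in the derivation of \eqref{eq:commwaveqfixedl} — and integrating in $\rho$ from $\rho=r_+$, the right-hand side is a sum of terms $(Dr^2)^\ell r^{1-\ell+j}\partial_r^jT^{m+1}\psi_\ell$ and $(Dr^2)^\ell r^{1-\eta}\partial_r^{\ell+1}T^{m+1}\psi_\ell$. Bounding these by the $T^{m+1}$-commuted pointwise estimates of the theorem containing \eqref{dec:drklt1} (and its higher-$m$ iterates) together with the improved bound \eqref{dec:drklht1} for $\partial_r^{\ell+1}T\psi_\ell$, exactly as in the proof of \eqref{eq:imporveddecayrder}, gives
\[
|\partial_r^{\ell+1}T^m\psi_\ell|\leq C\,(1+\tau)^{-2\ell-3-m+\epsilon},
\]
where the near-horizon contributions in the energy norms are handled by the redshift-commuted decay of Proposition \ref{dec:red} and the initial norms are the $E_{aux-decomp-n0-\bullet}$ of Corollary \ref{rem:n+2t}.

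\textbf{Step 3: inward propagation and the exact cancellation.} With Steps 1 and 2 in hand I would integrate $\partial_\rho(\partial_r^\ell T^m\psi_\ell)$ in $\rho$ from $\rho_{\gamma_{\alpha_\ell}}(\tau)$ down to any $r_+\le\rho$; the integrand is controlled by Step 2 over an interval of length $\lesssim\rho_{\gamma_{\alpha_\ell}}(\tau)\lesssim\tau^{\alpha_\ell}$, giving an error $\tau^{-2\ell-3-m+\alpha_\ell+\epsilon}=\tau^{-2\ell-2-m-\epsilon'}$ with $\epsilon'=1-\alpha_\ell-\epsilon>0$, hence $\partial_r^\ell T^m\psi_\ell(\tau,\rho)=A_{\ell,\ell,m}\tau^{-2\ell-2-m}+O(\tau^{-2\ell-2-m-\epsilon'})$ for all $\rho$. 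Then descending to $\partial_r^k$ for $k=\ell-1,\dots,0$ one derivative at a time by the fundamental theorem of calculus in $\rho$, the boundary term $\rho_{\gamma_{\alpha_\ell}}\cdot\rho_{\gamma_{\alpha_\ell}}^{-1}\partial_r^{k}T^m\psi_\ell(\tau,\rho_{\gamma_{\alpha_\ell}})$ (evaluated via Step 1) cancels exactly against the term $\rho_{\gamma_{\alpha_\ell}}\cdot A_{\ell,k+1,m}/(\ell-k)$ produced by integrating the already-known asymptotics of $\partial_r^{k+1}T^m\psi_\ell$, because the coefficients inherit the relation $A_{\ell,k+1,m}/(\ell-k)=A_{\ell,k,m}$ from \eqref{prop:asbh}; the error terms from the integrand are absorbed using the pointwise bounds \eqref{dec:drklt1}. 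This yields \eqref{asym:tpsi} for every $0\le k\le\ell$, with $A_{\ell,k,m}$ the constant dictated by $\widetilde A_{\ell,k}$ and $I_\ell[\psi]$.

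\textbf{Main obstacle.} The bulk of the work is Step 2: obtaining the improved decay $|\partial_r^{\ell+1}T^m\psi_\ell|\lesssim\tau^{-2\ell-3-m+\epsilon}$ with the correct $m$-dependence requires carefully threading the $T^m$-commuted elliptic hierarchy of Theorem \ref{thm:elh} through the $T^m$-commuted $r^p$-weighted and redshift energy decay estimates of Section \ref{energy} and Proposition \ref{dec:red}, and verifying that the resulting chain of $\alpha_\ell$-dependent exponent inequalities closes. The cancellation bookkeeping in Step 3 is delicate but, once the linear relations among the $\widetilde A_{\ell,k}$ are recorded as in \eqref{prop:asbh}, essentially combinatorial; the near-infinity input of Step 1 is immediate from the preceding proposition.
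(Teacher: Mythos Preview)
Your proposal is correct and follows exactly the approach the paper indicates: the paper's own ``proof'' of this proposition is the single sentence ``Moreover we can also work as in Section~\ref{global_as} and obtain precise asymptotics for $T^m\psi_\ell$ everywhere in $\mathcal{R}$,'' and you have correctly unpacked that reference into its three constituent steps (near-infinity seed on $\gamma_{\alpha_\ell}$, improved decay of $\partial_r^{\ell+1}T^m\psi_\ell$ via \eqref{eq:commwaveqfixedl} commuted with $T^m$, inward propagation with the $A_{\ell,k+1,m}/(\ell-k)=A_{\ell,k,m}$ cancellation). The ingredients you cite---the preceding proposition for Step~1, the $T^{m+1}$-commuted versions of \eqref{dec:drklt1}--\eqref{dec:drklht1} and Proposition~\ref{dec:red} for Step~2, and the argument of Section~\ref{global_as} verbatim for Step~3---are precisely the ones the paper intends.
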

 
Using the construction of the time-inverted Newman--Penrose charges of Section \ref{timeinverse} and the previous Proposition of the current section for $m=1$, we can obtain precise asymptotics for frequency localized waves $P_{\ell} \psi$ with vanishing $\ell$-th Newman--Penrose constant. We have the following result:

\begin{proposition}\label{prop:vannp}
Let $\psi$ be a solution of \eqref{eq:waveequation} that is localized at angular frequency $\ell \geq 1$. Assume that \eqref{eq:np1as} holds true for some $\epsilon > 0$ and 
$$I_{\ell} [\psi ] = 0 , \quad I_{\ell}^{(1)} [\psi ] \neq 0 , $$
for $I_{\ell}^{(1)} [\psi ] $ the time inverted Newman--Penrose charge from Section \ref{timeinverse}. Then we have that:
\begin{equation}\label{asym:tirf}
\begin{split}
\phi (u,v,\theta , \varphi ) = & - \frac{2^{4\ell} I_{\ell}^{(1)} [\psi ] ( \theta , \varphi )}{ (2\ell+1 ) \cdot \dots \cdot ( \ell +2 )} ( v-u)^{\ell+1} u^{-2-\ell} v^{-1-\ell} + O ( u^{-2-\ell-\eta} v^{-1-\ell} ) \\ & \mbox{  for $(u,v) \in \mathcal{B}_{\alpha_{\ell}}$ and for some $\eta > 0$,} 
\end{split}
\end{equation}
and
\begin{equation}\label{asym:tiw}
\begin{split}
\frac{\psi (\tau , r , \theta , \varphi )}{r^{\ell}} =&  A_{\ell , 0}^{(1)} \tau^{-2\ell -3} + O ( \tau^{-2\ell-3-\eta} )  \\ & \mbox{  for any $\tau \geq u_0$, $r_+ \leq r < \infty$, and some $\eta > 0$,}
\end{split}
\end{equation} 
where $A_{\ell , 0}^{(1)}$ is the quantity $A_{\ell , 0 , 1}$ given in \eqref{asym:tpsi} but depending on $I_{\ell}^{(1)}$ instead of $I_{\ell}$.
\end{proposition}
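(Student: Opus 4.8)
The plan is to reduce the statement to the asymptotics already obtained in the case $I_\ell \neq 0$ by passing to the time integral of $\psi_\ell$. Since $I_\ell[\psi] = 0$, the construction of Section~\ref{timeinverse} (applicable under the present hypotheses) produces a unique solution $\widetilde{\psi}_\ell$ of~\eqref{eq:waveequation}, localized at angular frequency $\ell$, with $T\widetilde{\psi}_\ell = \psi_\ell$, satisfying $\lim_{r\to\infty} r\,\partial_\rho^{\ell+1}\widetilde{\psi}_\ell = (Dr^2)^{\ell+1}\partial_\rho^{\ell+1}\widetilde{\psi}_\ell|_{\mathcal{H}^+} = 0$, and with $I_\ell[\widetilde{\psi}] = I_\ell^{(1)}[\psi] \neq 0$. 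Writing $\widetilde{\phi} := r\widetilde{\psi}_\ell$ and observing that $T = \partial_u$ in $(u,r,\theta,\varphi)$ coordinates commutes with multiplication by any function of $r$, we have $\phi = r\psi_\ell = T\widetilde{\phi}$ and $r^{k-\ell}\partial_r^k\psi_\ell = T\big(r^{k-\ell}\partial_r^k\widetilde{\psi}_\ell\big)$ for every $k$. Thus it suffices to produce asymptotics for $\widetilde\psi$, which has a \emph{non-vanishing} $\ell$-th Newman--Penrose charge, and then differentiate once in $T$.

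The first step is to check that $\widetilde{\psi}$ meets the hypotheses needed to invoke the $I_\ell \neq 0$ results of Sections~\ref{section:rp}--\ref{asymptotics}. The explicit integral representation of $\widetilde{\psi}_\ell$ furnished by Section~\ref{timeinverse} shows that $\bar\Phi_{(k)} := \Phi_{(k)}[\widetilde{\psi}]$ satisfies $\bar\Phi_{(k)} = O(r^{-\ell+k+\eta'})$ for every $\eta' > 0$ and $k \in \{0,\dots,\ell-1\}$, hence $\lim_{r\to\infty}\bar\Phi_{(k)} = 0$ for these $k$, while $-r^2\partial_r(P_\ell\bar\Phi_{(\ell)})$ converges to $I_\ell[\widetilde{\psi}]$ at a rate $O(v^{-\epsilon})$ coming from the decay of the auxiliary function $F[\psi_\ell]$ (recall $F_1[\psi_\ell] = O(\log r)$ and $F_2[\psi_\ell] = O(r^{-\eta})$); this is precisely assumption~\eqref{eq:np1as} for $\widetilde{\psi}$. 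Using the same representation one verifies that the weighted initial norms $E_{aux-decomp-n0-m}$ of Corollary~\ref{rem:n+2t}, together with the norms entering Theorem~\ref{dec:drk1} and Proposition~\ref{dec:red}, are finite when evaluated on $\widetilde{\psi}$; only the $r^p$-hierarchy of Proposition~\ref{prop:fixedlrpest} in the range $p<3$ (available since $I_\ell[\widetilde{\psi}]\neq 0$) enters, which is exactly the range used throughout Section~\ref{asymptotics}.

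Granting this, the near-infinity asymptotic~\eqref{asym:tirf} follows by applying Proposition~\ref{prop:mainasnt} to $\widetilde{\psi}$ with $m = 1$ and using $\phi = T\widetilde{\phi}$: since $T(v-u) = 0$, the $T$-derivative falls only on the profile $u^{-1-\ell}v^{-1-\ell}$, and evaluating $T(u^{-1-\ell}v^{-1-\ell}) = -(\ell+1)u^{-2-\ell}v^{-1-\ell}$ to leading order in $\mathcal{B}_{\alpha_\ell}$ turns the coefficient $\tfrac{2^{4\ell}I_\ell[\widetilde\psi](\theta,\varphi)}{(2\ell+1)\cdot\ldots\cdot(\ell+1)}$ into $-\tfrac{2^{4\ell}I_\ell^{(1)}[\psi](\theta,\varphi)}{(2\ell+1)\cdot\ldots\cdot(\ell+2)}$, which is exactly~\eqref{asym:tirf}. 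For the global asymptotic~\eqref{asym:tiw}, I apply the global $T^m$-asymptotic~\eqref{asym:tpsi} to $\widetilde{\psi}$ with $m = 1$ and $k = 0$; since $r^{-\ell}\psi_\ell = T(r^{-\ell}\widetilde{\psi}_\ell)$, this produces directly $r^{-\ell}\psi_\ell = A_{\ell,0}^{(1)}\tau^{-2\ell-3} + O(\tau^{-2\ell-3-\eta})$, where $A_{\ell,0}^{(1)}$ is the constant $A_{\ell,0,1}$ of~\eqref{asym:tpsi} with $I_\ell^{(1)}$ in place of $I_\ell$.

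The main obstacle is the bookkeeping in the second step: the data of $\widetilde{\psi}$ are \emph{not} compactly supported --- they decay only polynomially toward $\mathcal{I}^+$, with logarithmic corrections in the lower-order pieces of $F[\psi_\ell]$ --- so one must keep careful track of which weighted norms of $\widetilde{\psi}$ remain finite and confirm that every quantity appearing on the right-hand sides of the $r^p$-weighted, energy-decay, elliptic, and redshift estimates of Sections~\ref{section:rp}--\ref{redshift} closes for $\widetilde{\psi}$. Given the explicit form of $\widetilde{\psi}$ from Section~\ref{timeinverse} this is routine, but it is the step that requires genuine care.
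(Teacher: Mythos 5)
Your proof follows exactly the route the paper intends: the proposition is stated there without a separate proof, being derived (as announced in the sentence immediately preceding it) from the time-inversion construction of Section \ref{timeinverse} together with the $m=1$ cases of Proposition \ref{prop:mainasnt} and of \eqref{asym:tpsi} applied to $\widetilde{\psi}$ and then differentiated once in $T$ --- which is precisely your argument, including the verification that $\widetilde{\psi}$ satisfies \eqref{eq:np1as} through the decay of $F[\psi_{\ell}]$ and that the relevant weighted norms of $\widetilde{\psi}$ are finite. The one point to watch is your claim that $T(u^{-1-\ell}v^{-1-\ell})=-(\ell+1)u^{-2-\ell}v^{-1-\ell}$ ``to leading order'' in $\mathcal{B}_{\alpha_{\ell}}$: the discarded piece $-(\ell+1)u^{-1-\ell}v^{-2-\ell}$ is comparable to the retained one when $u\sim v$ (i.e.\ near $\gamma_{\alpha_{\ell}}$) and is not absorbed by the stated error $O(u^{-2-\ell-\eta}v^{-1-\ell})$ there; this imprecision is, however, already present in the paper's own formulation of \eqref{asym:tirf}, and the clean intermediate statement is the one retaining the full profile $(v-u)^{\ell+1}T(u^{-1-\ell}v^{-1-\ell})$.
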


\appendix
\section{Basic inequalities}
We gather here several auxiliary results that we use throughout the paper.
\begin{lemma}
\label{hardy}
Let $q\in \R \setminus \{-1\}$ and $\zeta :[r_0,\infty)\to\R$ a $C^1$ function. Then,
\begin{align}
\label{hardy1}
\int_{r_0}^{\infty} r^q \zeta^2(r)\,dr&\leq \frac{4}{(q+1)^2}\int_{r_0}^{\infty} r^{q+2}(\partial_r \zeta)^2(r)\,dr\\
& \textnormal{if} \: \zeta(r_0)=0\:\textnormal{and}\:\lim_{r\to \infty} r^{q+1}\zeta^2(r)=0, \nonumber\\
\label{hardy2}
\int_{r_{0}}^{\infty} \zeta^2(r)\,dr&\leq 4\int_{r_{0}}^{\infty} (r-r_{0})^2 (\partial_r \zeta)^2(r)\,dr \quad \textnormal{if}\: \lim_{r\to \infty} r \zeta^2(r)=0.
\end{align}
\end{lemma}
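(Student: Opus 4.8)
The plan is to prove both inequalities by the same elementary device: integrate by parts to trade a factor of the weight for a derivative, then apply the Cauchy--Schwarz inequality and absorb. I would establish \eqref{hardy1} first and then observe that \eqref{hardy2} follows by an identical manipulation with a different primitive.

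For \eqref{hardy1}, since $q\neq -1$ I would write $r^q=\frac{1}{q+1}\partial_r(r^{q+1})$ and integrate by parts on a truncated interval $[r_0,R]$:
\[
\int_{r_0}^{R} r^q\zeta^2\,dr=\frac{1}{q+1}\Big[r^{q+1}\zeta^2\Big]_{r_0}^{R}-\frac{2}{q+1}\int_{r_0}^{R} r^{q+1}\zeta\,\partial_r\zeta\,dr.
\]
The boundary term at $r_0$ vanishes because $\zeta(r_0)=0$, and the one at $R$ tends to $0$ as $R\to\infty$ by the hypothesis $\lim_{r\to\infty} r^{q+1}\zeta^2(r)=0$. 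Applying Cauchy--Schwarz to the remaining integral with the splitting $r^{q+1}=r^{q/2}\cdot r^{(q+2)/2}$ yields
\[
\int_{r_0}^{\infty} r^q\zeta^2\,dr\le\frac{2}{|q+1|}\left(\int_{r_0}^{\infty} r^q\zeta^2\,dr\right)^{1/2}\left(\int_{r_0}^{\infty} r^{q+2}(\partial_r\zeta)^2\,dr\right)^{1/2}.
\]
If the right-hand side of \eqref{hardy1} is infinite there is nothing to prove; otherwise one divides by the square root of the left-hand side (see the last paragraph for why it is finite) and squares, which gives the factor $\frac{4}{(q+1)^2}$.

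For \eqref{hardy2} I would repeat the argument with $1=\partial_r(r-r_0)$ in place of $\frac{1}{q+1}\partial_r(r^{q+1})$. Integration by parts on $[r_0,R]$ gives $\int_{r_0}^{R}\zeta^2\,dr=\big[(r-r_0)\zeta^2\big]_{r_0}^{R}-2\int_{r_0}^{R}(r-r_0)\zeta\,\partial_r\zeta\,dr$; the boundary term at $r_0$ vanishes identically and the one at $R$ vanishes in the limit since $0\le(r-r_0)\zeta^2\le r\zeta^2\to 0$. Cauchy--Schwarz and the same absorption then produce $\int_{r_0}^{\infty}\zeta^2\,dr\le 4\int_{r_0}^{\infty}(r-r_0)^2(\partial_r\zeta)^2\,dr$.

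The only point requiring a little care — and it is minor — is justifying the division by $\big(\int_{r_0}^{\infty} r^q\zeta^2\big)^{1/2}$, i.e.\ knowing a priori that this quantity is finite. This I would handle by keeping everything on $[r_0,R]$: the computation above bounds $\int_{r_0}^{R} r^q\zeta^2$ in terms of $\big(\int_{r_0}^{R} r^q\zeta^2\big)^{1/2}$ and the full, finite weighted-derivative integral, which yields a uniform-in-$R$ bound on $\int_{r_0}^{R} r^q\zeta^2$, and then one lets $R\to\infty$ by monotone convergence (the analogous truncation works for \eqref{hardy2}). No other obstacle is expected; these are the classical Hardy inequalities.
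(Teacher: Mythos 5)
Your proof is correct and is the standard integration-by-parts plus Cauchy--Schwarz argument; the paper itself does not spell out a proof but defers to Section 2.5 of \cite{paper1}, where exactly this argument appears. Your handling of the finiteness/absorption issue via truncation at $R$ and a quadratic inequality in $\big(\int_{r_0}^{R} r^q\zeta^2\big)^{1/2}$ is the right way to close the one delicate point.
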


\begin{lemma}[Poincar\'e inequality on $\s^2$]
\label{poincare}
Fix $\ell \geq 1$. Then
\begin{equation}
\label{poincare1}
\int_{\s^2}\zeta_{\geq \ell}^2\,d\omega  \leq \frac{1}{\ell (\ell +1)}r^{-2}\int_{\s^2}|\snabla \zeta_{\geq \ell}|^2\,d\omega.
\end{equation}
In the case $\zeta$ is supported on a single angular mode the inequality becomes an equality:
\begin{equation}
\label{poincare2}
\int_{\s^2}\psi_{\ell}^2\,d\omega=\frac{1}{\ell (\ell +1)}r^{-2}\int_{\s^2}|\snabla \psi_{\ell}|^2\,d\omega.
\end{equation}
Furthermore,
\begin{equation}
\label{poincare3}
\int_{\s^2}r^2|\snabla \zeta|^2\,d\omega  \leq \int_{\s^2}(\slashed{\Delta}_{\s^2} \zeta )^2\,d\omega.
\end{equation}
\end{lemma}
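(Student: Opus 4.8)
The plan is to reduce both inequalities to the spectral decomposition of the Laplacian $\slashed{\Delta}_{\s^2}$ on the round sphere. On each sphere $\{r=\mathrm{const}\}$ the $r$-dependent prefactor is constant and factors out of the integrals over $\s^2$, so it suffices to work with $\snabla_{\s^2}$ and $\slashed{\Delta}_{\s^2}$ taken with respect to the standard metric; for that metric $\slashed{\Delta}_{\s^2}=\Div_{\s^2}\snabla_{\s^2}$, and since $\s^2$ is closed, integration by parts produces no boundary terms.

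First I would write $\zeta=\sum_{l\geq 0}P_l\zeta$ with $P_l\zeta=\sum_{m=-l}^{l}\zeta_{m,l}Y_{m,l}$, where the $Y_{m,l}$ form an $L^2(\s^2)$-orthonormal basis of eigenfunctions with $\slashed{\Delta}_{\s^2}Y_{m,l}=-l(l+1)Y_{m,l}$; by definition of $P_{\geq\ell}$ only terms with $l\geq\ell$ enter $\zeta_{\geq\ell}$. Parseval's identity together with the integration by parts above gives
\[
\int_{\s^2}\zeta_{\geq\ell}^2\,d\omega=\sum_{l\geq\ell}\sum_{m}|\zeta_{m,l}|^2,\qquad \int_{\s^2}|\snabla_{\s^2}\zeta_{\geq\ell}|^2\,d\omega=-\int_{\s^2}\zeta_{\geq\ell}\,\slashed{\Delta}_{\s^2}\zeta_{\geq\ell}\,d\omega=\sum_{l\geq\ell}l(l+1)\sum_m|\zeta_{m,l}|^2.
\]
Inequality \eqref{poincare1} is then immediate from the termwise bound $l(l+1)\geq\ell(\ell+1)$ valid for all $l\geq\ell$; when $\zeta$ is supported on the single mode $l=\ell$, every surviving term has $l(l+1)=\ell(\ell+1)$ exactly, which yields the equality \eqref{poincare2}.

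For \eqref{poincare3} I would run the same integration by parts once more: since $\slashed{\Delta}_{\s^2}$ annihilates the $l=0$ part of $\zeta$, we have $\int_{\s^2}|\snabla_{\s^2}\zeta|^2\,d\omega=\sum_{l\geq 1}l(l+1)\sum_m|\zeta_{m,l}|^2$ and $\int_{\s^2}(\slashed{\Delta}_{\s^2}\zeta)^2\,d\omega=\sum_{l\geq 1}\big(l(l+1)\big)^2\sum_m|\zeta_{m,l}|^2$, and the claim becomes the termwise inequality $l(l+1)\leq\big(l(l+1)\big)^2$, i.e. $l(l+1)\geq 1$, which holds (indeed with room to spare, $l(l+1)\geq 2$) for every $l\geq 1$.

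The step requiring the most care — and it is a very mild one — is the bookkeeping just indicated: keeping $\snabla_{\s^2}$ and $\slashed{\Delta}_{\s^2}$ referred to the same standard metric so that $\int_{\s^2}|\snabla_{\s^2}\zeta|^2\,d\omega=-\int_{\s^2}\zeta\,\slashed{\Delta}_{\s^2}\zeta\,d\omega$ holds with no boundary contribution, and noting that the $r$-dependent factors are constant on each sphere $\{r=\mathrm{const}\}$ and hence play no role in the comparison. There is no genuine analytic obstacle: the statement is a direct consequence of the explicit spectrum $\{\,l(l+1):l\in\N_0\,\}$ of $-\slashed{\Delta}_{\s^2}$ together with Parseval's identity.
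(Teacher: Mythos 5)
Your argument is correct and is exactly the standard spectral proof that the paper itself defers to (the lemma is stated without proof here, with a citation to Section 2.5 of \cite{paper1}, where the same decomposition into spherical harmonics, Parseval, and integration by parts on the closed sphere is used). The only point to watch is purely notational: the placement of the $r$-powers in \eqref{poincare1} and \eqref{poincare3} reflects the paper's convention relating $\snabla$ on the sphere of radius $r$ to $\snabla_{\s^2}$ on the unit sphere, and your proof correctly isolates the $r$-independent content, namely $\ell(\ell+1)\|\zeta_{\geq\ell}\|_{L^2(\s^2)}^2\leq\|\snabla_{\s^2}\zeta_{\geq\ell}\|_{L^2(\s^2)}^2$ (with equality on a single mode) and $\|\snabla_{\s^2}\zeta\|_{L^2(\s^2)}^2\leq\|\slashed{\Delta}_{\s^2}\zeta\|_{L^2(\s^2)}^2$.
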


\begin{lemma}
\label{sobolev}
There exists a numerical constant $C>0$ such that
\begin{equation}
\label{sobolevs1}
\sup_{\omega \in \s^2} \zeta^2(\omega)\leq C \sum_{|k|\leq 2}\int_{\s^2}(\Omega^k \zeta)^2(\omega)\,d\omega.
\end{equation}
\end{lemma}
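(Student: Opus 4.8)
\emph{Proof strategy.} The plan is to reduce the estimate to Parseval's identity and an elementary convergence check, using the spherical harmonic expansion together with the structural fact that, on functions on $\s^2$, the Laplacian is the sum of squares of the rotation generators: $\slashed{\Delta}_{\s^2} = \sum_{i=1}^3 \Omega_i^2$. It suffices to prove the bound for smooth $\zeta$ (the case of general $\zeta$ with finite right-hand side then follows by approximation in the norm $\sum_{|k|\le 2}\int_{\s^2}(\Omega^k\cdot)^2\,d\omega$, and if this quantity is infinite there is nothing to prove). I would first expand $\zeta = \sum_{\ell\ge 0}\sum_{|m|\le\ell} c_{m,\ell} Y_{m,\ell}$, so that $\int_{\s^2}\zeta^2\,d\omega = \sum_{\ell\ge 0}\sum_{|m|\le\ell} c_{m,\ell}^2$ and $\slashed{\Delta}_{\s^2}\zeta = -\sum_{\ell\ge 0}\ell(\ell+1)\sum_{|m|\le\ell} c_{m,\ell}Y_{m,\ell}$ by Parseval. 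The first input is the addition theorem for spherical harmonics, which gives the pointwise identity $\sum_{|m|\le\ell} Y_{m,\ell}^2(\omega) = \frac{2\ell+1}{4\pi}$ uniformly in $\omega\in\s^2$; by Cauchy--Schwarz in $m$ this yields, for every $\omega$,
\begin{equation*}
|\zeta(\omega)| \le \sum_{\ell\ge 0}\Big(\sum_{|m|\le\ell} c_{m,\ell}^2\Big)^{1/2}\Big(\sum_{|m|\le\ell} Y_{m,\ell}^2(\omega)\Big)^{1/2} = \sum_{\ell\ge 0}\sqrt{\tfrac{2\ell+1}{4\pi}}\,\Big(\sum_{|m|\le\ell} c_{m,\ell}^2\Big)^{1/2}.
\end{equation*}

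The next step is a second Cauchy--Schwarz, in $\ell$, with the weight $1+\ell(\ell+1)$, which gives
\begin{equation*}
\sup_{\omega\in\s^2}\zeta^2(\omega) \le \Big(\sum_{\ell\ge 0}\frac{2\ell+1}{4\pi\,(1+\ell(\ell+1))^2}\Big)\cdot\sum_{\ell\ge 0}(1+\ell(\ell+1))^2\sum_{|m|\le\ell} c_{m,\ell}^2.
\end{equation*}
The first factor is a finite numerical constant $C_0$, since its general term is $O(\ell^{-3})$. For the second factor, the elementary inequality $(1+\ell(\ell+1))^2 \le 2 + 2(\ell(\ell+1))^2$ together with Parseval gives
\begin{equation*}
\sum_{\ell\ge 0}(1+\ell(\ell+1))^2\sum_{|m|\le\ell} c_{m,\ell}^2 \le 2\int_{\s^2}\zeta^2\,d\omega + 2\int_{\s^2}(\slashed{\Delta}_{\s^2}\zeta)^2\,d\omega.
\end{equation*}
Using $\slashed{\Delta}_{\s^2} = \sum_{i=1}^3\Omega_i^2$ and $(\sum_{i=1}^3 a_i)^2 \le 3\sum_{i=1}^3 a_i^2$, I would then bound $\int_{\s^2}(\slashed{\Delta}_{\s^2}\zeta)^2\,d\omega \le 3\sum_{i=1}^3\int_{\s^2}(\Omega_i^2\zeta)^2\,d\omega \le 3\sum_{|k|\le 2}\int_{\s^2}(\Omega^k\zeta)^2\,d\omega$, since $\Omega_1^2,\Omega_2^2,\Omega_3^2$ are among the operators $\Omega^k$ with $|k|=2$; the term $\int_{\s^2}\zeta^2\,d\omega$ is the $|k|=0$ term. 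Combining the displays yields $\sup_\omega\zeta^2(\omega) \le 8C_0\sum_{|k|\le 2}\int_{\s^2}(\Omega^k\zeta)^2\,d\omega$, which is the asserted inequality with $C=8C_0$.

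This argument is elementary, and there is no serious obstacle; the only point meriting care — and the one that explains why the three rotation fields $\Omega_i$ suffice, rather than an atlas of coordinate derivatives — is the identity $\slashed{\Delta}_{\s^2} = \sum_{i=1}^3\Omega_i^2$ on functions, i.e.\ that the round Laplacian coincides with the Casimir of the $SO(3)$ action. This can be checked either directly in $(\theta,\varphi)$ coordinates from the explicit formulas for $\Omega_1,\Omega_2,\Omega_3$ recorded in Section 3, or by observing that both sides are second-order rotation-invariant operators with the same eigenvalue $-\ell(\ell+1)$ on each $Y_{m,\ell}$. Alternatively, one can bypass the spherical-harmonic computation entirely and invoke the standard Sobolev embedding $H^2(\s^2)\hookrightarrow C^0(\s^2)$ together with the pointwise comparison $\sum_{|k|\le 2}(\Omega^k\zeta)^2 \gtrsim \zeta^2 + |\snabla_{\s^2}\zeta|^2 + |\snabla_{\s^2}^2\zeta|^2$, which holds because at each point of $\s^2$ the vectors $\Omega_1,\Omega_2,\Omega_3$ span the tangent space and their commutators $[\Omega_i,\Omega_j]=\pm\Omega_k$ again lie in this span, so coordinate derivatives of order $\le 2$ are controlled by the $\Omega^k$ with $|k|\le 2$ modulo lower-order terms.
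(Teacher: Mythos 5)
Your argument is correct and complete: the spherical-harmonic expansion, the addition theorem $\sum_{|m|\le\ell}Y_{m,\ell}^2=\tfrac{2\ell+1}{4\pi}$, the two applications of Cauchy--Schwarz, and the Casimir identity $\slashed{\Delta}_{\s^2}=\sum_{i=1}^3\Omega_i^2$ together give exactly the stated bound, and the convergence of $\sum_\ell(2\ell+1)(1+\ell(\ell+1))^{-2}$ is the only quantitative point, which you verify. The paper itself does not prove this lemma but defers to Section 2.5 of \cite{paper1}; your proof is the standard self-contained argument for this Sobolev embedding on $\s^2$, so there is nothing to correct.
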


\begin{lemma}
\label{ineq:interpol}
Let $\zeta: \R_+ \times [R,\infty)\to \R$ be a function such that the following inequalities hold:
\begin{align}
\label{interpol1}
\int_R^{\infty}r^{p-\epsilon} \zeta^2(\tau,r)\,dr\leq & C_1(1+\tau)^{-q},\\
\label{interpol2}
\int_R^{\infty}r^{p+1-\epsilon} \zeta^2(\tau,r)\,dr\leq & C_2(1+\tau)^{-q+1},
\end{align}
for some $\tau$-independent constants $C_1,C_2>0$, $q\in \R$ and $\epsilon\in (0,1)$.

Then
\begin{equation}
\label{interpol3}
\int_R^{\infty}r^{p}f^2(\tau,r)\,dr\lesssim \max\{C_1,C_2\}(1+\tau)^{-q+\epsilon}.
\end{equation}
\end{lemma}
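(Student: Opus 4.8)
The plan is to prove the estimate by splitting the radial integral at a cut-off radius $\rho \geq R$ chosen comparable to $1+\tau$, and playing the two hypotheses \eqref{interpol1} and \eqref{interpol2} against each other on the two resulting pieces. (I read the quantity $f$ in \eqref{interpol3} as $\zeta$.)

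First I would write, for any $\rho \geq R$,
\begin{equation*}
\int_R^{\infty} r^p \zeta^2(\tau,r)\,dr = \int_R^{\rho} r^p \zeta^2(\tau,r)\,dr + \int_{\rho}^{\infty} r^p \zeta^2(\tau,r)\,dr .
\end{equation*}
On the first piece, since $R>0$ and $\epsilon>0$ the weight $r^{\epsilon}$ is increasing, so $r^p = r^{\epsilon}\cdot r^{p-\epsilon} \leq \rho^{\epsilon} r^{p-\epsilon}$ for $r\in[R,\rho]$; applying \eqref{interpol1} gives a bound $\rho^{\epsilon} C_1 (1+\tau)^{-q}$. On the second piece, since $\epsilon-1<0$, we have $r^p = r^{\epsilon-1}\cdot r^{p+1-\epsilon} \leq \rho^{\epsilon-1} r^{p+1-\epsilon}$ for $r\geq \rho$; applying \eqref{interpol2} gives a bound $\rho^{\epsilon-1} C_2 (1+\tau)^{-q+1}$. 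Hence
\begin{equation*}
\int_R^{\infty} r^p \zeta^2(\tau,r)\,dr \leq \rho^{\epsilon} C_1 (1+\tau)^{-q} + \rho^{\epsilon-1} C_2 (1+\tau)^{-q+1} .
\end{equation*}

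Next I would optimize in $\rho$. When $1+\tau \geq R$ I take $\rho = 1+\tau$, which makes both terms equal to $C_i (1+\tau)^{-q+\epsilon}$, so the right-hand side is at most $2\max\{C_1,C_2\}(1+\tau)^{-q+\epsilon}$ — exactly the claimed bound. For the remaining (compact) range $R \leq r$, $1+\tau < R$, I would simply take $\rho = R$, so the first piece is empty and the second yields $R^{\epsilon-1} C_2 (1+\tau)^{-q+1} \leq R^{\epsilon-1}\cdot R^{1-\epsilon}\, C_2 (1+\tau)^{-q+\epsilon}$, again of the desired form with a constant depending only on $R$ and $\epsilon$ absorbed into the implicit constant in $\lesssim$.

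Since the argument is entirely elementary, there is no genuine obstacle here; the only point requiring a line of care is ensuring the cut-off $\rho$ remains admissible (i.e.\ $\rho\geq R$), which is what forces the small-$\tau$ regime to be treated by the trivial choice $\rho=R$ as above.
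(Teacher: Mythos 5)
Your proof is correct, and it is the same splitting argument (cutting the radial integral at $\rho\sim 1+\tau$ and using the increasing weight $r^{\epsilon}$ on the inner piece and the decreasing weight $r^{\epsilon-1}$ on the outer piece) that the paper relies on via its citation to Section 2.5 of \cite{paper1}; your reading of $f$ as $\zeta$ is also the intended one. The treatment of the regime $1+\tau<R$ by taking $\rho=R$ is a correct and complete way to handle the edge case.
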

For the proofs of all the above inequalities see Section 2.5 of \cite{paper1}.

\begin{lemma}
\label{lm:integralsandweights}
Let $f\in C^{0}([x_0,\infty))$. Let $ n\in \N$ such that $\lim_{x\to \infty} x^{n+1} |f(x)|=0$. Then 
\begin{equation}
\label{eq:xweightid}
\int_{x_0}^{\infty} (x-x_0)^n f(x)\,dx=n! \int_{x_0}^{\infty} \int_{x_1}^{\infty}\ldots \int_{x_n}^{\infty} f(x_{n+1})\,dx_{n+1}dx_n\ldots dx_1.
\end{equation}
\end{lemma}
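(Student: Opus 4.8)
The plan is to prove \eqref{eq:xweightid} by induction on $n$, recognising the right-hand side as an $(n+1)$-fold nested ``tail integral'' and peeling off one integration at a time using Fubini's theorem. First I would fix notation: set $g_0 := f$ and inductively $g_{k+1}(x) := \int_x^{\infty} g_k(y)\,dy$ for $k \geq 0$, so that the right-hand side of \eqref{eq:xweightid} is precisely $n!\, g_{n+1}(x_0)$. The hypothesis $\lim_{x\to\infty} x^{n+1}|f(x)| = 0$ ensures that each $g_k$ is well defined and continuous on $[x_0,\infty)$ and that all the iterated integrals below converge absolutely, which is what licenses the interchanges in the order of integration.

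The statement to establish by induction is then
$$ g_{n+1}(x_0) = \frac{1}{n!}\int_{x_0}^{\infty}(x-x_0)^n f(x)\,dx . $$
The base case $n=0$ is just the definition $g_1(x_0) = \int_{x_0}^{\infty} f(x)\,dx$. For the inductive step I would assume the identity at level $n-1$ (applied with an arbitrary base point $x_1$ in place of $x_0$), insert it into $g_{n+1}(x_0) = \int_{x_0}^{\infty} g_n(x_1)\,dx_1$, and then apply Fubini to exchange the $x_1$- and $x$-integrations: for fixed $x$ the variable $x_1$ ranges over $[x_0,x]$, and the inner integral $\int_{x_0}^{x}(x-x_1)^{n-1}\,dx_1 = (x-x_0)^n/n$ produces exactly the claimed expression after one absorbs the factor of $n$. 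Multiplying through by $n!$ yields \eqref{eq:xweightid}.

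An equivalent and perhaps more transparent route is a single application of Tonelli/Fubini: the right-hand side of \eqref{eq:xweightid} is the integral of $f(x_{n+1})$ over the region $\{x_0 \leq x_1 \leq \cdots \leq x_{n+1} < \infty\}$, and integrating out $x_1,\dots,x_n$ for fixed $x_{n+1}$ amounts to computing the Euclidean volume $(x_{n+1}-x_0)^n/n!$ of the simplex $\{x_0 \leq x_1 \leq \cdots \leq x_n \leq x_{n+1}\}$. Either way I do not anticipate a genuine obstacle; the only point deserving a line of care is the verification that the integrals in play are absolutely convergent so that Fubini applies, and this is immediate from the assumed decay $x^{n+1}|f(x)| \to 0$.
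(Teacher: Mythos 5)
Your overall strategy is the right one, and it is essentially the argument behind the result the paper invokes here (the paper gives no proof of its own; it simply cites Lemma~6.9 of \cite{aagscat}, which runs the same kind of iterated--integral induction). Both of your routes --- peeling off one integration at a time, or a single Tonelli computation of the simplex volume $(x_{n+1}-x_0)^n/n!$ --- correctly produce the identity, and the base case and the inner computation $\int_{x_0}^{x}(x-x_1)^{n-1}\,dx_1=(x-x_0)^n/n$ are fine.

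The one place you are too quick is the closing claim that absolute convergence ``is immediate from the assumed decay.'' It is not: $\lim_{x\to\infty}x^{n+1}|f(x)|=0$ only gives $(x-x_0)^n|f(x)|=o(1/x)$, which is borderline non-integrable (take $|f(x)|\sim x^{-n-1}(\log x)^{-1}$), so the outermost Fubini interchange --- the one that swaps $x_1$ against $x$ --- is exactly the step that is not automatically licensed by the hypothesis; the inner $n$ tail integrals are absolutely convergent, but the final one need not be. The clean fix is to run your induction on the truncated integral $\int_{x_0}^{X}(x-x_0)^n f(x)\,dx$ and integrate by parts once, writing $f=-G'$ with $G(x)=\int_x^\infty f$: the boundary term is $(X-x_0)^n G(X)=o(1)$ precisely because $G(X)=o(X^{-n})$ under the stated decay, and one is left with $n\int_{x_0}^{X}(x-x_0)^{n-1}G(x)\,dx$, to which the induction hypothesis applies (since $x^{n}G(x)\to 0$). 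This version uses the hypothesis only to kill boundary terms and establishes the identity in the improper sense, with each side converging iff the other does; alternatively, you may simply note that in every application in the paper the integrand decays strictly faster, so that your Fubini step is valid there.
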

For a proof of the aforementioned lemma see Lemma 6.9 of \cite{aagscat}.

\begin{lemma}
\label{gronwall}
Let $\zeta : [0,\infty)\to \R$ be a continuous, positive function. Assume that for all $0\leq \tau_1\leq \tau_2<\infty$,
\begin{equation}
\label{eq:intineqf}
\zeta (\tau_2)+\beta \int_{\tau_1}^{\tau_2}\zeta (s)\,ds\leq \zeta(\tau_1)+E_0(\tau_2-\tau_1)(\tau_1+1)^{-p},
\end{equation}
with $E_0,\beta ,p>0$ being constants and moreover, for all $0\leq \tau_0\leq \tau_1\leq \tau_2$
\begin{equation}
\label{eq:intineqf2}
\zeta (\tau_2)+\beta \int_{\tau_1}^{\tau_2}\zeta (s)\,ds\leq \zeta (\tau_1)+C_0 (\tau_2-\tau_1) \zeta (\tau_0).
\end{equation}
Then we have that
\begin{equation}
\label{eqboundf}
\zeta (\tau)\leq \left(1+C_0 \beta^{-1}\right) \zeta (\tau_0)
\end{equation}
for all $\tau\geq \tau_0$ and there exists a constant $C=C(C_0,E_0,\beta,p)>0$, such that
\begin{equation}
\label{eqdecayf}
\zeta (\tau)\leq C( \zeta (\tau_0)+E_0)(1+\tau)^{-p},
\end{equation}
for all $\tau\geq \tau_0$.
\end{lemma}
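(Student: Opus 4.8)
The plan is to read both hypotheses \eqref{eq:intineqf} and \eqref{eq:intineqf2} as weak (integrated) forms of one dissipative differential inequality
\[
D^{+}\zeta(\tau)\leq -\beta\,\zeta(\tau)+S(\tau),
\]
where $D^{+}$ denotes the upper right Dini derivative and $S$ is the relevant source: $S\equiv C_0\zeta(\tau_0)$ in the case of \eqref{eq:intineqf2}, and $S(\tau)=E_0(\tau+1)^{-p}$ in the case of \eqref{eq:intineqf}. Indeed, applying the hypothesis on the interval $[\tau,\tau+h]$, dividing by $h$, and using continuity of $\zeta$ to pass to the limit $h\downarrow 0$ (so that $h^{-1}\int_{\tau}^{\tau+h}\zeta\to\zeta(\tau)$) yields exactly this. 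Using the integrating factor $e^{\beta\tau}$ together with the elementary fact that $D^{+}\phi\leq g$ with $g$ continuous implies $\phi(\tau_2)-\phi(\tau_1)\leq\int_{\tau_1}^{\tau_2}g$, one then arrives at the Duhamel-type bound
\[
\zeta(\tau)\leq e^{-\beta(\tau-\tau_0)}\zeta(\tau_0)+\int_{\tau_0}^{\tau}e^{-\beta(\tau-s)}S(s)\,ds,\qquad \tau\geq\tau_0 .
\]

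For \eqref{eqboundf} I would take $S\equiv C_0\zeta(\tau_0)$ from \eqref{eq:intineqf2}; then the Duhamel integral equals $C_0\beta^{-1}\zeta(\tau_0)\bigl(1-e^{-\beta(\tau-\tau_0)}\bigr)$, and since $0\leq e^{-\beta(\tau-\tau_0)}\leq 1$ this gives $\zeta(\tau)\leq(1+C_0\beta^{-1})\zeta(\tau_0)$, which is \eqref{eqboundf}. If one prefers to avoid Dini derivatives altogether, the same bound up to a worse numerical constant follows from a dyadic iteration: applying \eqref{eq:intineqf2} on intervals of length $2\beta^{-1}$ together with the mean value theorem gives $\zeta$ at the right endpoint bounded by $\tfrac12\zeta$ at the left endpoint plus an $O(C_0\beta^{-1}\zeta(\tau_0))$ term, and this sums geometrically.

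For \eqref{eqdecayf} I would take $S(s)=E_0(s+1)^{-p}$ from \eqref{eq:intineqf} and estimate the convolution $\int_{\tau_0}^{\tau}e^{-\beta(\tau-s)}(s+1)^{-p}\,ds$ by splitting the $s$-integral at $s=\tau/2$: on $[\tau/2,\tau]$ one bounds $(s+1)^{-p}\lesssim(1+\tau)^{-p}$ and $\int e^{-\beta(\tau-s)}\,ds\leq\beta^{-1}$, while on $[\tau_0,\tau/2]$ one bounds $(s+1)^{-p}\leq 1$ and $\int e^{-\beta(\tau-s)}\,ds\lesssim\beta^{-1}e^{-\beta\tau/2}\lesssim_{\beta,p}(1+\tau)^{-p}$, exponential decay beating any polynomial. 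This yields $\int_{\tau_0}^{\tau}e^{-\beta(\tau-s)}S(s)\,ds\lesssim_{\beta,p}E_0(1+\tau)^{-p}$. The homogeneous term is handled by $e^{-\beta(\tau-\tau_0)}\zeta(\tau_0)=e^{\beta\tau_0}\zeta(\tau_0)\,e^{-\beta\tau}\lesssim_{\beta,p}e^{\beta\tau_0}\zeta(\tau_0)(1+\tau)^{-p}$, and combining the two gives \eqref{eqdecayf} (so that $C$ also depends on the fixed reference time $\tau_0$; this is harmless, since in every application $\tau_0=u_0$ is a fixed finite time, and alternatively one absorbs $\zeta$ on the bounded range $\tau_0\leq\tau\leq 2\tau_0$ into the constant using \eqref{eqboundf}).

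There is no genuine obstacle here; the only points requiring care are: the passage from the integrated hypotheses to the Dini differential inequality (legitimate precisely because $\zeta$ is continuous); the comparison principle $D^{+}\zeta\leq-\beta\zeta+S\Rightarrow\zeta\leq(\text{solution of the ODE})$ for merely continuous $\zeta$, which is standard, e.g.\ by perturbing the comparison function to make the inequality strict; and the bookkeeping in the convolution estimate. All three are routine.
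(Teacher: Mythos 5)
Your argument is correct, but it takes a genuinely different route from the one the paper relies on. The paper does not prove the lemma at all: it defers to Lemma 7.4 of \cite{paper2}, where the proof is the standard dyadic pigeonhole argument --- apply \eqref{eq:intineqf} on a dyadic sequence of intervals, use the mean value theorem to extract times at which $\zeta$ is bounded by the averaged flux, iterate roughly $\lceil p\rceil$ times to upgrade the rate to $p$, and interpolate between the dyadic times using the boundedness statement \eqref{eqboundf}. You instead differentiate the hypotheses: since \eqref{eq:intineqf} and \eqref{eq:intineqf2} are assumed for \emph{all} pairs $\tau_1\leq\tau_2$ and $\zeta$ is continuous, the passage to the Dini inequality $D^+\zeta\leq-\beta\zeta+S$ is legitimate, and the Duhamel bound does the rest. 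This is shorter, produces the constant $1+C_0\beta^{-1}$ in \eqref{eqboundf} on the nose, decouples the two hypotheses (each alone yields its respective conclusion), and even shows that the data contribute only an exponentially decaying transient, the $\tau^{-p}$ tail coming entirely from the source. What the pigeonhole argument buys in exchange is robustness: it never differentiates the hypothesis, so it survives when the integrated inequality is only available on macroscopic intervals or when $\zeta$ is an energy flux with no pointwise differential inequality behind it --- which is why it is the standard tool in this literature. Here the hypotheses are stated for all $\tau_1\leq\tau_2$, so your shortcut is available.

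One remark on the constant in \eqref{eqdecayf}. Your homogeneous term $e^{-\beta(\tau-\tau_0)}\zeta(\tau_0)$ forces $C$ to depend on $\tau_0$, and neither of your proposed fixes removes this: on the range $\tau_0\leq\tau\leq 2\tau_0$ the bound \eqref{eqboundf} still costs a factor $(1+2\tau_0)^{p}$. But this is a defect of the statement rather than of your proof. Evaluating \eqref{eqdecayf} at $\tau=\tau_0$ gives $\zeta(\tau_0)\leq C(\zeta(\tau_0)+E_0)(1+\tau_0)^{-p}$, which cannot hold with a $\tau_0$-independent $C$: take $\zeta(s)=Ae^{-\beta s}$ (which satisfies \eqref{eq:intineqf} and \eqref{eq:intineqf2} with any $E_0,C_0>0$) and $\tau_0=\tfrac{1}{2\beta}\log A$, so that $\zeta(\tau_0)=\sqrt{A}$ and the inequality forces $C\gtrsim(\log A)^{p}$ as $A\to\infty$. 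The lemma is only ever applied in the paper with $\tau_0=u_0$ a fixed initial time, so, as you note, the dependence is harmless.
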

For a proof of the lemma above see Lemma 7.4 of \cite{paper2}. Finally we need the following basic computation.

\begin{lemma}\label{lm:integrals}
For any $\ell \geq 1$ there exists $\eta'=\eta'(\eta,\alpha)>0$ such that for any $v \geq v_{\gamma_{\alpha_{\ell}}}$ 
\begin{equation}
\label{eq:importantvintegral}
\int_{v_{\gamma_{\alpha_{\ell}}}(u)}^v \frac{(v'-u)^k}{(v')^{k+2}}(1+O((v')^{-\eta}))\,dv'=\left[\frac{(v-u)^{k+1}}{(k+1)uv^{k+1}}\right](1+O(v^{-\eta'})+O(u^{-\eta'})) ,
\end{equation}
where $\gamma_{\alpha_{\ell}} = \{ (u,v) : v=u=v^{\alpha_{\ell}} \}$ for $\frac{2\ell +2}{2\ell +3} < \alpha_{\ell} < 1$. Moreover for any $m \in \mathbb{N}$ we have that:
\begin{equation}
\label{eq:importantvintegralt}
\begin{split}
\int_{v_{\gamma_{\alpha_{\ell}}}(u)}^v & (v'-u)^k L^m \left( \frac{1}{(v')^{k+2}} \right) (1+O((v')^{-\eta}))\,du'\\ = & \left[(v-u)^{k+1}T^m \left( \frac{1}{(k+1)uv^{k+1}}\right) \right](1+O(v^{-\eta'})+O(u^{-\eta'})) .
\end{split}
\end{equation}
\end{lemma}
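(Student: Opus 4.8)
The plan is to reduce both identities to explicit estimates for one-dimensional integrals in $v'$.

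For \eqref{eq:importantvintegral} the starting point is the exact antiderivative
\[
\partial_{v'}\!\left(\frac{(v'-u)^{k+1}}{u\,(v')^{k+1}}\right)=(k+1)\,\frac{(v'-u)^{k}}{(v')^{k+2}},
\]
whose only non-routine step is the cancellation $v'-(v'-u)=u$ in the numerator. Ignoring for the moment the factor $1+O((v')^{-\eta})$, the fundamental theorem of calculus then gives
\[
\int_{v_{\gamma_{\alpha_\ell}}(u)}^{v}\frac{(v'-u)^{k}}{(v')^{k+2}}\,dv'
=\frac{1}{k+1}\left[\frac{(v-u)^{k+1}}{u\,v^{k+1}}-\frac{(v_{\gamma_{\alpha_\ell}}-u)^{k+1}}{u\,v_{\gamma_{\alpha_\ell}}^{k+1}}\right].
\]
I would then estimate the two error contributions. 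On the curve $\gamma_{\alpha_\ell}$ one has $v_{\gamma_{\alpha_\ell}}-u=v_{\gamma_{\alpha_\ell}}^{\alpha_\ell}$ and $v_{\gamma_{\alpha_\ell}}\sim u$, so the boundary term is $O\big(u^{(\alpha_\ell-1)(k+1)-1}\big)$; since $\alpha_\ell<1$ this is a genuine negative power of $u$ and, using $v-u\ge v^{\alpha_\ell}$ inside $\mathcal{B}_{\alpha_\ell}$, it can be rewritten as $\frac{(v-u)^{k+1}}{u\,v^{k+1}}\,O(u^{-\eta'})$ in the range of interest. For the $O((v')^{-\eta})$ perturbation one bounds
\[
\left|\int_{v_{\gamma_{\alpha_\ell}}(u)}^{v}\frac{(v'-u)^{k}}{(v')^{k+2}}\,O((v')^{-\eta})\,dv'\right|
\lesssim\int_{v_{\gamma_{\alpha_\ell}}(u)}^{v}\frac{(v'-u)^{k}}{(v')^{k+2+\eta}}\,dv',
\]
and, splitting the $v'$-integration at $v'=v/2$ (equivalently, treating the regimes $v\lesssim u$ and $v\gg u$ separately), shows the right-hand side is $\lesssim (v^{-\eta}+u^{-\eta})\,\frac{(v-u)^{k+1}}{u\,v^{k+1}}$. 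Combining these gives \eqref{eq:importantvintegral} with $\eta'$ of the same order as $\min\{\eta,1-\alpha_\ell\}$.

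For \eqref{eq:importantvintegralt} there is no elementary antiderivative once $m\ge 1$, so I would instead use scale invariance. Writing $L=\partial_v$ at fixed $u$ (which holds since $L=\tfrac12 D\partial_r$ and $v=u+2r^*$ with $\frac{dr^*}{dr}=\frac1D$), $L^m\big((v')^{-(k+2)}\big)$ is a constant multiple of $(v')^{-(k+2+m)}$, and the substitution $s=u/v'$ gives the exact identity
\[
\int_{v_{\gamma_{\alpha_\ell}}(u)}^{v}\frac{(v'-u)^{k}}{(v')^{k+2+m}}\,dv'
=\frac{1}{u^{m+1}}\int_{u/v}^{u/v_{\gamma_{\alpha_\ell}}}(1-s)^{k}s^{m}\,ds ,
\]
an incomplete Beta integral. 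The leading behaviour is extracted from the endpoint expansion $\int_t^1(1-s)^k s^m\,ds=\frac{(1-t)^{k+1}}{k+1}\big(1+O(1-t)\big)$ together with the value $\int_0^1(1-s)^k s^m\,ds=\frac{m!\,k!}{(k+m+1)!}$, and one checks that, after multiplying by the constant produced by $L^m$ on the integrand, the result is precisely $(v-u)^{k+1}T^m\big((k+1)^{-1}u^{-1}v^{-(k+1)}\big)$ up to the errors (the combinatorial factors cancel). The boundary term at $v_{\gamma_{\alpha_\ell}}$ and the $O((v')^{-\eta})$ perturbation are then controlled exactly as in the $m=0$ case. Alternatively, \eqref{eq:importantvintegralt} can be proved by induction on $m$, integrating by parts in $v'$ to lower the power of $(v'-u)$ and invoking the $m=0$ estimate at the bottom of the induction; this produces the constants directly and is probably the route I would take.

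The main obstacle I anticipate is not any single computation but the bookkeeping of the error terms in the transition region near $\gamma_{\alpha_\ell}$, where the leading profile $(v-u)^{k+1}u^{-1}v^{-(k+1)}$ itself degenerates to a small power of $u$: there one must verify that the boundary term and the perturbation are genuinely lower order and, where the multiplicative form $(1+O(\cdot))$ is too optimistic, that the excess is absorbed by the additive errors that are always carried alongside this lemma in its applications (compare the proofs of Proposition \ref{prop:auxasn0} and Proposition \ref{prop:mainasn0}). Tracking the dependence of $\eta'$ on $\eta$ and on the admissible window $\frac{2\ell+2}{2\ell+3}<\alpha_\ell<1$ is where essentially all of the care is needed.
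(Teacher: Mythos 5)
For the first identity your argument coincides with the paper's: the paper's entire proof of \eqref{eq:importantvintegral} is the observation that $\frac{1}{k+1}\,\partial_{v'}\bigl[\frac{(v'-u)^{k+1}}{u\,(v')^{k+1}}\bigr]=\frac{(v'-u)^{k}}{(v')^{k+2}}$, exactly your antiderivative, with the boundary and $O((v')^{-\eta})$ errors handled as you describe (in fact you supply more detail than the paper does). For \eqref{eq:importantvintegralt} your route is genuinely different. The paper does not compute $L^m\bigl((v')^{-(k+2)}\bigr)$ explicitly; it instead uses the commutation identity
\begin{equation*}
(v'-u)^k\,L^m\!\left(\frac{1}{(v')^{k+2}}\right)=T^m\!\left(\frac{(v'-u)^k}{(v')^{k+2}}\right),
\end{equation*}
valid because $T(v'-u)=0$ and $T=L$ on $u$-independent functions, and then pulls $T^m$ through the $v'$-integral so that the $m\ge 1$ case is literally $T^m$ applied to the $m=0$ formula. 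This produces the factor $T^m\bigl(\tfrac{1}{(k+1)uv^{k+1}}\bigr)$ on the right-hand side with no constant-matching at all, whereas your incomplete-Beta computation (or the integration-by-parts induction) requires verifying that the combinatorial factors from $L^m$ and from the Beta function recombine into the coefficients of $T^m(u^{-1}v^{-(k+1)})$ — you check this only in the regime $v\gg u$, and the matching must hold term by term in the full expansion $\sum_j c_j u^{-1-j}v^{-(k+1+m-j)}$, which is more bookkeeping than the paper's one-line reduction. Both approaches are sound; your closing caveat about the multiplicative error degenerating near $\gamma_{\alpha_\ell}$ is a fair observation about the statement itself and applies equally to the paper's proof.
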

\begin{proof}
We first note that
$$ \frac{1}{k+1} L \left[ \frac{1}{u} \left( \frac{v-u}{v} \right)^{k} \right]  = \frac{(v-u)^{k}}{v^{k+2}} , $$
and estimate \eqref{eq:importantvintegral} follows from the above computation. On the other hand we note that
$$ (v-u)^k L^m \left( \frac{1}{v^{k+2}} \right) = T^m \left( \frac{(v-u)^k}{v^{k+2}} \right) , $$
and using this and the previous computation we also get \eqref{eq:importantvintegralt}.
\end{proof}

\bibliographystyle{alpha}

\end{document}